\newcommand{\powset}[1]{\mathcal{P}(#1)}
\newcommand{\powsetnop}{\mathcal{P}}
\newcommand{\allmatches}[2]{\mathcal{M}^{#1}_{#2}}
\newcommand{\results}[4]{\mathcal{R}(#1, #2, #3, #4)}
\newcommand{\resultsnop}{\mathcal{R}}
\newcommand{\resultslocal}[5]{\mathcal{R}^\Phi(#1, #2, #3, #4, #5)}
\newcommand{\resultsstripped}[2]{\mathcal{R}^\Phi_X(#1, #2)}
\tikzset{retenode/.style = {rectangle, text centered, text height = 0.25cm, draw=black, minimum width = 1.7cm, minimum height = 0.5cm}}
\tikzset{ms_retenode/.style = {rectangle, text centered, text height = 0.25cm, draw=black, minimum width = 1.7cm, minimum height = 0.5cm, label={[xshift=0.75cm, yshift=-0.35cm]\scriptsize$\Phi$}}}
\tikzset{retenode_app/.style = {rectangle, text centered, text height = 0.25cm, draw=black, minimum width = 1.7cm, minimum height = 0.75cm}}
\tikzset{ms_retenode_app/.style = {rectangle, text centered, text height = 0.25cm, draw=black, minimum width = 1.7cm, minimum height = 0.75cm, label={[xshift=0.75cm, yshift=-0.35cm]\scriptsize$\Phi$}}}
\tikzset{vertex/.style = {circle, text centered, text height = 0.25cm, draw=black, minimum width = 1.5cm, minimum height = 1.5cm}}
\begin{document}

\title{Localized RETE for Incremental Graph Queries\thanks{This preprint has not undergone peer review or any post-submission improvements or corrections. The Version of Record of this contribution is published in Graph Transformation, and is available online at \url{https://doi.org/10.1007/978-3-031-64285-2_7}}}

\author{Matthias Barkowsky \and
Holger Giese}

\institute{Hasso Plattner Institute at the University of Potsdam\\
Prof.-Dr.-Helmert-Str. 2-3, D-14482 Potsdam, Germany\\
\email{\{matthias.barkowsky,holger.giese\}@hpi.de}}

\maketitle

\begin{abstract}
\textbf{Context:} The growing size of graph-based modeling artifacts in model-driven engineering calls for techniques that enable efficient execution of graph queries. Incremental approaches based on the RETE algorithm provide an adequate solution in many scenarios, but are generally designed to search for query results over the entire graph. However, in certain situations, a user may only be interested in query results for a subgraph, for instance when a developer is working on a large model of which only a part is loaded into their workspace. In this case, the global execution semantics can result in significant computational overhead.

\textbf{Contribution:} To mitigate the outlined shortcoming, in this paper we propose an extension of the RETE approach that enables local, yet fully incremental execution of graph queries, while still guaranteeing completeness of results with respect to the relevant subgraph.

\textbf{Results:} We empirically evaluate the presented approach via experiments inspired by a scenario from software development and an independent social network benchmark. The experimental results indicate that the proposed technique can significantly improve performance regarding memory consumption and execution time in favorable cases, but may incur a noticeable linear overhead in unfavorable cases.
\end{abstract}


\section{Introduction} \label{sec:introduction}

In model-driven engineering, models constitute important development artifacts \cite{kent2002model}. With complex development projects involving large, interconnected models, performance of automated model operations becomes a primary concern.

Incremental graph query execution based on the RETE algorithm \cite{forgy1989rete} has been demonstrated to be an adequate solution in scenarios where an evolving model is repeatedly queried for the same information \cite{szarnyas2018train}. In this context, the RETE algorithm essentially tackles the problem of querying a usually graph-based model representation via operators from relational algebra \cite{codd1970relational}. With these operators not designed to exploit the locality found in graph-based encodings, current incremental querying techniques require processing of the entire model to guarantee complete results. However, in certain situations, global query execution is not required and may be undesirable due to performance considerations.

For instance, a developer may be working on only a part of a model loaded into their workspace, with a large portion of the full model still stored on disk. A concrete example of this would be a developer using a graphical editor to modify an individual block diagram from a collection of interconnected block diagrams, which together effectively form one large model. As the developer modifies the model part in their workspace, they may want to continuously monitor how their changes impact the satisfaction of some consistency constraints via incremental model queries. In this scenario, existing techniques for incremental query execution require loading and querying the entire collection of models, even though the user is ultimately only interested in the often local effect of their own changes.

The global RETE execution semantics then results in at least three problems: First, the computation of query results for the full model, of which only a portion is relevant to the user, may incur a substantial overhead on initial query execution time. Second, incremental querying techniques are known to create and store a large number of intermediate results \cite{szarnyas2018train}, many of which can in this scenario be superfluous, causing an overhead in memory consumption. Third, query execution requires loading the entire model into memory, potentially causing an overhead in loading time and increasing the overhead in memory consumption.

These problems can be mitigated to some extent by employing local search \cite{cordella2004sub,giese2009improved,arendt2010henshin}, which better exploits the locality of the problem and lazy loading capabilities of model persistence layers \cite{cdo,daniel2017neoemf}, instead of a RETE-based technique. However, resorting to local search can result in expensive redundant search operations that are only avoided by fully incremental solutions \cite{szarnyas2018train}.

In this paper, we instead propose to tackle the outlined shortcoming of incremental querying techniques via an extension of the RETE approach. This is supported by a relaxed notion of completeness for query results that accounts for situations where the computation for the full model is unnecessary. Essentially, this enables a distinction between the full model, for which query results do not necessarily have to be complete, and a relevant model part, for which complete results are required. The extended RETE approach then anchors query execution to the relevant model part and lazily fetches additional model elements required to compute query results that meet the relaxed completeness requirement. Our approach thereby avoids potentially expensive global query execution and allows an effective integration of incremental queries with model persistence layers.

The remainder of this paper is structured as follows:
Section \ref{sec:preliminaries} provides a summary of our notion of graphs, graph queries, and the RETE mechanism for query execution.
Section \ref{sec:incremental_queries_over_persistent_models} discusses our contribution in the form of a relaxed notion of completeness for query results and an extension to the RETE querying mechanism that enables local, yet fully incremental execution of model queries.
A prototypical implementation of our approach is evaluated regarding execution time and memory consumption in Section \ref{sec:evaluation}.
Section \ref{sec:related_work} discusses related work, before Section \ref{sec:conclusion} concludes the paper and provides an outlook on future work.


\section{Preliminaries} \label{sec:preliminaries}

In the following, we briefly reiterate the definitions of graphs and graph queries and summarize the RETE approach to incremental graph querying.

\subsection{Graphs and Graph Queries} \label{sec:preliminaries_graphs}


As defined in \cite{Ehrig+2006}, a \emph{graph} is a tuple $G = (V^G, E^G, s^G, t^G)$, with $V^G$ the set of vertices, $E^G$ the set of edges, and $s^G : E^G \rightarrow V^G$ and $t^G : E^G \rightarrow V^G$ functions mapping edges to their source respectively target vertices.

A mapping from a graph $Q$ into another graph $H$ is defined via a \emph{graph morphism} $m : Q \rightarrow H$, which is characterized by two functions $m^V : V^Q \rightarrow V^H$ and $m^E : E^Q \rightarrow E^H$ such that $s^H \circ m^E = m^V \circ s^Q$ and $t^H \circ m^E = m^V \circ t^Q$.

A graph $G$ can be typed over a type graph $TG$ via a graph morphism $type^G : G \rightarrow TG$, forming the \emph{typed graph} $G^T =(G, type^G)$. A \emph{typed graph morphism} from $G^T = (G, type^G)$ into another typed graph $H^T = (H, type^H)$ with type graph $TG$ is a graph morphism $m^T : G \rightarrow H$ such that $type^H \circ m^T = type^G$.

We say that $G$ is \emph{edge-dominated} if $\forall e_{TG} \in E^{TG} : |\{e \in E^G | type(e) = e_{TG}\}| \geq max(|\{v \in V^G | type(v) = s^{TG}(e_{TG})\}|, |\{v \in V^G | type(v) = t^{TG}(e_{TG})\}|)$.


A \emph{model} is then simply given by a (typed) graph. In the remainder of the paper, we therefore use the terms graph and model interchangably.

A \emph{graph query} as considered in this paper is characterized by a (typed) query graph $Q$ and can be executed over a (typed) host graph $H$ by finding (typed) graph morphisms $m : Q \rightarrow H$. We also call these morphisms \emph{matches} and denote the set of all matches from $Q$ into $H$ by $\allmatches{Q}{H}$. Typically, a set of matches is considered a complete query result if it contains all matches in $\allmatches{Q}{H}$.

\subsection{Incremental Graph Queries with RETE}


The RETE algorithm \cite{forgy1989rete} forms the basis of mature incremental graph querying techniques \cite{varro2016}. Therefore, the query graph is recursively decomposed into simpler subqueries, which are arranged in a second graph called RETE net. In the following, we will refer to the vertices of RETE nets as \emph{(RETE) nodes}.

Each RETE node is associated with a subgraph of the query graph for which it computes matches. This computation may depend on matches computed by other RETE nodes. Such dependencies are represented by edges from the dependent node to the dependency node. For each RETE net, one of its nodes is designated as the \emph{production node}, which computes the net's overall result. A RETE net is thus given by a tuple $(N, p)$, where the \emph{RETE graph} $N$ is a graph of RETE nodes and dependency edges and $p \in V^N$ is the production node.


We describe the \emph{configuration} of a RETE net $(N, p)$ during execution by a function $\mathcal{C} : V^N \rightarrow \powset{\mathcal{M}_\Omega}$, with $\powsetnop$ denoting the power set and $\mathcal{M}_\Omega$ the set of all graph morphisms. $\mathcal{C}$ then assigns each node in $V^N$ a \emph{current result set}.

For a starting configuration $\mathcal{C}$ and host graph $H$, executing a RETE node $n \in V^N$ with associated query subgraph $Q$ yields the updated configuration $\mathcal{C}' = execute(n, N, H, \mathcal{C})$, with

\[
\mathcal{C}'(n') =
	\begin{cases}
	\results{n}{N}{H}{\mathcal{C}}	& \quad \text{if } $n' = n$\\
	\mathcal{C}(n')			& \quad \text{otherwise},
	\end{cases}
\]

where $\resultsnop$ is a function defining the \emph{target result set} of a RETE node $n$  in the RETE graph $N$ for $H$ and $\mathcal{C}$ such that $\results{n}{N}{H}{\mathcal{C}} \subseteq \allmatches{Q}{H}$.

We say that $\mathcal{C}$ is \emph{consistent} for a RETE node $n \in V^N$ and host graph $H$ iff $\mathcal{C}(n) = \results{n}{N}{H}{\mathcal{C}}$. If $\mathcal{C}$ is consistent for all $n \in V^N$, we say that $\mathcal{C}$ is consistent for $H$. If $H$ is clear from the context, we simply say that $\mathcal{C}$ is consistent.


Given a host graph $H$ and a starting configuration $\mathcal{C}_0$, a RETE net $(N, p)$ is executed via the execution of a sequence of nodes $O = n_1, n_2, ..., n_x$ with $n_i \in V^N$. This yields the trace $\mathcal{C}_0, \mathcal{C}_1, \mathcal{C}_2, ..., \mathcal{C}_x$, with $\mathcal{C}_y = execute(n_y, N, H, \mathcal{C}_{y-1})$, and the result configuration $execute(O, N, H, \mathcal{C}_0) = \mathcal{C}_x$.

$(N, p)$ can initially be executed over $H$ via a sequence $O$ that yields a consistent configuration $\mathcal{C}_x = execute(O, N, H, \mathcal{C}_0)$, where $\mathcal{C}_0$ is an empty starting configuration with $\forall n \in V^N : \mathcal{C}_0(n) = \emptyset$. Incremental execution of $(N, p)$ can be achieved by retaining a previous result configuration and using it as the starting configuration for execution over an updated host graph. This requires incremental implementations of RETE node execution procedures that update previously computed result sets for changed inputs instead of computing them from scratch.


In the most basic form, a RETE net for the query graph $Q$ consists of two kinds of nodes, \emph{edge input nodes} and \emph{join nodes}. Edge input nodes have no dependencies, are associated with a query subgraph consisting of a single edge and its adjacent vertices, and directly extract the (trivial) matches for $Q$ from a host graph. Join nodes have two dependencies with some associated query subgraphs $Q_l$ and $Q_r$ such that $V^{Q_l \cap Q_r} \neq \emptyset$ and are associated with the subgraph $Q = Q_l \cup Q_r$. A join node computes matches for this union subgraph by combining the matches from its dependencies along the overlap graph $Q_\cap = Q_l \cap Q_r$.

In the following, we assume query graphs to be weakly connected and contain at least one edge. A RETE net that computes matches for a query graph $Q$ can then be constructed as a tree of join nodes over edge input nodes. The join nodes thus gradually compose the trivial matches at the bottom into matches for more complex query subgraphs. We call such tree-like RETE nets \emph{well-formed}. An execution sequence that always produces a consistent configuration is given by a reverse topological sorting of the net. The root node of the tree then computes the set of all matches for $Q$ and is designated as the net's production node.

Connected graphs without edges consist of only a single vertex, making query execution via \emph{vertex input nodes}, which function analogously to edge input nodes, trivial. Disconnected query graphs can be handled via separate RETE nets for all query graph components and the computation of a cartesian product.

Figure \ref{fig:example_query_and_rete} shows an example graph query from the software domain and an associated RETE net. The query searches for paths of a package, class, and field. The RETE net constructs matches for the query by combining edges from a package to a class with edges from a class to a field via a natural join.

\begin{figure}
	\centering
	\begin{subfigure}{0.5\textwidth}
		\centering

\begin{tikzpicture}

\node (p) [vertex] {\textit{p:Pkg}};

\node[right = 0.5cm of p] (c) [vertex] {\textit{c:Class}};

\node[right = 0.5cm of c] (f) [vertex] {\textit{f:Field}};

\draw [-{Latex}] (p) -- (c) node [midway, above] {\textit{ce}};

\draw [-{Latex}] (c) -- (f) node [midway, above] {\textit{fe}};

\end{tikzpicture}
	\end{subfigure}\quad\vline\quad
	\begin{subfigure}{0.4\textwidth}
		\centering

\begin{tikzpicture}

\node (join) [retenode] {$\bowtie$};

\node[below left = 0.35cm and -0.5cm of join] (ce) [retenode] {$p \rightarrow c$};

\node[below right = 0.35cm and -0.5cm of join] (fe) [retenode] {$c \rightarrow f$};

\draw [-{Latex}] (join) -- (ce);

\draw [-{Latex}] (join) -- (fe);

\end{tikzpicture}
	\end{subfigure}
	\caption{Example graph query (left) and corresponding RETE net (right)} \label{fig:example_query_and_rete}
\end{figure}
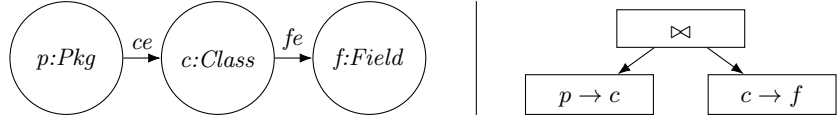


\section{Incremental Queries over Subgraphs} \label{sec:incremental_queries_over_persistent_models}


As outlined in Section \ref{sec:introduction}, users of model querying mechanisms may only be interested in query results, that is, matches, related to some part of a model that is relevant to them rather than the complete model. However, simply executing a query only over the relevant subgraph and ignoring context elements is not sufficient in many situations, for instance if the full effect of modifications to the relevant subgraph should be observed, since such modifications may affect matches that involve elements outside the relevant subgraph. Instead, completeness in such scenarios essentially requires the computation of all matches that somehow \emph{touch} the relevant subgraph.

In order to capture this need for local completeness, but avoid the requirement for global execution inherent to the characterization in Section \ref{sec:preliminaries_graphs}, we define completeness under a relevant subgraph of some host graph as follows:

\begin{definition} \label{def:completeness_subgraphs} \emph{(Completeness of Query Results under Subgraphs)}
	We say that a set of matches $M$ from query graph $Q$ into host graph $H$ is complete under a subgraph $H_p \subseteq H$
	iff $\forall m \in \allmatches{Q}{H}: (\exists v \in V^Q : m(v) \in V^{H_p}) \Rightarrow m \in M$.
\end{definition}

\subsection{Marking-sensitive RETE}


Due to its reliance on edge and vertex input nodes and their global execution semantics, incremental query execution via the standard RETE approach is unable to exploit the relaxed notion of completeness from Definition \ref{def:completeness_subgraphs} for query optimization and does not integrate well with mechanisms relying on operation locality, such as model persistence layers based on lazy loading \cite{cdo,daniel2017neoemf}.

While query execution could be localized to a relevant subgraph by executing edge and vertex inputs nodes only over the subgraph, execution would then only yield matches where \emph{all} involved elements are in the relevant subgraph. This approach would hence fail to meet the completeness criterion of Definition \ref{def:completeness_subgraphs}.

To enable incremental queries with complete results under the relevant subgraph, we instead propose to anchor RETE net execution to subgraph elements while allowing the search to retrieve elements outside the subgraph that are required to produce complete results from the full model.


Therefore, we extend the standard RETE mechanism by a marking for intermediate results, which allows such intermediate results to carry information that can later be used to control their propagation in the RETE net. In our extension, an intermediate result is therefore characterized by a tuple $(m, \phi)$ of a match $m$ and a marking $\phi \in \overline{\mathbb{N}}$, where we define $\overline{\mathbb{N}} := \mathbb{N} \cup \{\infty\}$. A marking-sensitive configuration is hence given by a function $\mathcal{C}^\Phi : V^N \rightarrow \powset{\mathcal{M}_\Omega \times \overline{\mathbb{N}}}$.

Furthermore, in our extension, result computation distinguishes between the full host graph $H$ and the relevant subgraph $H_p \subseteq H$. For marking-sensitive RETE nodes, we hence extend the function for target result sets by a parameter for $H_p$. The target result set of a marking-sensitive RETE node $n^\Phi \in V^{N^\Phi}$ with associated subgraph $Q$ for $H$, $H_p$, and a marking-sensitive configuration $\mathcal{C}^\Phi$ is then given by $\resultslocal{n^\Phi}{N^\Phi}{H}{H_p}{\mathcal{C}^\Phi}$, with $\resultslocal{n^\Phi}{N^\Phi}{H}{H_p}{\mathcal{C}^\Phi} \subseteq \allmatches{Q}{H} \times \overline{\mathbb{N}}$.


We adapt the \emph{join node}, \emph{union node}, and \emph{projection node} to marking-sensitive variants that assign result matches the maximum marking of related dependency matches and otherwise work as expected from relational algebra \cite{codd1970relational}. We also adapt the \emph{vertex input node} to only consider vertices in the relevant subgraph $H_p$ and assign matches the marking $\infty$. Finally, we introduce \emph{marking assignment nodes}, which assign matches a fixed marking value, \emph{marking filter nodes}, which filter marked matches by a minimum marking value, and \emph{forward and backward navigation nodes}, which work similarly to edge input nodes but only extract edges that are adjacent to host graph vertices included in the current result set of a designated dependency. Note that an efficient implementation of the backward navigation node requires reverse navigability of host graph edges.\footnote{Formal definitions for the adapted node semantics can be found in Appendix \ref{app:technical_details}.}

To obtain query results in the format of the standard RETE approach, we define the \emph{stripped result set} of a marking-sensitive RETE node $n^\Phi$ for a marking-sensitive configuration $\mathcal{C}^\Phi$ as $\resultsstripped{n^\Phi}{\mathcal{C}^\Phi} = \{m | (m, \phi) \in \mathcal{C}^\Phi(n^\Phi)\}$, that is, the set of matches that appear in tuples in the node's current result set in $\mathcal{C}^\Phi$.

\subsection{Localized Search with Marking-sensitive RETE} \label{sec:localized_search_with_marking-sensitive_rete}

Based on these adaptations, we introduce a recursive $localize$ procedure, which takes a regular, well-formed RETE net $(N, p)$ as input and outputs a marking-sensitive RETE net that performs a localized search that does not require searching the full model to produce complete results according to Definition \ref{def:completeness_subgraphs}.

If $p = [v \rightarrow w]$ is an edge input node, the result of localization for $(N, p)$ is given by $localize((N, p)) = (LNS(p), [\cup]^\Phi)$. The \emph{local navigation structure} $LNS(p)$ consists of seven RETE nodes as shown in Figure \ref{fig:localization_structures} (left): (1, 2) Two marking-sensitive vertex input nodes $[v]^\Phi$ and $[w]^\Phi$, (3, 4) two marking-sensitive union nodes $[\cup]^\Phi_v$ and $[\cup]^\Phi_w$ with $[v]^\Phi$ respectively $[w]^\Phi$ as a dependency, (5) a forward navigation node $[v \rightarrow_n w]^\Phi$ with $[\cup]^\Phi_v$ as a dependency, (6) a backward navigation node $[w \leftarrow_n v]^\Phi$ with $[\cup]^\Phi_w$ as a dependency, and (7) a marking-sensitive union node $[\cup]^\Phi$ with dependencies $[v \rightarrow_n w]^\Phi$ and $[w \leftarrow_n v]^\Phi$.

Importantly, the marking-sensitive vertex input nodes of the local navigation structure are executed over the relevant subgraph, whereas the forward and backward navigation nodes are executed over the full model. Intuitively, the local navigation structure thus takes the role of the edge input node, but initially only extracts edges that are adjacent to a vertex in the relevant subgraph.

If $p$ is a join node, it has two dependencies $p_l$ and $p_r$ with associated query subgraphs $Q_l$ and $Q_r$, which are the roots of two RETE subtrees $N_l$ and $N_r$. In this case, $(N, p)$ is localized as $localize((N, p)) = (N^\Phi, p^\Phi) = (N^\Phi_{\bowtie} \cup N^\Phi_l \cup N^\Phi_r \cup RPS_l \cup RPS_r, [\bowtie]^\Phi)$, where $(N^\Phi_l, p^\Phi_l) = localize((N_l, p_l))$, $(N^\Phi_r, p^\Phi_r) = localize((N_r, p_r))$, $N^\Phi_{\bowtie}$ consists of the marking-sensitive join $[\bowtie]^\Phi$ with dependencies $p^\Phi_l$ and $p^\Phi_r$, $RPS_l = RPS(p^\Phi_l, N^\Phi_r)$, and $RPS_r = RPS(p^\Phi_r, N^\Phi_l)$.

The \emph{request projection structure} $RPS_l = RPS(p^\Phi_l, N^\Phi_r)$ contains three RETE nodes as displayed in Figure \ref{fig:localization_structures} (center): (1) A marking filter node $[\phi > h]^\Phi$ with $p^\Phi_l$ as a dependency, (2) a marking-sensitive projection node $[\pi_{Q_v}]^\Phi$ with $[\phi > h]^\Phi$ as a dependency, and (3) a marking assignment node $[\phi := h]^\Phi$ with $[\pi_{Q_v}]^\Phi$ as a dependency. The value of $h$ is given by the height of the join tree of which $p$ is the root and $Q_v$ is a graph consisting of a single vertex $v \in V^{Q_l \cap Q_r}$. The request projection structure is then connected to an arbitrary local navigation structure in $N^\Phi_r$ that has a marking-sensitive node input $[v]^\Phi$ associated with $Q_v$. Therefore, it also adds a dependency from the marking-sensitive union node $[\cup]^\Phi_v$ that already depends on $[v]^\Phi$ to the marking assignment node $[\phi := h]^\Phi$. The mirrored structure $RPS_r = RPS(p^\Phi_r, N^\Phi_l)$ is constructed analogously.

Via the request projection structures, partial matches from one join dependency can be propagated to the subnet under the other dependency. Intuitively, the inserted request projection structures thereby allow the join's dependencies to request the RETE subnet under the other dependency to fetch and process the model parts required to complement the first dependency's results. The marking of a match then signals up to which height in the join tree complementarity for that match is required. Notably, matches involving elements in the relevant subgraph are marked $\infty$, as complementarity for them is required at the very top to guarantee completeness of the overall result.

The result of applying $localize$ to the example RETE net in Figure \ref{fig:example_query_and_rete} is displayed in Figure \ref{fig:localization_structures} (right). It consists of a marking sensitive join and two local navigation structures connected via request projection structures.

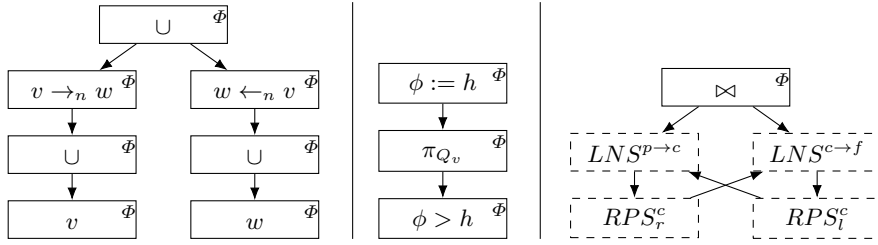
\begin{figure} [t]
	\centering
	\begin{subfigure}[b]{0.35\textwidth}
		\begin{center}

\begin{tikzpicture}

\node (union) [ms_retenode] {$\cup$};

\node[below left = 0.35cm and -0.5cm of union] (forward) [ms_retenode] {$v \rightarrow_n w$};

\node[below right = 0.35cm and -0.5cm of union] (backward) [ms_retenode] {$w \leftarrow_n v$};

\node[below = 0.35cm of forward] (unionv) [ms_retenode] {$\cup$};

\node[below = 0.35cm of backward] (unionw) [ms_retenode] {$\cup$};

\node[below = 0.35cm of unionv] (v) [ms_retenode] {$v$};

\node[below = 0.35cm of unionw] (w) [ms_retenode] {$w$};

\draw [-{Latex}] (union) -- (forward);

\draw [-{Latex}] (union) -- (backward);

\draw [-{Latex}] (forward) -- (unionv);

\draw [-{Latex}] (backward) -- (unionw);

\draw [-{Latex}] (unionv) -- (v);

\draw [-{Latex}] (unionw) -- (w);

\end{tikzpicture}
		\end{center}
	\end{subfigure}\quad\vline\quad
	\begin{subfigure}[b]{0.15\textwidth}
		\begin{center}

\begin{tikzpicture}

\node (assignment) [ms_retenode] {$\phi := h$};

\node[below = 0.35cm of assignment] (projection) [ms_retenode] {$\pi_{Q_v}$};

\node[below = 0.35cm of projection] (filter) [ms_retenode] {$\phi > h$};

\draw [-{Latex}] (assignment) -- (projection);

\draw [-{Latex}] (projection) -- (filter);

\end{tikzpicture}
		\end{center}
	\end{subfigure}\quad\vline\quad
	\begin{subfigure}[b]{0.35\textwidth}
		\begin{center}

\begin{tikzpicture}

\node (join) [ms_retenode] {$\bowtie$};

\node[below left = 0.35cm and -0.5cm of join, dashed] (ce) [retenode] {$LNS^{p \rightarrow c}$};

\node[below right = 0.35cm and -0.5cm of join, dashed] (fe) [retenode] {$LNS^{c \rightarrow f}$};

\node[below = 0.35cm of ce, dashed] (rpsp) [retenode] {$RPS^c_r$};

\node[below = 0.35cm of fe, dashed] (rpsf) [retenode] {$RPS^c_l$};

\draw [-{Latex}] (join) -- (ce);

\draw [-{Latex}] (join) -- (fe);

\draw [-{Latex}] (ce) -- (rpsp);

\draw [-{Latex}] (fe) -- (rpsf);

\draw [-{Latex}] (rpsp) -- (fe);

\draw [-{Latex}] (rpsf) -- (ce);

\end{tikzpicture}
		\end{center}
	\end{subfigure}
	\caption{LNS (left), RPS (center), and localized RETE net (right)} \label{fig:localization_structures}
\end{figure}


A consistent configuration for a localized RETE net then indeed guarantees query results that are complete according to Definition \ref{def:completeness_subgraphs}:

\begin{theorem} \label{the:completeness_consistent_configuration}
Let $H$ be a graph, $H_p \subseteq H$, $(N, p)$ a well-formed RETE net, and $Q$ the query graph associated with $p$. Furthermore, let $\mathcal{C}^\Phi$ be a consistent configuration for the localized RETE net $(N^\Phi, p^\Phi) = localize((N, p))$. The set of matches given by the stripped result set $\resultsstripped{p^\Phi}{\mathcal{C}^\Phi}$ is then complete under $H_p$.
\end{theorem}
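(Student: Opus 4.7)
The plan is to proceed by structural induction on the well-formed RETE net $(N, p)$, following the recursive definition of $localize$. A direct induction on the stated completeness property is too weak, because in a localized net the consistency of any subnet depends not only on the vertices of $H_p$ but also on the vertices that higher-level request projection structures inject into its LNSs. I would therefore strengthen the induction hypothesis to say roughly: for every match $m \in \allmatches{Q}{H}$ with $Q$ the query subgraph of $p$, and every vertex $v \in V^Q$, if $m(v)$ is present in the current result set of some $[\cup]^\Phi_v$ node inside an LNS of $(N^\Phi, p^\Phi)$ with marking $\phi$, then $(m, \phi')$ belongs to $\mathcal{C}^\Phi(p^\Phi)$ for some marking $\phi' \geq \phi$; in particular, $\infty$-markings are preserved throughout.

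The base case treats $(N, p)$ as an edge input $[v \rightarrow w]$. Here I would unfold the LNS: consistency of the marking-sensitive vertex input nodes places every $H_p$-vertex in $[v]^\Phi$ (resp.\ $[w]^\Phi$) with marking $\infty$; the union nodes above inherit these together with any RPS-injected vertices; the forward and backward navigation nodes, by consistency, extract every host edge adjacent to their vertex input; and the top union node collects both sides. Any edge $m$ that touches $H_p$ or is adjacent to an injected vertex therefore appears in $\mathcal{C}^\Phi(p^\Phi)$ with the appropriate marking. For the inductive step, $p$ is a join with dependencies $p_l$ and $p_r$. Given $m \in \allmatches{Q}{H}$ touching $H_p$, I would pick a witness vertex $v$ with $m(v) \in V^{H_p}$ and assume without loss of generality $v \in V^{Q_l}$. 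Applying the strengthened IH to $(N_l, p_l)$ yields $m|_{Q_l} \in \mathcal{C}^\Phi(p^\Phi_l)$ with marking $\infty$. The filter $\phi > h$ at the top of $RPS_l$ passes this match, the marking-sensitive projection onto the chosen overlap vertex $v' \in V^{Q_l \cap Q_r}$ emits the singleton $m(v')$, and the marking assignment node tags it with $h$ before feeding it into the $[\cup]^\Phi_{v'}$ node of the corresponding LNS in $N^\Phi_r$. Invoking the strengthened IH on $(N_r, p_r)$, now using the RPS-injection clause, gives $m|_{Q_r} \in \mathcal{C}^\Phi(p^\Phi_r)$. Consistency of the marking-sensitive join $[\bowtie]^\Phi$ finally assembles $m|_{Q_l}$ and $m|_{Q_r}$ into $m \in \resultsstripped{p^\Phi}{\mathcal{C}^\Phi}$.

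The main obstacle I anticipate is calibrating the marking bookkeeping so that a finite marking assigned by one RPS stays strictly above the filter thresholds of every RPS the injected vertex subsequently encounters deeper in the tree. The construction chooses $h$ to be the height of the join tree rooted at the current join, which strictly dominates the heights of every join in its two subtrees; this ordering is what guarantees that $\phi > h'$ at any deeper join is satisfied by a marking injected at a higher one. Making this monotonicity explicit, and separately verifying the auxiliary invariant that matches carrying a vertex from $H_p$ retain marking $\infty$ through marking-sensitive unions, joins, and navigation nodes, is the technical heart of the argument; once in place, the structural induction itself is routine.
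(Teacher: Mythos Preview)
Your proposal is essentially the paper's own argument. The paper isolates your strengthened induction hypothesis as a separate key lemma (stated for \emph{extension points} $x\in\chi(N^\Phi)$ with the explicit precondition $\psi>h$, where $h$ is the height of the join tree), proves it by the same structural induction on $(N,p)$ with the same base case (unfolding an LNS via forward and backward navigation) and the same inductive step (IH on one side, pass through the RPS filter because $\phi_l\geq\psi>h$, project to the overlap vertex, reassign marking $h$, feed into the other side's extension point, invoke IH again because $h>h'$, then join), and then derives the theorem exactly as you do by observing that every $H_p$-vertex enters some extension point with marking $\infty$. Your anticipated obstacle and its resolution via the strict height decrease in subtrees is precisely the paper's mechanism; the only refinement worth noting is that the paper bakes the threshold condition $\psi>h$ directly into the lemma's hypothesis rather than leaving it to side bookkeeping, and phrases the lemma for arbitrary \emph{modular extensions} so that the IH applies cleanly to subnets embedded in the full localized net.
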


\begin{proof} (Idea)\footnote{See Appendix \ref{app:technical_details} for detailed proofs.}
It can be shown via induction over the height of $N$ that request projection structures ensure the construction of all intermediate results required to guarantee completeness of the overall result under $H_p$.
\end{proof}


Notably, the insertion of request projection structures creates cycles in the localized RETE net, which prevents execution via a reverse topological sorting. However, the marking filter and marking assignment nodes in the request projection structures effectively prevent cyclic execution at the level of intermediate results: Because matches in the result set of a dependency of some join at height $h$ that are only computed on request from the other dependency are marked $h$, these matches are filtered out in the dependent request projection structure. An execution order for the localized RETE net $(N^\Phi, p^\Phi) = localize((N, p))$ can thus be constructed recursively via an $order$ procedure as follows:

If $p$ is an edge input node, the RETE graph given by $N^\Phi = LNS(p)$ is a tree that can be executed via a reverse topological sorting of the nodes in $LNS(p)$, that is, $order((N^\Phi, p^\Phi)) = toposort(LNS(p))^{-1}$.

If $p$ is a join, according to the construction, the localized RETE graph is given by $N^\Phi = N^\Phi_{\bowtie} \cup N^\Phi_l \cup N^\Phi_r \cup RPS_l \cup RPS_r$. In this case, an execution order for $(N^\Phi, p^\Phi)$ can be derived via the concatenation $order((N^\Phi, p^\Phi)) = order(RPS_r) \circ order(N^\Phi_l) \circ order(RPS_l) \circ order(N^\Phi_r) \circ order(RPS_r) \circ order(N^\Phi_l) \circ order(N^\Phi_{\bowtie})$, where $order(RPS_l) = toposort(RPS_l)^{-1}$, $order(RPS_r) = toposort(RPS_r)^{-1}$, and $order(N^\Phi_{\bowtie}) = [p^\Phi]$, that is, the sequence containing only $p^\Phi$.

Executing a localized RETE net $(N^\Phi, p^\Phi)$ via $order((N^\Phi, p^\Phi))$ then guarantees a consistent result configuration for any starting configuration:

\begin{theorem} \label{the:execution_order_consistency}
Let $H$ be a graph, $H_p \subseteq H$, $(N, p)$ a well-formed RETE net, and $\mathcal{C}^\Phi_0$ an arbitrary starting configuration. Executing the marking-sensitive RETE net $(N^\Phi, p^\Phi) = localize((N, p))$ via $O = order((N^\Phi, p^\Phi))$ then yields a consistent configuration $\mathcal{C}^\Phi = execute(O, N^\Phi, H, H_p, \mathcal{C}^\Phi_0)$.
\end{theorem}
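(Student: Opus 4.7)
The plan is to prove Theorem \ref{the:execution_order_consistency} by structural induction on the well-formed RETE net $(N, p)$, mirroring the recursive definitions of $localize$ and $order$. For the base case, where $p$ is an edge input node, $(N^\Phi, p^\Phi) = (LNS(p), [\cup]^\Phi)$ is acyclic by construction (in fact a tree, as shown on the left of Figure \ref{fig:localization_structures}), so $order((N^\Phi, p^\Phi)) = toposort(LNS(p))^{-1}$ executes each node exactly once after all its dependencies, and no later step can invalidate it, yielding a consistent configuration.

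For the inductive step, where $p$ is a join, I would first record a mild strengthening of the inductive hypothesis suited to the embedded context: executing $order(N^\Phi_l)$ on any starting configuration yields a configuration consistent for every node of $N^\Phi_l$, even inside the larger net $N^\Phi$, where some union nodes $[\cup]^\Phi_v$ additionally receive external input from $RPS_r$. This holds because each such union node is itself internal to $N^\Phi_l$ and its execution uniformly reads the current results of all its dependencies, and because the external inputs from $RPS_r$ are not modified while $order(N^\Phi_l)$ runs. The analogous statement holds for $N^\Phi_r$, and for $RPS_l$ and $RPS_r$, which are themselves trees executed by a reverse topological sort.

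Using these observations, I would walk through the concatenated sequence $order(RPS_r) \circ order(N^\Phi_l) \circ order(RPS_l) \circ order(N^\Phi_r) \circ order(RPS_r) \circ order(N^\Phi_l) \circ [p^\Phi]$. After the first two segments, $RPS_r$ and $N^\Phi_l$ are consistent. After the next two, $RPS_l$ and $N^\Phi_r$ are additionally consistent, while nothing in $N^\Phi_l$ or $RPS_r$ has been touched. The second pass through $RPS_r$ and $N^\Phi_l$ then propagates any matches that $N^\Phi_r$ introduced into $p^\Phi_r$ back into $N^\Phi_l$, and the final singleton execution of $p^\Phi$ reconciles the top-level join with its updated dependencies.

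The main obstacle is showing that the second pass does not invalidate the consistency of $RPS_l$ and $N^\Phi_r$ established earlier, since it may alter $p^\Phi_l$ on which $RPS_l$ directly depends. The key, which I would develop into a marking-bound lemma for $N^\Phi_l$, is that every tuple $(m, \phi)$ appearing in a node of $N^\Phi_l$ satisfies $\phi = \infty$ or $\phi \leq h$, where $h$ is the height of $p$ in the join tree. This follows because the only sources of markings in $N^\Phi_l$ are $[v]^\Phi$ inputs (value $\infty$), sub-RPS marking assignments inside $N^\Phi_l$ (values strictly less than $h$), and the external input from $RPS_r$ (value exactly $h$), while joins, unions, and projections propagate the maximum. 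Consequently, the filter $\phi > h$ inside $RPS_l$ retains only tuples $(m, \infty)$, and by the inductive hypothesis together with Theorem \ref{the:completeness_consistent_configuration} (which, via its construction of all required intermediate results, guarantees $(m, \infty) \in \mathcal{C}^\Phi(p^\Phi_l)$ precisely for each $m \in \allmatches{Q_l}{H}$ touching $H_p$), the set of such $m$ is determined by $Q_l$, $H$, and $H_p$ alone. Hence this set is already achieved after the first execution of $N^\Phi_l$ and is unchanged by the second pass, so $RPS_l$'s output, and thereby $N^\Phi_r$'s consistency, is preserved.
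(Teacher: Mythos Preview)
Your proof follows the same high-level structure as the paper's: induction on the height of the join tree, a strengthened induction hypothesis for the embedded context, and a step-by-step walk through the seven segments of $order((N^\Phi, p^\Phi))$. The paper formalizes your ``mild strengthening'' precisely via the notion of a \emph{modular extension} and proves that $order$ is robust under modular extension (Lemma~\ref{lem:order_robustness}), which is exactly the strengthened statement the induction needs.

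Where you diverge is in the key step, namely showing that the second pass through $N^\Phi_l$ does not invalidate $RPS_l$. You argue via an absolute characterization: every marking in $N^\Phi_l$ lies in $\{\infty\}\cup\{1,\ldots,h\}$, and the $\infty$-marked matches at $p^\Phi_l$ are \emph{exactly} those touching $H_p$, hence independent of the $RPS_r$ input and invariant across the two passes. The paper instead establishes a change-propagation bound (Lemma~\ref{lem:localized_rete_max_marking}, supported by per-node Lemmata~\ref{lem:union_max_marking}--\ref{lem:join_max_marking}): if external inputs to $N^\Phi_l$ change only in tuples of marking $\leq\psi$, then $p^\Phi_l$ changes only in tuples of marking $\leq\psi$. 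Applied with $\psi=h$, the symmetric difference between the two passes lies entirely below the filter threshold, so $RPS_l$ stays consistent. Your invariance argument is more direct and avoids that second induction, but note that it relies on a sharper fact than Theorem~\ref{the:completeness_consistent_configuration} literally states: that theorem gives completeness of the stripped result set, not that the $\infty$-marked matches at $p^\Phi_l$ are \emph{precisely} those touching $H_p$. For the forward inclusion you need Lemma~\ref{lem:completeness_extension_points} applied to $N^\Phi_l$ inside the modular extension $N^\Phi$; for the converse you need your marking-source analysis to trace every $\infty$ back to a vertex input in $H_p$. With those two points made explicit, your route is sound and arguably cleaner for this particular theorem, while the paper's change-bound lemma is phrased more generally and reused elsewhere.
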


\begin{proof} (Idea)
Follows because the inserted marking filter nodes prevent cyclic execution behavior at the level of intermediate results.
\end{proof}

Combined with the result from Theorem \ref{the:completeness_consistent_configuration}, this means that a localized RETE net can be used to compute complete query results for a relevant subgraph in the sense of Definition \ref{def:completeness_subgraphs}, as outlined in the following corollary:

\begin{corollary} \label{cor:completeness_execution}
Let $H$ be a graph, $H_p \subseteq H$, $(N, p)$ a well-formed RETE net, and $Q$ the query graph associated with $p$. Furthermore, let $\mathcal{C}^\Phi_0$ be an arbitrary starting configuration for the localized RETE net $(N^\Phi, p^\Phi) = localize((N, p))$ and $\mathcal{C}^\Phi = execute(order((N^\Phi, p^\Phi)), N^\Phi, H, H_p, \mathcal{C}^\Phi_0)$. The set of matches from $Q$ into $H$ given by $\resultsstripped{p^\Phi}{\mathcal{C}^\Phi}$ is then complete under $H_p$.
\end{corollary}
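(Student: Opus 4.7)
The plan is to derive the corollary as an immediate composition of the two preceding theorems, with the bulk of the work already absorbed into them.

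First, I would invoke Theorem~\ref{the:execution_order_consistency} with the given inputs: the well-formed RETE net $(N, p)$, its localization $(N^\Phi, p^\Phi) = localize((N, p))$, the host graph $H$ with relevant subgraph $H_p$, and the arbitrary starting configuration $\mathcal{C}^\Phi_0$. This yields directly that the configuration $\mathcal{C}^\Phi = execute(order((N^\Phi, p^\Phi)), N^\Phi, H, H_p, \mathcal{C}^\Phi_0)$ is consistent for $H$. In particular, $\mathcal{C}^\Phi$ is consistent for every node in $V^{N^\Phi}$, which is precisely the hypothesis required by the next step.

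Second, I would feed this consistent configuration into Theorem~\ref{the:completeness_consistent_configuration}. Its hypotheses line up verbatim: $(N, p)$ is well-formed with $Q$ the query graph associated with $p$, and $\mathcal{C}^\Phi$ is a consistent configuration for the localized net $(N^\Phi, p^\Phi)$. The conclusion then gives that the stripped result set $\resultsstripped{p^\Phi}{\mathcal{C}^\Phi}$ is complete under $H_p$, which is exactly the claim of the corollary.

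There is essentially no obstacle in this derivation beyond checking that the parameters agree: the same $(N, p)$, $H$, $H_p$, and localized net appear in both theorems, and the configuration produced by the first is precisely the configuration required by the second. All the substantive reasoning — the induction over the height of $N$ showing that request projection structures suffice to build every intermediate match touching $H_p$, and the analysis of how marking filter and assignment nodes break the cyclic dependencies so that the prescribed $order$ stabilizes — is already encapsulated in Theorems~\ref{the:completeness_consistent_configuration} and~\ref{the:execution_order_consistency}, so the corollary genuinely is a one-line consequence.
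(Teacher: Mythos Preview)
Your proposal is correct and mirrors the paper's own proof exactly: the corollary is obtained by first applying Theorem~\ref{the:execution_order_consistency} to establish consistency of $\mathcal{C}^\Phi$, and then applying Theorem~\ref{the:completeness_consistent_configuration} to conclude completeness under $H_p$.
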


\begin{proof}
Follows directly from Theorem \ref{the:completeness_consistent_configuration} and Theorem \ref{the:execution_order_consistency}.
\end{proof}

\subsection{Performance of Localized RETE Nets} \label{sec:localized_rete_performance}

Performance of a RETE net $(N, p)$ with respect to both execution time and memory consumption is largely determined by the \emph{effective size} of a consistent configuration $\mathcal{C}$ for $(N, p)$, which we define as $|\mathcal{C}|_e := \sum_{n \in V^N} \sum_{m \in \mathcal{C}(n)} |m|$. In this context, we define the size of a match $m : Q \rightarrow H$ as $|m| := |m^V| + |m^E| = |V^Q| + |E^Q|$. It can then be shown that localization incurs only a constant factor overhead on the effective size of $\mathcal{C}$ for any edge-dominated host graph:

\begin{theorem} \label{the:upper_bound_configuration_size}
Let $H$ be an edge-dominated graph, $H_p \subseteq H$, $(N, p)$ a well-formed RETE net with $Q$ the associated query graph of $p$, $\mathcal{C}$ a consistent configuration for $(N, p)$ for host graph $H$, and $\mathcal{C}^\Phi$ a consistent configuration for the localized RETE net $(N^\Phi, p^\Phi) = localize((N, p))$ for host graph $H$ and relevant subgraph $H_p$. It then holds that $\sum_{n^\Phi \in V^{N^\Phi}} \sum_{(m, \phi) \in \mathcal{C}(n^\Phi)} |m| \leq 7 \cdot |\mathcal{C}|_e$.
\end{theorem}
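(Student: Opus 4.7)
The plan is to directly account for the effective size contributions of the localized nodes in $N^\Phi$ by grouping them around the original nodes of $N$ that they derive from. First I would observe that $\mathit{localize}$ introduces exactly $7$ new nodes per original node: the $7$ nodes of the LNS for each edge input in $N$, and, per join $n_j$ in $N$, the marking-sensitive join $[\bowtie]^\Phi$ together with the $2 \cdot 3$ nodes of the two request projection structures $RPS_l$ and $RPS_r$ attached at $n_j$. Hence $|V^{N^\Phi}| = 7|V^N|$, and it suffices to bound the total effective size of each $7$-node group and aggregate. Writing $c_n := \sum_{m \in \mathcal{C}(n)} |m|$, I would rely throughout on two structural facts: (i) for any marking-sensitive node $n^\Phi$ with associated subquery $Q'$, the stripped result set is contained in $\allmatches{Q'}{H}$; and (ii) edge-domination of $H$ bounds the number of vertices of any type by the number of edges of any adjacent type.

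For the LNS of an edge input $n_e$ with query $\{v \rightarrow w\}$ and effective size $c_{n_e} = 3|E_{n_e}|$, the two vertex-input nodes and the two unions $[\cup]^\Phi_v, [\cup]^\Phi_w$ each store size-$1$ matches whose count is bounded by the number of host-graph vertices of the respective type, while the two navigation nodes and the top union each store size-$3$ matches bounded in count by $|E_{n_e}|$. Invoking edge-domination to bound the vertex counts by $|E_{n_e}|$ yields a total of at most $2|E_{n_e}| + 2|E_{n_e}| + 3 \cdot 3|E_{n_e}| = 13|E_{n_e}| = \tfrac{13}{3} c_{n_e}$.

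For the join-level group at $n_j$ with children $p_l$ and $p_r$, I would bound the effective size of $[\bowtie]^\Phi$ by $c_{n_j}$ directly via (i). For $RPS_l$, the filter $[\phi > h]^\Phi$ stores a subset of $\mathcal{C}^\Phi(p_l^\Phi)$ of identical match size, so its effective size is at most $c_{p_l}$ by (i) applied to $p_l^\Phi$; the projection and the marking assignment each carry projected size-$1$ matches counted by at most the filter's match count, hence each bounded by $|\allmatches{Q_l}{H}| = c_{p_l}/|Q_l| \leq c_{p_l}/3$ using $|Q_l| \geq 3$. Summing gives $RPS_l + RPS_r \leq \tfrac{5}{3}(c_{p_l} + c_{p_r})$, so the join-level group totals at most $c_{n_j} + \tfrac{5}{3}(c_{p_l} + c_{p_r})$.

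Aggregating over all edge inputs and all joins and using the identity $\sum_{n_j}(c_{p_l^{n_j}} + c_{p_r^{n_j}}) = |\mathcal{C}|_e - c_p$ (each non-root node of $N$ is a child of exactly one join), the total effective size of $\mathcal{C}^\Phi$ is at most $\tfrac{13}{3}\sum_{n_e} c_{n_e} + \sum_{n_j} c_{n_j} + \tfrac{5}{3}(|\mathcal{C}|_e - c_p)$, which a short calculation (expanding with $S_E = \sum_{n_e} c_{n_e}$ and $S_J = \sum_{n_j} c_{n_j}$, giving $6 S_E + \tfrac{8}{3} S_J - \tfrac{5}{3} c_p$) bounds by $7|\mathcal{C}|_e$. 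The main obstacle is the step asserting $\mathcal{C}^\Phi(p_l^\Phi) \subseteq \allmatches{Q_l}{H}$ as sets of matches independently of any sub-configuration consistency argument, since $N^\Phi_l$ interacts with $RPS_r$ and is not a standalone localized net; this subset inclusion must follow from the defining inclusions of the marking-sensitive target result sets themselves (as specified in the appendix), rather than from any inductive invocation of the theorem on subtrees.
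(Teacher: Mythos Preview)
Your argument is correct and follows the same skeleton as the paper's proof: group the nodes of $N^\Phi$ into the seven LNS nodes per edge input and the seven nodes $\{[\bowtie]^\Phi\} \cup RPS_l \cup RPS_r$ per join, then bound each group against the corresponding part of $\mathcal{C}$. The difference is where you charge the RPS contribution. The paper charges all six RPS nodes at a join $n_j$ to $c_{n_j}$ itself, arguing that each RPS match is no larger than a join match and that the RPS match count is bounded by $|\mathcal{C}^\Phi([\bowtie]^\Phi)| \leq |\mathcal{C}([\bowtie])|$; this yields exactly $7c_{n_j}$ per join group and hence $7|\mathcal{C}|_e$ with no global aggregation step. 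You instead charge $RPS_l$ to the child $p_l$ (and $RPS_r$ to $p_r$), which gives tighter per-group constants but forces you to sum over the tree using $\sum_{n_j}(c_{p_l}+c_{p_r}) = |\mathcal{C}|_e - c_p$. Your route is slightly more involved but the RPS bound via $c_{p_l}$ is the more transparently valid one: the paper's inequality $|\mathcal{C}^\Phi([\phi>h]^\Phi_l)| \leq |\mathcal{C}^\Phi([\bowtie]^\Phi)|$ is not obvious when a high-marking match $m_l$ in $p^\Phi_l$ has no complementary $m_r$ in $\allmatches{Q_r}{H}$ at all.

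Your stated ``main obstacle'' is not one. The inclusion $\resultsstripped{p_l^\Phi}{\mathcal{C}^\Phi} \subseteq \allmatches{Q_l}{H}$ is exactly your fact~(i): the paper stipulates $\resultslocal{n^\Phi}{N^\Phi}{H}{H_p}{\mathcal{C}^\Phi} \subseteq \allmatches{Q}{H} \times \overline{\mathbb{N}}$ as part of the definition of target result sets, and consistency means $\mathcal{C}^\Phi(p_l^\Phi)$ equals its target result set. No inductive argument on subtrees or interaction with $RPS_r$ is needed; the inclusion holds nodewise by definition. Since in the standard net a consistent $\mathcal{C}$ has $\mathcal{C}(p_l) = \allmatches{Q_l}{H}$, you get $|\mathcal{C}^\Phi(p_l^\Phi)| \cdot |Q_l| \leq c_{p_l}$ directly.
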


\begin{proof} (Idea)
Follows because localization only increases the number of RETE nodes by factor 7 and marking-sensitive result sets contain no duplicate matches.
\end{proof}

By Theorem \ref{the:upper_bound_configuration_size}, it then follows that localization of a RETE net incurs only a constant factor overhead on memory consumption even in the worst case where the relevant subgraph is equal to the full model:

\begin{corollary}
Let $H$ be an edge-dominated graph, $H_p \subseteq H$, $(N, p)$ a well-formed RETE net, $\mathcal{C}$ a consistent configuration for $(N, p)$ for host graph $H$, and $\mathcal{C}^\Phi$ a consistent configuration for the localized RETE net $(N^\Phi, p^\Phi) = localize((N, p))$ for host graph $H$ and relevant subgraph $H_p$. Assuming that storing a match $m$ requires an amount of memory in $O(|m|)$ and storing an element from $\overline{\mathbb{N}}$ requires an amount of memory in $O(1)$, storing $\mathcal{C}^\Phi$ requires memory in $O(|\mathcal{C}|_e)$.
\end{corollary}

\begin{proof} (Idea)
Follows from Theorem \ref{the:upper_bound_configuration_size}.
\end{proof}

In the worst case, a localized RETE net would still be required to compute a complete result. In this scenario, the execution of the localized RETE net may essentially require superfluous recomputation of match markings, causing computational overhead. When starting with an empty configuration, the number of such recomputations per match is however limited by the size of the query graph, only resulting in a small increase in computational complexity:

\begin{theorem} \label{the:complexity_time_localized}
Let $H$ be an edge-dominated graph, $H_p \subseteq H$, $(N, p)$ a well-formed RETE net for query graph $Q$, $\mathcal{C}$ a consistent configuration for $(N, p)$, and $\mathcal{C}^\Phi_0$ the empty configuration for $(N^\Phi, p^\Phi) = localize((N, p))$. Executing $(N^\Phi, p^\Phi)$ via $execute(order((N^\Phi, p^\Phi)), N^\Phi, H, H_p, \mathcal{C}^\Phi_0)$ then takes $O(T \cdot (|Q_a| + |Q|))$ steps, with $|Q_a|$ the average size of matches in $\mathcal{C}$ and $T = \sum_{n \in V^N} |\mathcal{C}(n)|$.
\end{theorem}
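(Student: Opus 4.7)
The plan is to separate the cost into (i) the useful work done whenever an intermediate result is inserted into the configuration, and (ii) the overhead incurred by cycling through the execution order sequence without doing useful work.

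For (i), I would first argue, in the spirit of Theorem~\ref{the:upper_bound_configuration_size}, that the total number of $(m, \phi)$ pairs appearing across all nodes of the final consistent configuration $\mathcal{C}^\Phi$ is $O(T)$. Since $\mathcal{C}^\Phi_0$ is empty and each execution only grows the result sets (markings are updated monotonically by taking the maximum over matching dependency tuples), this bounds the total number of insertions performed throughout the execution by $O(T)$. Each insertion --- constructing the match, maintaining the join and navigation hash indexes, comparing and writing the marking --- takes $O(|Q_a|)$ elementary steps, for a useful-work total of $O(T \cdot |Q_a|)$.

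For (ii), the obstacle is that $order((N^\Phi, p^\Phi))$ can itself be much longer than $|Q|$, because at each join the recursion duplicates the execution of the left subtree. The saving grace is the argument behind Theorem~\ref{the:execution_order_consistency}: the marking filter nodes break cyclic behavior at the level of intermediate results, so whenever the execution of a node is redundant (none of its dependencies has changed since it was last executed), this can be detected in $O(1)$ time via dirty-flag bookkeeping on dependencies. I would then argue that each node can be the target of at most $O(|Q|)$ meaningful executions, because marking values are bounded by the height of the join tree and hence by $|Q|$, and request projection structures serialize the recomputation into one wave per marking value. Charging this structural overhead to the $O(T)$ inserted matches distributed across the $O(|Q|)$ nodes of the localized net gives an additional $O(T \cdot |Q|)$ cost.

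Combining both contributions yields the claimed bound $O(T \cdot (|Q_a| + |Q|))$. The main obstacle will be the amortized accounting in (ii): one must carefully justify that, despite the potentially exponential length of the order sequence, the total traversal overhead collapses to the additive $+|Q|$ factor rather than becoming a multiplicative $\cdot |Q|$ factor per inserted match. This hinges on showing that the two copies of $order(N^\Phi_l)$ and $order(RPS_r)$ appearing at every join level only trigger genuinely new work on a match the first time they encounter it with a particular marking, while later traversals are absorbed by constant-time dirty-flag checks.
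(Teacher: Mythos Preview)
Your decomposition into insertion cost (i) and overhead (ii) mirrors the paper's approach, and part (i) is essentially correct: the paper likewise obtains $O(T\cdot|Q_a|) = O(|\mathcal{C}|_e)$ for the work of building matches, via the bound of Theorem~\ref{the:upper_bound_configuration_size}.

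The gap is in your accounting for (ii). You argue that ``each node can be the target of at most $O(|Q|)$ meaningful executions,'' but this is the wrong unit: a single node can see far more than $O(|Q|)$ non-trivial executions, since each new match arriving in a dependency may trigger work. The paper instead expresses the cost of each execution of a non-join node as $O(S^+_D + S^+_R + N^\uparrow_D + N^\uparrow_R)$---sizes of newly added dependency and result matches, plus counts of matches whose marking has increased since the previous execution of that node. Summed over all executions, the $S^+$ terms collapse to $O(|\mathcal{C}|_e)$ (each match is inserted once, each node has at most two dependents), while the $N^\uparrow$ terms sum to $O(T\cdot|Q|)$ because each of the $O(T)$ matches can have its marking increased at most $|Q|$ times. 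The bound on marking values thus enters at the per-\emph{match} level, not the per-node level---which is in fact what your closing sentence (``the first time they encounter it with a particular marking'') gestures at, but your earlier per-node formulation does not deliver. The notification mechanism is then only needed to make the truly redundant executions free, not to bound the number of meaningful ones.

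You also omit a case the paper treats separately: for marking-sensitive join nodes, a single marking increase in one dependency may force recomputing the marking of many associated result matches without any of them actually changing, so the per-execution bound above fails directly. The paper closes this by observing that each join result is associated with exactly two dependency matches, so the total marking-recomputation effort across all executions is still linear in the number of result tuples times the number of possible markings, and hence stays within $O(T\cdot|Q|)$.
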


\begin{proof} (Idea)
Follows since the effort for initial construction of matches by the marking-sensitive RETE net is linear in the total size of the constructed matches and the marking of a match changes at most $|Q|$ times.
\end{proof}

Assuming an empty starting configuration, a regular well-formed RETE net $(N, p)$ can be executed in $O(|\mathcal{C}|_e)$ steps, which can also be expressed as $O(T \cdot |Q_a|)$. The overhead of a localized RETE net compared to the original net can thus be characterized by the factor $\frac{|Q|}{|Q_a|}$. Assuming that matches for the larger query subgraphs constitute the bulk of intermediate results, which seems reasonable in many scenarios, $\frac{|Q|}{|Q_a|}$ may be approximated by a constant factor.

For non-empty starting configurations and incremental changes, no sensible guarantees can be made. On the one hand, in a localized RETE net, a host graph modification may trigger the computation of a large number of intermediate results that were previously omitted due to localization. On the other hand, in a standard RETE net, a modification may result in substantial effort for constructing superfluous intermediate results that can be avoided by localization. Depending on the exact host graph structure and starting configuration, execution may thus essentially require a full recomputation for either the localized or standard RETE net but cause almost no effort for the other variant.


\section{Evaluation} \label{sec:evaluation}

We aim to investigate whether RETE net localization can improve performance of query execution in scenarios where the relevant subgraph constitutes only a fraction of the full model, considering initial query execution time, execution time for incrementally processing model updates, and memory consumption as performance indicators.\footnote{Experiments were executed on a Linux SMP Debian 4.19.67-2 computer with Intel Xeon E5-2630 CPU (2.3\,GHz clock rate) and 386\,GB system memory running OpenJDK 11.0.6. Reported measurements correspond to the arithmetic mean of measurements for 10 runs. Memory measurements were obtained using the Java Runtime class. To represent graph data, all experiments use EMF \cite{emf} object graphs that enable reverse navigation of edges. Our implementation is available under \cite{implementation}. More details and query visualizations can be found in Appendix \ref{app:queries} and \ref{app:additional_measurements}.} We experiment with the following querying techniques:

\begin{itemize}
\item \textbf{STANDARD}: Our own implementation of a regular RETE net with global execution semantics \cite{barkowsky2023host}.
\item \textbf{LOCALIZED}: Our own implementation of the RETE net used for STANDARD, localized according to the description in Section \ref{sec:localized_search_with_marking-sensitive_rete}.
\item \textbf{VIATRA}: The external RETE-based VIATRA tool \cite{varro2016}.
\item \textbf{SDM*}: Our own local-search-based Story Diagram Interpreter tool \cite{giese2009improved}. Note that we only consider searching for new query results. We thus underapproximate the time and memory required for a full solution with this strategy, which would also require maintaining previously found results.
\end{itemize}

\subsection{Synthetic Abstract Syntax Graphs} \label{sec:evaluation_java_scalability}

We first attempt a systematic evaluation via a synthetic experiment, which emulates a developer loading part of a large model into their workspace and monitoring some well-formedness constraints as they modify the loaded part, that is, relevant subgraph, without simultaneous changes to other model parts.

We therefore generate Java abstract syntax graphs with 1, 10, 100, 1000, and 10000 packages, with each package containing 10 classes with 10 fields referencing other classes. As relevant subgraph, we consider a single package and its contents. We then execute a query searching for paths consisting of a package and four classes connected via fields. After initial query execution, we modify the relevant subgraph by creating a class with 10 fields in the considered package and perform an incremental update of query results. This step is repeated 10 times.

Figure \ref{fig:java_scalability_bars} (right) displays the execution times for the initial execution of the query. The execution time of LOCALIZED remains around 120\,ms for all model sizes. The execution time for SDM* slowly grows from around 350\,ms to 1025\,ms due to indexing effort that is necessary for observing model changes. In contrast, the execution time for the other RETE-based strategies clearly scales with model size, with the execution time for STANDARD growing from around 13\,ms for the smallest model to more than 184\,000\,ms for the largest model. On the one hand, localization thus incurs a noticeable overhead in initial execution time for the smallest model, where even localized query execution is essentially global. On the other hand, it significantly improves execution time for the larger models and even achieves better scalability than the local-search-based tool in this scenario.

The average times for processing a model update are displayed in Figure \ref{fig:java_scalability_bars} (center). Here, all strategies achieve execution times mostly independent of model size. While the measurements for STANDARD fluctuate, likely due to the slightly unpredictable behavior of hash-based indexing structures, average execution times remain low overall and below 10\,ms for LOCALIZED. Still, localization incurs a noticeable overhead up to factor 6 compared to STANDARD and VIATRA. This overhead is expected, since in this scenario, all considered updates affect the relevant subgraph and thus impact the results of the localized RETE net similarly to the results of the standard RETE nets. Consequently, localization does not reduce computational effort, but causes overhead instead.

Finally, Figure \ref{fig:java_scalability_bars} (left) shows the memory measurements for all strategies and models after the final update. Here, LOCALIZED again achieves a substantial improvement in scalability compared to the other RETE-based strategies, with a slightly higher memory consumption for the smallest model and an improvement by factor 120 over STANDARD for the largest model. This is a result of the localized RETE net producing the same number of intermediate results for all model sizes, with the slight growth in memory consumption likely a product of the growing size of the model itself. SDM*, not storing any matches, performs better for all but the largest model, where memory consumption surpasses the measurement for LOCALIZED. This surprising result can probably be explained by the fact that SDM* has to index the full model to observe modifications, causing an overhead in memory consumption.

\begin{figure}
\centering
\includegraphics[width=\textwidth]{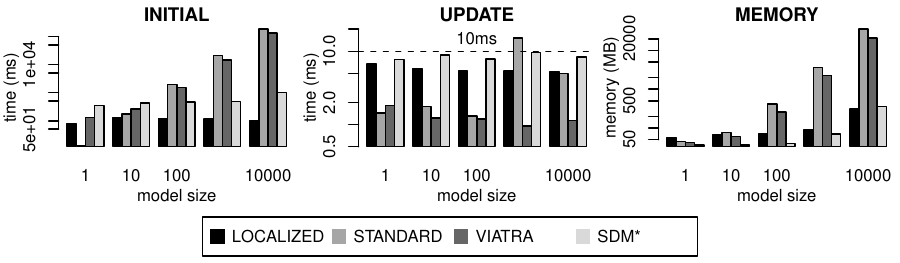}
\caption{Measurements for the synthetic abstract syntax graph scenario (log scale)}\label{fig:java_scalability_bars}
\end{figure}

In addition to these experiments, where the full model is always stored in main memory, we also experimented with a model initially stored on disk via the persistence layer CDO \cite{cdo}. Measurements mostly mirror those for the main-memory-based experiment. Notably though, in conjunction with CDO the LOCALIZED strategy achieves almost ideal scalability regarding memory consumption for this scenario, with measurements around 70\,MB for all model sizes.\footnote{See Appendix \ref{app:additional_measurements} for measurement results.}

\subsection{Real Abstract Syntax Graphs} \label{sec:evaluation_java_incremental}

To evaluate our approach in the context of a more realistic application scenario, we perform a similar experiment using real data and queries inspired by object-oriented design patterns \cite{gamma1993design}. In contrast to the synthetic scenario, this experiment emulates a situation where modifications may concern not only the relevant subgraph but the entire model, for instance when multiple developers are simulatenously working on different model parts.

We therefore extract a history of real Java abstract syntax graphs with about 16\,000 vertices and 45\,000 edges from a software repository using the MoDisco tool \cite{bruneliere2010modisco}. After executing the queries over the initial commit, we replay the history and perform incremental updates of query results after each commit. As relevant subgraph, we again consider a single package and its contents.

Figure \ref{fig:java_incremental_time} displays the aggregate execution time for processing the commits one after another for the queries where LOCALIZED performed best and worst compared to STANDARD, with the measurement at $x = 0$ indicating the initial execution time for the starting model. Initial execution times are similarly low due to a small starting model and in fact slightly higher for LOCALIZE. However, on aggregate LOCALIZED outperforms STANDARD with an improvement between factor 5 and 18 due to significantly lower update times, which are summarized in Figure \ref{fig:update_clouds} (left). In this case, the improvement mostly stems from the more precise monitoring of the model for modifications: The RETE nets of both STANDARD and LOCALIZED remain small due to strong filtering effects in the query graphs. However, while STANDARD spends significant effort on processing model change notifications due to observing all appropriately typed model elements, this effort is substantially reduced for LOCALIZED, which only monitors elements relevant to query results required for completeness under the relevant subgraph. The execution times of SDM* can be explained by the same effect. Interestingly, VIATRA seems to implement an improved handling of such notifications, achieving improved execution times for particularly small updates even compared to localize, but requiring more time if an update triggers changes to the RETE net. Combined with a higher RETE net initialization time, this results in LOCALIZED also outperforming VIATRA for all considered queries.

Regarding memory consumption, all strategies perform very similarly, which is mostly a result of the size of the model itself dominating the measurement and hiding the memory impact of the rather small RETE nets.

\begin{figure}
\centering
\includegraphics[width=\textwidth]{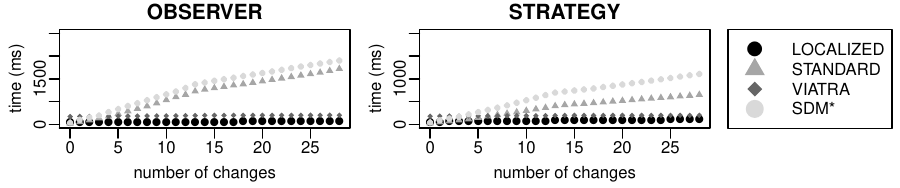}
\caption{Execution times for the real abstract syntax graph scenario}\label{fig:java_incremental_time}
\end{figure}

\begin{figure}
\centering
\includegraphics[width=\textwidth]{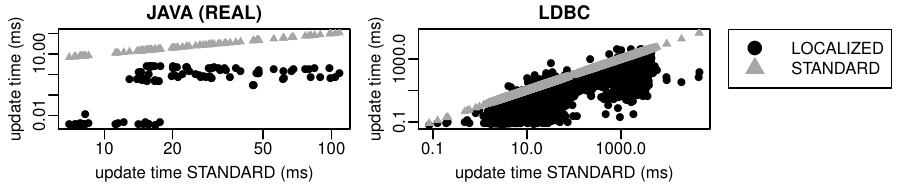}
\caption{Summary of update processing times for different scenarios (log scale)}\label{fig:update_clouds}
\end{figure}

\subsection{LDBC Social Network Benchmark} \label{sec:evaluation_ldbc}

Finally, we also experiment with the independent LDBC Social Network Benchmark \cite{erling2015ldbc,angles2024ldbc}, simulating a case where a user of a social network wants to incrementally track query results relating to them personally.

We therefore generate a synthetic social network consisting of around 850\,000 vertices, including about 1700 persons, and 5\,500\,000 edges using the benchmark's data generator and the predefined scale factor 0.1. We subsequently transform this dataset into a sequence of element creations and deletions based on the timestamps included in the data. We then create a starting graph by replaying the first half of the sequence and perform an initial execution of adapted versions of the benchmark queries consisting of plain graph patterns, with a person with a close-to-average number of contacts in the final social network designated as relevant subgraph. After the initial query execution, we replay the remaining changes, incrementally updating the query results after each change.

The resulting execution times for the benchmark queries where LOCALIZED performed best and worst compared to STANDARD are displayed in Figure \ref{fig:ldbc_time}. A summary of all update time measurements for LOCALIZED in comparison with STANDARD is also displayed in Figure \ref{fig:update_clouds} (right). For all queries, LOCALIZED ultimately outperforms the other approaches by a substantial margin, as the localized RETE version forgoes the computation of a large number of irrelevant intermediate results due to the small relevant subgraph on the one hand and avoids redundant computations on the other hand.

\begin{figure}
\centering
\includegraphics[width=\textwidth]{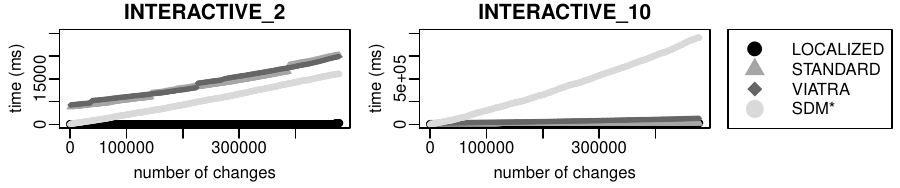}
\caption{Execution times for the LDBC scenario}\label{fig:ldbc_time}
\end{figure}

\begin{figure}
\centering
\includegraphics[width=\textwidth]{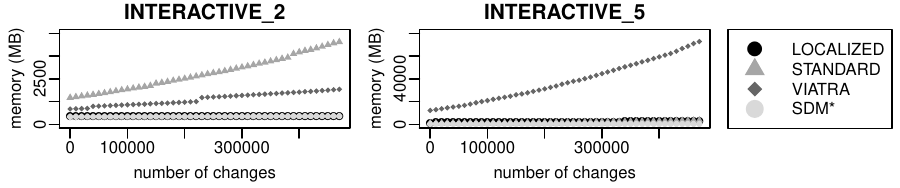}
\caption{Memory measurements for the LDBC scenario}\label{fig:ldbc_memory}
\end{figure}

The memory measurements in Figure \ref{fig:ldbc_memory} mostly mirror execution times for RETE-based approaches, with the memory consumption for LOCALIZED always lower than for STANDARD and VIATRA except for a period at the beginning of the execution of the query INTERACTIVE\_5, where STANDARD outperforms LOCALIZED. The weaker performance for INTERACTIVE\_5 and INTERACTIVE\_10 likely stems from the fact that these queries contain cycles, which act as strong filters for subsequent (intermediate) results and achieve a somewhat similar effect as localization. The weaker performance of VIATRA for INTERACTIVE\_5 is a product of the usage of a suboptimal RETE net. As expected, memory consumption is lowest for SDM* for all queries.

\subsection{Discussion}

On the one hand, our experimental results demonstrate that in situations where the relevant subgraph constitutes only a fraction of the full model, RETE net localization can improve the performance of incremental query execution compared to both the standard RETE approach and a solution based on local search. In such scenarios, localization can improve scalability with respect to initial query execution time and memory consumption, as demonstrated in Sections \ref{sec:evaluation_java_scalability} and \ref{sec:evaluation_ldbc} and, if changes are not restricted to the relevant subgraph, also update processing time, as shown in Sections \ref{sec:evaluation_java_incremental} and \ref{sec:evaluation_ldbc}.

On the other hand, as demonstrated in Section \ref{sec:evaluation_java_scalability}, localization incurs an overhead on update processing time if changes are only made to the relevant subgraph and on initial execution time and memory consumption if the relevant subgraph contains most of the elements in the full model. While this overhead will essentially be limited to a constant factor in many scenarios, as analyzed in Section \ref{sec:localized_rete_performance}, the standard RETE approach remains preferable for query execution with global semantics or if modifications are restricted to the relevant subgraph and initial query execution time and memory consumption are irrelevant.

To mitigate internal threats to the validity of our results resulting from unexpected JVM behavior, we have performed 10 runs of all experiments. However, with reliable memory measurements a known pain point of Java-based experiments, the reported memory consumption values are still not necessarily accurate and can only serve as an indicator. To minimize the impact of the implementation on conceptual observations, we compare the prototypical implementation of our approach to a regular RETE implementation \cite{barkowsky2023host}, which shares a large portion of the involved code, and to two existing tools \cite{varro2016,giese2009improved}.

We have attempted to address external threats to validity via experiments accounting for different application domains and a combination of synthetic and real-world queries and data, including a setting from an established, independent benchmark. Still, our results cannot be generalized and do not support quantitative claims, but serve to demonstrate conceptual advantages and disadvantages of the presented approach.


\section{Related Work} \label{sec:related_work}

With graph query execution forming the foundation of many applications, there already exists an extensive body of research regarding its optimization.

Techniques based on local search \cite{cordella2004sub,geiss2006grgen,giese2009improved,arendt2010henshin,han2013turboiso,bi2016efficient,juttner2018vf2++} constitute one family of graph querying approaches. While they are designed to exploit locality in the host graph to improve execution time, repeated query execution leads to redundant computations that are only avoided by fully incremental techniques.

In \cite{egyed2006instant}, Egyed proposes a scoping mechanism for local search to support incremental query execution, only recomputing query results when a graph element touched during query execution changes. While this approach offers some degree of incrementality, it is limited to queries with designated root elements that serve as an anchor in the host graph and may still result in redundant computations, since query reevaluation is only controlled at the granularity of root elements.

A second family of solutions is based on discrimination networks \cite{hanson2002trigger,varro2013rete,varro2016,beyhl2018framework}, the most prominent example of which are RETE nets.

VIATRA \cite{varro2016} is a mature tool for incremental graph query execution based on the RETE algorithm \cite{forgy1989rete}, which supports advanced concepts for query specification and optimization not considered in this paper. Notably, VIATRA allows reuse of matches for isomorphic query subgraphs within a single RETE net. This is achieved via RETE structures not covered by the rather restrictive definition of well-formedness used in this paper, which points to an interesting direction for future work. However, while VIATRA also has a local search option for query execution, it does not integrate local search with the incremental query engine but rather offers it as an alternative.

Beyhl \cite{beyhl2018framework} presents an incremental querying technique based on a generalized version of RETE nets, called Generalized Discrimination Networks (GDNs) \cite{hanson2002trigger}. The main difference compared to the RETE algorithm is the lack of join nodes. Instead, more complex nodes that directly compute complex matches using local search are employed. The approach however represents more of a means of controlling the trade-off between local search and RETE rather than an integration and still requires a global computation of matches for the entire host graph.

In previous work \cite{barkowsky2021improving}, we have made a first step in the direction of localizing RETE-based query execution. While this earlier technique already allowed anchoring the execution of a RETE net to certain host graph vertices, this anchoring was based on typing information and its results did not meet the definition of completeness introduced in this paper.

Model repositories such as CDO \cite{cdo} and NeoEMF \cite{daniel2017neoemf} provide support for query execution over partial models via lazy loading. As persistence layers, these solutions however focus on implementing an interface of atomic model access operations in order to be agnostic regarding the employed query mechanism.

The Mogwa\"{\i} tool \cite{daniel2018scalable} aims to improve query execution over persistence layers like CDO and NeoEMF by mapping model queries to native queries for the underlying database system instead of using the atomic model access operations provided by the layer's API, avoiding loading the entire model into main memory. The tool however does not consider incremental query execution.

Jahanbin et al. propose an approach for querying partially loaded models stored via persistence layers \cite{jahanbin2022partial} or as XMI files \cite{jahanbin2023towards}. In contrast to the solution presented in this paper, their approach still aims to always provide complete query results for the full model and is thus based on static analysis and typing information rather than dynamic exploitation of locality.

Query optimization for relational databases is a research topic that has been under intense study for decades \cite{krishnamurthy1986optimization,lee2001optimizing,leis2015good}. Generally, many of the techniques from this field are applicable to RETE nets, which are ultimately based on relational algebra and related to materialized views in relational databases \cite{gupta1995maintenance}. However, relational databases lack the notion of locality inherent to graph-based encodings and are hence not tailored to exploit local navigation.

This shortcoming has given rise to a number of graph databases \cite{angles2012comparison}, which employ a graph-based data representation instead of a relational encoding and form the basis of some model persistance layers like NeoEMF \cite{daniel2017neoemf}. While these database systems are designed to accommodate local navigation, to the best of our knowledge, support for incremental query execution is still lacking.


\section{Conclusion} \label{sec:conclusion}

In this paper, we have presented a relaxed notion of completeness for query results that lifts the requirement of strict completeness of results for graph queries and thereby the need for necessarily global query execution. Based on this relaxed notion of completeness, we have developed an extension of the RETE approach that allows local, yet fully incremental execution of graph queries. An initial evaluation demonstrates that the approach can improve scalability in scenarios with small relevant subgraphs, but causes an overhead in unfavorable cases.

In future work, we plan to extend our technique to accommodate advanced concepts for query specification, most importantly nested graph conditions. Furthermore, we want to investigate whether the proposed solution can be adapted to support bulk loading of partial models in order to reduce overhead caused by lazy loading strategies employed by model persistence layers.

\bibliographystyle{llncs_template/splncs04}
\bibliography{localized_rete_for_incremental_graph_queries}

\clearpage

\appendix


\section{Localized RETE Example Execution} \label{app:example_execution}

Abstracting from the internal structure of local navigation structures and request projection structures, the $order$ procedure would produce the execution order $RPS^c_r$, $LNS^{p \rightarrow c}$, $RPS^c_l$, $LNS^{c \rightarrow f}$, $RPS^c_r$, $LNS^{p \rightarrow c}$, $[\bowtie]^\Phi$ for the example localized RETE net in Figure \ref{fig:localization_structures} (right). An execution of this sequence, which is visualized in Figure \ref{fig:example_execution}, would then proceed as follows:

\begin{enumerate}
\renewcommand{\theenumi}{\alph{enumi}}
\item The starting configuration for the RETE net may be inconsistent for all of the net's nodes, for instance due to being empty (if this is the net's initial execution), because of unprocessed host graph changes (if this is a repeated execution after incremental host graph changes), or because the provided starting configuration is simply erroneous.
\item The execution of $RPS^c_r$ in the first step (indicated by a bold border and label in Figure \ref{fig:example_execution_1}), which is only necessary to handle an erroneous starting configuration, then guarantees a consistent configuration for all nodes in $RPS^c_r$ by effectively correcting any errors. This is indicated by the label being underlined in Figure \ref{fig:example_execution_2}.
\item After the execution of $LNS^{p \rightarrow c}$ in the second step, the resulting configuration is also consistent for all nodes in $LNS^{p \rightarrow c}$, since $LNS^{p \rightarrow c}$ is a tree and can thus simply be executed via a reverse topological sorting. Essentially, this step either initially populates the empty current result sets of the nodes in $LNS^{p \rightarrow c}$ in case of an empty starting configuration or handles any relevant changes to host graph or relevant subgraph. In this context, a relevant change can either be the creation or deletion of an edge of type $ce$ or the inclusion or exclusion of a vertex to or from the relevant subgraph. Ultimately, the current result set of the union node at the top of $LNS^{p \rightarrow c}$ then contains at least all edges of type $ce$ that are adjacent to a vertex in the relevant subgraph, with these matches marked $\infty$.
\item The execution of $RPS^c_l$ in the third step ensures consistency for all nodes in $RPS^c_l$. Essentially, each host graph vertex to which the query graph vertex $c$ is mapped in at least one match with marking $\phi > 1$ in the root of $LNS^{p \rightarrow c}$ is added to the result set of the root of $RPS^c_l$ and assigned a marking of $1$ (the height of the original, well-formed RETE net in Figure \ref{fig:example_query_and_rete} (right)).
\item Executing $LNS^{c \rightarrow f}$ in the fourth step now not only guarantees that all edges of type $fe$ that touch the relevant subgraph are included in the result set of the root of $LNS^{c \rightarrow f}$ with a marking of $\infty$. In addition, because of the dependency of $LNS^{c \rightarrow f}$ on $RPS^c_l$, the execution also ensures the same for $fe$ edges adjacent to a host graph vertex in the image of a match for the root of $RPS^c_l$ and thus $LNS^{p \rightarrow c}$, if the marking of that match is greater than $1$. These matches are however only marked $1$, signalling that they were only computed on request from $LNS^{p \rightarrow c}$. Intuitively, this guarantees that the current result set for the root of $LNS^{c \rightarrow f}$ also contains all matches that are required to complement matches in $LNS^{p \rightarrow c}$ that involve a host graph vertex in the relevant subgraph. Because $RPS^c_r$ depends on $LNS^{c \rightarrow f}$, $RPS^c_r$ may now be inconsistent again.
\item The execution of $RPS^c_r$ restores consistency for $RPS^c_r$ by adding all matches for $LNS^{c \rightarrow f}$ that are marked $\infty$, that is, involve a vertex from the relevant subgraph, to the root node of $RPS^c_r$. These matches are marked $1$. Notably, due to the marking filter node in $RPS^c_r$, matches that are only marked $1$ in the current result set of $LNS^{c \rightarrow f}$ are not considered.
\item Executing $LNS^{p \rightarrow c}$ now also adds all edges of type $ce$ required to complement matches for $LNS^{c \rightarrow f}$ that touch the relevant subgraph to the result set of $LNS^{p \rightarrow c}$. Again, these matches are only marked $1$. The dependency of $RPS^c_l$ on $LNS^{p \rightarrow c}$ could now mean that $RPS^c_l$ may be inconsistent again. However, the last execution of $LNS^{p \rightarrow c}$ can only have created (or potentially deleted) matches with marking $1$. The reason for this is that the only changes that were processed during the last execution of $LNS^{p \rightarrow c}$ were changes to the result set of $RPS^c_r$, which only contains matches marked $1$ due to the marking assignment node in the request projection structure. Due to the marking filter node at the bottom of $RPS^c_l$, these changes do not affect the consistency of any node in $RPS^c_l$, meaning that $RPS^c_l$ is still consistent despite the dependency.
\item Finally, the execution of $[\bowtie]^\Phi$ ensures the consistency of the last remaining RETE node. According to Corollary \ref{cor:completeness_execution}, the result set of $[\bowtie]^\Phi$ is now complete under the relevant subgraph. Intuitively, this can be explained by the fact that the result set of $LNS^{p \rightarrow c}$ contains all edges of type $ce$ that touch the relevant subgraph, $LNS^{c \rightarrow f}$ contains all matches required to complement these edges, and vice-versa.
\end{enumerate}

\clearpage

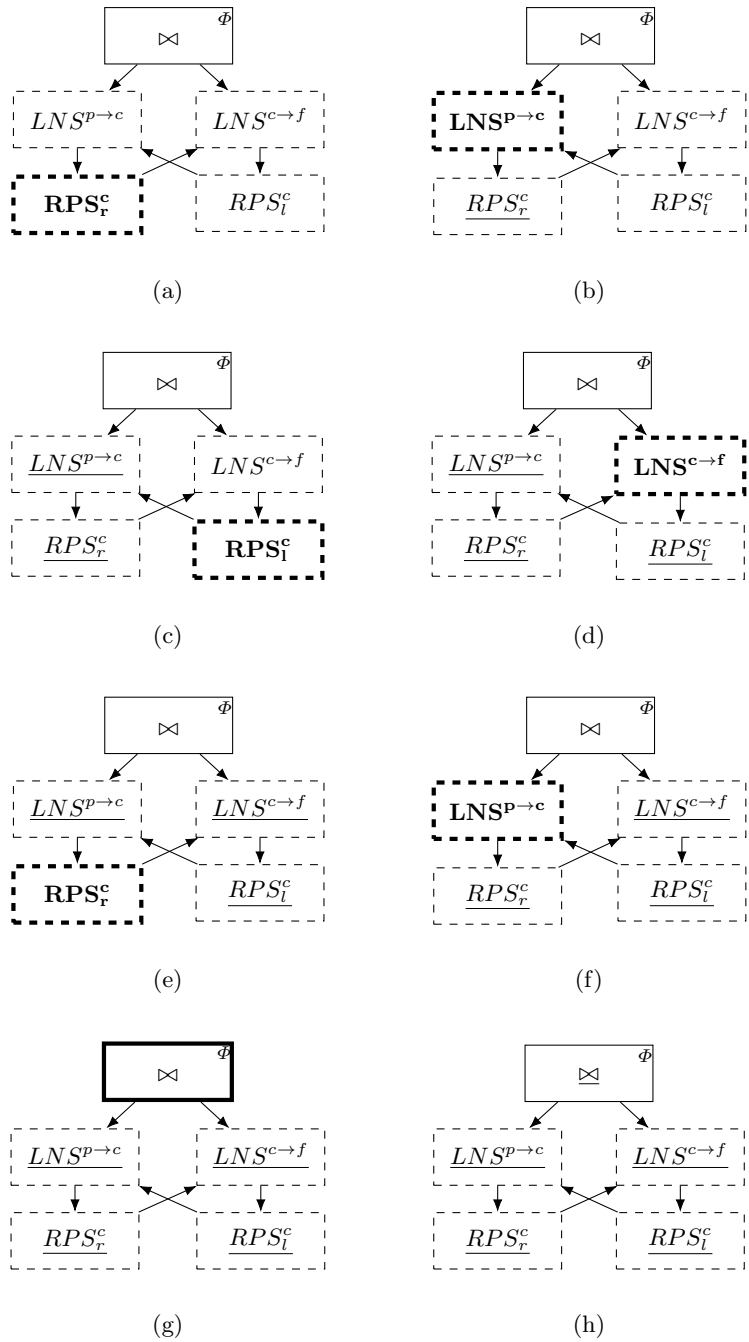
\begin{figure} [!ht]
	\centering
	\begin{subfigure}[b]{0.45\textwidth}
		\begin{center}

\begin{tikzpicture}

\node (join) [ms_retenode_app] {$\bowtie$};

\node[below left = 0.35cm and -0.5cm of join, dashed] (ce) [retenode_app] {$LNS^{p \rightarrow c}$};

\node[below right = 0.35cm and -0.5cm of join, dashed] (fe) [retenode_app] {$LNS^{c \rightarrow f}$};

\node[below = 0.35cm of ce, dashed, ultra thick] (rpsp) [retenode_app] {$\mathbf{RPS^c_r}$};

\node[below = 0.35cm of fe, dashed] (rpsf) [retenode_app] {$RPS^c_l$};

\draw [-{Latex}] (join) -- (ce);

\draw [-{Latex}] (join) -- (fe);

\draw [-{Latex}] (ce) -- (rpsp);

\draw [-{Latex}] (fe) -- (rpsf);

\draw [-{Latex}] (rpsp) -- (fe);

\draw [-{Latex}] (rpsf) -- (ce);

\end{tikzpicture}
		\end{center}
		\caption{}\label{fig:example_execution_1}
	\end{subfigure}
	\begin{subfigure}[b]{0.45\textwidth}
		\begin{center}

\begin{tikzpicture}

\node (join) [ms_retenode_app] {$\bowtie$};

\node[below left = 0.35cm and -0.5cm of join, dashed, ultra thick] (ce) [retenode_app] {$\mathbf{LNS^{p \rightarrow c}}$};

\node[below right = 0.35cm and -0.5cm of join, dashed] (fe) [retenode_app] {$LNS^{c \rightarrow f}$};

\node[below = 0.35cm of ce, dashed] (rpsp) [retenode_app] {\underline{$RPS^c_r$}};

\node[below = 0.35cm of fe, dashed] (rpsf) [retenode_app] {$RPS^c_l$};

\draw [-{Latex}] (join) -- (ce);

\draw [-{Latex}] (join) -- (fe);

\draw [-{Latex}] (ce) -- (rpsp);

\draw [-{Latex}] (fe) -- (rpsf);

\draw [-{Latex}] (rpsp) -- (fe);

\draw [-{Latex}] (rpsf) -- (ce);

\end{tikzpicture}
		\end{center}
		\caption{}\label{fig:example_execution_2}
	\end{subfigure}
	
	\vspace{15pt}
	
	\begin{subfigure}[b]{0.45\textwidth}
		\begin{center}

\begin{tikzpicture}

\node (join) [ms_retenode_app] {$\bowtie$};

\node[below left = 0.35cm and -0.5cm of join, dashed] (ce) [retenode_app] {\underline{$LNS^{p \rightarrow c}$}};

\node[below right = 0.35cm and -0.5cm of join, dashed] (fe) [retenode_app] {$LNS^{c \rightarrow f}$};

\node[below = 0.35cm of ce, dashed] (rpsp) [retenode_app] {\underline{$RPS^c_r$}};

\node[below = 0.35cm of fe, dashed, ultra thick] (rpsf) [retenode_app] {$\mathbf{RPS^c_l}$};

\draw [-{Latex}] (join) -- (ce);

\draw [-{Latex}] (join) -- (fe);

\draw [-{Latex}] (ce) -- (rpsp);

\draw [-{Latex}] (fe) -- (rpsf);

\draw [-{Latex}] (rpsp) -- (fe);

\draw [-{Latex}] (rpsf) -- (ce);

\end{tikzpicture}
		\end{center}
		\caption{}\label{fig:example_execution_3}
	\end{subfigure}
	\begin{subfigure}[b]{0.45\textwidth}
		\begin{center}

\begin{tikzpicture}

\node (join) [ms_retenode_app] {$\bowtie$};

\node[below left = 0.35cm and -0.5cm of join, dashed] (ce) [retenode_app] {\underline{$LNS^{p \rightarrow c}$}};

\node[below right = 0.35cm and -0.5cm of join, dashed, ultra thick] (fe) [retenode_app] {$\mathbf{LNS^{c \rightarrow f}}$};

\node[below = 0.35cm of ce, dashed] (rpsp) [retenode_app] {\underline{$RPS^c_r$}};

\node[below = 0.35cm of fe, dashed] (rpsf) [retenode_app] {\underline{$RPS^c_l$}};

\draw [-{Latex}] (join) -- (ce);

\draw [-{Latex}] (join) -- (fe);

\draw [-{Latex}] (ce) -- (rpsp);

\draw [-{Latex}] (fe) -- (rpsf);

\draw [-{Latex}] (rpsp) -- (fe);

\draw [-{Latex}] (rpsf) -- (ce);

\end{tikzpicture}
		\end{center}
		\caption{}\label{fig:example_execution_4}
	\end{subfigure}
	
	\vspace{15pt}
	
	\begin{subfigure}[b]{0.45\textwidth}
		\begin{center}

\begin{tikzpicture}

\node (join) [ms_retenode_app] {$\bowtie$};

\node[below left = 0.35cm and -0.5cm of join, dashed] (ce) [retenode_app] {\underline{$LNS^{p \rightarrow c}$}};

\node[below right = 0.35cm and -0.5cm of join, dashed] (fe) [retenode_app] {\underline{$LNS^{c \rightarrow f}$}};

\node[below = 0.35cm of ce, dashed, ultra thick] (rpsp) [retenode_app] {$\mathbf{RPS^c_r}$};

\node[below = 0.35cm of fe, dashed] (rpsf) [retenode_app] {\underline{$RPS^c_l$}};

\draw [-{Latex}] (join) -- (ce);

\draw [-{Latex}] (join) -- (fe);

\draw [-{Latex}] (ce) -- (rpsp);

\draw [-{Latex}] (fe) -- (rpsf);

\draw [-{Latex}] (rpsp) -- (fe);

\draw [-{Latex}] (rpsf) -- (ce);

\end{tikzpicture}
		\end{center}
		\caption{}\label{fig:example_execution_5}
	\end{subfigure}
	\begin{subfigure}[b]{0.45\textwidth}
		\begin{center}

\begin{tikzpicture}

\node (join) [ms_retenode_app] {$\bowtie$};

\node[below left = 0.35cm and -0.5cm of join, dashed, ultra thick] (ce) [retenode_app] {$\mathbf{LNS^{p \rightarrow c}}$};

\node[below right = 0.35cm and -0.5cm of join, dashed] (fe) [retenode_app] {\underline{$LNS^{c \rightarrow f}$}};

\node[below = 0.35cm of ce, dashed] (rpsp) [retenode_app] {\underline{$RPS^c_r$}};

\node[below = 0.35cm of fe, dashed] (rpsf) [retenode_app] {\underline{$RPS^c_l$}};

\draw [-{Latex}] (join) -- (ce);

\draw [-{Latex}] (join) -- (fe);

\draw [-{Latex}] (ce) -- (rpsp);

\draw [-{Latex}] (fe) -- (rpsf);

\draw [-{Latex}] (rpsp) -- (fe);

\draw [-{Latex}] (rpsf) -- (ce);

\end{tikzpicture}
		\end{center}
		\caption{}\label{fig:example_execution_6}
	\end{subfigure}
	
	\vspace{15pt}
	
	\begin{subfigure}[b]{0.45\textwidth}
		\begin{center}

\begin{tikzpicture}

\node (join) [ms_retenode_app, ultra thick] {$\mathbf{\bowtie}$};

\node[below left = 0.35cm and -0.5cm of join, dashed] (ce) [retenode_app] {\underline{$LNS^{p \rightarrow c}$}};

\node[below right = 0.35cm and -0.5cm of join, dashed] (fe) [retenode_app] {\underline{$LNS^{c \rightarrow f}$}};

\node[below = 0.35cm of ce, dashed] (rpsp) [retenode_app] {\underline{$RPS^c_r$}};

\node[below = 0.35cm of fe, dashed] (rpsf) [retenode_app] {\underline{$RPS^c_l$}};

\draw [-{Latex}] (join) -- (ce);

\draw [-{Latex}] (join) -- (fe);

\draw [-{Latex}] (ce) -- (rpsp);

\draw [-{Latex}] (fe) -- (rpsf);

\draw [-{Latex}] (rpsp) -- (fe);

\draw [-{Latex}] (rpsf) -- (ce);

\end{tikzpicture}
		\end{center}
		\caption{}\label{fig:example_execution_7}
	\end{subfigure}
	\begin{subfigure}[b]{0.45\textwidth}
		\begin{center}

\begin{tikzpicture}

\node (join) [ms_retenode_app] {\underline{${\bowtie}$}};

\node[below left = 0.35cm and -0.5cm of join, dashed] (ce) [retenode_app] {\underline{$LNS^{p \rightarrow c}$}};

\node[below right = 0.35cm and -0.5cm of join, dashed] (fe) [retenode_app] {\underline{$LNS^{c \rightarrow f}$}};

\node[below = 0.35cm of ce, dashed] (rpsp) [retenode_app] {\underline{$RPS^c_r$}};

\node[below = 0.35cm of fe, dashed] (rpsf) [retenode_app] {\underline{$RPS^c_l$}};

\draw [-{Latex}] (join) -- (ce);

\draw [-{Latex}] (join) -- (fe);

\draw [-{Latex}] (ce) -- (rpsp);

\draw [-{Latex}] (fe) -- (rpsf);

\draw [-{Latex}] (rpsp) -- (fe);

\draw [-{Latex}] (rpsf) -- (ce);

\end{tikzpicture}
		\end{center}
		\caption{}\label{fig:example_execution_8}
	\end{subfigure}
	\caption{Example execution of the RETE net from Figure \ref{fig:localization_structures} (right)} \label{fig:example_execution}
\end{figure}

\clearpage

\clearpage


\section{Additional Measurements} \label{app:additional_measurements}

\subsection{Synthetic Abstract Syntax Graphs (CDO)}

\begin{figure}[!h]
\centering
\includegraphics[width=\textwidth]{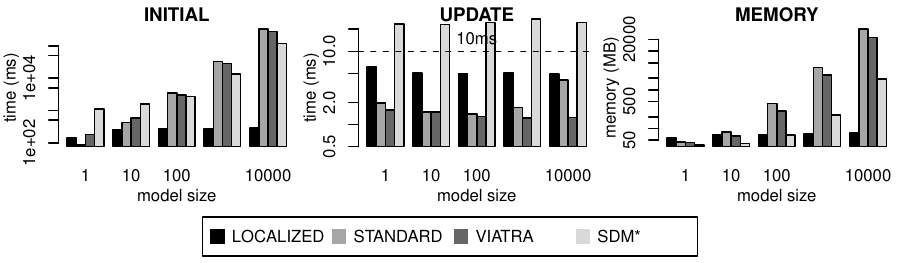}
\caption{Measurements for the synthetic abstract syntax graph scenario, with models stored via CDO (log scale)}
\end{figure}

\clearpage

\subsection{Real Abstract Syntax Graphs}

\begin{figure}[!ht]
\centering
\includegraphics[width=0.9\textwidth]{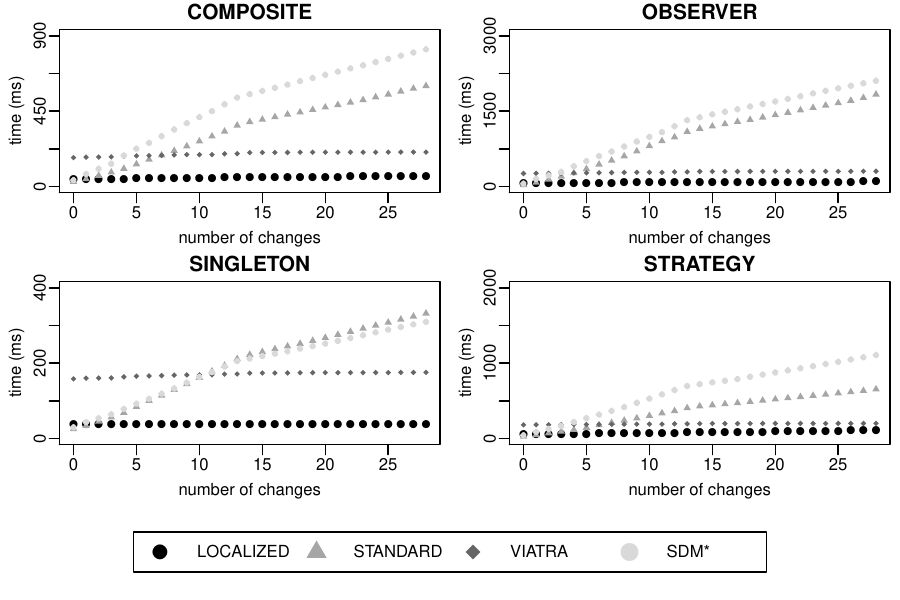}
\caption{Execution times for the real abstract syntax graph scenario}
\end{figure}

\begin{figure}[!ht]
\centering
\includegraphics[width=0.9\textwidth]{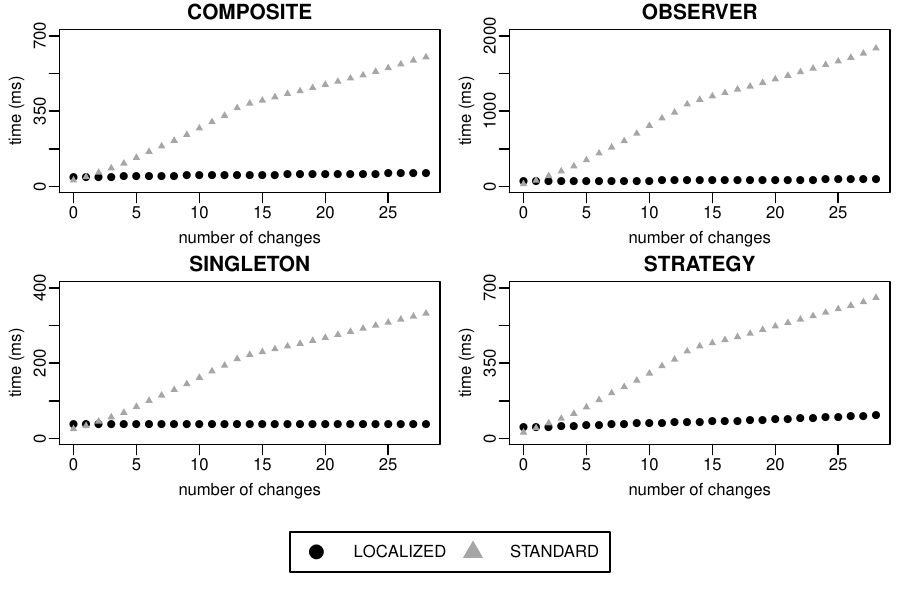}
\caption{Execution times for the real abstract syntax graph scenario}
\end{figure}

\begin{figure}[!ht]
\centering
\includegraphics[width=0.9\textwidth]{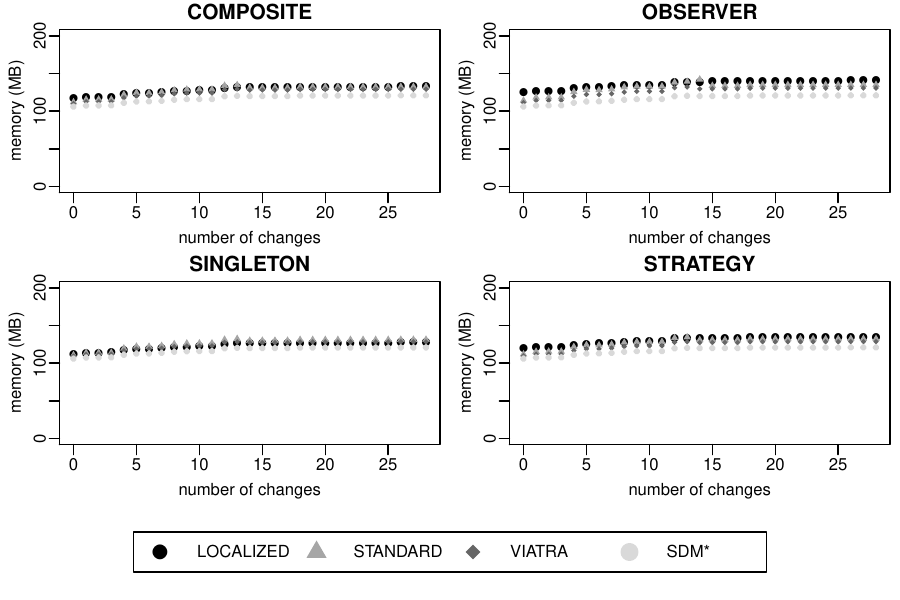}
\caption{Memory measurements for the real abstract syntax graph scenario}
\end{figure}

\begin{figure}[!ht]
\centering
\includegraphics[width=0.9\textwidth]{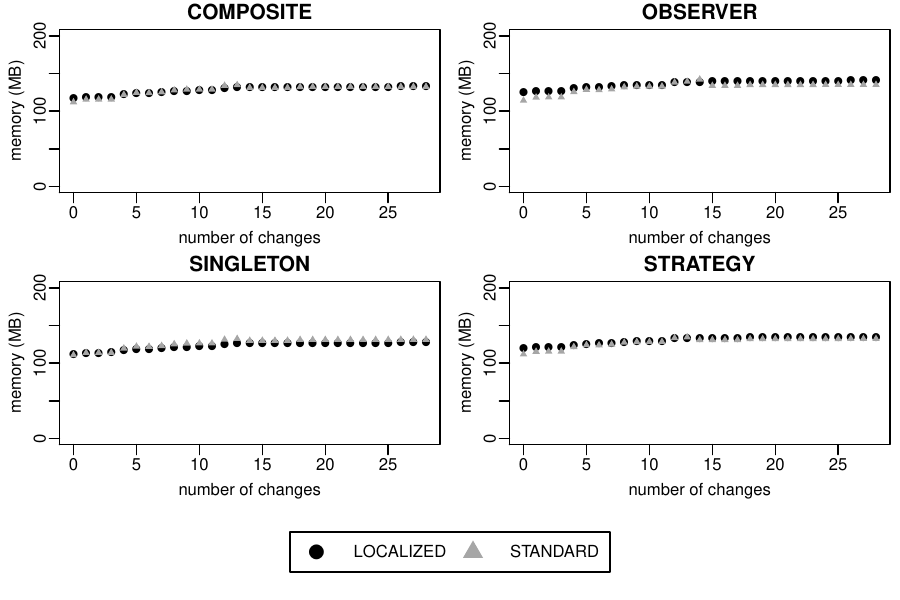}
\caption{Memory measurements for the real abstract syntax graph scenario}
\end{figure}

\clearpage

\subsection{LDBC Social Network Benchmark}

\begin{figure}[!ht]
\centering
\includegraphics[width=0.85\textwidth]{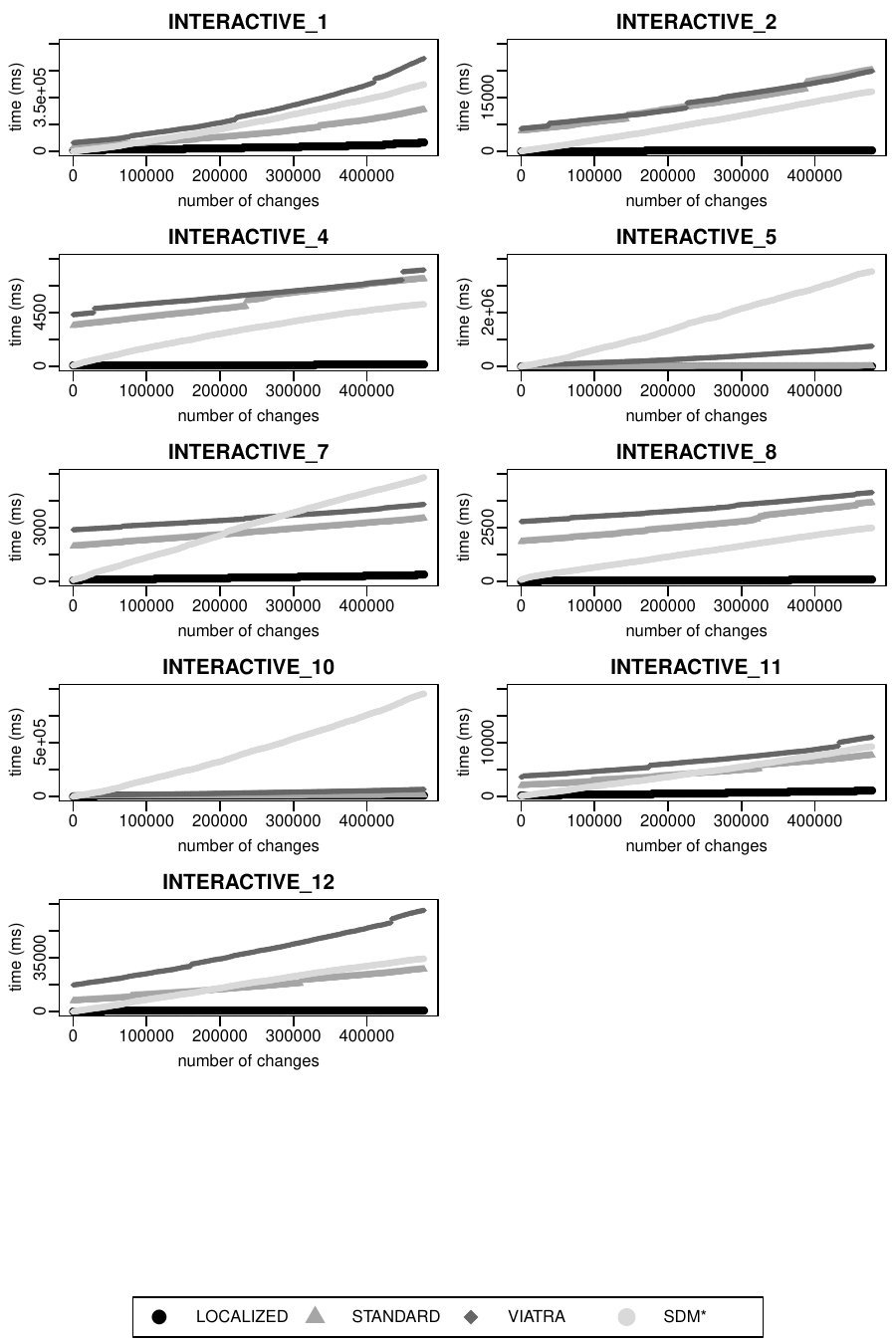}
\caption{Execution times for the LDBC scenario. For INTERACTIVE\_6 and INTERACTIVE\_9, VIATRA aborted with an exception presumably indicating too high memory consumption. For INTERACTIVE\_3, all strategies ran out of memory.}
\end{figure}

\begin{figure}[!ht]
\centering
\includegraphics[width=0.85\textwidth]{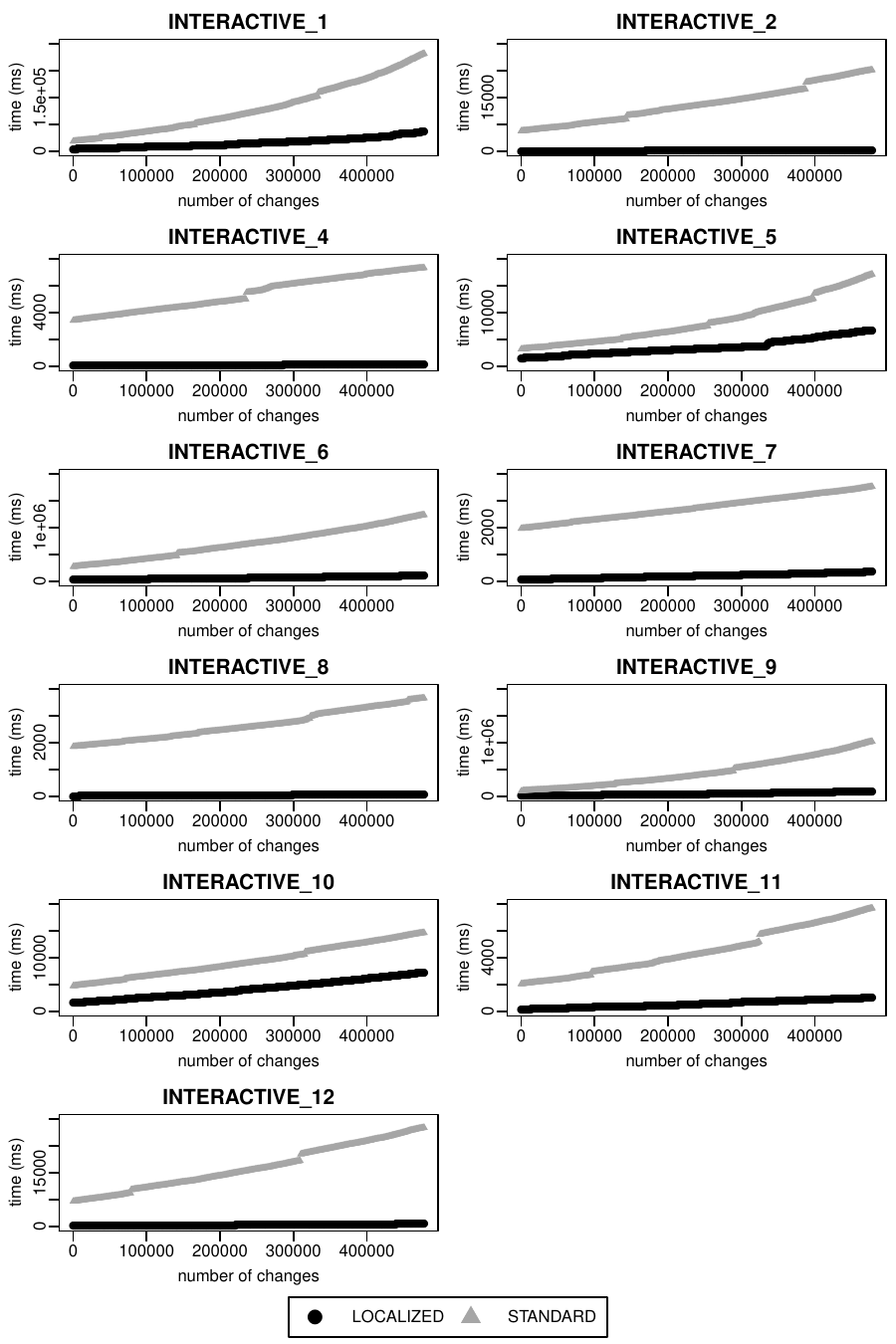}
\caption{Execution times for the LDBC scenario. For INTERACTIVE\_3, all strategies ran out of memory.}
\end{figure}

\begin{figure}[!ht]
\centering
\includegraphics[width=0.85\textwidth]{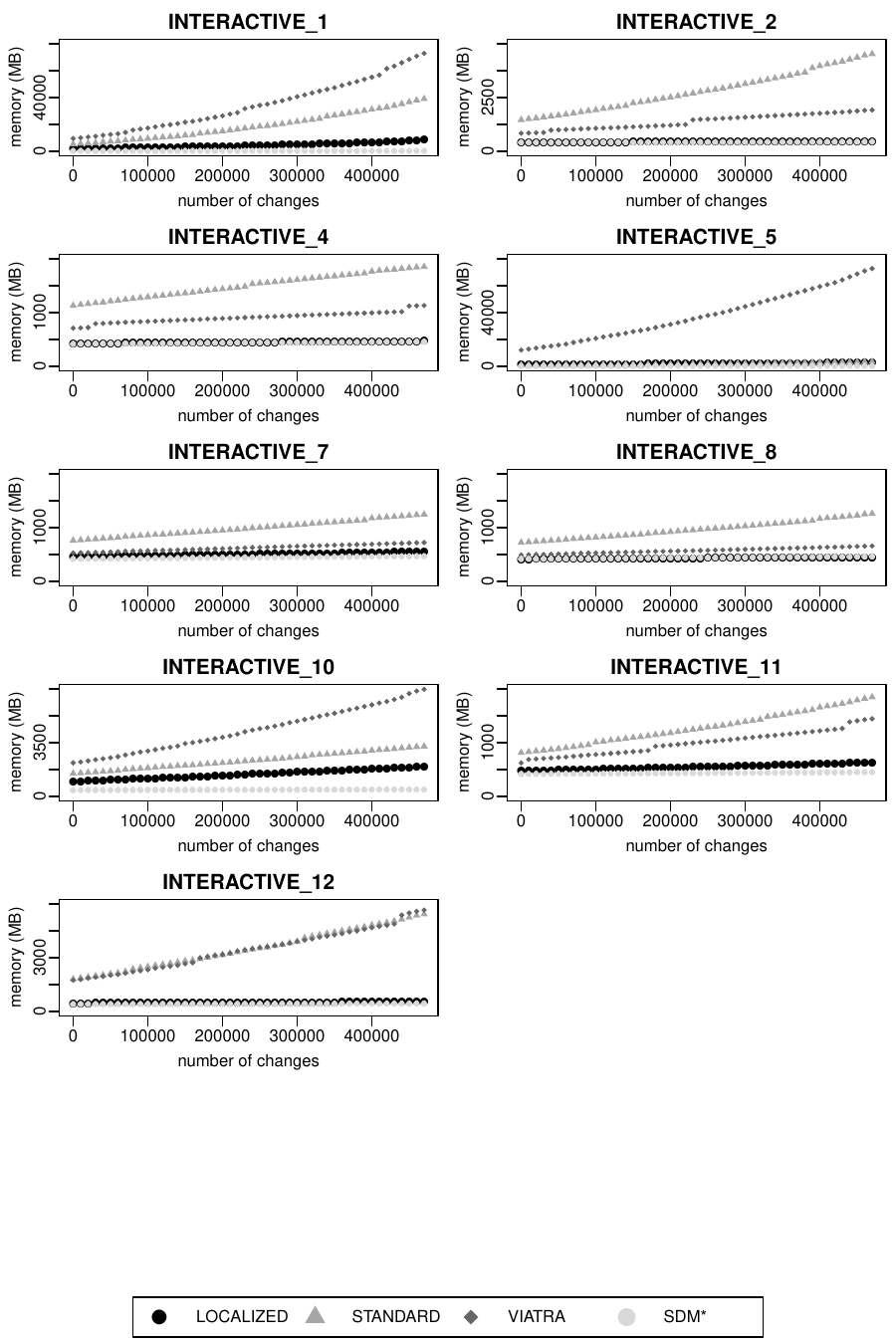}
\caption{Memory measurements for the LDBC scenario. For INTERACTIVE\_6 and INTERACTIVE\_9, VIATRA aborted with an exception presumably indicating too high memory consumption. For INTERACTIVE\_3, all strategies ran out of memory.}
\end{figure}

\begin{figure}[!ht]
\centering
\includegraphics[width=0.85\textwidth]{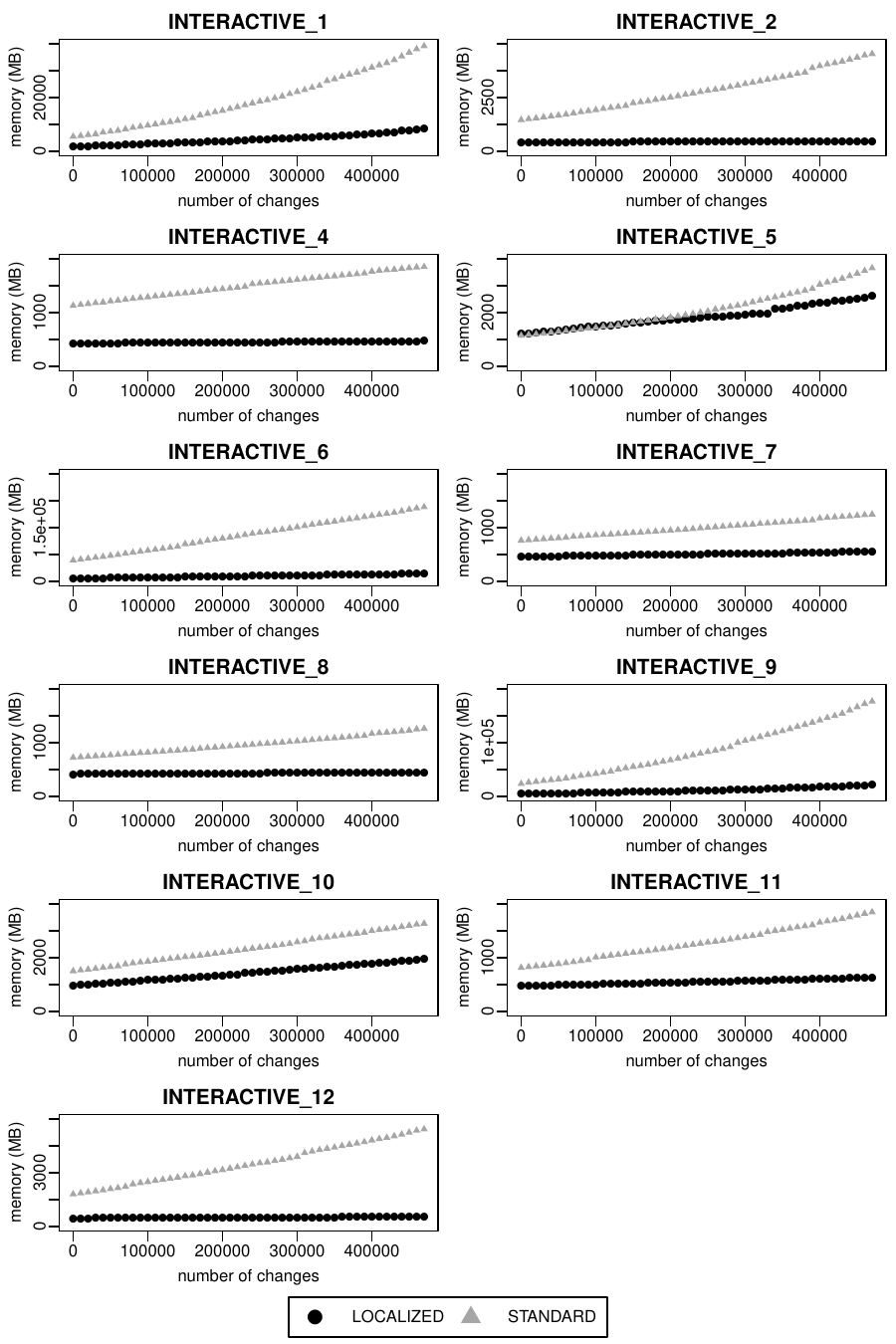}
\caption{Memory measurements for the LDBC scenario. For INTERACTIVE\_3, all strategies ran out of memory.}
\end{figure}

\clearpage


\section{Queries} \label{app:queries}

\subsection{Synthetic Abstract Syntax Graphs Queries}

\begin{figure}[!ht]
\centering
\includegraphics[width=\textwidth]{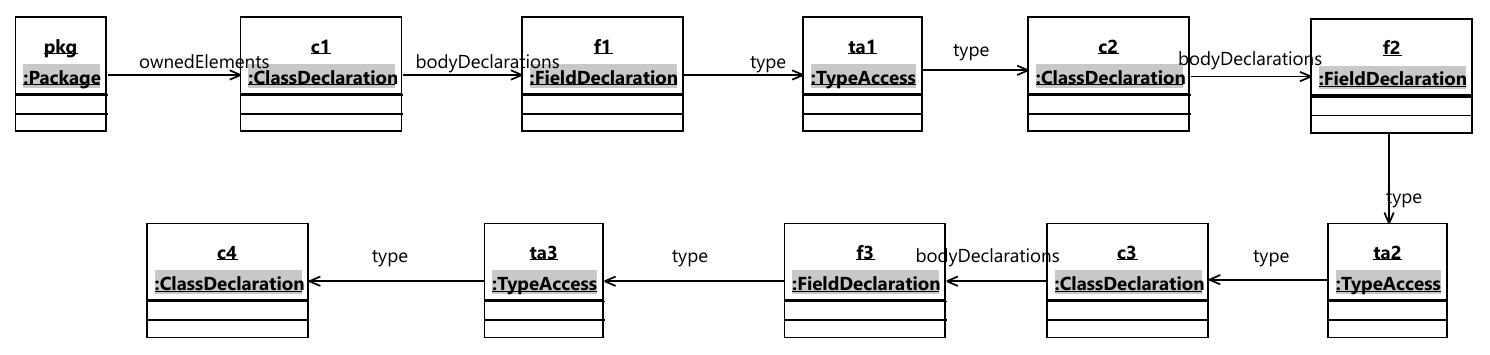}
\caption{Simple path query for the synthetic abstract syntax graph scenario}
\end{figure}

\clearpage

\subsection{Real Abstract Syntax Graphs Queries}

\begin{figure}[!ht]
\centering
\includegraphics[width=\textwidth]{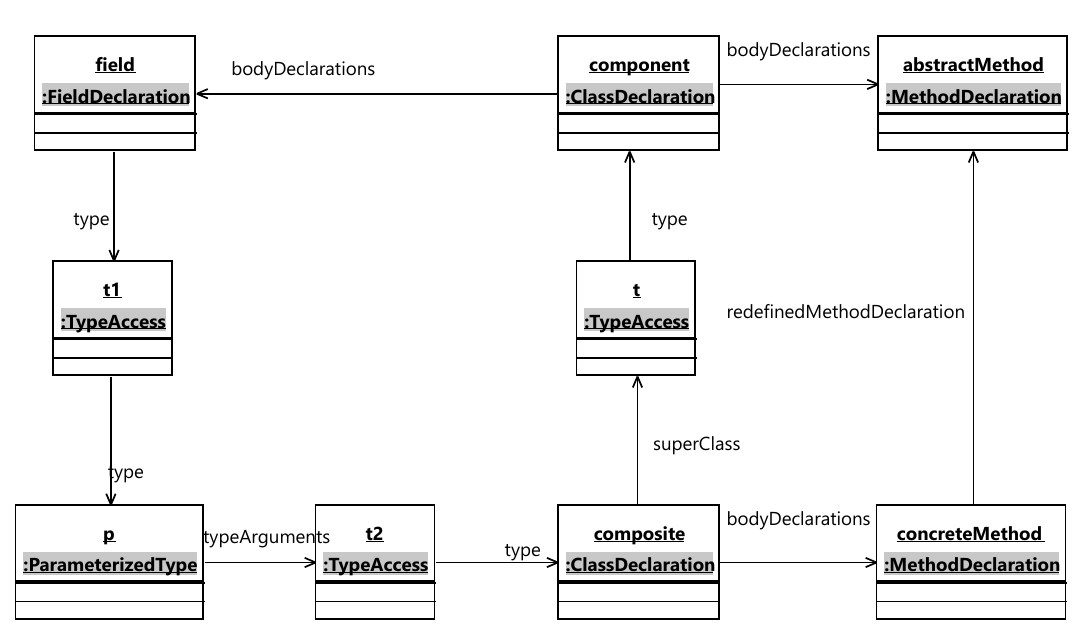}
\caption{Composite query for the real abstract syntax graph scenario}
\end{figure}

\begin{figure}[!ht]
\centering
\includegraphics[width=\textwidth]{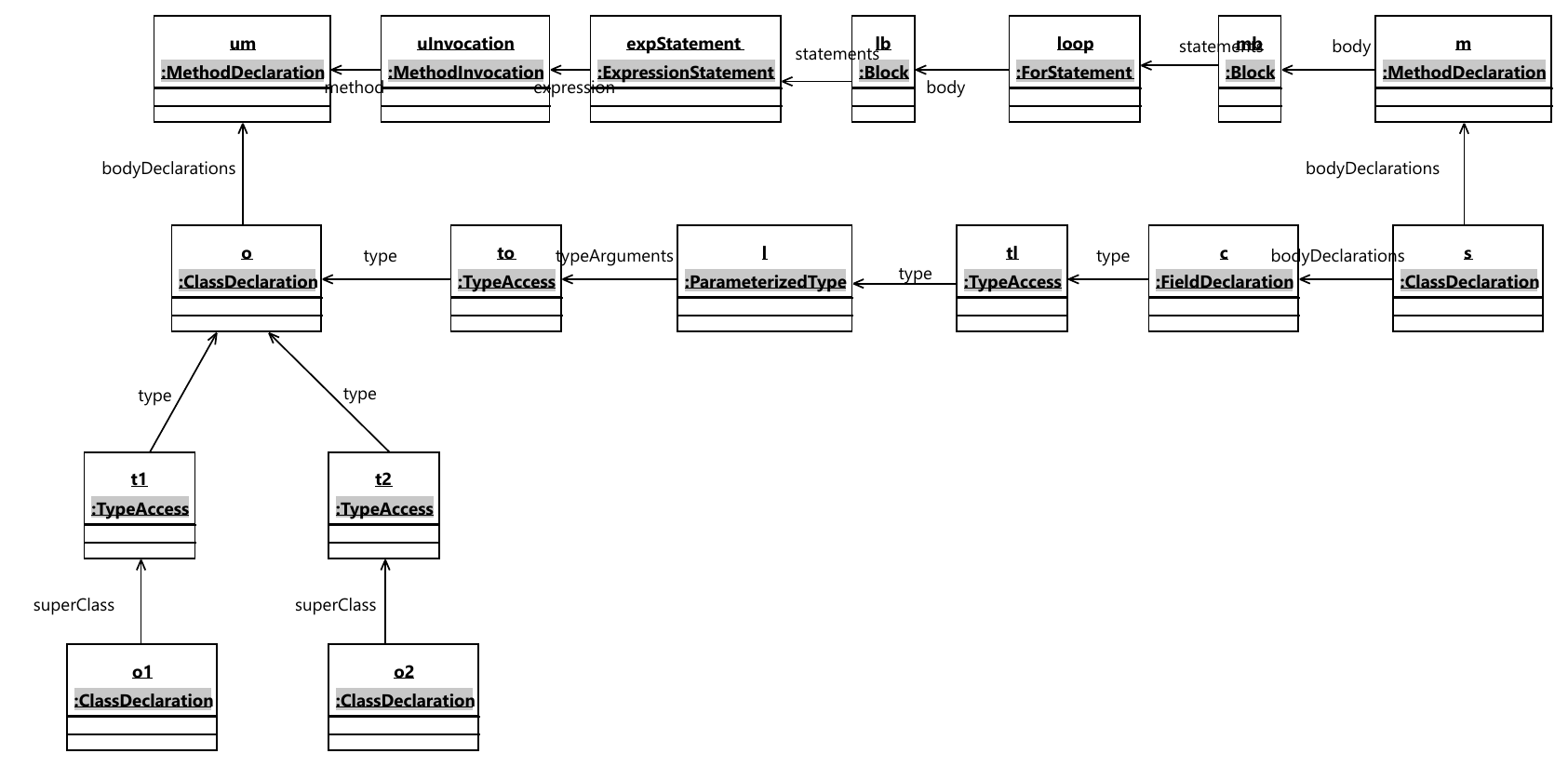}
\caption{Observer query for the real abstract syntax graph scenario}
\end{figure}

\begin{figure}[!ht]
\centering
\includegraphics[width=\textwidth]{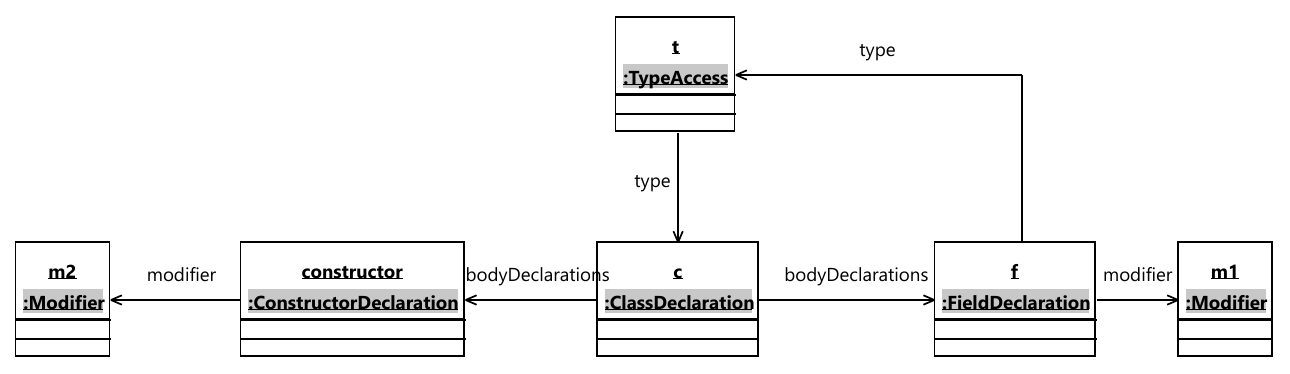}
\caption{Singleton query for the real abstract syntax graph scenario}
\end{figure}

\begin{figure}[!ht]
\centering
\includegraphics[width=\textwidth]{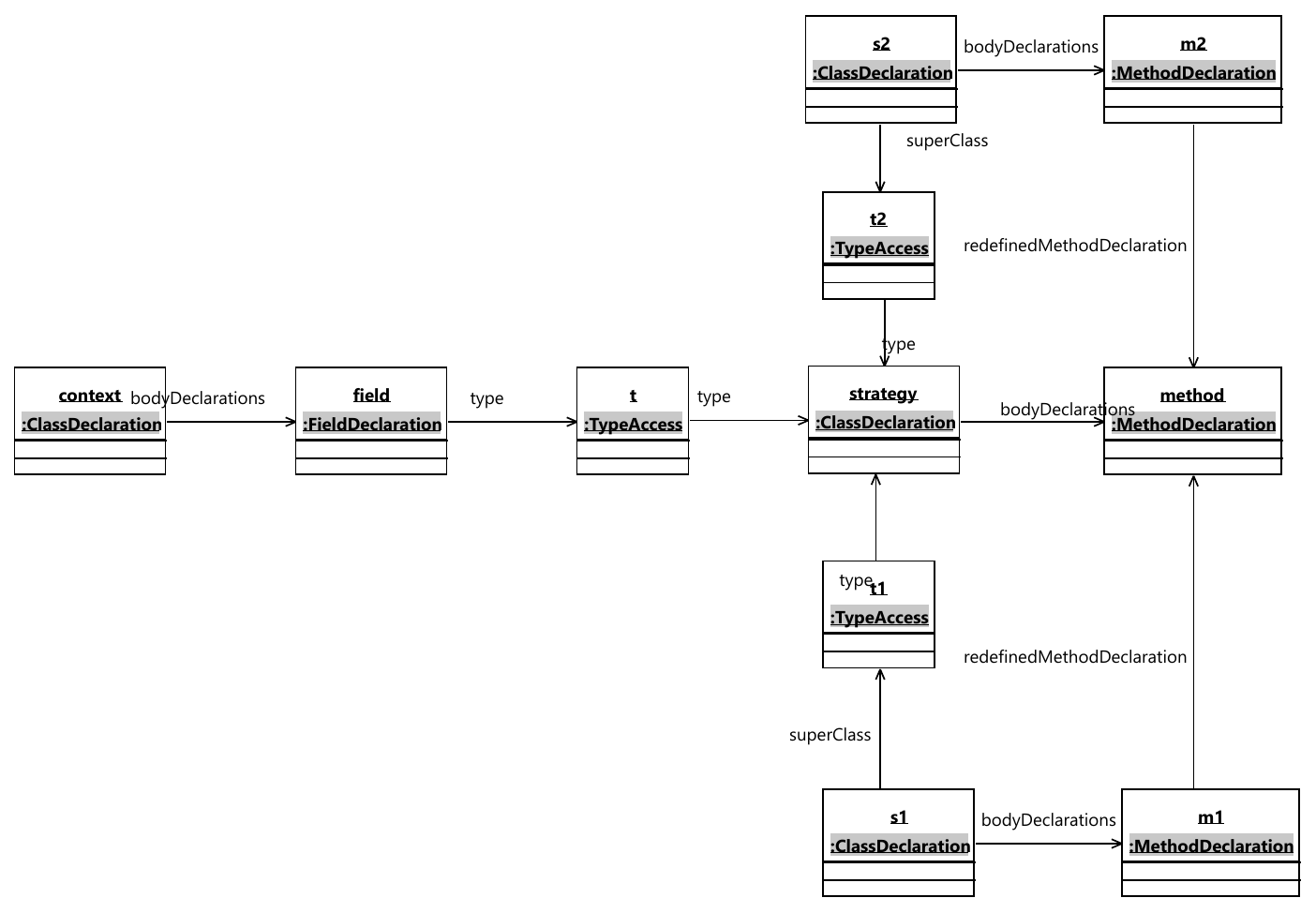}
\caption{Strategy query for the real abstract syntax graph scenario}
\end{figure}

\clearpage

\subsection{LDBC Social Network Benchmark Queries}

\begin{figure}[!ht]
\centering
\includegraphics[width=\textwidth]{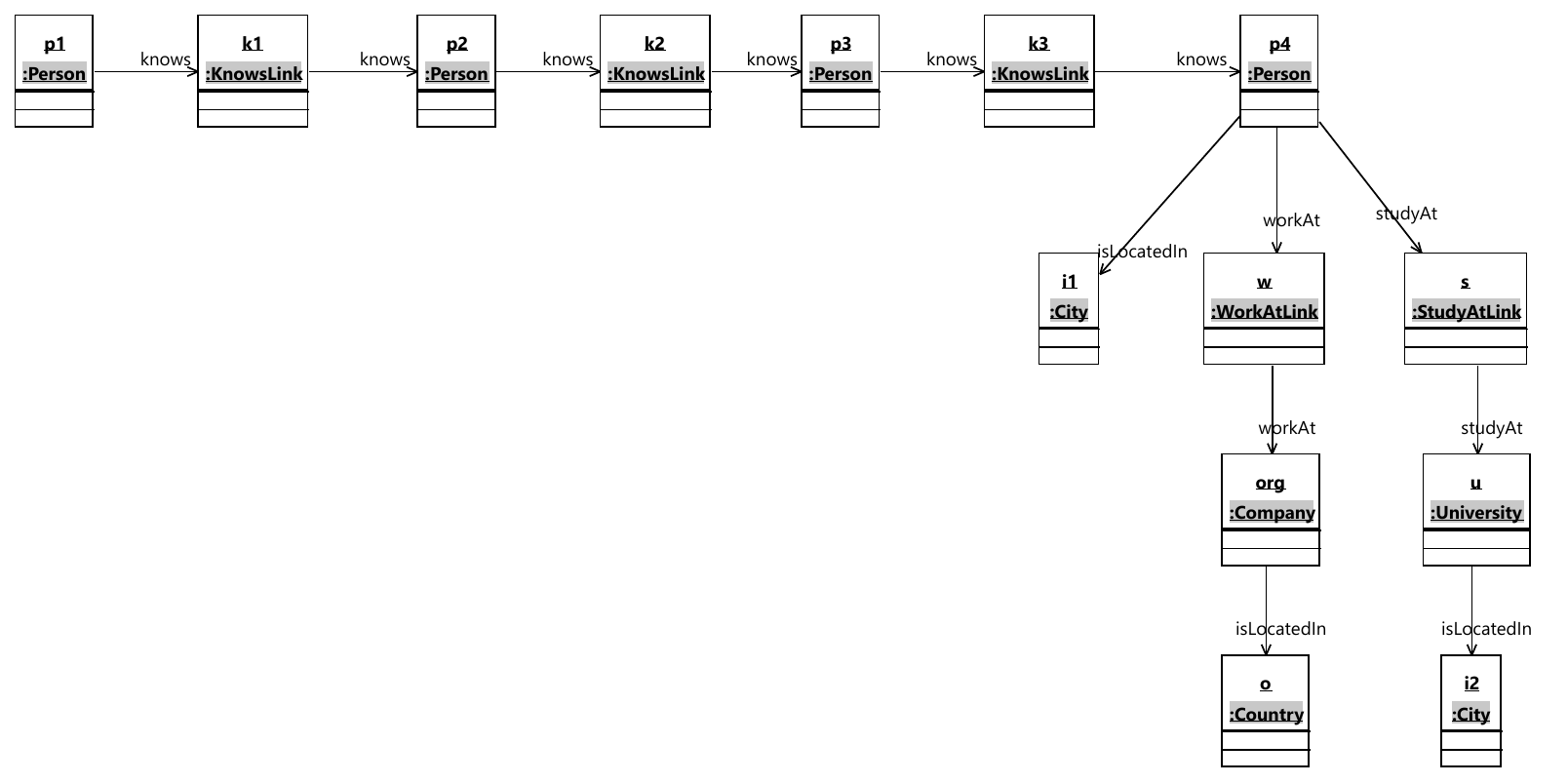}
\caption{Query ``Interactive 1'' for the LDBC scenario}
\end{figure}

\begin{figure}[!ht]
\centering
\includegraphics[width=\textwidth]{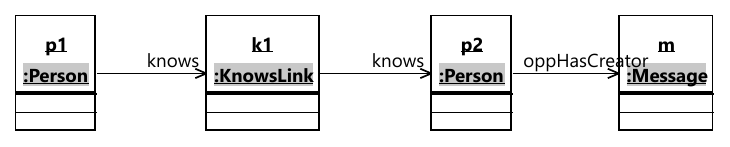}
\caption{Query ``Interactive 2'' for the LDBC scenario}
\end{figure}

\begin{figure}[!ht]
\centering
\includegraphics[width=\textwidth]{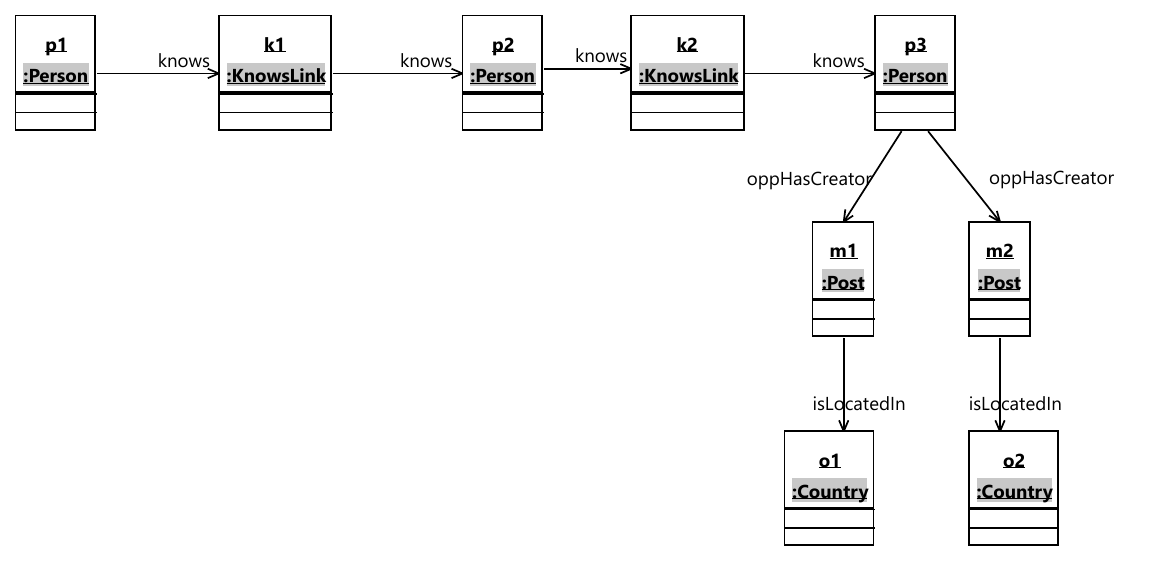}
\caption{Query ``Interactive 3'' for the LDBC scenario}
\end{figure}

\begin{figure}[!ht]
\centering
\includegraphics[width=\textwidth]{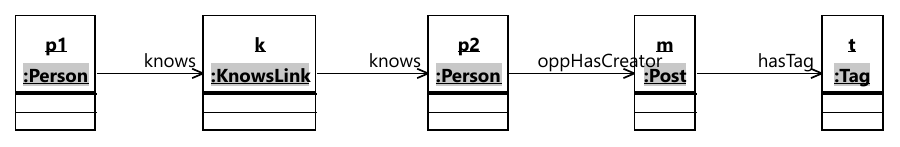}
\caption{Query ``Interactive 4'' for the LDBC scenario}
\end{figure}

\begin{figure}[!ht]
\centering
\includegraphics[width=\textwidth]{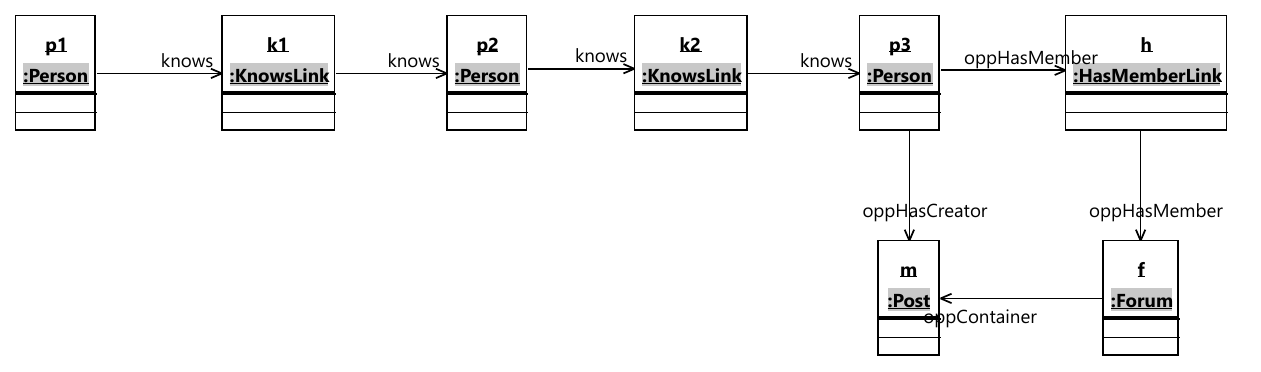}
\caption{Query ``Interactive 5'' for the LDBC scenario}
\end{figure}

\begin{figure}[!ht]
\centering
\includegraphics[width=\textwidth]{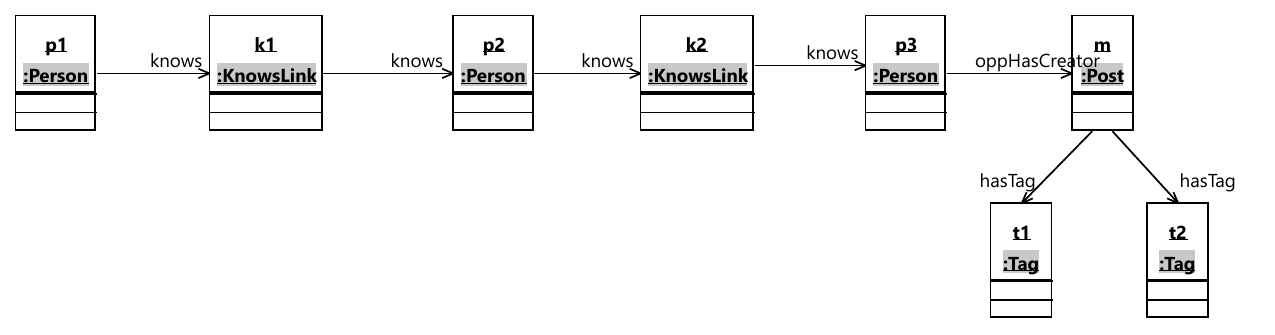}
\caption{Query ``Interactive 6'' for the LDBC scenario}
\end{figure}

\begin{figure}[!ht]
\centering
\includegraphics[width=\textwidth]{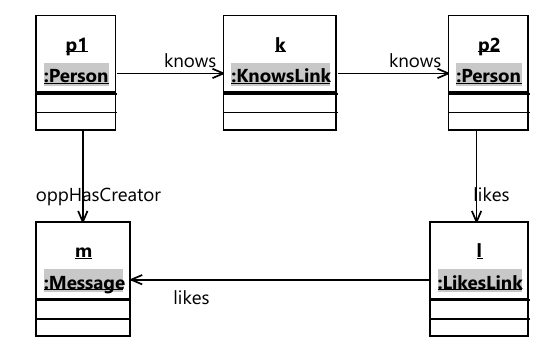}
\caption{Query ``Interactive 7'' for the LDBC scenario}
\end{figure}

\begin{figure}[!ht]
\centering
\includegraphics[width=\textwidth]{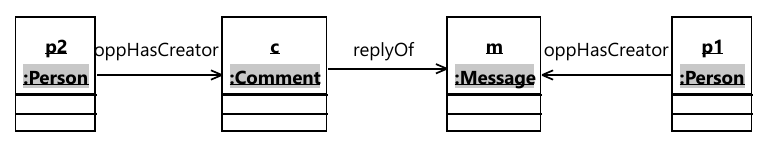}
\caption{Query ``Interactive 8'' for the LDBC scenario}
\end{figure}

\begin{figure}[!ht]
\centering
\includegraphics[width=\textwidth]{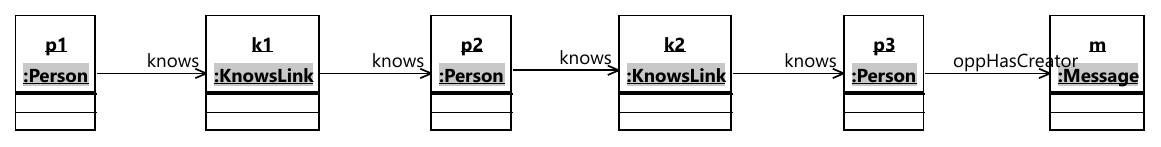}
\caption{Query ``Interactive 9'' for the LDBC scenario}
\end{figure}

\begin{figure}[!ht]
\centering
\includegraphics[width=\textwidth]{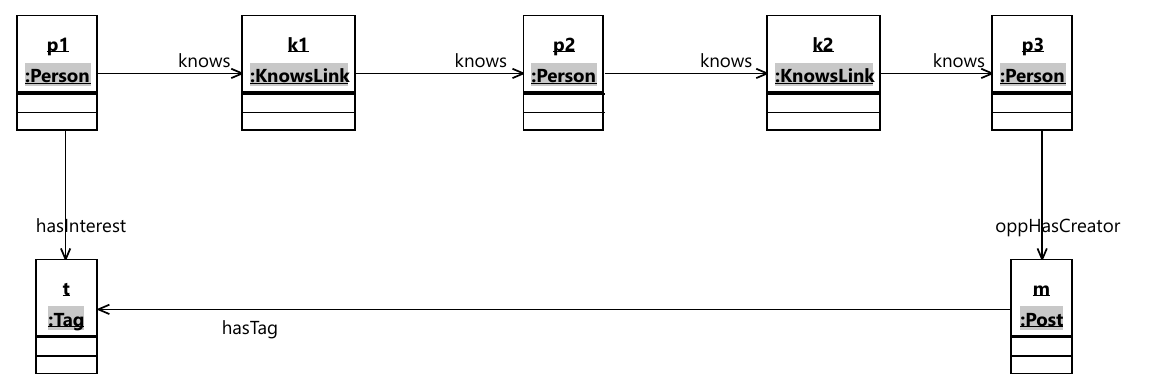}
\caption{Query ``Interactive 10'' for the LDBC scenario}
\end{figure}

\begin{figure}[!ht]
\centering
\includegraphics[width=\textwidth]{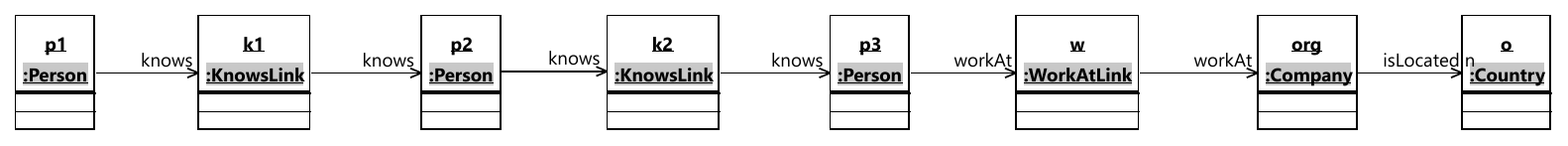}
\caption{Query ``Interactive 11'' for the LDBC scenario}
\end{figure}

\begin{figure}[!ht]
\centering
\includegraphics[width=\textwidth]{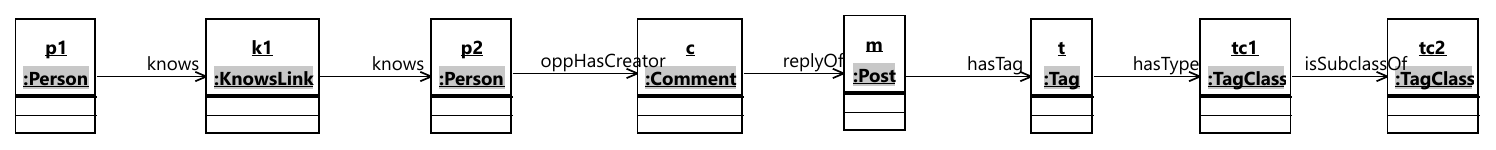}
\caption{Query ``Interactive 12'' for the LDBC scenario}
\end{figure}

\clearpage


\section{Technical Details} \label{app:technical_details}

\subsection{Additional Definitions}

\subsubsection{Fully Complete Query Results}

\begin{definition} \label{def:completeness} \emph{Completeness of Query Results}
	We say that a set of matches $M$ from query graph $Q$ into host graph $H$ is complete
	iff $M = \allmatches{Q}{H}$.
\end{definition}


\subsubsection{Target Result Sets of Marking-Sensitive RETE Nodes}

Let $H$ be a host graph, $H_p \subseteq H$ the relevant subgraph, $(N^\Phi, p^\Phi)$ the containing marking-sensitive RETE net, and $\mathcal{C}^\Phi$ a configuration for $(N^\Phi, p^\Phi)$. In order to allow targeted processing of markings and local navigation in the model, we introduce several types of marking-sensitive RETE nodes with the following target result set functions:

\begin{itemize}
	\item The target result set of a \emph{marking-sensitive join node} $[\bowtie]^\Phi$ with marking-sensitive dependencies $n^\Phi_l$ and $n^\Phi_r$ with associated subgraphs $Q_l$ and $Q_r$ such that $V^{Q_\cap} \neq \emptyset$, where $Q_\cap = Q_l \cap Q_r$, is given by
	$\resultslocal{[\bowtie]^\Phi}{N^\Phi}{H}{H_p}{\mathcal{C}^\Phi} = \{(m_l \cup m_r, max(\phi_l, \phi_r)) | (m_l, \phi_l) \in \mathcal{C}^\Phi(n^\Phi_l) \wedge (m_r, \phi_r) \in \mathcal{C}^\Phi(n^\Phi_r) \wedge m_l|_{Q_\cap} = m_r|_{Q_\cap}\}$.
	\item The target result set of a \emph{marking-sensitive union node} $[\cup]^\Phi$ with a set of marking-sensitive dependencies $N^\Phi_\alpha$ with common associated query subgraph $Q_\alpha$ is given by
	$\resultslocal{[\cup]^\Phi}{N^\Phi}{H}{H_p}{\mathcal{C}^\Phi} = \{(m, \phi_{max}))| (m, \phi_{max}) \in \bigcup_{n_\alpha \in N^\Phi_\alpha} \mathcal{C}^\Phi(n^\Phi_\alpha) \wedge \phi_{max} = max(\{ \phi' | (m, \phi') \in \bigcup_{n_\alpha \in N^\Phi_\alpha} \mathcal{C}^\Phi(n^\Phi_\alpha)\})\}$.
	\item The target result set of a \emph{marking-sensitive projection node} $[\pi_Q]^\Phi$ with marking-sensitive dependency $n^\Phi_\alpha$ and associated query subgraph $Q$ is given by
	$\resultslocal{[\pi_Q]^\Phi}{N^\Phi}{H}{H_p}{\mathcal{C}^\Phi} = \{(m|_Q, \phi_{max}))| (m, \phi_{max}) \in \mathcal{C}^\Phi(n^\Phi_\alpha) \wedge \phi_{max} = max(\{ \phi' | (m', \phi') \in \mathcal{C}^\Phi(n^\Phi_\alpha) \wedge m'|_Q = m|_Q\})\}$.
	\item The target result set of a \emph{marking assignment node} $[\phi := x]^\Phi$ with marking-sensitive dependency node $n^\Phi_\alpha$ is given by
	$\resultslocal{[\phi := x]^\Phi}{N^\Phi}{H}{H_p}{\mathcal{C}^\Phi} = \{(m, x) | (m, \phi) \in \mathcal{C}^\Phi(n^\Phi_\alpha)\}$.
	\item The target result set of a \emph{marking filter node} $[\phi > x]^\Phi$ with marking-sensitive dependency $n^\Phi_\alpha$ is given by
	$\resultslocal{[\phi > x]^\Phi}{N^\Phi}{H}{H_p}{\mathcal{C}^\Phi} = \{(m, \phi) | (m, \phi) \in \mathcal{C}^\Phi(n^\Phi_\alpha) \wedge \phi > x\}$.
	\item The target result set of a \emph{forward navigation node} $[v \rightarrow_n w]^\Phi$ with associated subgraph
	$Q = (\{v, w\}, \{e\}, \{(e, v)\}, \{(e, w)\})$ and marking-sensitive dependency $n^\Phi_v$ with associated subgraph
	$Q_v = (\{v\}, \emptyset, \emptyset, \emptyset)$ is given by
	$\resultslocal{[v \rightarrow_n w]^\Phi}{N^\Phi}{H}{H_p}{\mathcal{C}^\Phi} = \{(m, \phi_{max}) | m \in \allmatches{Q}{H} \wedge \exists (m_v, \phi_{max}) \in \mathcal{C}^\Phi(n^\Phi_v): m_v = m|_{Q_v} \wedge \phi_{max} = max(\{ \phi' | (m_v', \phi') \in \mathcal{C}^\Phi(n^\Phi_v) \wedge m_v' = m|_{Q_v}\})\}\}$.
	\item The target result set of a \emph{backward navigation node} $[w \leftarrow_n v]^\Phi$ with associated subgraph
	$Q = (\{v, w\}, \{e\}, \{(e, v)\}, \{(e, w)\})$ and marking-sensitive dependency $n^\Phi_w$ with associated subgraph
	$Q_w = (\{w\}, \emptyset, \emptyset, \emptyset)$ is given by
	$\resultslocal{[w \leftarrow_n v]^\Phi}{N^\Phi}{H}{H_p}{\mathcal{C}^\Phi} = \{(m, \phi_{max}) | m \in \allmatches{Q}{H} \wedge \exists (m_w, \phi_{max}) \in \mathcal{C}^\Phi(n^\Phi_w): m_w = m|_{Q_w} \wedge \phi_{max} = max(\{ \phi' | (m_w', \phi') \in \mathcal{C}^\Phi(n^\Phi_w) \wedge m_w' = m|_{Q_w}\})\}\}$.
	\item The target result set of a \emph{marking-sensitive vertex input node} $[v]^\Phi$ with associated subgraph $Q = (\{v\}, \emptyset, \emptyset, \emptyset)$ is given by
	$\resultslocal{[v]^\Phi}{N^\Phi}{H}{H_p}{\mathcal{C}^\Phi} = \{(m, \infty) | m \in \allmatches{Q}{H_p}\}$.
\end{itemize}


\subsubsection{Extension Points of Localized RETE Nets}

Given a local navigation structure $LNS(p)$ containing the marking-sensitive vertex input nodes $[v]^\Phi$ and $[w]^\Phi$, we call $\chi(LNS(p)) = \{[\cup]^\Phi_v, [\cup]^\Phi_w\}$ the \emph{extension points} of $LNS(p)$.

Let $(N, p)$ be a well-formed RETE net with $p$ a join node. For the localized RETE net $(N^\Phi, p^\Phi) = localize((N, p))$ with $N^\Phi = N^\Phi_{\bowtie} \cup N^\Phi_l \cup N^\Phi_r \cup RPS_l \cup RPS_r$, the extension points of $N^\Phi$ are given by $\chi(N^\Phi) = \chi(N^\Phi_l) \cup \chi(N^\Phi_r)$.

\subsection{Theorems, Lemmata, and Proofs}

\subsubsection{Localized RETE Result Completeness}


\begin{theorem} \label{the:completeness_consistent_configuration_appendix}
Let $H$ be a graph, $H_p \subseteq H$, $(N, p)$ a well-formed RETE net, and $Q$ the query graph associated with $p$. Furthermore, let $\mathcal{C}^\Phi$ be a consistent configuration for the localized RETE net $(N^\Phi, p^\Phi) = localize((N, p))$. The set of matches from $Q$ into $H$ given by the stripped result set $\resultsstripped{p^\Phi}{\mathcal{C}^\Phi}$ is then complete under $H_p$.
\end{theorem}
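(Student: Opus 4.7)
My plan is to induct on the height $h$ of the well-formed RETE net $N$. The bare statement is too weak to serve as an induction hypothesis, because in the inductive step a match $m$ touching $H_p$ on only one side of a join still requires its complement in the other subnet, and that complement need not touch $H_p$ itself. I would therefore strengthen the invariant as follows: for every consistent $\mathcal{C}^\Phi$ of $(N^\Phi, p^\Phi) = localize((N, p))$ with $N$ of height $h$, and every $m \in \allmatches{Q}{H}$, clause (i) if $m$ touches $H_p$ then $(m, \infty) \in \mathcal{C}^\Phi(p^\Phi)$, and clause (ii) if some extension point $[\cup]^\Phi_v \in \chi(N^\Phi)$ contains $(\{v \mapsto m(v)\}, \phi)$ with $\phi > h$, then $\mathcal{C}^\Phi(p^\Phi)$ contains $(m, \phi')$ for some $\phi' \geq \phi$. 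Clause (i) applied to the outermost $p^\Phi$ is precisely the theorem.

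The base case has $p$ an edge input node for edge $v \rightarrow w$, so $N^\Phi = LNS(p)$ has height $0$ with extension points $[\cup]^\Phi_v$ and $[\cup]^\Phi_w$. I would verify both clauses directly from the semantics of the marking-sensitive vertex input, union, and forward/backward navigation nodes: if $m$ touches $H_p$ via $v$, consistency at $[v]^\Phi$ forces $(\{v \mapsto m(v)\}, \infty)$ into its result, which propagates through $[\cup]^\Phi_v$, becomes $(m, \infty)$ at the forward navigation $[v \rightarrow_n w]^\Phi$ since $m$ is a genuine edge match in $H$, and finally enters the top union $[\cup]^\Phi = p^\Phi$; the case $m(w) \in V^{H_p}$ is symmetric. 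Clause (ii) follows along the same chain because the navigation and union nodes preserve the maximum marking of their input.

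In the inductive step, $p$ is a join at height $h$ with dependencies $p_l, p_r$, query subgraphs $Q_l, Q_r$, and shared vertex $v \in V^{Q_l \cap Q_r}$ used by $RPS_l$ and $RPS_r$. For clause (i), at least one of $m_l = m|_{Q_l}$ and $m_r = m|_{Q_r}$ touches $H_p$; without loss of generality say $m_l$ does, so by IH clause (i) on $N^\Phi_l$ we have $(m_l, \infty) \in \mathcal{C}^\Phi(p^\Phi_l)$. Tracing $m_l$ through the filter $[\phi > h]^\Phi$ (passed since $\infty > h$), the projection $[\pi_{Q_v}]^\Phi$, and the assignment $[\phi := h]^\Phi$ of $RPS_l$ into the extension point $[\cup]^\Phi_v$ of $N^\Phi_r$, consistency deposits $(\{v \mapsto m(v)\}, h)$ there. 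Since $h$ strictly exceeds the height of $N_r$, IH clause (ii) on $N^\Phi_r$ delivers $(m_r, \phi_r) \in \mathcal{C}^\Phi(p^\Phi_r)$ with $\phi_r \geq h$, and the marking-sensitive join then emits $(m, \infty) \in \mathcal{C}^\Phi(p^\Phi)$. Clause (ii) at level $h$ is treated symmetrically: a request at marking $\phi > h$ at an extension point in $N^\Phi_l$ triggers IH clause (ii) on $N^\Phi_l$ (applicable since $\phi > h > \text{height}(N_l)$), the resulting $m_l$ with marking $\geq \phi > h$ passes $RPS_l$'s filter, and the same flow as before produces $(m, \phi')$ with $\phi' \geq \phi$.

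The main obstacle I anticipate is calibrating the height bound in clause (ii) so the induction closes: the bound $\phi > h$ must be tight enough that $RPS$ filters always pass the triggered matches, yet loose enough that every request that could legitimately propagate across a join is covered. This is the same marking accounting that underlies Theorem \ref{the:execution_order_consistency} and the filtering needed to rule out spurious cycles, and care is needed to verify that requests originating from RPSes at strictly lower joins (which carry smaller markings) do not fall under clause (ii) of the containing subnet and therefore cannot force unintended obligations upward. Once this bookkeeping is nailed down, each induction step reduces to routine verification against the definitions of the marking-sensitive node semantics.
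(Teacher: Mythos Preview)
Your approach is essentially the paper's. Your clause~(ii) is exactly the paper's Lemma~\ref{lem:completeness_extension_points}, proved by the same induction on height; the theorem then follows by instantiating clause~(ii) with $\phi=\infty$ at an extension point fed by a vertex-input node, which is also how the paper concludes (so clause~(i) is in fact a special case of clause~(ii) and could be dropped).

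There is one technical detail your invariant does not handle as stated. You quantify over consistent configurations of the standalone net $localize((N,p))$, but in the inductive step you apply the hypothesis to the subnet $N^\Phi_r$ under the ambient configuration $\mathcal{C}^\Phi$. The extension point $[\cup]^\Phi_v\in\chi(N^\Phi_r)$ has, in the full net, an extra dependency on the marking-assignment node of $RPS_l$; its current result set therefore contains tuples not explained by the subnet's own dependencies, so $\mathcal{C}^\Phi$ restricted to $N^\Phi_r$ is \emph{not} consistent for $N^\Phi_r$ in isolation, and your IH as written does not literally apply. The paper resolves this by stating the lemma for \emph{modular extensions}---supergraphs $X^\Phi\supseteq N^\Phi$ that introduce no new edges among the nodes of $N^\Phi$---requiring consistency only at the nodes of $N^\Phi$. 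Since the full localized net is a modular extension of each of $N^\Phi_l$ and $N^\Phi_r$, the inductive call then goes through verbatim. With clause~(ii) generalized in this way, the rest of your outline is correct.
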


\begin{proof}
As a consequence of the definition of well-formed RETE nets, for each edge $e \in E^Q$, there must be an edge input node in the RETE net $(N, p)$. By construction of $(N^\Phi, p^\Phi)$, there must then be a local navigation structure in $N^\Phi$ for each such edge input node. By the definition of well-formed RETE nets, we furthermore know that each query graph vertex $v \in V^Q$ must have at least one adjacent edge.

Consequently, for each query graph vertex $v \in V^Q$, there must be a local navigation structure in $N^\Phi$ that contains a marking-sensitive vertex input node $[v]^\Phi$ that is associated with the query subgraph $Q_v = (\{v\}, \emptyset, \emptyset, \emptyset)$. By construction, $[v]^\Phi$ then is a dependendency of a marking-sensitive union node $[\cup]^\Phi_v$ that is also associated with the query subgraph $Q_v$ and is an extension point of the local navigation structure and thus of $N^\Phi$.

Since $\mathcal{C}^\Phi$ is consistent, by the semantics of the marking-sensitive vertex input node, it follows that $\forall m_v \in \allmatches{Q_v}{H}: m_v(v) \in V^{H_p} \Rightarrow (m_v, \infty) \in C([v]^\Phi)$. By the semantics of the marking-sensitive union node, it must then also hold that $\forall m_v \in \allmatches{Q_v}{H}: m_v(v) \in V^{H_p} \Rightarrow (m_v, \infty) \in C([\cup]^\Phi_v)$.

For any such match $m_v \in \allmatches{Q_v}{H}$ with $m_v(v) \in H_p$, from Lemma \ref{lem:completeness_extension_points}, it then follows that $\forall m \in \allmatches{Q}{H}: m(v) = m_v(v) \Rightarrow (m, \infty) \in \mathcal{C}^\Phi(p^\Phi)$. Because $Q_v \subseteq Q$ and thus $\forall m \in \allmatches{Q}{H}: m(v) \in V^{H_p} \Rightarrow \exists m_v \in \allmatches{Q_v}{H} : m_v(v) \in V^{H_p}$, it must thereby hold that $\forall m \in \allmatches{Q}{H}: m(v) \in V^{H_p} \Rightarrow (m, \infty) \in \mathcal{C}^\Phi(p^\Phi)$.

Since this is the case for all $v \in V^Q$, it follows that $\forall m \in \allmatches{Q}{H}: (\exists v \in V^Q : m(v) \in V^{H_p}) \Rightarrow (m, \infty) \in \mathcal{C}^\Phi(p^\Phi)$.

Thus, it finally follows that $\forall m \in \allmatches{Q}{H}: (\exists v \in V^Q : m(v) \in V^{H_p}) \Rightarrow m \in \resultsstripped{p^\Phi}{\mathcal{C}^\Phi}$. Hence, $\resultsstripped{p^\Phi}{\mathcal{C}^\Phi}$ is complete under $H_p$.
\end{proof}

\clearpage

\subsubsection{Localized RETE Execution Order Consistency}


\begin{theorem} \label{the:execution_order_consistency_appendix}
Let $H$ be a graph, $H_p \subseteq H$, $(N, p)$ a well-formed RETE net, and $\mathcal{C}^\Phi_0$ an arbitrary starting configuration. Executing the marking-sensitive RETE net $(N^\Phi, p^\Phi) = localize((N, p))$ via $O = order((N^\Phi, p^\Phi))$ then yields a consistent configuration $\mathcal{C}^\Phi = execute(O, N^\Phi, H, H_p, \mathcal{C}^\Phi_0)$.
\end{theorem}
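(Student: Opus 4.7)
(Plan)
I would proceed by induction on the height of the original well-formed RETE net $(N, p)$. In the base case, $p$ is an edge input node, $N^\Phi = LNS(p)$ is a tree, and $order((N^\Phi, p^\Phi))$ reduces to a reverse topological sort; consistency follows from the standard argument that executing each RETE node after all of its dependencies leaves the resulting configuration consistent for that node without invalidating any previously established consistency.

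In the inductive step, $p$ is a join, and the execution order decomposes into the seven segments $order(RPS_r)$, $order(N^\Phi_l)$, $order(RPS_l)$, $order(N^\Phi_r)$, $order(RPS_r)$, $order(N^\Phi_l)$, and $[p^\Phi]$. Walking through them, I would apply the inductive hypothesis to $N^\Phi_l$ and $N^\Phi_r$ (and exploit that each $RPS$ is a small tree) to show that every segment restores consistency of the sub-net it targets given the current state of its dependencies. The delicate point is the sixth segment: the second run of $order(N^\Phi_l)$ may extend $p^\Phi_l$ in response to the feeds pushed into its extension points by $RPS_r$ in segment five, and this could in principle break the consistency $RPS_l$ achieved in segment three and, through $RPS_l$, the consistency of $N^\Phi_r$ from segment four.

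The main obstacle, and the real content of the proof, is therefore the following key lemma, which I would establish by a structural induction on $N^\Phi_l$ (and symmetrically on $N^\Phi_r$): in any consistent subconfiguration of $N^\Phi_l$, the set of matches in $p^\Phi_l$ carrying marking $\infty$ is exactly the set of matches of $Q_l$ in $H$ that touch $H_p$, and this set is determined solely by $H$, $H_p$, and $Q_l$, independently of what the external $RPS_r$ feeds into the extension points of $N^\Phi_l$. The $\subseteq$ direction is immediate because marking $\infty$ originates only at marking-sensitive vertex input nodes over $H_p$ and propagates through unions and joins by $\max$. The $\supseteq$ direction is the technically tricky step: one shows recursively that the internal request projection structures relay each $\infty$-marked sub-match across every inner join, so that every remaining edge of a $Q_l$-match touching $H_p$ is eventually navigated and the whole match is assembled at $p^\Phi_l$ with marking $\infty$; this mirrors the structural argument underlying Theorem \ref{the:completeness_consistent_configuration}.

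With the key lemma in place the argument closes as follows. By construction of the join heights, every marking assignment internal to $N^\Phi_l$ uses a value strictly smaller than $h$, whereas the feeds from the outer $RPS_r$ carry the value $h$ itself, so the filter $\phi > h$ at the bottom of $RPS_l$ only ever lets through matches marked $\infty$. The lemma guarantees that this set of $\infty$-marked matches in $p^\Phi_l$ is the same after segment six as it was after segment two, so $RPS_l$'s output is unchanged and the consistency it received in segment three survives; the same reasoning preserves the consistency of $N^\Phi_r$ from segment four and, by the symmetric argument, of $RPS_r$ from segment five, since $N^\Phi_r$ is not touched again. Together with the freshly restored consistency of $N^\Phi_l$ in segment six and of $[p^\Phi]$ in segment seven, this yields consistency of $\mathcal{C}^\Phi$ for every node in $N^\Phi$.
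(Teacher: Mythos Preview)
Your overall plan---induction on the height of $(N,p)$, walking through the seven segments, and isolating the sixth as the delicate one---matches the paper's structure. But there is a genuine gap in how you set up the induction.

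You want to apply the inductive hypothesis to $N^\Phi_l$ and $N^\Phi_r$ in segments 2, 4, and 6. The theorem, however, is about executing a localized net in isolation, whereas here $N^\Phi_l$ is embedded in $N^\Phi$: its extension-point union nodes have acquired an extra dependency on the marking-assignment node of $RPS_r$, so their target result sets---and hence what ``consistent for $N^\Phi_l$'' even means---depend on this external feed. The theorem-as-IH says nothing about this embedded situation and cannot be invoked as you have it. The paper closes this gap by proving a stronger auxiliary statement, robustness under \emph{modular extension} (Lemma~\ref{lem:order_robustness}): for any supernet $X^\Phi \supseteq N^\Phi_l$ that adds dependencies only at extension points, executing $order(N^\Phi_l)$ inside $X^\Phi$ still yields consistency for all nodes of $N^\Phi_l$. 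The theorem is then the trivial instance $X^\Phi = N^\Phi$.

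A secondary issue is that your $\infty$-invariance lemma, while correct at the outermost level, does not survive once the hypothesis is properly strengthened. Under modular extension the feeds into $N^\Phi_l$ may carry arbitrary markings $\psi > h_l$, and those also pass the inner filters $\phi > h'$ for $h' \le h_l$; so ``only $\infty$ gets through'' fails below the top level. The paper's replacement (Lemma~\ref{lem:localized_rete_max_marking}) bounds the marking of all \emph{changes} at $p^\Phi_l$ by the maximum marking $\psi$ of the external changes---this is precisely the generalization of your invariance idea from the single threshold $\infty$ to an arbitrary threshold, and it is what lets the induction close at every level.
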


\begin{proof}
Follows directly from Lemma \ref{lem:order_robustness}.
\end{proof}

\subsubsection{Localized RETE Execution Correctness}


\begin{corollary}
Let $H$ be a graph, $H_p \subseteq H$, $(N, p)$ a well-formed RETE net, and $Q$ the query graph associated with $p$. Furthermore, let $\mathcal{C}^\Phi_0$ be an arbitrary starting configuration for the localized RETE net $(N^\Phi, p^\Phi) = localize((N, p))$ and $\mathcal{C}^\Phi = execute(order((N^\Phi, p^\Phi)), N^\Phi, H, H_p, \mathcal{C}^\Phi_0)$. The set of matches from $Q$ into $H$ given by $\resultsstripped{p^\Phi}{\mathcal{C}^\Phi}$ is then complete under $H_p$.
\end{corollary}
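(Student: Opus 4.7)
My plan is to derive the corollary as a direct two-step chaining of the two theorems established immediately before it in the appendix, mirroring the one-line argument used for Corollary \ref{cor:completeness_execution} in the main text. No new construction is needed: the corollary is really just the composition ``execution yields consistency, and consistency yields completeness.''

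First, I would apply Theorem \ref{the:execution_order_consistency_appendix} to $(N, p)$, $H$, $H_p$, and the given arbitrary starting configuration $\mathcal{C}^\Phi_0$. The theorem asserts that executing $(N^\Phi, p^\Phi) = localize((N, p))$ via $O = order((N^\Phi, p^\Phi))$ yields a consistent configuration $execute(O, N^\Phi, H, H_p, \mathcal{C}^\Phi_0)$. Since this is precisely the configuration $\mathcal{C}^\Phi$ introduced in the corollary's statement, I immediately obtain that $\mathcal{C}^\Phi$ is consistent for $(N^\Phi, p^\Phi)$ with respect to $H$ and $H_p$.

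Next, I would feed this consistency fact into Theorem \ref{the:completeness_consistent_configuration_appendix}, instantiated with the same $H$, $H_p$, $(N, p)$, $Q$, and the configuration $\mathcal{C}^\Phi$ just shown to be consistent. The theorem then directly concludes that the stripped result set $\resultsstripped{p^\Phi}{\mathcal{C}^\Phi}$ is complete under $H_p$ in the sense of Definition \ref{def:completeness_subgraphs}, which is exactly the corollary's claim.

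I do not anticipate any obstacle: the quantifiers align cleanly (Theorem \ref{the:execution_order_consistency_appendix} covers \emph{arbitrary} starting configurations, so no side condition on $\mathcal{C}^\Phi_0$ needs to be discharged, and Theorem \ref{the:completeness_consistent_configuration_appendix} covers \emph{arbitrary} consistent configurations of the localized net), and the hypotheses on $(N, p)$ being well-formed and on $Q$ being the query graph of $p$ are inherited verbatim from the corollary's assumptions. The only thing to be careful about is to make explicit that the $\mathcal{C}^\Phi$ produced by the $execute$ call is the same object to which Theorem \ref{the:completeness_consistent_configuration_appendix} is applied, which is immediate from the notation.
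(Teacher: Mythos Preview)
Your proposal is correct and matches the paper's own proof exactly: the corollary is obtained by first applying Theorem \ref{the:execution_order_consistency} to conclude that $\mathcal{C}^\Phi$ is consistent, and then applying Theorem \ref{the:completeness_consistent_configuration} to that consistent configuration to obtain completeness under $H_p$.
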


\begin{proof}
Follows directly from Theorem \ref{the:completeness_consistent_configuration} and Theorem \ref{the:execution_order_consistency}.
\end{proof}

\subsubsection{Localized RETE Configuration Size}


\begin{theorem} \label{the:upper_bound_configuration_size_appendix}
Let $H$ be an edge-dominated graph, $H_p \subseteq H$, $(N, p)$ a well-formed RETE net with $Q$ the associated query graph of $p$, $\mathcal{C}$ a consistent configuration for $(N, p)$ for host graph $H$, and $\mathcal{C}^\Phi$ a consistent configuration for the localized RETE net $(N^\Phi, p^\Phi) = localize((N, p))$ for host graph $H$ and relevant subgraph $H_p$. It then holds that $\sum_{n^\Phi \in V^{N^\Phi}} \sum_{(m, \phi) \in \mathcal{C}(n^\Phi)} |m| \leq 7 \cdot |\mathcal{C}|_e$.
\end{theorem}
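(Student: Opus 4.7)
The plan is to bound the contribution of each of the $7 \cdot |V^N|$ marking-sensitive nodes individually and then charge these bounds to the ``budgets'' of original nodes in $N$ in such a way that no original node $n$ is drawn upon for more than a factor of $7$ of its own $\mathrm{contrib}(n) := \sum_{m \in \mathcal{C}(n)} |m|$. A direct structural induction over join height does not go through, because the request projection structures attached to a join $p$ are driven by its children's result sets rather than by $p$'s own, so I would organize the argument as a global summation. The essential preparation is the structural count $|V^{N^\Phi}| = 7 \cdot |V^N|$ (each edge input becomes a $7$-node local navigation structure, and each join becomes $[\bowtie]^\Phi$ plus two $3$-node request projection structures), together with two properties of any consistent $\mathcal{C}^\Phi$ that follow directly from the semantics given in Appendix~\ref{app:technical_details}: (i) every match in $\mathcal{C}^\Phi(n^\Phi)$ belongs to $\allmatches{Q_{n^\Phi}}{H}$, and (ii) each such match occurs with a unique marking, because marking-sensitive unions, joins, and projections aggregate competing markings via $\max$.

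Next I would bound each marking-sensitive node's contribution by case. For nodes whose associated query subgraph coincides with that of an original $n \in V^N$---the top $[\cup]^\Phi$ and both navigation nodes of each LNS, each $[\bowtie]^\Phi$, and each RPS filter---combining (i) and (ii) with consistency of $\mathcal{C}$ (which gives $\mathcal{C}(n) = \allmatches{Q_n}{H}$ in a well-formed net) yields $|\{m \mid (m,\phi) \in \mathcal{C}^\Phi(n^\Phi)\}| \leq |\mathcal{C}(n)|$, bounding the contribution by $\mathrm{contrib}(n)$. For nodes with a single-vertex query subgraph---the $[v]^\Phi$, $[w]^\Phi$, $[\cup]^\Phi_v$, $[\cup]^\Phi_w$ of each LNS and the projection and assignment nodes of each RPS---matches have size $1$, and edge-dominance of $H$ gives $|\allmatches{Q_v}{H}| \leq |\allmatches{Q_e}{H}|$ for any edge $e$ incident to $v$, so the contribution is at most $\frac{1}{3}$ of the contribution of an adjacent edge input in $N$ (or, in the RPS case, at most $\frac{1}{|V^{Q_l}| + |E^{Q_l}|}$ of its filter's contribution).

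Finally I would route each marking-sensitive node to a single original-node budget and sum: the seven LNS nodes of an edge input $n$ all charge $n$ (total at most $3 \cdot \mathrm{contrib}(n) + 4 \cdot \frac{1}{3}\mathrm{contrib}(n) = \frac{13}{3}\mathrm{contrib}(n)$); each $[\bowtie]^\Phi$ charges its join; and the three nodes of $RPS_l$ (respectively $RPS_r$) attached to a parent join charge the child $p_l$ (respectively $p_r$), with filter contribution at most $\mathrm{contrib}(p_l)$ and projection plus assignment at most $\frac{2}{3}\mathrm{contrib}(p_l)$ (edge-input child) or $\frac{2}{5}\mathrm{contrib}(p_l)$ (join child, using $|V^{Q_l}| + |E^{Q_l}| \geq 5$). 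The per-node multiplier is then at most $\frac{13}{3} + \frac{5}{3} = 6$ for a non-root edge input and at most $1 + \frac{7}{5}$ for a non-root join, giving a global total bounded by $6 \cdot |\mathcal{C}|_e \leq 7 \cdot |\mathcal{C}|_e$. The main obstacle is precisely this routing of RPS contributions: charging them to the parent join $p$ would fail because $|\mathcal{C}(p_l)|$ may far exceed $|\mathcal{C}(p)|$, so RPS must be charged to the child, and one must then verify that the surplus left by edge-dominance at the vertex-typed nodes absorbs the extra $\frac{5}{3}$ per non-root node without exceeding the factor $7$.
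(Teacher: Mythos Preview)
Your argument is correct and shares the paper's overall structure (bound each marking-sensitive node individually, then sum), but it diverges from the paper in the one step that actually matters. Both you and the paper charge the seven LNS nodes of an edge input to that edge input via edge-dominance. For the seven nodes associated with a join $p$, however, the paper charges $[\bowtie]^\Phi$ together with all six RPS nodes to the \emph{parent} $p$, asserting that every RPS node has associated subgraph $Q_v$ and cardinality at most $|\mathcal{C}^\Phi([\bowtie]^\Phi)|$. That assertion is not justified for the filter node, whose associated subgraph is $Q_l$ (not $Q_v$) and whose cardinality is bounded by $|\mathcal{C}(p_l)|$ rather than $|\mathcal{C}([\bowtie])|$; indeed $|\mathcal{C}(p_l)|$ can be arbitrarily large while $|\mathcal{C}([\bowtie])| = 0$. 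Your decision to charge the three nodes of $RPS_l$ to the \emph{child} $p_l$ (and $RPS_r$ to $p_r$) avoids this problem entirely: the filter is then trivially bounded by $\mathrm{contrib}(p_l)$, projection and assignment by $\frac{2}{|Q_l|}\mathrm{contrib}(p_l)$, and the tree shape ensures each node is the child of at most one join, so no budget is double-charged. Your routing also yields the slightly sharper constant $6$.
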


\begin{proof}
For each edge input node $[v \rightarrow w]$ in $N$ with associated query subgraph $Q_e$, $N^\Phi$ contains the seven nodes of the corresponding local navigation structure $LNS([v \rightarrow w])$: $[v]^\Phi$, $[w]^\Phi$, $[\cup]^\Phi_{v}$, $[\cup]^\Phi_{w}$, $[v \rightarrow_n w]^\Phi$, $[w \leftarrow_n v]^\Phi$, and $[\cup]^\Phi$. By the semantics of $[v \rightarrow w]$, it must hold that $\mathcal{C}([v \rightarrow w]) = \allmatches{Q_e}{H}$.

By the semantics of the forward and backward local navigation nodes, we know that $\resultsstripped{[v \rightarrow_n w]^\Phi}{\mathcal{C}^\Phi} \subseteq \allmatches{Q_e}{H}$ and $\resultsstripped{[w \leftarrow_n v]^\Phi}{\mathcal{C}^\Phi} \subseteq \allmatches{Q_e}{H}$ and thus, by the semantics of the marking sensitive union node, $\resultsstripped{[\cup]^\Phi}{\mathcal{C}^\Phi} \subseteq \allmatches{Q_e}{H}$.

Furthermore, by the semantics of the marking sensitive input nodes, it must hold that $\resultsstripped{[v]^\Phi}{\mathcal{C}^\Phi} \subseteq \allmatches{Q_{v}}{H}$ and $\resultsstripped{[w]^\Phi}{\mathcal{C}^\Phi} \subseteq \allmatches{Q_{w}}{H}$, where $Q_{v}$ and $Q_{w}$ are the query subgraphs associated with $v$ respectively $w$. By the semantics of the marking sensitive union nodes and the assumption regarding the common query subgraph associated with their dependencies, it must also hold that $\resultsstripped{[\cup]_{v}^\Phi}{\mathcal{C}^\Phi} \subseteq \allmatches{Q_{v}}{H}$ and $\resultsstripped{[\cup]_{w}^\Phi}{\mathcal{C}^\Phi} \subseteq \allmatches{Q_{w}}{H}$.

By the semantics of the involved marking-sensitive RETE nodes, for each node in $LNS([n_1 \rightarrow n_2])$, each match can only be associated with at most one marking in the node's current result set in $\mathcal{C}^\Phi$. Furthermore, by assumption, it holds that $\allmatches{Q_v}{H} \leq \allmatches{Q_e}{H}$ and $\allmatches{Q_w}{H} \leq \allmatches{Q_e}{H}$. Lastly, we know that $|Q_v| = |Q_w| \leq |Q_e|$.

It thus holds that $\sum_{n^\Phi \in V^{LNS([v \rightarrow w])}} \sum_{(m, \phi) \in \mathcal{C}(n^\Phi)} |m| \leq 7 \cdot \sum_{m \in \mathcal{C}([v \rightarrow w])} |m|$.

For each join node $[\bowtie]$ in $N$ with associated query subgraph $Q_{\bowtie}$ and dependencies $n_l$ and $n_r$, $N^\Phi$ contains the corresponding marking-sensitive join node $[\bowtie]^\Phi$ with dependencies $n^\Phi_l$ and $n^\Phi_r$ corresponding to $n_l$ respectively $n_r$, as well as the six nodes from the related request projection structures: $[\phi > h]^\Phi_l$, $[\pi_{Q_v}]^\Phi_l$, $[\phi > h]^\Phi_l$, $[\phi > h]^\Phi_r$, $[\pi_{Q_v}]^\Phi_r$, and $[\phi > h]^\Phi_r$.

By the semantics of join and marking sensitive join, we know that $\mathcal{C}(n_l) \subseteq \resultsstripped{n^\Phi_l}{\mathcal{C}^\Phi} \wedge \mathcal{C}(n_r) \subseteq \resultsstripped{n^\Phi_r}{\mathcal{C}^\Phi} \Rightarrow \mathcal{C}([\bowtie]) \subseteq \resultsstripped{[\bowtie]^\Phi}{\mathcal{C}^\Phi}$. For every edge input node $[v \rightarrow w]$ in $N$, we know that $\mathcal{C}([v \rightarrow w]) \subseteq \resultsstripped{[\cup]^\Phi}{\mathcal{C}^\Phi}$, where $[\cup]^\Phi$ is the root node of $LNS([v \rightarrow w])$. It is easy to see that thereby, it must hold that $\mathcal{C}([\bowtie]) \subseteq \resultsstripped{[\bowtie]^\Phi}{\mathcal{C}^\Phi}$. By Lemma \ref{lem:match_uniqueness}, it follows that $\sum_{(m, \phi) \in \mathcal{C}([\bowtie]^\Phi)} |m| \leq \sum_{m \in \mathcal{C}([\bowtie])} |m|$.

All nodes in $RPS_{\bowtie} =\{[\phi > h]^\Phi_l, [\pi_{Q_v}]^\Phi_l, [\phi > h]^\Phi_l, [\phi > h]^\Phi_r, [\pi_{Q_v}]^\Phi_r, [\phi > h]^\Phi_r\}$ have the same associated query subgraph $Q_v \subseteq Q_{\bowtie}$. Thus, we know that for any pair of matches $m_v \in \resultsstripped{n^\Phi_v}{\mathcal{C}^\Phi}$ and $m_{\bowtie} \in \mathcal{C}([\bowtie])$, with $n^\Phi_v \in RPS_{\bowtie}$, it must hold that $|m_v| \leq |m_{\bowtie}|$. Furthermore, by the semantics of the marking filter, marking-sensitive projection and marking assignment nodes, we know that for all $n^\Phi_v \in RPS_{\bowtie}$, it holds that $|\mathcal{C}^\Phi((n^\Phi_v)| \leq |\mathcal{C}^\Phi([\bowtie]^\Phi)| \leq |\mathcal{C}([\bowtie])|$. It thus follows for all six $n^\Phi_v \in RPS_{\bowtie}$ that $\sum_{(m, \phi) \in \mathcal{C}(n^\Phi_v)} |m| \leq \sum_{m \in \mathcal{C}([\bowtie])} |m|$. Hence, it must hold that $\sum_{n^\Phi_{\bowtie} \in RPS_{\bowtie} \cup \{[\bowtie]^\Phi\}} \sum_{(m, \phi) \in \mathcal{C}(n^\Phi_{\bowtie})} |m| \leq 7 \cdot \sum_{m \in \mathcal{C}([\bowtie])} |m|$.

Since $N^\Phi$ only consists of the local navigation structures corresponding to edge input nodes in $N$ and marking-sensitive joins corresponding to joins in $N$ along with the accordingly constructed request projection structures, it follows that $\sum_{n^\Phi \in V^{N^\Phi}} \sum_{(m, \phi) \in \mathcal{C}(n^\Phi)} |m| \leq 7 \cdot |\mathcal{C}|$.
\end{proof}

\subsubsection{Localized RETE Memory Consumption}


\begin{corollary}
Let $H$ be an edge-dominated graph, $H_p \subseteq H$, $(N, p)$ a well-formed RETE net, $\mathcal{C}$ a consistent configuration for $(N, p)$ for host graph $H$, and $\mathcal{C}^\Phi$ a consistent configuration for the localized RETE net $(N^\Phi, p^\Phi) = localize((N, p))$ for host graph $H$ and relevant subgraph $H_p$. Assuming that storing a match $m$ requires an amount of memory in $O(|m|)$ and storing an element from $\overline{\mathbb{N}}$ requires an amount of memory in $O(1)$, storing $\mathcal{C}^\Phi$ requires memory in $O(|\mathcal{C}|_e)$.
\end{corollary}

\begin{proof}
Follows directly from Theorem \ref{the:upper_bound_configuration_size} and the fact that no node in a well-formed or localized RETE net is associated with an empty query subgraph, and a consistent configuration for either kind of RETE net thus contains no empty matches.
\end{proof}

\subsubsection{Standard RETE Time Complexity}


\begin{theorem} \label{the:complexity_time_original}
Let $H$ be a graph, $(N, p)$ a well-formed RETE net for query graph $Q$, and $\mathcal{C}_0$ the empty configuration for $(N, p)$. Executing $(N, p)$ via a reverse topological sorting $\mathcal{C}_1 = execute(toposort^{-1}(N), N, H, \mathcal{C}_0)$ then takes $O(|\mathcal{C}_1|)$ steps.
\end{theorem}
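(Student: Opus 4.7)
The plan is to show that, starting from the empty configuration $\mathcal{C}_0$, a reverse topological execution visits each node of $(N, p)$ exactly once and that the cost of each visit can be charged against the effective sizes of the node's own and its children's result sets in $\mathcal{C}_1$. Since well-formed nets are trees, summing over all nodes each effective size contribution is charged only a constant number of times, which yields the desired linear bound (identifying $|\mathcal{C}_1|$ with the effective size $|\mathcal{C}_1|_e$ used in the surrounding performance discussion).

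First, I would handle the edge input nodes. Assuming the host graph provides constant-time indexed access to the edges of a given type, an edge input node $[v \rightarrow w]$ can enumerate its final result set in time proportional to the number of produced matches. Since each such match has constant size $3$, this cost lies in $O(\sum_{m \in \mathcal{C}_1([v \rightarrow w])} |m|)$, which I charge entirely to the node itself.

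Second, I would analyze the join nodes. For a join $n$ with dependencies $n_l, n_r$ and overlap query subgraph $Q_\cap$, a standard hash-based natural join proceeds in three phases: (i) build an index over $\mathcal{C}_1(n_l)$ keyed by $Q_\cap$-restrictions, (ii) probe the index with each tuple in $\mathcal{C}_1(n_r)$, and (iii) emit each resulting combined match. Each input tuple is touched in time proportional to its match size (since $|Q_\cap| \le |Q_l|, |Q_r|$), and each output tuple is emitted in time proportional to its size, giving a bound in $O(\sum_{m \in \mathcal{C}_1(n_l)} |m| + \sum_{m \in \mathcal{C}_1(n_r)} |m| + \sum_{m \in \mathcal{C}_1(n)} |m|)$. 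I charge these three terms respectively to $n_l$, $n_r$, and $n$ itself.

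Since $(N, p)$ is a tree, each node has at most one parent join, so its effective size contribution is charged at most twice overall: once by itself when it is executed, and at most once by its parent as a left or right dependency scan. Summing gives total work in $O(|\mathcal{C}_1|_e)$, which matches the claim. The main technical subtlety is the hash-join accounting, specifically ensuring that hash-key computation and collision handling can be absorbed into the per-tuple size terms; this is routine under a standard constant-time hashing model and is the only place where the proof departs from pure bookkeeping.
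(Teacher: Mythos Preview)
Your proposal is correct and follows essentially the same approach as the paper: a case split on edge-input versus join nodes, bounding each node's execution cost by the effective sizes of its own and its dependencies' result sets, and then using the tree shape of a well-formed net to argue each contribution is charged $O(1)$ times. The only cosmetic difference is that the paper's join analysis charges $O(S_l + S_{\bowtie})$ (enumerate one side, probe the other, absorb probe cost into output construction) whereas you charge $O(S_l + S_r + S_{\bowtie})$ via an explicit build/probe/emit hash join; both accountings are sound and yield the same $O(|\mathcal{C}_1|_e)$ total.
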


\begin{proof}
Given a suitable representation of $H$ that allows efficient retrieval of matching edges, an edge input node $n_e \in V^N$ can be executed by enumerating the matching edges, creating the corresponding trivial matches, and adding them to $\mathcal{C}_0(n_e)$. Assuming indexing structures that allow insertion times linear in the size of the inserted match, this takes effort in $O(S)$, where $S = \sum_{m \in \mathcal{C}_1(n_e)} |m|$.

A join node $n_{\bowtie} \in V^N$ with dependencies $n_l$ and $n_r$ can be executed for configuration $\mathcal{C}_{0.x}$ by enumerating the matches from $\mathcal{C}_{0.x}(n_l)$, retrieving the complementary matches from $\mathcal{C}_{0.x}(n_r)$, constructing the corresponding matches for $n_{\bowtie}$ and adding them to $\mathcal{C}_0(n_{\bowtie})$. Notably, each enumerated combination of a match from $\mathcal{C}_{0.x}(n_l)$ and a complementary match from $\mathcal{C}_{0.x}(n_r)$ results in a distinct match for $n_{\bowtie}$. Thus, assuming ideal indexing structures that allow match enumeration in time linear in the number of matches, retrieval of complementary matches in time linear in the size of the opposite match, and insertion times linear in the size of the inserted match, execution of $n_{\bowtie}$ takes effort in $O(S_l + S_{\bowtie})$, where $S_l = \sum_{m_l \in \mathcal{C}_{0.x}(n_l)} |m_l|$ and $S_{\bowtie} = \sum_{m \in \mathcal{C}_1(n_{\bowtie})} |m|$.

Because each node in $V^N$ is only contained once in $toposort^{-1}(N)$ and has at most one dependent node due to the well-formedness criterion, execution of $(N, p)$ thus takes $O(|\mathcal{C}_1|)$ steps.
\end{proof}

\subsubsection{Localized RETE Time Complexity}


\begin{theorem} \label{the:complexity_time_localized_appendix}
Let $H$ be an edge-dominated graph, $H_p \subseteq H$, $(N, p)$ a well-formed RETE net for query graph $Q$, $\mathcal{C}$ a consistent configuration for $(N, p)$, and $\mathcal{C}^\Phi_0$ the empty configuration for the localized RETE net $(N^\Phi, p^\Phi) = localize((N, p))$. Executing $(N^\Phi, p^\Phi)$ via $execute(order((N^\Phi, p^\Phi)), N^\Phi, H, H_p, \mathcal{C}^\Phi_0)$ then takes $O(T \cdot (|Q_a| + |Q|))$ steps.
\end{theorem}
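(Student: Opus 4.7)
The plan is to decompose the total work of the localized execution into two parts and bound each separately: (i) the effort for initially constructing each marking-sensitive match in the configuration, and (ii) the effort for propagating subsequent marking changes to previously-constructed matches. Throughout, I would use an amortized analysis in which the cost of every (re-)execution of a node is charged either to the match insertions or the marking updates it actually produces, so that repeated invocations of $order$ on already-consistent sub-orderings incur only a constant detection cost per node rather than a cost proportional to the stored configuration.

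For (i), I would mirror the argument underlying Theorem \ref{the:complexity_time_original}: with suitable indexes on overlap variables, each marking-sensitive node can process its input deltas in time linear in the total size of the matches it outputs plus, for joins, the size of the input matches it enumerates. The variants of join, union, projection, marking filter, marking assignment, forward navigation, backward navigation, and vertex input defined above all admit such an implementation. Summing across all nodes and using Theorem \ref{the:upper_bound_configuration_size_appendix} to bound $\sum_{n^\Phi} \sum_{(m, \phi) \in \mathcal{C}^\Phi(n^\Phi)} |m|$ by $7 \cdot |\mathcal{C}|_e$, together with the identity $|\mathcal{C}|_e = T \cdot |Q_a|$ that follows directly from the definition of $|Q_a|$, yields total insertion work in $O(T \cdot |Q_a|)$.

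For (ii), the central observation is that markings range over the finite set $\{1, \ldots, h, \infty\}$, where $h$ is the height of the join tree of $(N,p)$; well-formedness ensures that each level of the tree introduces at least one new query edge, so $h \leq |Q|$. Starting from an empty configuration, each invocation of $execute$ in the sequence produced by $order$ can only grow markings, since unions and joins take the maximum of the contributing input markings and additional insertions can only increase those maxima. Consequently, each stored match undergoes at most $|Q|$ marking changes across the whole execution. Because overlap-variable indexes let a dependent node process an input marking change in time proportional to the number of output marking changes it induces, and because those output changes are themselves counted as marking updates at downstream nodes, the aggregate propagation effort is bounded by the total number of marking changes, which lies in $O(T \cdot |Q|)$ since the localized net contains only $O(T)$ distinct matches by Theorem \ref{the:upper_bound_configuration_size_appendix}. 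Combining (i) and (ii) yields $O(T \cdot (|Q_a| + |Q|))$. The principal obstacle is the amortization step for the repeated sub-orderings appearing in $order$; I would discharge it by induction along the recursive decomposition used in Lemma \ref{lem:order_robustness}, showing that a repeated execution of a subnet whose input portion is already consistent performs only its constant-per-node consistency check and no further processing, so that the total work is indeed linear in the combined number of match insertions and marking updates rather than in the length of the execution sequence.
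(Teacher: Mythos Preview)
Your decomposition into match-construction cost and marking-propagation cost, together with the monotonicity observation and the bound of at most $|Q|$ distinct marking values, is exactly the structure of the paper's proof. There is one genuine gap, however, in part (ii): your claim that ``a dependent node process[es] an input marking change in time proportional to the number of output marking changes it induces'' fails for the marking-sensitive join. When the marking of some $(m_l,\phi_l)$ in the left dependency increases, the join must revisit every result match $(m_l\cup m_r,\max(\phi_l,\phi_r))$ that uses $m_l$, but if the partner marking $\phi_r$ was already at least as large, none of those output markings actually change. Hence the per-execution cost at a join is not bounded by $N^\uparrow_D+N^\uparrow_R$, and your telescoping argument breaks precisely where fan-out can be large. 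The paper singles this case out and repairs it with a global counting argument: each join result match is associated with exactly two dependency matches, each of which has its marking increased at most $|Q|$ times, so the total number of result-match touches across all join executions is at most $2|Q|$ times the number of join results, hence still $O(|Q|\cdot T)$. With that extra step your argument goes through; without it, the bound in (ii) is unjustified.
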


\begin{proof}
From Lemmata \ref{lem:monotonicity_forward_navigation}-\ref{lem:monotonicity_join} and $\mathcal{C}^\Phi_0$ being empty it follows that matches are never removed from a current result set during the execution of $(N^\Phi, p^\Phi)$ and their marking in such a current result set is never decreased. $\mathcal{C}^\Phi_0$ being empty also entails that $\mathcal{C}^\Phi_0$ is consistent for all nodes in $N^\Phi$ except for marking-sensitive vertex input nodes.

By Lemmata \ref{lem:execution_time_forward_navigation}-\ref{lem:execution_time_vertex_input_consistent}, we then know that whenever a marking-sensitive node $n^\Phi$ that is not a marking-sensitive join node is executed, execution takes $O(S^+_D + S^+_R + N^\uparrow_D + N^\uparrow_R)$ steps, where, compared to the last execution of $n^\Phi$, $S^+_D$ is the total size of matches added to the current result sets of dependencies of $n^\Phi$, $S^+_R$ is the total size of matches added to the current result set of $n^\Phi$, $N^\uparrow_D$ is the number of matches whose marking has been increased in the result sets of depedencies of $n^\Phi$, and $N^\uparrow_R$ is the number of matches whose marking has been increased in the result set of $n^\Phi$.

Each match is only added to a result set once and by the construction of $(N^\Phi, p^\Phi)$, at most $|Q|$ different marking values are possible for matches in a configuration for $(N^\Phi, p^\Phi)$. Furthermore, by the construction of $(N^\Phi, p^\Phi)$, each node in $N^\Phi$ can be a dependency of at most two other RETE nodes. It thus follows that the total effort for executing all non-join nodes in $N^\Phi$ must be in $O(|\mathcal{C}| + |Q| \cdot T)$.

For marking-sensitive join nodes, a slightly different argument has to be made, since the number of steps required for an individual execution is not necessarily in $O(S^+_D + S^+_R + N^\uparrow_D + N^\uparrow_R)$. This is due to the fact that the increase of a marking of a match in the result set of one of the join node's dependencies may make recomputation of the marking of a number of matches in the join's current result set necessary without actually resulting in any change.

However, we can observe that the effort for processing matches newly added to the result sets of the dependencies of a marking-sensitive join node $[\bowtie]^\Phi$ is in $O(S^+_D + S^+_R)$. This is due to the fact that, assuming consistency of the previous configuration for $[\bowtie]^\Phi$, these changes can be handled by constructing the new intermediate results spawned by the newly added dependency matches. For each such newly added dependency match, all complementary matches from the opposite dependency have to be retrieved and combined into result matches.

This takes effort in the size of the newly added dependency match and, since the query subgraphs associated with dependencies of $[\bowtie]^\Phi$ are subgraphs of the query subgraph associated with $[\bowtie]^\Phi$ and each complementary match for the opposite dependency results in a distinct intermediate result for the join, the combined size of newly created result matches.

Assuming changes to dependency result sets are cached and ideal indexing structures, this entails an overall effort in $O(S^+_D + S^+_R)$ for handling matches newly added to the result sets of dependencies of $[\bowtie]^\Phi$. An increase of the marking of a match in the result set of a dependency of $[\bowtie]^\Phi$ can be handled by recomputing the marking of all associated matches in the result set of $[\bowtie]^\Phi$. By explicitly storing the associated result matches of each dependency match, this only requires effort in the number of such associated result matches.

The number of associated result matches may not correspond to the number of result matches whose marking actually increases, since the result match marking also depends on the marking of the complementary match in the opposite dependency. However, since each result match is associated with exactly two dependency matches, we know that overall, the effort for recomputing the marking of result matches must still be linear in the number of result tuples and the number of different possible markings. Combined with the baseline effort for processing marking increases for dependency matches and the effort for handling newly added dependency matches, this means that the total effort for executing all marking-sensitive join nodes in $N^\Phi$ must also be in $O(|\mathcal{C}| + |Q| \cdot T)$.

Assuming that redundant execution of nodes whose dependency result sets have not changed since the last execution is avoided via a notification mechanism, this yields an overall effort in $O(|\mathcal{C}| + |Q| \cdot T)$ for executing $(N^\Phi, p^\Phi)$ via $execute(order((N^\Phi, p^\Phi)), N^\Phi, H, H_p, \mathcal{C}^\Phi_0)$.
\end{proof}

\subsubsection{Supplementary Lemmata}


\begin{lemma} \label{lem:completeness_extension_points}
Let $H$ be a graph, $(N, p)$ a well-formed RETE net with join tree height $h$ and $Q$ the query graph associated with $p$, $(X^\Phi, p_X^\Phi)$ a modular extension of the localized RETE net $(N^\Phi, p^\Phi) = localize((N, p))$, and $\mathcal{C}$ a configuration for $(X^\Phi, p_X^\Phi)$ that is consistent for all nodes in $V^{N^\Phi}$. Furthermore, let $x \in \chi(N^\Phi)$, $Q_v = (\{v\}, \emptyset, \emptyset, \emptyset)$ the query subgraph associated with $x$, and $(m', \psi) \in \mathcal{C}(x)$ such that $\psi > h$ for some match $m' : Q_v \rightarrow H$. It then holds that $\forall m \in \allmatches{Q}{H}: m(v) = m'(v) \Rightarrow \exists (m, \phi) \in \mathcal{C}(p^\Phi) : \phi \geq \psi$.
\end{lemma}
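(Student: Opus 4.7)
The plan is to prove the lemma by structural induction on the well-formed RETE net $(N, p)$, mirroring the recursive definition of $localize$. The modular-extension framework is what lets the induction go through: when I recurse into a subnet, the ambient net $(X^\Phi, p_X^\Phi)$ is still a modular extension of $localize((N_l, p_l))$ or $localize((N_r, p_r))$, and consistency of $\mathcal{C}$ on $V^{N^\Phi}$ restricts to consistency on the smaller vertex sets, which is what the induction hypothesis needs.

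For the base case, $p = [v \rightarrow w]$ is an edge input with trivial join tree of height $h = 0$, so $\chi(N^\Phi) = \{[\cup]^\Phi_v, [\cup]^\Phi_w\}$ and $Q$ is the single-edge query graph. Without loss of generality take $x = [\cup]^\Phi_v$. For every $m \in \allmatches{Q}{H}$ with $m(v) = m'(v)$, consistency of the forward navigation node $[v \rightarrow_n w]^\Phi$ combined with its semantics places $(m, \phi) \in \mathcal{C}([v \rightarrow_n w]^\Phi)$, where $\phi$ equals the (unique, maximal) marking of $m'$ at $x$ and is therefore at least $\psi$. Consistency of $p^\Phi = [\cup]^\Phi$ then carries this up to $(m, \phi') \in \mathcal{C}(p^\Phi)$ with $\phi' \geq \phi \geq \psi$.

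For the inductive step, $p$ is a join with dependencies $p_l, p_r$, subgraphs $Q_l, Q_r$, and subtrees $N_l, N_r$ of heights $h_l, h_r$ satisfying $h = 1 + \max(h_l, h_r)$. Without loss of generality assume $x \in \chi(N^\Phi_l)$, so $v \in V^{Q_l}$. Since $\psi > h > h_l$, the induction hypothesis applied to $N_l$ yields, for every $m \in \allmatches{Q}{H}$ with $m(v) = m'(v)$, a pair $(m|_{Q_l}, \phi_l) \in \mathcal{C}(p_l^\Phi)$ with $\phi_l \geq \psi > h$. Because $\phi_l > h$, this match passes the marking filter $[\phi > h]^\Phi$ of $RPS_l$, is projected onto the pivot vertex $v_\cap \in V^{Q_l \cap Q_r}$, and is re-stamped with marking $h$. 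By consistency of the target union node $[\cup]^\Phi_{v_\cap} \in \chi(N^\Phi_r)$, the match $m|_{Q_{v_\cap}}$ appears there with marking at least $h > h_r$. A second application of the induction hypothesis, this time to $N_r$, then produces $(m|_{Q_r}, \phi_r) \in \mathcal{C}(p_r^\Phi)$ for some $\phi_r$, and consistency of the marking-sensitive join $[\bowtie]^\Phi = p^\Phi$ finally delivers $(m, \max(\phi_l, \phi_r)) \in \mathcal{C}(p^\Phi)$ with marking at least $\phi_l \geq \psi$.

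I expect the principal difficulty to be bookkeeping around $RPS_l$ rather than the core algebra: one must check that the pivot $v_\cap$ always exists (guaranteed by the well-formedness condition $V^{Q_l \cap Q_r} \neq \emptyset$ on join dependencies), that the restriction $m|_{Q_{v_\cap}}$ is exactly what $RPS_l$ propagates into $N^\Phi_r$, and that applying the induction hypothesis to $N_r$ is legitimate even though $N_r$ sits beneath the join and is being driven by requests rather than standing alone. A convenient observation that keeps the threshold arithmetic clean is that only the bound $\phi_l \geq \psi$ is needed to push the required marking through the join; the marking $\phi_r$ coming out of the $N_r$-side need not be tied to $\psi$ at all, so the weaker guarantee $\phi_r \geq h$ coming from the forced request through $RPS_l$ is more than sufficient.
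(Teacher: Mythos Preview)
Your proposal is correct and follows essentially the same approach as the paper: induction on the join-tree height, with the base case handled via the forward/backward navigation nodes in the local navigation structure, and the inductive step handled by applying the hypothesis to the left subnet, pushing the resulting match through $RPS_l$ into an extension point of the right subnet, applying the hypothesis again to the right subnet, and combining at the marking-sensitive join. The paper organizes the inductive step slightly differently (quantifying over all $m_l \in \allmatches{Q_l}{H}$ first and invoking a separate complementarity lemma at the join), whereas you fix $m$ upfront and work with $m|_{Q_l}$ and $m|_{Q_r}$ directly; both arrive at the same conclusion by the same mechanism.
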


\begin{proof}
We prove the lemma by induction over the height $h$ of the join tree $N$.

Let $m \in \allmatches{Q}{H}$ such that $m(v) = m'(v)$. 

In the base case of $h = 0$, $N$ consists of a single edge input node $p$ and $(N^\Phi, p^\Phi)$ is hence given by $(N^\Phi, p^\Phi) = (LNS(p), [\cup]^\Phi)$. We thus also know that $\chi(N^\Phi) = \{[\cup]^\Phi_v, [\cup]^\Phi_w\}$. We can then distinguish two cases:

Case 1: $x = [\cup]^\Phi_v$. By the semantics of the forward navigation node $[v \rightarrow_n w]^\Phi$, which must have $Q$ as its associated query subgraph, and by $\mathcal{C}$ being consistent for all nodes in $V^{N^\Phi}$, it follows that $(m, \psi) \in \mathcal{C}([v \rightarrow_n w]^\Phi)$. By the semantics of the marking sensitive union node $p^\Phi = [\cup]^\Phi$, it then follows that $(m, \phi) \in \mathcal{C}(p^\Phi)$ for some $\phi \geq \psi$.

Case 2: $x = [\cup]^\Phi_w$. By the semantics of the backward navigation node $[w \leftarrow_n v]^\Phi$, which must have $Q$ as its associated query subgraph, and by $\mathcal{C}$ being consistent for all nodes in $V^{N^\Phi}$, it follows that $(m, \psi) \in \mathcal{C}([w \leftarrow_n v]^\Phi)$. By the semantics of the marking sensitive union node $p^\Phi = [\cup]^\Phi$, it then follows that $(m, \phi) \in \mathcal{C}(p^\Phi)$ for some $\phi \geq \psi$.

Thus, the lemma holds in the base case.

For the induction step, we show that if the lemma holds for all well-formed RETE nets with join tree height $h$, it must also hold for any well-formed RETE net $(N, p)$ with join tree height $h + 1$.

Since $h + 1 \geq 1$, we know that $p$ must be a join node. $(N^\Phi, p^\Phi)$ is hence given by $(N^\Phi, p^\Phi) = (N^\Phi_{\bowtie} \cup N^\Phi_l \cup N^\Phi_r \cup RPS_l \cup RPS_r, p^\Phi)$, with $(N^\Phi_l, p^\Phi_l) = localize((N_l, p_l))$, where $(N_l, p_l)$ is the RETE subnet under the left dependency of $p^\Phi$, and $(N^\Phi_r, p^\Phi_r) = localize((N_r, p_r))$, where $(N_r, p_r)$ is the RETE subnet under the right dependency of $p^\Phi$. We thus also know that $\chi(N^\Phi) = \chi(N^\Phi_l) \cup \chi(N^\Phi_r)$.

We can thus again distinguish two cases:

Case 1: $x \in \chi(N^\Phi_l)$. Since $N_l$ must have a join tree height of $h' \leq h$ and because $(N^\Phi, p^\Phi)$ and hence also $(X^\Phi, p_X^\Phi)$ is a modular extension of $(N^\Phi_l, p^\Phi_l)$, by the induction hypothesis, we know that $\forall m_l \in \allmatches{Q_l}{H}: m_l(v) = m'(v) \Rightarrow \exists (m_l, \phi_l) \in \mathcal{C}(p_l^\Phi) : \phi_l \geq \psi$, where $Q_l$ is the query subgraph associated with $p_l$.

For any such $(m_l, \phi_l) \in \mathcal{C}(p_l^\Phi)$, by the construction of $N^\Phi$, the semantics of the RETE nodes in $RPS_l$, $\mathcal{C}$ being consistent for all nodes in $V^{N^\Phi}$ and hence also for all nodes in $V^{N^\Phi_l}$, and since $\phi_l \geq \psi > h + 1$, it then follows that $(m_l|_{Q_{v'}}, h + 1) \in \mathcal{C}([\phi := h + 1]^\Phi_l)$, with $[\phi := h + 1]^\Phi_l$ the marking assignment node in $RPS_l$ and $Q_{v'} = (\{v'\}, \emptyset, \emptyset, \emptyset)$ the query subgraph associated with $[\phi := h + 1]^\Phi_l$.

Because $[\phi := h + 1]^\Phi_l$ is a dependency of some extension point $x_r \in \chi(N^\Phi_r)$ and $x_r$ is a marking-sensitive union node, it follows that there must be some tuple $(m_l|_{Q_{v'}}, \psi_r) \in \mathcal{C}(x_r)$, with $\psi_r \geq h + 1$. By the induction hypothesis and since $v' \in V^{Q_l} \cap V^{Q_r}$, where $Q_r$ is the query subgraph associated with $p_r$, it then follows that $\forall m_r \in \allmatches{Q_r}{H}: m_r(v') = m_l(v') \Rightarrow \exists (m_r, \phi_r) \in \mathcal{C}(p^\Phi_r) : \phi_r \geq \psi_r$.

Because $v' \in Q_\cap$ with $p^\Phi = [\bowtie_{Q_\cap}]^\Phi$, it must then hold that $\forall m_r \in \allmatches{Q_r}{H} : m_r|_{Q_\cap} = m_l|_{Q_\cap} \Rightarrow \exists (m_r, \phi_r) \in \mathcal{C}(p^\Phi_r) : \phi_r \geq \psi_r$. By the semantics of the marking-sensitive join node and Lemma \ref{lem:complementarity_completeness}, and because $\phi_l \geq \psi$, it then follows that $\forall m \in \allmatches{Q}{H} : m|_{Q_l} = m_l \Rightarrow \exists (m, \phi) \in \mathcal{C}(p^\Phi) : \phi \geq \psi$. Since $Q_l \subseteq Q$, which entails that $\forall m \in \allmatches{Q}{H}: \exists m_l \in \allmatches{Q_l}{H} : m|_{Q_l} = m_l$, and because $\forall m_l \in \allmatches{Q_l}{H}: m_l(v) = m'(v) \Rightarrow \exists (m_l, \phi_l) \in \mathcal{C}(p_l^\Phi) : \phi_l \geq \psi$, it must finally follow that $\forall m \in \allmatches{Q}{H}: m(v) = m'(v) \Rightarrow \exists (m, \phi) \in \mathcal{C}(p^\Phi) : \phi \geq \psi$.

Case 2: $x \in \chi(N^\Phi_r)$. The proof works analogously to the proof for Case 1.

Thus, the induction step holds.

From the correctness of the base case and the induction step then follows the correctness of the lemma.
\end{proof}


\begin{lemma} \label{lem:complementarity_completeness}
Let $n^\Phi = [\bowtie_{Q_\cap}]^\Phi \in V^{N^\Phi}$ be a marking-sensitive join with associated subgraph $Q$ and dependencies $n^\Phi_l$ and $n^\Phi_r$ with associated subgraphs $Q_l$ respectively $Q_r$, $H$ a graph, $H_p \subseteq H$, $\mathcal{C}$ a configuration, and $(m_l, \phi_l) \in \mathcal{C}(n^\Phi_l)$. If $\forall m_r \in \allmatches{Q_r}{H} : m_r|_{Q_\cap} = m_l|_{Q_\cap} \Rightarrow \exists \phi_r \in \overline{\mathbb{N}}: (m_r, \phi_r) \in \mathcal{C}(n^\Phi_r)$, it then holds that $\forall m \in \allmatches{Q}{H} : m|_{Q_l} = m_l \Rightarrow \exists (m, \phi) \in \resultslocal{n^\Phi}{N^\Phi}{H}{H_p}{\mathcal{C}}: \phi \geq \phi_l$.
\end{lemma}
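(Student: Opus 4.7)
The plan is a direct unpacking of the marking-sensitive join's target result set definition, using the given hypothesis to supply the missing right-hand partner for $m_l$.

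First, I would fix an arbitrary $m \in \allmatches{Q}{H}$ with $m|_{Q_l} = m_l$ and, following the standard join construction, set $m_r := m|_{Q_r}$. Since the join node is associated with $Q = Q_l \cup Q_r$ (inherited from the original well-formed RETE net via $localize$), the restriction $m|_{Q_r}$ is a well-defined match in $\allmatches{Q_r}{H}$, and $m = m_l \cup m_r$. I would then observe $m_r|_{Q_\cap} = m|_{Q_\cap} = m_l|_{Q_\cap}$, which is exactly the premise needed to invoke the hypothesis.

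Next, the hypothesis yields some $\phi_r \in \overline{\mathbb{N}}$ with $(m_r, \phi_r) \in \mathcal{C}(n^\Phi_r)$. Plugging $(m_l, \phi_l)$ and $(m_r, \phi_r)$ into the target result set definition of the marking-sensitive join (from Appendix~\ref{app:technical_details}), namely $\resultslocal{[\bowtie]^\Phi}{N^\Phi}{H}{H_p}{\mathcal{C}} = \{(m_l' \cup m_r', \max(\phi_l', \phi_r')) \mid \ldots\}$, produces the tuple $(m_l \cup m_r, \max(\phi_l, \phi_r)) = (m, \max(\phi_l, \phi_r))$ in $\resultslocal{n^\Phi}{N^\Phi}{H}{H_p}{\mathcal{C}}$. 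Since $\max(\phi_l, \phi_r) \geq \phi_l$, choosing $\phi := \max(\phi_l, \phi_r)$ completes the argument.

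There is essentially no hard step; the proof is a one-line application of the join's semantics once the restriction $m|_{Q_r}$ has been identified as the witness in the hypothesis. The only subtlety worth double-checking is that the localized join node retains the same associated query graph $Q = Q_l \cup Q_r$ and the same overlap $Q_\cap$ as the original join in $(N, p)$, so that the equality $m_r|_{Q_\cap} = m_l|_{Q_\cap}$ really does match the join condition in the target result set definition. This is immediate from the construction of $N^\Phi_{\bowtie}$ in the $localize$ procedure.
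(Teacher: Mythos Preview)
Your proposal is correct and follows essentially the same approach as the paper's proof: decompose an arbitrary $m$ with $m|_{Q_l}=m_l$ as $m_l \cup m_r$ with $m_r = m|_{Q_r}$, verify $m_r|_{Q_\cap}=m_l|_{Q_\cap}$, invoke the hypothesis to obtain $(m_r,\phi_r)$, and apply the marking-sensitive join semantics to get $(m,\max(\phi_l,\phi_r))$. The paper's argument is identical in structure and detail.
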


\begin{proof}
Since $Q = Q_l \cup Q_r$, any match $m \in \allmatches{Q}{H}$ with $m|_{Q_l} = m_l$ can be represented as a union of two matches $m = m_l \cup m_r$, with $m_r \in \allmatches{Q_r}{H}$. For $m_r$, it must then hold that $m_l|_{Q_\cap} = m_r|_{Q_\cap}$.

By assumption, there exist a tuple $(m_l, \phi_l) \in \mathcal{C}(n^\Phi_l)$ and a tuple $(m_r, \phi_r) \in \mathcal{C}(n^\Phi_r)$. From the definition of the marking-sensitive join node, it follows that $(m, max(\phi_l, \phi_r)) \in \resultslocal{n^\Phi}{N^\Phi}{H}{H_p}{\mathcal{C}}$. Thus, it must hold that $\forall m \in \allmatches{Q}{H} : m|_{Q_l} = m_l \Rightarrow \exists (m, \phi) \in \resultslocal{n^\Phi}{N^\Phi}{H}{H_p}{\mathcal{C}}: \phi \geq \phi_l$.
\end{proof}


\begin{lemma} \label{lem:join_complementarity}
Let $H$ be a graph, $H_p \subseteq H$, $(N, p)$ a well-formed RETE net with join tree height $h > 0$, and $\mathcal{C}$ a consistent configuration for the localized RETE net $(N^\Phi, [\bowtie_{Q_\cap}]^\Phi) = localize((N, p))$. Furthermore, let $(N_l, p_l)$ and $(N_r, p_r)$ be the RETE subnets under the dependencies of the join node $p$, $Q_l$ and $Q_r$ the associated query subgraphs of $p_l$ respectively $p_r$, and $(N^\Phi_l, p^\Phi_l) = localize((N_l, p_l))$ and $(N^\Phi_r, p^\Phi_r) = localize((N_r, p_r))$ the associated RETE subnets in $(N^\Phi, [\bowtie_{Q_\cap}]^\Phi)$. Finally, let $(m_l, \phi_l) \in \mathcal{C}(p^\Phi_l)$ such that $\phi_l > h$. It then holds that $\forall m_r \in \allmatches{Q_r}{H} : m_l|_{Q_\cap} = m_r|_{Q_\cap} \Rightarrow \exists (m_r, \phi_r) \in \mathcal{C}(p^\Phi_r): \phi_r \geq h$.
\end{lemma}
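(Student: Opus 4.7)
The plan is to trace the match $m_l$ through the request projection structure $RPS_l = RPS(p^\Phi_l, N^\Phi_r)$ and into $N^\Phi_r$, then invoke Lemma~\ref{lem:completeness_extension_points} on the subnet $(N^\Phi_r, p^\Phi_r)$ to conclude the completeness of complementary matches. Since $p$ is a join (the join tree has height $h > 0$), the construction guarantees that $N^\Phi$ contains such an $RPS_l$ attached to an extension point $[\cup]^\Phi_v \in \chi(N^\Phi_r)$ for some $v \in V^{Q_l \cap Q_r} = V^{Q_\cap}$, using the value $h$ as the marking threshold.

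First I would unfold the consistency of $\mathcal{C}$ along $RPS_l$. Using $\phi_l > h$, the semantics of the marking filter $[\phi > h]^\Phi$ in $RPS_l$ yields $(m_l, \phi_l)$ in its current result set; the marking-sensitive projection $[\pi_{Q_v}]^\Phi$ then yields $(m_l|_{Q_v}, \phi')$ for some $\phi' \geq \phi_l > h$ in its result set (with $Q_v = (\{v\}, \emptyset, \emptyset, \emptyset)$); finally, the marking assignment $[\phi := h]^\Phi$ produces $(m_l|_{Q_v}, h)$. Because the marking-sensitive union node $[\cup]^\Phi_v$ at the extension point has $[\phi := h]^\Phi$ as an additional dependency and takes the maximum marking, consistency gives us a tuple $(m_l|_{Q_v}, \phi) \in \mathcal{C}([\cup]^\Phi_v)$ with $\phi \geq h$.

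Next I would apply Lemma~\ref{lem:completeness_extension_points} to the subnet $(N_r, p_r)$. Its join tree has height $h_r \leq h - 1$, so the hypothesis $\phi > h_r$ of the lemma is satisfied because $\phi \geq h > h_r$. I would also need to verify that $(N^\Phi, [\bowtie_{Q_\cap}]^\Phi)$ qualifies as a modular extension of $(N^\Phi_r, p^\Phi_r)$ in the sense required by the lemma; this should follow directly from the construction of $localize$, since $N^\Phi$ is obtained by adding $N^\Phi_{\bowtie}$, $N^\Phi_l$, $RPS_l$, and $RPS_r$ on top of $N^\Phi_r$, without altering the internal structure of $N^\Phi_r$. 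The lemma then yields $\forall m_r \in \allmatches{Q_r}{H} : m_r(v) = m_l|_{Q_v}(v) \Rightarrow \exists (m_r, \phi_r) \in \mathcal{C}(p^\Phi_r) : \phi_r \geq \phi \geq h$.

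The conclusion follows by observing that every $m_r \in \allmatches{Q_r}{H}$ with $m_l|_{Q_\cap} = m_r|_{Q_\cap}$ in particular satisfies $m_r(v) = m_l(v) = m_l|_{Q_v}(v)$, since $v \in V^{Q_\cap}$. The main obstacle is a bookkeeping issue rather than a conceptual one: one must be careful that the marking values propagated through $RPS_l$ are actually $\geq h$ at the extension point (so that the filter inside $N^\Phi_r$'s own $RPS$ structures does not later drop them), and that the height parameter used to invoke Lemma~\ref{lem:completeness_extension_points} refers to the join tree of $p_r$ rather than that of $p$. Everything else is a direct chase through the semantics of the newly introduced marking-sensitive node types.
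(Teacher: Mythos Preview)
Your proposal is correct and follows essentially the same approach as the paper's proof: trace $(m_l,\phi_l)$ through the nodes of $RPS_l$ into the extension point $[\cup]^\Phi_v \in \chi(N^\Phi_r)$, then invoke Lemma~\ref{lem:completeness_extension_points} on the subnet $(N^\Phi_r,p^\Phi_r)$ using that its height $h_r < h \leq \phi$, and conclude via $v \in V^{Q_\cap}$. You are in fact slightly more explicit than the paper in checking the modular-extension hypothesis of Lemma~\ref{lem:completeness_extension_points}, which the paper uses tacitly.
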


\begin{proof}
By the construction of $(N^\Phi, [\bowtie_{Q_\cap}]^\Phi)$, there must be an marking-sensitive union node extension point $[\cup]_v^\Phi \in \chi(N^\Phi_r)$ that has a dependency on the marking assignment node $[\phi := h]^\Phi_l$ in the request projection structure $RPS_l$ in $N^\Phi$. According to the semantics of the RETE nodes in $RPS_l$, it must also hold that $\forall m_v \in \allmatches{Q_v}{H} : m_v(v) = m_l(v) \Rightarrow (m_v, h) \in \mathcal{C}([\phi := h]^\Phi_l)$, where $Q_v = (\{v\}, \emptyset, \emptyset, \emptyset)$ is the query subgraph associated with $[\phi := h]^\Phi_l$. By the semantics of the marking-sensitive union node, it must then also hold that $\forall m_v \in \allmatches{Q_v}{H} : m_v(v) = m_l(v) \Rightarrow \exists \psi \in \overline{\mathbb{N}}: (m_v, \phi_\cup) \in \mathcal{C}([\cup]_v^\Phi) \wedge \phi_\cup \geq h$.

Because $h' < h$, where $h'$ is the height of the join tree $N_r$, by Lemma \ref{lem:completeness_extension_points}, we then know that $\forall m_r \in \allmatches{Q_r}{H}: m_r(v) = m_l(v) \Rightarrow \exists (m_r, \phi_r) \in \mathcal{C}(p^\Phi) : \phi_r \geq h$. Since $Q_v \subseteq Q_\cap$, it then follows that $\forall m_r \in \allmatches{Q_r}{H} : m_l|_{Q_\cap} = m_r|_{Q_\cap} \Rightarrow \exists (m_r, \phi_r) \in \mathcal{C}(p^\Phi_r) : \phi_r \geq h$.
\end{proof}


\begin{lemma} \label{lem:localized_rete_max_marking}
Let $H$ a graph, $H_p \subseteq H$, $(N, p)$ a well-formed RETE net with join-tree height $h$, $\psi \in \overline{\mathbb{N}}$ such that $\psi > h$, $(N^\Phi, p^\Phi) = localize((N, p))$, $(X^\Phi, p_X^\Phi)$ a modular extension of $(N^\Phi, p^\Phi)$, $\mathcal{C}_0$ a configuration for $(X^\Phi, p^\Phi)$ that is consistent for all nodes in $V^{N^\Phi}$ with respect to $H$ and $H_p$, and $\mathcal{C}_1$ a configuration with $\forall n \in V^{N^\Phi}: \mathcal{C}_1(n) = \mathcal{C}_0(n)$ and $\forall n_X \in V^{X^\Phi}: (\exists e \in E^{X^\Phi} : t^{X^\Phi} = n_X \wedge s^{X^\Phi} \in V^{N^\Phi}) \Rightarrow \forall (m, \phi) \in (\mathcal{C}_1(n_X) \setminus \mathcal{C}_0(n_X)) \cup (\mathcal{C}_0(n_X) \setminus \mathcal{C}_1(n_X)) : \phi \leq \psi$. Under the assumption that the execution order $order(localize((N', p')))$ is robust under modular extension for any well-formed RETE net $(N', p')$ with join tree height $h' < h$, it then holds for the configuration $\mathcal{C}_2 = execute(order(N^\Phi, p^\Phi), X^\Phi, H, H_p, \mathcal{C}_1)$ that $\forall (m, \phi) \in (\mathcal{C}_2(p^\Phi) \setminus \mathcal{C}_1(p^\Phi)) \cup (\mathcal{C}_1(p^\Phi) \setminus \mathcal{C}_2(p^\Phi)) : \phi \leq \psi$.
\end{lemma}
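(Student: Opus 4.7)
My plan is to induct on the height $h$ of the join tree of $(N, p)$, paralleling the recursive structures of both $localize$ and $order$. The setup offers a useful shortcut: since $\mathcal{C}_1$ agrees with $\mathcal{C}_0$ on $V^{N^\Phi}$ and $\mathcal{C}_0$ is consistent on $V^{N^\Phi}$, the only $N^\Phi$-nodes whose target result sets can move under execution are those whose dependencies reach into the extension $V^{X^\Phi} \setminus V^{N^\Phi}$, and at that boundary the differences between $\mathcal{C}_0$ and $\mathcal{C}_1$ are by assumption restricted to tuples with markings $\leq \psi$.

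For the base case $h = 0$, $N^\Phi = LNS(p)$ and $order$ is a reverse topological sort of a small DAG, so I would walk the LNS bottom-up. The vertex input nodes have no extension dependencies and therefore do not change. At each union node $[\cup]^\Phi_v$ or $[\cup]^\Phi_w$, the target marking for a vertex $m_v$ is either pinned to $\infty$ by $[v]^\Phi$ (when $m_v \in V^{H_p}$) or is the maximum of markings contributed by the extension; in the latter case, the only way this maximum can shift between $\mathcal{C}_0$ and the new configuration is via the addition or removal of a $\leq \psi$-marked tuple at a boundary extension node, so both the old and the new maxima are themselves $\leq \psi$. The navigation nodes inherit the markings of their respective union node, and the top union $[\cup]^\Phi = p^\Phi$ repeats the max-argument across its two navigation children, preserving the $\leq \psi$ bound.

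For the inductive step I would follow the phases of $order((N^\Phi, p^\Phi)) = order(RPS_r) \circ order(N^\Phi_l) \circ order(RPS_l) \circ order(N^\Phi_r) \circ order(RPS_r) \circ order(N^\Phi_l) \circ [p^\Phi]$ in order. The first $RPS_r$ pass touches nothing, since all its dependencies lie inside $N^\Phi$ where $\mathcal{C}_1$ still coincides with $\mathcal{C}_0$. The first $order(N^\Phi_l)$ pass invokes the induction hypothesis on $(N_l, p_l)$, viewing the surrounding configuration as a modular extension, which yields consistency for $N^\Phi_l$ together with a $\leq \psi$ bound on the changes at $p^\Phi_l$. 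Each subsequent $RPS$ pass emits only tuples with marking exactly $h$, and because $h < \psi$ these act as $\leq \psi$-bounded perturbations when fed into the next $N^\Phi_\star$ subnet, allowing me to reapply the hypothesis. After all six preparatory phases, both $p^\Phi_l$ and $p^\Phi_r$ differ from $\mathcal{C}_0$ only in tuples with markings $\leq \psi$.

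The main obstacle is the final execution of $p^\Phi$ itself, where the join marking is the maximum of the two dependency markings and could in principle couple a low-marked change on one side with a high-marked tuple on the other, smuggling a high-marked tuple into the symmetric difference at $p^\Phi$. To close this gap I plan to invoke Lemma \ref{lem:join_complementarity}: any tuple $(m_l, \phi_l) \in \mathcal{C}_0(p^\Phi_l)$ with $\phi_l > \psi \geq h$ is necessarily preserved into the final configuration at $p^\Phi_l$, because only $\leq \psi$-marked tuples are ever added or removed there, and the lemma applied to both $\mathcal{C}_0$ and $\mathcal{C}_2$ guarantees a complementary tuple $(m_r, \cdot)$ at $p^\Phi_r$ in each configuration, whose marking can differ only within the $\leq \psi$ range by the preceding step. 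Hence $\phi_l$ continues to dominate the maximum and the produced tuple $(m_l \cup m_r, \phi_l)$ remains in $\mathcal{C}_2(p^\Phi)$ with its original marking. A symmetric argument on tuples freshly appearing in $\mathcal{C}_2(p^\Phi)$ with marking $> \psi$ shows that such tuples must already have been present in $\mathcal{C}_0(p^\Phi)$, thereby completing the inductive step and the lemma.
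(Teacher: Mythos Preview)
Your proposal is correct and follows essentially the same approach as the paper: induction on the join-tree height, with the base case walking the local navigation structure bottom-up and the inductive step following the seven phases of $order((N^\Phi,p^\Phi))$, using the robustness assumption to re-establish consistency on the subnets and the fact that the request projection structures emit only height-valued markings to keep all intermediate perturbations bounded by~$\psi$. The paper packages the per-node-type propagation arguments you sketch inline into separate auxiliary lemmata (Lemmata~\ref{lem:union_max_marking}, \ref{lem:forward_navigation_max_marking}, \ref{lem:backward_navigation_max_marking}) and wraps the final join argument into Lemma~\ref{lem:join_max_marking}, whose proof is precisely your complementarity-based case analysis via Lemma~\ref{lem:join_complementarity}; one small point you leave implicit and should make explicit is the appeal to match uniqueness (Lemma~\ref{lem:match_uniqueness}), which is what justifies speaking of ``the'' marking of a given match at $p^\Phi_l$ or $p^\Phi_r$ when comparing configurations.
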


\begin{proof}
We show the correctness of the lemma by induction over the height $h$ of the join tree $N$.

In the base case of $h$ = 0, $N$ consists of a single edge input $p$ and the localized RETE net is thus given by $(N^\Phi, p^\Phi) = (LNS(p), [\cup]^\Phi)$, with $LNS(p)$ consisting of nodes $[v]^\Phi$, $[w]^\Phi$, $[\cup]^\Phi_v$, $[\cup]^\Phi_w$, $[v \rightarrow_n w]^\Phi$, $[w \leftarrow_n v]^\Phi$, and $[\cup]^\Phi$. Without loss of generality, we assume the execution order $order(LNS(p)) = [v]^\Phi, [w]^\Phi, [\cup]^\Phi_v, [\cup]^\Phi_w, [v \rightarrow_n w]^\Phi, [w \leftarrow_n v]^\Phi, [\cup]^\Phi$.

By the semantics of the marking-sensitive vertex input node, we know that $\mathcal{C}_{1.1} = execute(O_1, X^\Phi, H, H_p, \mathcal{C}_1) = \mathcal{C}_1$, with $O_1 = [v]^\Phi, [w]^\Phi$.

By Lemma \ref{lem:union_max_marking} and the assumption regarding the difference between current result sets in $\mathcal{C}_0$ and $\mathcal{C}_1$ for dependencies from $LNS(p)$ into $X^\Phi$, it follows for $\mathcal{C}_{1.2} = execute(O_2, X^\Phi, H, H_p, \mathcal{C}_{1.1})$ and $O_2 = [\cup]^\Phi_v, [\cup]^\Phi_w$ that $\forall (m, \phi) \in (\mathcal{C}_{1.2}([\cup]^\Phi_v) \setminus \mathcal{C}_{1}([\cup]^\Phi_v)) \cup (\mathcal{C}_1([\cup]^\Phi_v) \setminus \mathcal{C}_{1.2}([\cup]^\Phi_v)) : \phi \leq \psi$, which holds analogously for $[\cup]^\Phi_w$.

By Lemmata \ref{lem:forward_navigation_max_marking} and \ref{lem:backward_navigation_max_marking}, the same holds for $\mathcal{C}_{1.3} = execute(O_3, X^\Phi, H, H_p, \mathcal{C}_{1.2})$ and the nodes $[v \rightarrow_n w]^\Phi$ and $[w \leftarrow_n v]^\Phi$, with $O_3 = [v \rightarrow_n w]^\Phi, [w \leftarrow_n v]^\Phi$.

Finally, by Lemma \ref{lem:union_max_marking} and the assumption regarding the difference between current result sets in $\mathcal{C}_0$ and $\mathcal{C}_1$ for dependencies from $LNS(p)$ into $X^\Phi$ then follows the satisfaction of the base case for $\mathcal{C}_2 = execute([\cup]^\Phi, X^\Phi, H, H_p, \mathcal{C}_{1.3})$.

For the induction step, we show that if the lemma holds for all join trees of height less than or equal to $h$, it follows that the lemma holds for any join tree $N$ of height $h + 1$.

Since $h + 1 \geq 1$, $p$ is a join node with dependencies $p_l$ and $p_r$ and the localized RETE net is thus given by $(N^\Phi, p^\Phi) = (N^\Phi_{\bowtie} \cup N^\Phi_l \cup N^\Phi_r \cup RPS_l \cup RPS_r, p^\Phi)$, with $(N^\Phi_l, p^\Phi_l) = localize((N_l, p_l))$ and $(N^\Phi_r, p^\Phi_r) = localize((N_r, p_r))$. Consequently, the execution order of $N^\Phi$ is given by $order(N^\Phi) = order(RPS_r) \circ order(N^\Phi_l) \circ order(RPS_l) \circ order(N^\Phi_r) \circ order(RPS_r) \circ order(N^\Phi_l) \circ N^\Phi_{\bowtie}$.

By definition, the marking filter nodes, marking-sensitive projection nodes, and marking assignment nodes in $RPS_l$ and $RPS_r$ must have exactly one dependency and the marking-sensitive join node in $N^\Phi_{\bowtie}$ must have exactly two dependencies. By the construction of $N^\Phi$, in the modular extension $X^\Phi$, all nodes in $RPS_l$, $RPS_r$ and $N^\Phi_{\bowtie}$ may thus only depend on nodes in $N^\Phi$.

Since $\forall n \in V^{N^\Phi}: \mathcal{C}_1(n) = \mathcal{C}_0(n)$ and because $\mathcal{C}_0$ is consistent for all nodes in $V^{N^\Phi}$, $\mathcal{C}_1$ is then consistent for all nodes in $V^{RPS_r}$. Thus, $\mathcal{C}_{1.1} = execute(order(RPS_r), X^\Phi, H, H_p, \mathcal{C}_1) = \mathcal{C_1}$.

Since $(N^\Phi, p^\Phi)$ is a modular extension of $(N^\Phi_l, p^\Phi_l)$ and $(N^\Phi_r, p^\Phi_r)$, $(X^\Phi, p_X^\Phi)$ must be a modular extension of $(N^\Phi_l, p^\Phi_l)$ and $(N^\Phi_r, p^\Phi_r)$. Furthermore, since $\forall n \in V^{N^\Phi}: \mathcal{C}_1(n) = \mathcal{C}_0(n)$ and $\forall n_X \in V^{X^\Phi}: (\exists e \in E^{X^\Phi} : t^{X^\Phi} = n_X \wedge s^{X^\Phi} \in V^{N^\Phi}) \Rightarrow \forall (m, \phi) \in (\mathcal{C}_1(n_X) \setminus \mathcal{C}_0(n_X)) \cup (\mathcal{C}_0(n_X) \setminus \mathcal{C}_1(n_X)) : \phi \leq \psi$, it also holds for $N^\Phi_l$ and analogously for $N^\Phi_r$ that $\forall n \in V^{N^\Phi_l}: \mathcal{C}_1(n) = \mathcal{C}_0(n)$ and $\forall n_X \in V^{X^\Phi}: (\exists e \in E^{X^\Phi} : t^{X^\Phi} = n_X \wedge s^{X^\Phi} \in V^{N^\Phi_l}) \Rightarrow \forall (m, \phi) \in (\mathcal{C}_1(n_X) \setminus \mathcal{C}_0(n_X)) \cup (\mathcal{C}_0(n_X) \setminus \mathcal{C}_1(n_X)) : \phi \leq \psi$.

Thus, for configuration $\mathcal{C}_{1.2} = execute(order(N^\Phi_l), X^\Phi, H, H_p, \mathcal{C}_{1.1})$, by the induction hypothesis it holds that $\forall (m, \phi) \in (\mathcal{C}_{1.2}(p^\Phi_l) \setminus \mathcal{C}_0(p^\Phi_l)) \cup (\mathcal{C}_0(p^\Phi_l) \setminus \mathcal{C}_{1.2}(p^\Phi_l)) : \phi \leq \psi$. Furthermore, since $N_l$ must have join tree height $h' < h + 1$ and by the entailed assumption of robustness of $order(N^\Phi_l)$ under modular extension, $\mathcal{C}_{1.2}$ must be consistent for all nodes in $V^{N^\Phi_l}$.

Since $\psi \geq h + 1$, by the semantics of the marking assignment node $[\phi := h + 1]_l^\Phi$, it follows that $\forall (m, \phi) \in (\mathcal{C}_{1.3}([\phi := h + 1]_l^\Phi) \setminus \mathcal{C}_0([\phi := h + 1]_l^\Phi)) \cup (\mathcal{C}_0([\phi := h + 1]_l^\Phi) \setminus \mathcal{C}_{1.3}([\phi := h + 1]_l^\Phi)) : \phi \leq \psi$.

By the construction of $N^\Phi$, it then holds that $\forall n \in V^{N^\Phi_r}: \mathcal{C}_{1.3}(n) = \mathcal{C}_0(n)$ and $\forall n_X \in V^{X^\Phi}: (\exists e \in E^{X^\Phi} : t^{X^\Phi} = n_X \wedge s^{X^\Phi} \in V^{N^\Phi_r}) \Rightarrow \forall (m, \phi) \in (\mathcal{C}_{1.3}(n_X) \setminus \mathcal{C}_0(n_X)) \cup (\mathcal{C}_0(n_X) \setminus \mathcal{C}_{1.3}(n_X)) : \phi \leq \psi$. Hence, by the induction hypothesis, it holds for $\mathcal{C}_{1.4} = execute(order(N^\Phi_r), X^\Phi, H, H_p, \mathcal{C}_{1.3})$ that $\forall (m, \phi) \in (\mathcal{C}_{1.4}(p^\Phi_r) \setminus \mathcal{C}_0(p^\Phi_r)) \cup (\mathcal{C}_0(p^\Phi_r) \setminus \mathcal{C}_{1.4}(p^\Phi_r)) : \phi \leq \psi$. Furthermore, since $N_r$ must have join tree height $h' < h + 1$ and by the entailed assumption of robustness of $order(N^\Phi_r)$ under modular extension, $\mathcal{C}_{1.4}$ must be consistent for all nodes in $V^{N^\Phi_r}$.

By the semantics of the marking assignment node $[\phi := h + 1]_r^\Phi$ in $RPS_r$, for $\mathcal{C}_{1.5} = execute(order(RPS_r), X^\Phi, H, H_p, \mathcal{C}_{1.4})$ it must then hold that $\forall (m, \phi) \in (\mathcal{C}_{1.5}([\phi := h + 1]_r^\Phi) \setminus \mathcal{C}_0([\phi := h + 1]_r^\Phi)) \cup (\mathcal{C}_{1.4}([\phi := h + 1]_r^\Phi) \setminus \mathcal{C}_{1.5}([\phi := h + 1]_r^\Phi)) : \phi = h + 1$.

Since $V^{N^\Phi_l} \cap (V^{N^\Phi_r} \cup V^{RPS_l} \cup V^{RPS_r}) = \emptyset$, it then follows that $\forall n \in V^{N^\Phi_l}: \mathcal{C}_{1.5}(n) = \mathcal{C}_{1.2}(n)$. Because no node in $V^{N^\Phi_l}$ may depend on any node in the sequence $order(RPS_l) \circ order(N^\Phi_r) \circ order(RPS_r)$ other than $[\phi := h]_r^\Phi$, it also follows that $\forall n_X \in V^{X^\Phi}: (\exists e \in E^{X^\Phi} : t^{X^\Phi} = n_X \wedge s^{X^\Phi} \in V^{N^\Phi_l}) \Rightarrow \forall (m, \phi) \in (\mathcal{C}_{1.5}(n_X) \setminus \mathcal{C}_{1.2}(n_X)) \cup (\mathcal{C}_{1.2}(n_X) \setminus \mathcal{C}_{1.5}(n_X)) : \phi = h + 1$. Because $\mathcal{C}_{1.2}$ is consistent for all nodes in $V^{N^\Phi_l}$ and $N_l$ must have a join tree height $h' \leq h + 1$, for $\mathcal{C}_{1.6} = execute(order(N^\Phi_l), X^\Phi, H, H_p, \mathcal{C}_{1.5})$, it then follows by the induction hypothesis that $\forall (m, \phi) \in (\mathcal{C}_{1.6}(p^\Phi_l) \setminus \mathcal{C}_{1.2}(p^\Phi_l)) \cup (\mathcal{C}_{1.2}(p^\Phi_l) \setminus \mathcal{C}_{1.6}(p^\Phi_l)) : \phi \leq h + 1$.

Thus, it holds that $\forall (m, \phi) \in (\mathcal{C}_{1.6}(p^\Phi_l) \setminus \mathcal{C}_0(p^\Phi_l)) \cup (\mathcal{C}_0(p^\Phi_l) \setminus \mathcal{C}_{1.6}(p^\Phi_l)) : \phi \leq \psi$. Furthermore, since $p^\Phi_r \notin (V^{N^\Phi_l} \cup V^{RPS_r})$, it also holds that $\forall (m, \phi) \in (\mathcal{C}_{1.6}(p^\Phi_r) \setminus \mathcal{C}_0(p^\Phi_r)) \cup (\mathcal{C}_0(p^\Phi_r) \setminus \mathcal{C}_{1.6}(p^\Phi_r)) : \phi \leq \psi$.

By the assumption regarding the robustness under modular extension of $order(N^\Phi_l)$, $\mathcal{C}_{1.6}$ must be consistent with all nodes in $V^{N^\Phi_l}$. Since $V^{N^\Phi_l}$, $V^{N^\Phi_r}$, $V^{RPS_l}$, and $V^{RPS_r}$ are pairwise disjoint and since the modular extension $X^\Phi$ cannot introduce additional dependencies between nodes in $N^\Phi$, it follows by the construction of $N^\Phi$ that $\mathcal{C}_{1.6}$ must also be consistent for all nodes in $V^{N^\Phi_r}$ and $V^{RPS_r}$. By the semantics of $[\phi > h + 1]_l^\Phi$ and because $\forall (m, \phi) \in (\mathcal{C}_{1.6}(p^\Phi_l) \setminus \mathcal{C}_{1.3}(p^\Phi_l)) \cup (\mathcal{C}_{1.3}(p^\Phi_l) \setminus \mathcal{C}_{1.6}(p^\Phi_l)) : \phi \leq h + 1$, we also know that $\mathcal{C}_{1.6}$ is consistent for all nodes in $V^{RPS_l}$.

Because $V^{N^\Phi_{\bowtie}} = \{p^\Phi\}$ with $p^\Phi$ a marking sensitive join with dependencies $p^\Phi_l$ and $p^\Phi_r$, by Lemma \ref{lem:join_max_marking} it follows for $\mathcal{C}_2 = execute(order(N^\Phi_{\bowtie}), X^\Phi, H, H_p, \mathcal{C}_{1.6})$ that $\forall (m, \phi) \in (\mathcal{C}_2(p^\Phi) \setminus \mathcal{C}_1(p^\Phi)) \cup (\mathcal{C}_1(p^\Phi) \setminus \mathcal{C}_2(p^\Phi)) : \phi \leq \psi$.

From the correctness of the base case and the induction step then follows the correctness of the lemma.
\end{proof}


\begin{lemma} \label{lem:union_max_marking}
Let $n^\Phi$ be a marking-sensitive union node with dependencies $N^\Phi_\alpha$ in some marking-sensitive RETE net $(N^\Phi, p^\Phi)$. Let $\mathcal{C}_0$ and $\mathcal{C}$ furthermore be configurations for $(N^\Phi, p^\Phi)$ such that $\mathcal{C}_0$ is consistent for $n^\Phi$ and it holds for all $n^\Phi_\alpha \in N^\Phi_\alpha$ that $\forall (m, \phi) \in \mathcal{C}_0(n^\Phi_\alpha) \setminus \mathcal{C}(n^\Phi_\alpha) : \phi \leq h$ and $\forall (m, \phi) \in \mathcal{C}(n^\Phi_\alpha) \setminus \mathcal{C}_0(n^\Phi_\alpha) : \phi \leq h$. It then holds for host graph $H$ and partial host graph $H_p$ that $\forall (m, \phi) \in \mathcal{C}_0(n^\Phi) \setminus \resultslocal{n^\Phi}{N^\Phi}{H}{H_p}{\mathcal{C}} : \phi \leq h$ and $\forall (m, \phi) \in \resultslocal{n^\Phi}{N^\Phi}{H}{H_p}{\mathcal{C}} \setminus \mathcal{C}_0(n^\Phi) : \phi \leq h$.
\end{lemma}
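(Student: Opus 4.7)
The plan is to first exploit the consistency of $\mathcal{C}_0$ for $n^\Phi$, which gives $\mathcal{C}_0(n^\Phi) = \resultslocal{n^\Phi}{N^\Phi}{H}{H_p}{\mathcal{C}_0}$. This reduces the claim to bounding the markings of tuples in the symmetric difference of $\resultslocal{n^\Phi}{N^\Phi}{H}{H_p}{\mathcal{C}_0}$ and $\resultslocal{n^\Phi}{N^\Phi}{H}{H_p}{\mathcal{C}}$, that is, the result of applying the same marking-sensitive union operator to two dependency configurations whose per-dependency symmetric differences only contain tuples with markings bounded by $h$.

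I would then take an arbitrary tuple $(m, \phi)$ from one side of this symmetric difference and argue by case analysis on the behaviour of the match $m$ across the two configurations. By the semantics of the marking-sensitive union node, $\phi$ equals the maximum marking under which $m$ appears in $\bigcup_{n^\Phi_\alpha \in N^\Phi_\alpha} \mathcal{C}_0(n^\Phi_\alpha)$, respectively $\bigcup_{n^\Phi_\alpha \in N^\Phi_\alpha} \mathcal{C}(n^\Phi_\alpha)$. For $(m, \phi)$ to belong to the symmetric difference, either $m$ does not appear at all in the dependencies under the opposing configuration, or it appears there with a strictly different maximum marking.

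In the first case, every tuple $(m, \phi'')$ witnessing $m$ in the dependencies of the configuration containing $(m, \phi)$ must lie in the per-dependency symmetric difference and is therefore bounded by $h$ by hypothesis; since $\phi$ is the maximum of such $\phi''$, it follows directly that $\phi \leq h$. In the second case, I would identify the dependency-level witness that actually differs between $\mathcal{C}_0$ and $\mathcal{C}$: the tuple carrying the larger of the two maxima must lie in the per-dependency symmetric difference, since otherwise it would also be present in the opposing configuration and force that configuration's maximum to be at least as large, contradicting the fact that the two maxima differ. The assumption then bounds the larger maximum, and hence $\phi$, by $h$. The same argument applies to tuples drawn from the other direction of the original symmetric difference by swapping the roles of $\mathcal{C}_0$ and $\mathcal{C}$.

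The main technical obstacle is precisely the second case: care is required to pinpoint which dependency-level tuple actually witnesses the symmetric difference and to verify that no combination of maxima can arise without producing such a witness. Everything else reduces to a straightforward unfolding of the union node's semantics and the definition of consistency.
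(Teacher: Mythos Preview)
Your proposal is correct and follows essentially the same approach as the paper: both arguments use consistency of $\mathcal{C}_0$ for $n^\Phi$, then perform a case split on whether the match $m$ appears at all in the opposing configuration's union result, and in the case where it does, locate a dependency-level witness for the larger of the two maxima in the per-dependency symmetric difference. The paper spells out the two subcases $\phi' < \phi$ and $\phi' > \phi$ explicitly where you argue them together via the larger maximum, but the logic is identical.
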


\begin{proof}
Let $(m, \phi) \in \resultslocal{n^\Phi}{N^\Phi}{H}{H_p}{\mathcal{C}} \setminus \mathcal{C}_0(n^\Phi)$. We can then distinguish two cases:

Case 1: It may hold that $\nexists (m', \phi') \in \mathcal{C}_0(n^\Phi) : m = m'$. In this case, by the semantics of the marking-sensitive union node, there must be some $n^\Phi_\alpha \in N^\Phi_\alpha$ such that $(m, \phi) \in \mathcal{C}(n^\Phi_\alpha) \setminus \mathcal{C}_0(n^\Phi_\alpha)$. Since $\forall (m, \phi) \in \mathcal{C}(n^\Phi_\alpha) \setminus \mathcal{C}_0(n^\Phi_\alpha) : \phi \leq h$, it thus holds that $\phi \leq h$.

Case 2: Otherwise, there must be a tuple $(m', \phi') \in \mathcal{C}_0(n^\Phi) : m = m'$. If $\phi' < \phi$, there again must be some $n^\Phi_\alpha \in N^\Phi_\alpha$ such that $(m, \phi) \in \mathcal{C}(n^\Phi_\alpha) \setminus \mathcal{C}_0(n^\Phi_\alpha)$, and thus $\phi \leq h$. If $\phi' > \phi$, there must instead be some $n^\Phi_\alpha \in N^\Phi_\alpha$ such that $(m, \phi') \in \mathcal{C}_0(n^\Phi_\alpha) \setminus \mathcal{C}(n^\Phi_\alpha)$. Because $\forall (m, \phi') \in \mathcal{C}_0(n^\Phi_\alpha) \setminus \mathcal{C}(n^\Phi_\alpha) : \phi' \leq h$, it then also follows that $\phi \leq h$.

It thus follows that $\phi \leq h$.


Let $(m, \phi) \in \mathcal{C}_0(n^\Phi) \setminus \resultslocal{n^\Phi}{N^\Phi}{H}{H_p}{\mathcal{C}}$. We can then distinguish two cases:

Case 1: It may hold that $\nexists (m', \phi') \in \resultslocal{n^\Phi}{N^\Phi}{H}{H_p}{\mathcal{C}} : m = m'$. In this case, by the semantics of the marking-sensitive union node, there must be some $n^\Phi_\alpha \in N^\Phi_\alpha$ such that $(m, \phi) \in \mathcal{C}_0(n^\Phi_\alpha) \setminus \mathcal{C}(n^\Phi_\alpha)$. Since $\forall (m, \phi) \in \mathcal{C}_0(n^\Phi_\alpha) \setminus \mathcal{C}(n^\Phi_\alpha) : \phi \leq h$, it thus holds that $\phi \leq h$.

Case 2: Otherwise, there must be a tuple $(m', \phi') \in \resultslocal{n^\Phi}{N^\Phi}{H}{H_p}{\mathcal{C}} : m = m'$. If $\phi' < \phi$, there again must be some $n^\Phi_\alpha \in N^\Phi_\alpha$ such that $(m, \phi) \in \mathcal{C}_0(n^\Phi_\alpha) \setminus \mathcal{C}(n^\Phi_\alpha)$, and thus $\phi \leq h$. If $\phi' > \phi$, there must instead be some $n^\Phi_\alpha \in N^\Phi_\alpha$ such that $(m, \phi') \in \mathcal{C}(n^\Phi_\alpha) \setminus \mathcal{C}_0(n^\Phi_\alpha)$. Because $\forall (m, \phi') \in \mathcal{C}(n^\Phi_\alpha) \setminus \mathcal{C}_0(n^\Phi_\alpha) : \phi' \leq h$, it then also follows that $\phi \leq h$.

It thus follows that $\phi \leq h$.
\end{proof}

\begin{lemma} \label{lem:projection_max_marking}
Let $n^\Phi$ be a marking-sensitive projection node with dependency $n^\Phi_\alpha$ in some marking-sensitive RETE net $(N^\Phi, p^\Phi)$. Let $\mathcal{C}_0$ and $\mathcal{C}$ furthermore be configurations for $(N^\Phi, p^\Phi)$ such that $\mathcal{C}_0$ is consistent for $n^\Phi$ and it holds that $\forall (m, \phi) \in \mathcal{C}_0(n^\Phi_\alpha) \setminus \mathcal{C}(n^\Phi_\alpha) : \phi \leq h$ and $\forall (m, \phi) \in \mathcal{C}(n^\Phi_\alpha) \setminus \mathcal{C}_0(n^\Phi_\alpha) : \phi \leq h$. It then holds for host graph $H$ and partial host graph $H_p$ that $\forall (m, \phi) \in \mathcal{C}_0(n^\Phi) \setminus \resultslocal{n^\Phi}{N^\Phi}{H}{H_p}{\mathcal{C}} : \phi \leq h$ and $\forall (m, \phi) \in \resultslocal{n^\Phi}{N^\Phi}{H}{H_p}{\mathcal{C}} \setminus \mathcal{C}_0(n^\Phi) : \phi \leq h$.
\end{lemma}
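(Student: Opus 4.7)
The plan is to mirror the structure of the proof of Lemma \ref{lem:union_max_marking}, since the marking-sensitive projection node behaves with respect to markings in essentially the same way as the marking-sensitive union node: the marking of a tuple in its result is the maximum marking over all dependency tuples whose matches project to the same result. I would therefore pick an arbitrary element of the symmetric difference and, by a short case analysis, exhibit a witness in the symmetric difference of the dependency result sets, whose marking by hypothesis is bounded by $h$.

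Concretely, I would first handle the direction $\resultslocal{n^\Phi}{N^\Phi}{H}{H_p}{\mathcal{C}} \setminus \mathcal{C}_0(n^\Phi)$. Let $(m, \phi)$ be such a tuple. In the case that no tuple with match $m$ occurs in $\mathcal{C}_0(n^\Phi)$, consistency of $\mathcal{C}_0$ for $n^\Phi$ together with the projection semantics implies that no tuple $(m', \phi')$ with $m'|_Q = m$ occurs in $\mathcal{C}_0(n^\Phi_\alpha)$, whereas by the projection semantics applied to $\mathcal{C}$ there must exist some $(m', \phi) \in \mathcal{C}(n^\Phi_\alpha)$ with $m'|_Q = m$. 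This witness lies in $\mathcal{C}(n^\Phi_\alpha) \setminus \mathcal{C}_0(n^\Phi_\alpha)$, so by hypothesis $\phi \le h$. In the case that some $(m, \phi') \in \mathcal{C}_0(n^\Phi)$ exists with $\phi' \neq \phi$, if $\phi > \phi'$ then the new maximum $\phi$ must be realized by some $(m', \phi) \in \mathcal{C}(n^\Phi_\alpha)$ not already in $\mathcal{C}_0(n^\Phi_\alpha)$ (otherwise the old max would have been at least $\phi$), giving again $\phi \le h$; if $\phi < \phi'$, then the previous witness for $\phi'$ must have disappeared, so some $(m', \phi') \in \mathcal{C}_0(n^\Phi_\alpha) \setminus \mathcal{C}(n^\Phi_\alpha)$ exists with $m'|_Q = m$, yielding $\phi < \phi' \le h$.

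The other direction, $\mathcal{C}_0(n^\Phi) \setminus \resultslocal{n^\Phi}{N^\Phi}{H}{H_p}{\mathcal{C}}$, is handled by a symmetric case analysis, swapping the roles of $\mathcal{C}_0$ and $\mathcal{C}$ and invoking consistency of $\mathcal{C}_0$ for $n^\Phi$ in the projection direction opposite to the one used above. In every subcase, the same reasoning exhibits a witness in the symmetric difference of the dependency result sets with marking at most $h$.

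The only real obstacle compared to the union case is that the projection node can aggregate several dependency tuples (with different matches $m'$ that all satisfy $m'|_Q = m$) into a single output tuple via a maximum, so one has to be careful that a change in the aggregated marking is always traceable to some change in the multiset of contributing dependency markings. This is what justifies the witness extraction in the two ``$\phi > \phi'$'' and ``$\phi < \phi'$'' subcases: the maximum can only change if either a new contributor appears or the previous maximizer disappears, and in either case the witness lies in the symmetric difference of the dependency result sets and thus has marking at most $h$ by hypothesis.
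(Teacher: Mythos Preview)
Your proposal is correct and follows essentially the same approach as the paper's proof: both argue by case analysis on whether a tuple with the same match already exists in the other configuration's result set, and in each subcase exhibit a witness in the symmetric difference of the dependency result sets whose marking is bounded by $h$. Your writeup is in fact slightly more explicit than the paper's in Case~1 (you spell out why consistency of $\mathcal{C}_0$ forces the absence of any contributing dependency tuple), but the logical structure is identical.
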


\begin{proof}
Let $(m, \phi) \in \resultslocal{n^\Phi}{N^\Phi}{H}{H_p}{\mathcal{C}} \setminus \mathcal{C}_0(n^\Phi)$. We can then distinguish two cases:

Case 1: It may hold that $\nexists (m', \phi') \in \mathcal{C}_0(n^\Phi) : m = m'$. In this case, by the semantics of the marking-sensitive projection node, there must be some match $m''$ such that $(m'', \phi) \in \mathcal{C}(n^\Phi_\alpha) \setminus \mathcal{C}_0(n^\Phi_\alpha)$. Since $\forall (m'', \phi) \in \mathcal{C}(n^\Phi_\alpha) \setminus \mathcal{C}_0(n^\Phi_\alpha) : \phi \leq h$, it thus holds that $\phi \leq h$.

Case 2: Otherwise, there must be a tuple $(m', \phi') \in \mathcal{C}_0(n^\Phi) : m = m'$. If $\phi' < \phi$, there again must be some match $m''$ such that $(m'', \phi) \in \mathcal{C}(n^\Phi_\alpha) \setminus \mathcal{C}_0(n^\Phi_\alpha)$, and thus $\phi \leq h$. If $\phi' > \phi$, there must instead be a match $m''$ such that $(m'', \phi') \in \mathcal{C}_0(n^\Phi_\alpha) \setminus \mathcal{C}(n^\Phi_\alpha)$. Because $\forall (m'', \phi') \in \mathcal{C}_0(n^\Phi_\alpha) \setminus \mathcal{C}(n^\Phi_\alpha) : \phi' \leq h$, it then also follows that $\phi \leq h$.

It thus follows that $\phi \leq h$.


Let $(m, \phi) \in \mathcal{C}_0(n^\Phi) \setminus \resultslocal{n^\Phi}{N^\Phi}{H}{H_p}{\mathcal{C}}$. We can then distinguish two cases:

Case 1: It may hold that $\nexists (m', \phi') \in \resultslocal{n^\Phi}{N^\Phi}{H}{H_p}{\mathcal{C}} : m = m'$. In this case, by the semantics of the marking-sensitive projection node, there must be some match $m''$ such that $(m'', \phi) \in \mathcal{C}_0(n^\Phi_\alpha) \setminus \mathcal{C}(n^\Phi_\alpha)$. Since $\forall (m'', \phi) \in \mathcal{C}_0(n^\Phi_\alpha) \setminus \mathcal{C}(n^\Phi_\alpha) : \phi \leq h$, it thus holds that $\phi \leq h$.

Case 2: Otherwise, there must be a tuple $(m', \phi') \in \resultslocal{n^\Phi}{N^\Phi}{H}{H_p}{\mathcal{C}} : m = m'$. If $\phi' < \phi$, there again must be some  match $m''$ such that $(m'', \phi) \in \mathcal{C}_0(n^\Phi_\alpha) \setminus \mathcal{C}(n^\Phi_\alpha)$, and thus $\phi \leq h$. If $\phi' > \phi$, there must instead be a match $m''$ such that $(m, \phi') \in \mathcal{C}(n^\Phi_\alpha) \setminus \mathcal{C}_0(n^\Phi_\alpha)$. Because $\forall (m, \phi') \in \mathcal{C}(n^\Phi_\alpha) \setminus \mathcal{C}_0(n^\Phi_\alpha) : \phi' \leq h$, it then also follows that $\phi \leq h$.

It thus follows that $\phi \leq h$.
\end{proof}

\begin{lemma} \label{lem:forward_navigation_max_marking}
Let $n^\Phi$ be a forward navigation node with dependency $n^\Phi_\alpha$ in some marking-sensitive RETE net $(N^\Phi, p^\Phi)$. Let $\mathcal{C}_0$ and $\mathcal{C}$ furthermore be configurations for $(N^\Phi, p^\Phi)$ such that $\mathcal{C}_0$ is consistent for $n^\Phi$ and it holds that $\forall (m, \phi) \in \mathcal{C}_0(n^\Phi_\alpha) \setminus \mathcal{C}(n^\Phi_\alpha) : \phi \leq h$ and $\forall (m, \phi) \in \mathcal{C}(n^\Phi_\alpha) \setminus \mathcal{C}_0(n^\Phi_\alpha) : \phi \leq h$. It then holds for host graph $H$ and partial host graph $H_p$ that $\forall (m, \phi) \in \mathcal{C}_0(n^\Phi) \setminus \resultslocal{n^\Phi}{N^\Phi}{H}{H_p}{\mathcal{C}} : \phi \leq h$ and $\forall (m, \phi) \in \resultslocal{n^\Phi}{N^\Phi}{H}{H_p}{\mathcal{C}} \setminus \mathcal{C}_0(n^\Phi) : \phi \leq h$.
\end{lemma}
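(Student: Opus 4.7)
The statement has exactly the shape of Lemmas \ref{lem:union_max_marking} and \ref{lem:projection_max_marking}, and I would prove it by the same case analysis, specialized to the semantics of the forward navigation node. The key observation is that, by definition of $\resultslocalnop$ for $[v \rightarrow_n w]^\Phi$, the marking attached to any output tuple $(m, \phi)$ is the maximum of the markings of tuples $(m_v, \phi') \in \mathcal{C}(n^\Phi_\alpha)$ with $m_v = m|_{Q_v}$, where $Q_v = (\{v\}, \emptyset, \emptyset, \emptyset)$. So marking information propagates from the dependency to the output through a max over the preimage under restriction to $Q_v$, exactly as in the marking-sensitive projection node.

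Concretely, I would first fix an arbitrary $(m, \phi) \in \resultslocal{n^\Phi}{N^\Phi}{H}{H_p}{\mathcal{C}} \setminus \mathcal{C}_0(n^\Phi)$ and split on whether $\mathcal{C}_0(n^\Phi)$ contains a tuple of the form $(m, \phi')$ with the same match component. If not, then since $\mathcal{C}_0$ is consistent for $n^\Phi$, no tuple $(m|_{Q_v}, \phi)$ can appear in $\mathcal{C}_0(n^\Phi_\alpha)$ as the maximum witness (otherwise $m$ would lie in $\mathcal{C}_0(n^\Phi)$ with some marking $\geq \phi$). By the forward navigation semantics applied to $\mathcal{C}$, there exists a witness $(m|_{Q_v}, \phi) \in \mathcal{C}(n^\Phi_\alpha)$, which must therefore lie in $\mathcal{C}(n^\Phi_\alpha) \setminus \mathcal{C}_0(n^\Phi_\alpha)$, and the hypothesis immediately yields $\phi \leq h$. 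If instead a tuple $(m, \phi') \in \mathcal{C}_0(n^\Phi)$ exists with $\phi' \neq \phi$, then the witness for whichever of $\phi, \phi'$ is larger must sit in the appropriate dependency set but not in the other (again using consistency of $\mathcal{C}_0$ and the max characterization of the marking), so the witness is in the dependency symmetric difference and the hypothesis forces $\phi \leq h$ in each sub-case.

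The converse inclusion $\mathcal{C}_0(n^\Phi) \setminus \resultslocal{n^\Phi}{N^\Phi}{H}{H_p}{\mathcal{C}}$ is handled by swapping the roles of $\mathcal{C}_0$ and $\mathcal{C}$ in the same argument, using the consistency of $\mathcal{C}_0$ for $n^\Phi$ to convert statements about $\mathcal{C}_0(n^\Phi)$ into statements about $\mathcal{C}_0(n^\Phi_\alpha)$ via the forward navigation semantics.

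The only subtle step is the max-witness bookkeeping in the ``marking changed'' sub-case: I have to verify that whenever the output marking differs between the two configurations, at least one dependency tuple with that specific marking appears in the dependency symmetric difference. This is precisely the point already carried out in Lemma \ref{lem:projection_max_marking}, and the argument transfers verbatim because forward navigation and projection both derive their marking as a maximum over a fiber of a restriction map; I expect no further obstacle.
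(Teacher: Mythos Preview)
Your plan is correct. You reproduce exactly the case analysis the paper uses for the union and projection lemmas, and it carries over verbatim to forward navigation because the output marking is the maximum over the fiber $m|_{Q_v}$, as you observe.

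It is worth noting that the paper's own proof of this particular lemma is far terser than your plan: it simply asserts that if $(m,\phi)$ lies in one side of the output symmetric difference then the witness $(m|_{Q_v},\phi)$ lies in the corresponding side of the dependency symmetric difference, and concludes. Your two-case split (no tuple with match $m$ in $\mathcal{C}_0(n^\Phi)$ versus a tuple $(m,\phi')$ with $\phi'\neq\phi$) is the more careful argument; in particular your sub-case $\phi'>\phi$ locates the needed witness in $\mathcal{C}_0(n^\Phi_\alpha)\setminus\mathcal{C}(n^\Phi_\alpha)$ with marking $\phi'$ rather than $\phi$, which the paper's one-liner does not make explicit. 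So your approach is essentially the paper's approach for Lemma~\ref{lem:projection_max_marking} transported to forward navigation, and is if anything slightly more complete than the paper's direct proof of Lemma~\ref{lem:forward_navigation_max_marking}.
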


\begin{proof}
Let $(m, \phi) \in \resultslocal{n^\Phi}{N^\Phi}{H}{H_p}{\mathcal{C}} \setminus \mathcal{C}_0(n^\Phi)$. It must then hold that $\exists (m', \phi) \in \mathcal{C}(n^\Phi_\alpha) \setminus \mathcal{C}_0(n^\Phi_\alpha)$. Since $\forall (m', \phi) \in \mathcal{C}(n^\Phi_\alpha) \setminus \mathcal{C}_0(n^\Phi_\alpha) : \phi \leq h$, it follows that $\phi \leq h$.


Let $(m, \phi) \in \mathcal{C}_0(n^\Phi) \setminus \resultslocal{n^\Phi}{N^\Phi}{H}{H_p}{\mathcal{C}}$. It must then hold that $\exists (m', \phi) \in \mathcal{C}_0(n^\Phi_\alpha) \setminus \mathcal{C}(n^\Phi_\alpha)$. Since $\forall (m', \phi) \in \mathcal{C}_0(n^\Phi_\alpha) \setminus \mathcal{C}(n^\Phi_\alpha) : \phi \leq h$, it follows that $\phi \leq h$.
\end{proof}

\begin{lemma} \label{lem:backward_navigation_max_marking}
Let $n^\Phi$ be a backward navigation node with dependency $n^\Phi_\alpha$ in some marking-sensitive RETE net $(N^\Phi, p^\Phi)$. Let $\mathcal{C}_0$ and $\mathcal{C}$ furthermore be configurations for $(N^\Phi, p^\Phi)$ such that $\mathcal{C}_0$ is consistent for $n^\Phi$ and it holds that $\forall (m, \phi) \in \mathcal{C}_0(n^\Phi_\alpha) \setminus \mathcal{C}(n^\Phi_\alpha) : \phi \leq h$ and $\forall (m, \phi) \in \mathcal{C}(n^\Phi_\alpha) \setminus \mathcal{C}_0(n^\Phi_\alpha) : \phi \leq h$. It then holds for host graph $H$ and partial host graph $H_p$ that $\forall (m, \phi) \in \mathcal{C}_0(n^\Phi) \setminus \resultslocal{n^\Phi}{N^\Phi}{H}{H_p}{\mathcal{C}} : \phi \leq h$ and $\forall (m, \phi) \in \resultslocal{n^\Phi}{N^\Phi}{H}{H_p}{\mathcal{C}} \setminus \mathcal{C}_0(n^\Phi) : \phi \leq h$.
\end{lemma}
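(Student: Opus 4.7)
The plan is to mirror the proof of Lemma \ref{lem:forward_navigation_max_marking} verbatim, exchanging the source vertex subgraph $Q_v$ for the target vertex subgraph $Q_w = (\{w\}, \emptyset, \emptyset, \emptyset)$. The backward navigation node's result-set function is defined entirely symmetrically to the forward navigation node's: each tuple $(m, \phi_{max})$ in $\resultslocal{n^\Phi}{N^\Phi}{H}{H_p}{\mathcal{C}}$ is witnessed by a tuple $(m|_{Q_w}, \phi_{max}) \in \mathcal{C}(n^\Phi_\alpha)$ whose marking is the maximum marking of $m|_{Q_w}$ in the dependency, so the same reduction to the dependency's symmetric difference applies.

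For the first direction, I would take an arbitrary $(m, \phi) \in \resultslocal{n^\Phi}{N^\Phi}{H}{H_p}{\mathcal{C}} \setminus \mathcal{C}_0(n^\Phi)$ and invoke the backward navigation node semantics to obtain a witness $(m', \phi) \in \mathcal{C}(n^\Phi_\alpha)$ with $m' = m|_{Q_w}$. Consistency of $\mathcal{C}_0$ for $n^\Phi$ together with the absence of $(m, \phi)$ from $\mathcal{C}_0(n^\Phi)$ then imply that either this same witness lies in $\mathcal{C}(n^\Phi_\alpha) \setminus \mathcal{C}_0(n^\Phi_\alpha)$, or some other tuple $(m', \phi''') \in \mathcal{C}_0(n^\Phi_\alpha) \setminus \mathcal{C}(n^\Phi_\alpha)$ exists with $\phi''' > \phi$. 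In either branch the hypothesis on dependency differences directly yields $\phi \leq h$. The second direction is handled by the same argument with the roles of $\mathcal{C}_0$ and $\mathcal{C}$ swapped.

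Because the argument is a purely mechanical mirror of Lemma \ref{lem:forward_navigation_max_marking}, there is essentially no conceptual obstacle. The only point requiring mild care is the brief case split when the candidate witness tuple happens to lie in both $\mathcal{C}_0(n^\Phi_\alpha)$ and $\mathcal{C}(n^\Phi_\alpha)$, in which case the witness of the symmetric difference must instead be drawn from the tuple that realizes the maximum marking on the opposing side. This is already implicit in the style of reasoning adopted for the forward navigation lemma and requires no new machinery beyond what is established in the earlier supplementary lemmata.
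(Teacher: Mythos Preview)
Your proposal is correct and matches the paper's own proof, which is a verbatim mirror of the forward navigation lemma: pick $(m,\phi)$ in one side of the symmetric difference, produce a witness $(m',\phi)$ in the corresponding symmetric difference of the dependency's result sets, and apply the hypothesis. Your extra care about the case where the witness tuple lies in both $\mathcal{C}_0(n^\Phi_\alpha)$ and $\mathcal{C}(n^\Phi_\alpha)$ (forcing the actual witness to be the tuple realizing the maximum on the opposing side) is in fact slightly more explicit than the paper's terse argument, but it is the same reasoning.
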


\begin{proof}
Let $(m, \phi) \in \resultslocal{n^\Phi}{N^\Phi}{H}{H_p}{\mathcal{C}} \setminus \mathcal{C}_0(n^\Phi)$. It must then hold that $\exists (m', \phi) \in \mathcal{C}(n^\Phi_\alpha) \setminus \mathcal{C}_0(n^\Phi_\alpha)$. Since $\forall (m', \phi) \in \mathcal{C}(n^\Phi_\alpha) \setminus \mathcal{C}_0(n^\Phi_\alpha) : \phi \leq h$, it follows that $\phi \leq h$.


Let $(m, \phi) \in \mathcal{C}_0(n^\Phi) \setminus \resultslocal{n^\Phi}{N^\Phi}{H}{H_p}{\mathcal{C}}$. It must then hold that $\exists (m', \phi) \in \mathcal{C}_0(n^\Phi_\alpha) \setminus \mathcal{C}(n^\Phi_\alpha)$. Since $\forall (m', \phi) \in \mathcal{C}_0(n^\Phi_\alpha) \setminus \mathcal{C}(n^\Phi_\alpha) : \phi \leq h$, it follows that $\phi \leq h$.
\end{proof}

\begin{lemma} \label{lem:vertex_input_max_marking}
Let $n^\Phi$ be a marking-sensitive vertex input node in some marking-sensitive RETE net $(N^\Phi, p^\Phi)$. Let $\mathcal{C}_0$ and $\mathcal{C}$ furthermore be configurations for $(N^\Phi, p^\Phi)$ such that $\mathcal{C}_0$ is consistent for $n^\Phi$. It then holds for host graph $H$ and partial host graph $H_p$ that $\resultslocal{n^\Phi}{N^\Phi}{H}{H_p}{\mathcal{C}} = \mathcal{C}_0(n^\Phi)$.
\end{lemma}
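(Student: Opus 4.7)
The plan is to exploit the fact that a marking-sensitive vertex input node $[v]^\Phi$ has no dependencies in the RETE net, so its target result set, as defined in the appendix, is a function solely of the host graph $H$ and the relevant subgraph $H_p$, namely $\resultslocal{[v]^\Phi}{N^\Phi}{H}{H_p}{\mathcal{C}^\Phi} = \{(m, \infty) \mid m \in \allmatches{Q}{H_p}\}$, where $Q = (\{v\}, \emptyset, \emptyset, \emptyset)$. In particular, this set does not reference $\mathcal{C}^\Phi$ at all.

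First I would observe that consequently $\resultslocal{n^\Phi}{N^\Phi}{H}{H_p}{\mathcal{C}_0} = \resultslocal{n^\Phi}{N^\Phi}{H}{H_p}{\mathcal{C}}$, since both sides coincide with the configuration-independent set above. Next I would apply the hypothesis that $\mathcal{C}_0$ is consistent for $n^\Phi$, which by the definition of consistency yields $\mathcal{C}_0(n^\Phi) = \resultslocal{n^\Phi}{N^\Phi}{H}{H_p}{\mathcal{C}_0}$. Chaining these two equalities gives $\mathcal{C}_0(n^\Phi) = \resultslocal{n^\Phi}{N^\Phi}{H}{H_p}{\mathcal{C}}$, which is exactly the claim.

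There is essentially no hard part here; the lemma is a direct consequence of the fact that vertex input nodes are sources in the RETE net and therefore their result semantics are independent of any other node's current result set. The main thing to be careful about is only citing the precise definitions from the appendix (the target-result-set definition for $[v]^\Phi$ and the notion of consistency introduced in Section 2.2), rather than invoking any of the more elaborate monotonicity lemmata used elsewhere in the appendix, since none of those are needed.
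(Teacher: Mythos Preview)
Your proof is correct and follows essentially the same approach as the paper, which simply states that the lemma ``follows trivially from the semantics of the marking-sensitive vertex input node.'' You have just unpacked this one-line justification into its constituent steps: configuration-independence of the target result set for a vertex input node, plus the definition of consistency.
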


\begin{proof}
Follows trivially from the semantics of the marking-sensitive vertex input node.
\end{proof}

\begin{lemma} \label{lem:filter_max_marking}
Let $n^\Phi$ be a marking filter node with dependency $n^\Phi_\alpha$ in a marking-sensitive RETE net $(N^\Phi, p^\Phi)$. Let $\mathcal{C}_0$ and $\mathcal{C}$ furthermore be configurations for $(N^\Phi, p^\Phi)$ such that $\mathcal{C}_0$ is consistent for $n^\Phi$ and it holds that $\forall (m, \phi) \in \mathcal{C}_0(n^\Phi_\alpha) \setminus \mathcal{C}(n^\Phi_\alpha) : \phi \leq h$ and $\forall (m, \phi) \in \mathcal{C}(n^\Phi_\alpha) \setminus \mathcal{C}_0(n^\Phi_\alpha) : \phi \leq h$. It then holds for host graph $H$ and partial host graph $H_p$ that $\forall (m, \phi) \in \mathcal{C}_0(n^\Phi) \setminus \resultslocal{n^\Phi}{N^\Phi}{H}{H_p}{\mathcal{C}} : \phi \leq h$ and $\forall (m, \phi) \in \resultslocal{n^\Phi}{N^\Phi}{H}{H_p}{\mathcal{C}} \setminus \mathcal{C}_0(n^\Phi) : \phi \leq h$.
\end{lemma}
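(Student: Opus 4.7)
(Proposal)
The plan is to follow the same template used for Lemmas \ref{lem:forward_navigation_max_marking} and \ref{lem:backward_navigation_max_marking}, since the marking filter node, like the navigation nodes, is single-dependency and preserves the marking of any tuple that passes through it. The central observation is that, by the semantics of $[\phi > x]^\Phi$, each tuple in $\resultslocal{n^\Phi}{N^\Phi}{H}{H_p}{\mathcal{C}}$ equals some tuple in $\mathcal{C}(n^\Phi_\alpha)$ with the same marking, and symmetrically each tuple in $\mathcal{C}_0(n^\Phi_\alpha)$ with marking above $x$ appears unchanged in $\resultslocalnop^\Phi$ for $\mathcal{C}_0$.

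First, consider an arbitrary $(m, \phi) \in \resultslocal{n^\Phi}{N^\Phi}{H}{H_p}{\mathcal{C}} \setminus \mathcal{C}_0(n^\Phi)$. By the semantics of the marking filter node, we have $(m, \phi) \in \mathcal{C}(n^\Phi_\alpha)$ with $\phi > x$. Since $\mathcal{C}_0$ is consistent for $n^\Phi$, it holds that $\mathcal{C}_0(n^\Phi) = \resultslocal{n^\Phi}{N^\Phi}{H}{H_p}{\mathcal{C}_0}$; hence $(m, \phi) \notin \resultslocal{n^\Phi}{N^\Phi}{H}{H_p}{\mathcal{C}_0}$. Combined with $\phi > x$, this forces $(m, \phi) \notin \mathcal{C}_0(n^\Phi_\alpha)$, so $(m, \phi) \in \mathcal{C}(n^\Phi_\alpha) \setminus \mathcal{C}_0(n^\Phi_\alpha)$, and the assumption yields $\phi \leq h$.

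Second, take $(m, \phi) \in \mathcal{C}_0(n^\Phi) \setminus \resultslocal{n^\Phi}{N^\Phi}{H}{H_p}{\mathcal{C}}$. By consistency of $\mathcal{C}_0$ at $n^\Phi$ and the filter semantics, $(m, \phi) \in \mathcal{C}_0(n^\Phi_\alpha)$ with $\phi > x$. Since $\phi > x$ but $(m, \phi) \notin \resultslocal{n^\Phi}{N^\Phi}{H}{H_p}{\mathcal{C}}$, it must be that $(m, \phi) \notin \mathcal{C}(n^\Phi_\alpha)$. Hence $(m, \phi) \in \mathcal{C}_0(n^\Phi_\alpha) \setminus \mathcal{C}(n^\Phi_\alpha)$, and again by assumption $\phi \leq h$.

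No genuine obstacle arises: unlike the marking-sensitive union and projection lemmas, where two tuples in the symmetric difference could share a match but differ in marking, the filter node transports tuples verbatim, so the case analysis collapses into the two direct cases above. The argument is essentially a literal mirror of the navigation-node proofs, with the filter predicate $\phi > x$ replacing the existential witness on the dependency match.
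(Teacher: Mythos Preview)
Your proof is correct and follows essentially the same approach as the paper's own proof: both directions trace a tuple in the symmetric difference of the filter node's result sets back to a tuple in the symmetric difference of the dependency's result sets with the same marking, then invoke the hypothesis. Your version is slightly more explicit in spelling out why $(m,\phi)\notin\mathcal{C}_0(n^\Phi_\alpha)$ (respectively $\notin\mathcal{C}(n^\Phi_\alpha)$) using consistency and the condition $\phi>x$, whereas the paper compresses this into a single existential claim, but the argument is the same.
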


\begin{proof}
Let $(m, \phi) \in \resultslocal{n^\Phi}{N^\Phi}{H}{H_p}{\mathcal{C}} \setminus \mathcal{C}_0(n^\Phi)$. It must then hold that $\exists (m', \phi) \in \mathcal{C}(n^\Phi_\alpha) \setminus \mathcal{C}_0(n^\Phi_\alpha)$. Since $\forall (m', \phi) \in \mathcal{C}(n^\Phi_\alpha) \setminus \mathcal{C}_0(n^\Phi_\alpha) : \phi \leq h$, it follows that $\phi \leq h$.


Let $(m, \phi) \in \mathcal{C}_0(n^\Phi) \setminus \resultslocal{n^\Phi}{N^\Phi}{H}{H_p}{\mathcal{C}}$. It must then hold that $\exists (m', \phi) \in \mathcal{C}_0(n^\Phi_\alpha) \setminus \mathcal{C}(n^\Phi_\alpha)$. Since $\forall (m', \phi) \in \mathcal{C}_0(n^\Phi_\alpha) \setminus \mathcal{C}(n^\Phi_\alpha) : \phi \leq h$, it follows that $\phi \leq h$.
\end{proof}

\begin{lemma} \label{lem:join_max_marking}
Let $H$ a graph, $H_p \subseteq H$, $(N, p)$ a well-formed RETE net with join-tree height $h \geq 1$, $\psi \in \overline{\mathbb{N}}$ such that $\psi \geq h$, and $(X^\Phi, p^\Phi_X)$ a modular extension of $(N^\Phi, p^\Phi) = localize((N, p)) = (N^\Phi_{\bowtie} \cup N^\Phi_l \cup N^\Phi_r \cup RPS_l \cup RPS_r, p^\Phi)$, with $(N^\Phi_l, p^\Phi_l) = localize((N_l, p_l))$ and $(N^\Phi_r, p^\Phi_r) = localize((N_r, p_r))$, where $p_l$ is the left dependency of the join node $p$ with associated subtree $N_l$ and $p_r$ is the right dependency of $p$ with associated subtree $N_r$. Let $\mathcal{C}_0$ and $\mathcal{C}_1$ furthermore be configurations for $(X^\Phi, p^\Phi_X)$ such that with respect to $H$, $\mathcal{C}_0$ is consistent for all nodes in $V^{N^\Phi}$, $\mathcal{C}_1$ is consistent for all nodes in $V^{N^\Phi} \setminus \{p^\Phi\}$, and it holds that $\forall (m, \phi) \in (\mathcal{C}_0(n^\Phi_l) \setminus \mathcal{C}_1(n^\Phi_l)) \cup (\mathcal{C}_1(n^\Phi_l) \setminus \mathcal{C}_0(n^\Phi_l)) : \phi \leq \psi$ and $\forall (m, \phi) \in (\mathcal{C}_0(n^\Phi_r) \setminus \mathcal{C}_1(n^\Phi_r)) \cup (\mathcal{C}_1(n^\Phi_r) \setminus \mathcal{C}_0(n^\Phi_r)) : \phi \leq h$. It then holds for $\mathcal{C}_2 = execute(p^\Phi, X^\Phi, H, H_p, \mathcal{C}_1)$ that $\forall (m, \phi) \in (\mathcal{C}_0(p^\Phi) \setminus \mathcal{C}_2(p^\Phi)) \cup (\mathcal{C}_2(p^\Phi) \setminus \mathcal{C}_0(p^\Phi)) : \phi \leq \psi$.
\end{lemma}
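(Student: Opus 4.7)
(Plan)
The plan is to prove both inclusions of the symmetric difference by contradiction. As a preliminary observation, every marking-sensitive node type defined earlier in this appendix preserves a unique marking per match (duplicates collapse via $\max$), so in both $\mathcal{C}_0$ and $\mathcal{C}_1$ each match carries at most one marking at every node. Consequently, the unique decomposition $m_l = m|_{Q_l}$ and $m_r = m|_{Q_r}$ determines at most one pair of tuples at $n^\Phi_l$ and $n^\Phi_r$ relevant to any match $m$ produced by the join.

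Assume towards contradiction that $(m, \phi) \in \mathcal{C}_2(p^\Phi) \setminus \mathcal{C}_0(p^\Phi)$ with $\phi > \psi$. The join semantics applied to $\mathcal{C}_1$ yields unique $(m_l, \phi_l) \in \mathcal{C}_1(n^\Phi_l)$ and $(m_r, \phi_r) \in \mathcal{C}_1(n^\Phi_r)$ with $\phi = \max(\phi_l, \phi_r)$. Since $\phi > \psi$, either $\phi_l > \psi$ or $\phi_r > \psi$ holds; the representative case is $\phi_l > \psi$, while the dual is dispatched identically using the symmetric direction of Lemma \ref{lem:join_complementarity} afforded by the symmetric construction of $N^\Phi$. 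The premise at $n^\Phi_l$ then forces $(m_l, \phi_l) \in \mathcal{C}_0(n^\Phi_l) \cap \mathcal{C}_1(n^\Phi_l)$, and because $\phi_l > \psi \geq h$, applying Lemma \ref{lem:join_complementarity} to the fully consistent $\mathcal{C}_0$ delivers some $(m_r, \phi_r^*) \in \mathcal{C}_0(n^\Phi_r)$ with $\phi_r^* \geq h$, the unique tuple for $m_r$ in $\mathcal{C}_0$. If $\phi_r^* \leq \phi_l$, consistency of $\mathcal{C}_0$ at $p^\Phi$ yields $(m, \max(\phi_l, \phi_r^*)) = (m, \phi_l) = (m, \phi) \in \mathcal{C}_0(p^\Phi)$, a contradiction. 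If instead $\phi_r^* > \phi_l > h$, the premise at $n^\Phi_r$ forces $(m_r, \phi_r^*) \in \mathcal{C}_0 \cap \mathcal{C}_1$ (otherwise it would lie in the symmetric difference with marking exceeding $h$), so by uniqueness $\phi_r = \phi_r^*$ and again $(m, \phi_r^*) = (m, \phi) \in \mathcal{C}_0(p^\Phi)$, a contradiction.

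The reverse inclusion $\mathcal{C}_0(p^\Phi) \setminus \mathcal{C}_2(p^\Phi)$ is handled by the mirror argument, now invoking Lemma \ref{lem:join_complementarity} on $\mathcal{C}_1$ rather than $\mathcal{C}_0$. This is legitimate because the proof of Lemma \ref{lem:join_complementarity} only requires consistency of nodes strictly below $p^\Phi$, namely those of the opposite-side localized subnet, the adjacent request projection structure, and its extension points, all of which are covered by the hypothesis that $\mathcal{C}_1$ is consistent on $V^{N^\Phi} \setminus \{p^\Phi\}$.

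The main obstacle is the asymmetry between the two premises: a bound of $\psi$ on the left versus only $h$ on the right. A direct bound on $\max(\phi_l, \phi_r)$ using those premises alone fails in the case $\phi_r \leq h < \phi_l$ when $(m_r, \phi_r)$ happens to be unchanged between $\mathcal{C}_0$ and $\mathcal{C}_1$, since nothing from the premises constrains $\phi_r$ in that situation. Lemma \ref{lem:join_complementarity} is indispensable here: it certifies that a highly marked left tuple must already have a right partner of marking at least $h$ in the consistent configuration, and this partner, combined with uniqueness of markings and the premise at $n^\Phi_r$, forces the disputed tuple $(m, \phi)$ to appear on both sides of the alleged symmetric difference.
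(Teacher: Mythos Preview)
Your proposal is correct and rests on the same two ingredients as the paper's proof---Lemma~\ref{lem:join_complementarity} to produce a partner tuple on the opposite dependency, and Lemma~\ref{lem:match_uniqueness} for uniqueness of markings---but organises the case analysis differently. The paper splits each direction into three cases according to which of $(m_l,\phi_l)$, $(m_r,\phi_r)$ fails to appear in the other configuration; you instead split on whether $\phi_l>\psi$ or $\phi_r>\psi$, and then sub-split on the relative order of $\phi_r^*$ and $\phi_l$. Your decomposition is slightly more economical and keeps the contradiction hypothesis $\phi>\psi$ in the foreground, while the paper's makes the dependence on the symmetric-difference premises more explicit.

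One step in your write-up needs an extra line of justification: in the subcase $\phi_r^*\le\phi_l$ you assert $(m,\phi_l)=(m,\phi)$, which requires $\phi_r\le\phi_l$. This does follow---if $\phi_r>\phi_l>\psi\ge h$ then the right-side premise forces $(m_r,\phi_r)\in\mathcal{C}_0\cap\mathcal{C}_1$, whence $\phi_r=\phi_r^*\le\phi_l$ by uniqueness, a contradiction---but it is not immediate from what you wrote. Similarly, calling the $\phi_r>\psi$ case the ``dual'' is slightly misleading: the two premises carry different bounds ($\psi$ on the left, $h$ on the right), so the argument is not literally symmetric. It still goes through because $\psi\ge h$ makes the right-side premise stronger, but you should indicate where each bound is actually invoked. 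Your remark that Lemma~\ref{lem:join_complementarity} only needs consistency below $p^\Phi$, and hence applies to $\mathcal{C}_1$, is correct and matches how the paper uses it.
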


\begin{proof}
Let $(m, \phi)$ be a tuple in $\mathcal{C}_0(p^\Phi) \setminus \mathcal{C}_2(p^\Phi)$. Because $\mathcal{C}_0$ is consistent for $p^\Phi$, there must be tuples $(m_l, \phi_l) \in \mathcal{C}_0(p^\Phi_l)$ and $(m_r, \phi_r) \in \mathcal{C}_0(p^\Phi_r)$ with $m_l|_{Q_\cap} = m_r|_{Q_\cap}$, where $[\bowtie_{Q_\cap}]^\Phi = p^\Phi$ and $max(\phi_l, \phi_r) = \phi$. Since $(m, \phi) \notin \mathcal{C}_2(p^\Phi)$, it must also hold that $(m_l, \phi_l) \notin \mathcal{C}_1(p^\Phi_l)$ or $(m_r, \phi_r) \notin \mathcal{C}_1(p^\Phi_r)$.

\emph{Case 1}: We assume that $(m_l, \phi_l) \notin \mathcal{C}_1(p^\Phi_l)$ and $(m_r, \phi_r) \in \mathcal{C}_1(p^\Phi_r)$. If $\phi_l \geq \phi_r$, it immediately follows that $\phi \leq \psi$. If $\phi_l < \phi_r$, by Lemma \ref{lem:join_complementarity} and $\mathcal{C}_1$ being consistent for all nodes in $V^{N^\Phi} \setminus \{p^\Phi\}$, it follows that $\exists (m_l', \phi_l') \in \mathcal{C}_1(p^\Phi_l) : m_l'|_{Q_\cap} = m_r|_{Q_\cap}$. Due to Lemma \ref{lem:match_uniqueness} and since $\forall (m', \phi') \in \mathcal{C}_1(n^\Phi_l) \setminus \mathcal{C}_0(n^\Phi_l) : \phi' \leq \psi$, it must then follow that $\phi_r \leq \psi$, since otherwise, it would follow that $(m, \phi) \in \mathcal{C}_2(p^\Phi)$, which is a contradiction.

Case 2: We assume that $(m_l, \phi_l) \in \mathcal{C}_1(p^\Phi_l)$ and $(m_r, \phi_r) \notin \mathcal{C}_1(p^\Phi_r)$. Analogously to Case 1, it then follows that $\phi \leq \psi$.

Case 3: We assume that $(m_l, \phi_l) \notin \mathcal{C}_1(p^\Phi_l)$ and $(m_r, \phi_r) \notin \mathcal{C}_1(p^\Phi_r)$. In this case, since $\forall (m', \phi') \in \mathcal{C}_0(n^\Phi_l) \setminus \mathcal{C}_1(n^\Phi_l) : \phi' \leq \psi$ and $\forall (m', \phi') \in \mathcal{C}_0(n^\Phi_r) \setminus \mathcal{C}_1(n^\Phi_r) : \phi' \leq \psi$, it immediately follows that $\phi \leq \psi$.

It thus follows that $\phi \leq \psi$.


Let $(m, \phi)$ be a tuple in $\mathcal{C}_2(p^\Phi) \setminus \mathcal{C}_0(p^\Phi)$. It then follows that there must be tuples $(m_l, \phi_l) \in \mathcal{C}_1(p^\Phi_l)$ and $(m_r, \phi_r) \in \mathcal{C}_1(p^\Phi_r)$ with $m_l|_{Q_\cap} = m_r|_{Q_\cap}$, where $[\bowtie_{Q_\cap}]^\Phi = p^\Phi$ and $max(\phi_l, \phi_r) = \phi$. Since $(m, \phi) \notin \mathcal{C}_0(p^\Phi)$, it must also hold that $(m_l, \phi_l) \notin \mathcal{C}_0(p^\Phi_l)$ or $(m_r, \phi_r) \notin \mathcal{C}_0(p^\Phi_r)$..

Case 1: We assume that $(m_l, \phi_l) \notin \mathcal{C}_0(p^\Phi_l)$ and $(m_r, \phi_r) \in \mathcal{C}_0(p^\Phi_r)$. If $\phi_l \geq \phi_r$, it immediately follows that $\phi \leq \psi$. If $\phi_l < \phi_r$, by Lemma \ref{lem:join_complementarity} and $\mathcal{C}_1$ being consistent for all nodes in $V^{N^\Phi} \setminus \{p^\Phi\}$, it follows that $\exists (m_l', \phi_l') \in \mathcal{C}_1(p^\Phi_l) : m_l'|_{Q_\cap} = m_r|_{Q_\cap}$. Due to Lemma \ref{lem:match_uniqueness} and since $\forall (m', \phi') \in \mathcal{C}_0(n^\Phi_l) \setminus \mathcal{C}(n^\Phi_l) : \phi' \leq \psi$, it must then follow that $\phi \leq \psi$, since otherwise, it would follow that $(m, \phi) \in \mathcal{C}_0(p^\Phi)$, which is a contradiction.

Case 2: We assume that $(m_l, \phi_l) \in \mathcal{C}_0(p^\Phi_l)$ and $(m_r, \phi_r) \notin \mathcal{C}_0(p^\Phi_r)$. Analogously to Case 1, it then follows that $\phi \leq \psi$.

Case 3: We assume that $(m_l, \phi_l) \in \mathcal{C}_0(p^\Phi_l)$ and $(m_r, \phi_r) \in \mathcal{C}_0(p^\Phi_r)$. In this case, since $\forall (m', \phi') \in \mathcal{C}_1(n^\Phi_l) \setminus \mathcal{C}_0(n^\Phi_l) : \phi' \leq \psi$ and $\forall (m', \phi') \in \mathcal{C}_1(n^\Phi_r) \setminus \mathcal{C}_0(n^\Phi_r) : \phi' \leq \psi$, it immediately follows that $\phi \leq \psi$

It thus follows that $\phi \leq \psi$.
\end{proof}


\begin{lemma} \label{lem:match_uniqueness}
Let $(N, p)$ be a well-formed RETE net and $(X^\Phi, p^\Phi_X)$ a modular extension of $(N^\Phi, p^\Phi) = localize((N, p))$. It then holds for any configuration $\mathcal{C}$ of $(X^\Phi, p^\Phi_X)$ that is consistent for all nodes in $V^{N^\Phi}$ that $\forall (m, \phi), (m', \phi') \in \mathcal{C}(p^\Phi) : m = m' \Rightarrow \phi = \phi'$.
\end{lemma}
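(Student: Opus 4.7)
The plan is a structural induction on the well-formed net $(N, p)$. Since well-formed nets are trees of join nodes over edge input nodes, the $localize$ construction makes $p^\Phi$ either the top marking-sensitive union of a local navigation structure (when $p$ is an edge input) or a marking-sensitive join (when $p$ is a join). These two cases exhaust what must be checked.

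In the base case, $p^\Phi = [\cup]^\Phi$ is the top union of $LNS(p)$. Reading off the target result set definition of the marking-sensitive union node, every match $m$ appearing there is paired with the single value $\phi_{\max} = \max\{\phi \mid (m,\phi) \in \bigcup_\alpha \mathcal{C}(n^\Phi_\alpha)\}$, which is uniquely determined by $m$. Consistency of $\mathcal{C}$ at $p^\Phi$ then yields the claim immediately.

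In the inductive step, $p^\Phi = [\bowtie_{Q_\cap}]^\Phi$ has dependencies $p^\Phi_l$ and $p^\Phi_r$, the production nodes of the localizations of the two subtrees $(N_l, p_l)$ and $(N_r, p_r)$. First I would observe that $(X^\Phi, p_X^\Phi)$ is simultaneously a modular extension of $(N^\Phi_l, p^\Phi_l)$ and $(N^\Phi_r, p^\Phi_r)$, and that the hypothesis that $\mathcal{C}$ is consistent on $V^{N^\Phi}$ restricts to consistency on $V^{N^\Phi_l}$ and $V^{N^\Phi_r}$, so the inductive hypothesis applies to both dependencies. Given $(m, \phi), (m, \phi') \in \mathcal{C}(p^\Phi)$, the marking-sensitive join semantics produces each from some pair $(m_l, \phi_l) \in \mathcal{C}(p^\Phi_l)$ and $(m_r, \phi_r) \in \mathcal{C}(p^\Phi_r)$ with $m_l \cup m_r = m$. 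Because $m_l = m|_{Q_l}$ and $m_r = m|_{Q_r}$ are canonically determined by $m$, the same $m_l$ and $m_r$ must appear in both decompositions, and the inductive hypothesis then forces identical markings on each side, so $\phi = \max(\phi_l, \phi_r) = \phi'$.

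The main obstacle is really a definitional one: the proof hinges on the fact that a modular extension $(X^\Phi, p_X^\Phi)$ does not rewire any dependency of a node already in $V^{N^\Phi}$, so that target result sets computed at nodes in $N^\Phi$ inside $X^\Phi$ coincide with those computed in the pure localized (sub)net. Once this is granted, the argument reduces to the canonical restriction-based decomposition of a join match and the $\max$-aggregation built into the marking-sensitive union.
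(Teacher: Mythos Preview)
Your proposal is correct and follows essentially the same approach as the paper: induction on the height of the join tree, with the base case handled directly by the semantics of the marking-sensitive union node and the inductive step by the semantics of the marking-sensitive join combined with the induction hypothesis applied to both dependencies. You are in fact slightly more explicit than the paper in two places---noting that $(X^\Phi, p^\Phi_X)$ is also a modular extension of the sub-localizations (which the paper uses but does not restate in this lemma) and spelling out that $m_l = m|_{Q_l}$ and $m_r = m|_{Q_r}$ are uniquely determined by $m$---but these are elaborations of the same argument, not a different route.
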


\begin{proof}
We show the correctness of the lemma via induction over the height $h$ of the join tree $N$.

In the base case of $h = 0$, $(N^\Phi, p^\Phi) = (LNS(p), [\cup]^\Phi)$. By the semantics of $[\cup]^\Phi$ and $\mathcal{C}$ being consistent for $[\cup]^\Phi$, it follows that $\forall (m, \phi), (m', \phi') \in \mathcal{C}(p^\Phi) : m = m' \Rightarrow \phi = \phi'$.

We now show that if the lemma holds for any well-formed RETE net $(N_h, p_h)$ with join-tree height $h$, it also holds for any well-formed RETE net $(N, p)$ with join-tree height $h+1$.

Since $h + 1 \geq 1$, it holds that $(N^\Phi, p^\Phi) = localize((N, p)) = (N^\Phi_{\bowtie} \cup N^\Phi_l \cup N^\Phi_r \cup RPS_l \cup RPS_r, p^\Phi)$, with $(N^\Phi_l, p^\Phi_l) = localize((N_l, p_l))$ and $(N^\Phi_r, p^\Phi_r) = localize((N_r, p_r))$, where $p_l$ is the left dependency of the join node $p$ with associated subtree $N_l$ and $p_r$ is the right dependency of $p$ with associated subtree $N_r$. Since both $N_l$ and $N_r$ must have a join-tree height less than $h + 1$, by the induction hypothesis it follows that $\forall (m, \phi), (m', \phi') \in \mathcal{C}(p^\Phi_l) : m = m' \Rightarrow \phi = \phi'$ and $\forall (m, \phi), (m', \phi') \in \mathcal{C}(p^\Phi_r) : m = m' \Rightarrow \phi = \phi'$. By the semantics of the marking-sensitive join node $p^\Phi$, which has $p^\Phi_l$ and $p^\Phi_r$ as left respectively right dependency, it must follow that $\forall (m, \phi), (m', \phi') \in \mathcal{C}(p^\Phi) : m = m' \Rightarrow \phi = \phi'$.

From the correctness of the base case and the induction step then follows the correctness of the lemma.
\end{proof}


We say that a marking-sensitive RETE net $(X^\Phi, p^\Phi_X)$ is a \emph{modular extension} of $(N^\Phi, p^\Phi)$ if $N^\Phi \subseteq X^\Phi$ and $\forall e \in E^{X^\Phi} : s^{X^\Phi}(e) \in V^{N^\Phi} \wedge t^{X^\Phi}(e) \in V^{N^\Phi} \Rightarrow e \in E^{N^\Phi}$.

We say that an execution order $O$ for $(N^\Phi, p^\Phi)$ is \emph{robust under modular extension} if for any modular extension $(X^\Phi, p^\Phi_X)$, host graph $H$, and starting configuration $\mathcal{C}_0$ for $(X^\Phi, p^\Phi_X)$, it holds that $\mathcal{C} = execute(O, X^\Phi, H, H_p, \mathcal{C_0})$ is consistent for all RETE nodes $n \in V^{N^\Phi}$.

The execution order $O = order((N^\Phi, p^\Phi))$ is indeed robust under modular extension:

\begin{lemma} \label{lem:order_robustness}
Given a well-formed RETE net $(N, p)$, the execution order $O = order((N^\Phi, p^\Phi))$ for $(N^\Phi, p^\Phi) = localize((N, p))$ is robust under modular extension.
\end{lemma}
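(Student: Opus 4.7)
(Plan)
I would proceed by induction on the height $h$ of the join tree $N$, mirroring the recursive definition of $order$. In the base case $h = 0$, $N$ consists of a single edge input node $p$, so $N^\Phi = LNS(p)$ is a directed tree and $O = toposort(LNS(p))^{-1}$. Because the definition of modular extension forbids inserting new edges between two nodes that already lie in $N^\Phi$, reverse topological execution visits each node of $LNS(p)$ only after all its in-$N^\Phi$ dependencies have been made consistent. Extra dependencies attached at the extension points $[\cup]^\Phi_v, [\cup]^\Phi_w$ by $X^\Phi$ are simply absorbed into the marking-sensitive union semantics when those nodes are fired, so each $LNS(p)$-node becomes consistent with the then-current configuration.

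For the inductive step I would assume the lemma for all join trees of height below $h + 1$, take $p$ to be a join at height $h + 1$, and trace the seven phases of $O = order(RPS_r) \circ order(N^\Phi_l) \circ order(RPS_l) \circ order(N^\Phi_r) \circ order(RPS_r) \circ order(N^\Phi_l) \circ [p^\Phi]$. Using transitivity of modular extension, the induction hypothesis applies to $N^\Phi_l$ and $N^\Phi_r$ viewed inside $X^\Phi$. The bookkeeping then runs as follows: phase~1 makes $RPS_r$ consistent via reverse topological sort; phase~2 makes $N^\Phi_l$ consistent by IH; phase~3 makes $RPS_l$ consistent; phase~4 makes $N^\Phi_r$ consistent by IH but possibly breaks $RPS_r$; phase~5 restores $RPS_r$ but may disturb the extension points of $N^\Phi_l$; phase~6 restores $N^\Phi_l$ by IH; and phase~7 executes the singleton $[p^\Phi]$ to conclude.

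The one delicate step, and the main obstacle, is showing that phase~6 does not re-invalidate $RPS_l$ or, through $RPS_l$, the already-consistent $N^\Phi_r$. Here I would invoke Lemma \ref{lem:localized_rete_max_marking} with $\psi = h + 1$: the updates pushed into the extension points of $N^\Phi_l$ during phase~5 originate from the marking assignment node of $RPS_r$ and therefore carry marking exactly $h + 1$; since every join subtree inside $N^\Phi_l$ has height at most $h < h + 1$, the precondition $\psi > h_l$ of the lemma is met, and the outer induction hypothesis supplies the very robustness assumption that the lemma itself requires. The lemma then certifies that every tuple added to or removed from $\mathcal{C}(p^\Phi_l)$ during phase~6 carries marking at most $h + 1$; but the bottom filter $[\phi > h + 1]^\Phi$ of $RPS_l$ discards precisely such tuples, so the current result sets throughout $RPS_l$ are unchanged, and consequently the extension points of $N^\Phi_r$ fed by $RPS_l$ see no change either. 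Hence both $RPS_l$ and $N^\Phi_r$ remain consistent after phase~6, and phase~7 fires $p^\Phi$ with both of its dependencies already consistent, yielding a configuration consistent for all of $V^{N^\Phi}$.
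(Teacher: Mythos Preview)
Your proposal is correct and follows essentially the same approach as the paper: induction on the join-tree height, a reverse-topological argument for the base case, a phase-by-phase trace of the seven segments of $O$ in the inductive step, and the invocation of Lemma~\ref{lem:localized_rete_max_marking} (with the outer induction hypothesis supplying its robustness precondition) to show that the second execution of $N^\Phi_l$ produces only changes of marking at most $h+1$, which the filter in $RPS_l$ discards so that $RPS_l$ and $N^\Phi_r$ remain consistent. Your identification of this last point as the delicate step and your justification for it match the paper's argument closely.
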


\begin{proof}
We prove the lemma by induction over the height $h$ of the join tree $N$.

We begin by showing that the lemma holds for the base case of $h = 0$, where $N$ consists of a single edge input node. Thus, $(N^\Phi, p^\Phi) = localize((N, p)) = (LNS(p), p^\Phi)$, that is, the marking-sensitive RETE graph is given by a single local navigation structure. This graph contains no cyclic dependencies on current results. A modular extension $(X^\Phi, p^\Phi_X)$ of $(N^\Phi, p^\Phi)$ may not add any dependencies between nodes in $V^{N^\Phi}$ and hence cannot create any cycles. Since $O$ contains all nodes in $V^{N^\Phi}$ and no node execution in $O$ can impact the target result set of a previous node, after executing $O = toposort(N^\Phi)^{-1}$ with some starting configuration $\mathcal{C}_0$ for $(X^\Phi, p^\Phi_X)$, the resulting configuration $\mathcal{C}_1 = execute(O, X^\Phi, H, H_p, \mathcal{C}_0)$ must thus be consistent for all nodes $n \in V^{N^\Phi}$.

We now show that, if it holds that $O_h = order((N^\Phi_h, p^\Phi_h))$ with $(N^\Phi_h, p^\Phi_h) = localize((N_h, p_h))$ is robust under modular extension for any well-formed RETE net $(N_h, p_h)$ with join-tree height less than or equal to $h$, it follows that $O = order((N^\Phi, p^\Phi))$ with $(N^\Phi, p^\Phi) = localize(N, p)$ is robust under modular extension for any well-formed RETE net $(N, p)$ with join-tree height $h+1$.

Since $h + 1 \geq 1$, it follows that $p$ is a join node, which has a left dependency $p_l$, which is the root of a RETE subtree $N_l$, and a right dependency $p_r$, which is the root of a RETE subtree $N_r$. Hence, $(N^\Phi, p^\Phi) = (N^\Phi_{\bowtie} \cup N^\Phi_l \cup N^\Phi_r \cup RPS_l \cup RPS_r, p^\Phi)$, with $(N^\Phi_l, p^\Phi_l) = localize((N_l, p_l))$, $(N^\Phi_r, p^\Phi_r) = localize((N_r, p_r))$, $N^\Phi_{\bowtie}$ a RETE net fragment consisting of a marking-sensitive join $p^\Phi$ with left dependency $p^\Phi_l$ and right dependency $p^\Phi_r$, $RPS_l = RPS(N^\Phi_r, p^\Phi_l)$, and $RPS_r = RPS(N^\Phi_l, p^\Phi_r))$. The execution order is thus given by $O = order((N^\Phi, p^\Phi)) = order(RPS_r) \circ order(N^\Phi_l) \circ order(RPS_l) \circ order(N^\Phi_r) \circ order(RPS_r) \circ order(N^\Phi_l) \circ order(N^\Phi_{\bowtie})$.

Given some modular extension $(X^\Phi, p^\Phi_X)$ of $(N^\Phi, p^\Phi)$, we can observe that $(X^\Phi, p^\Phi_X)$ is also a modular extension of both $(N^\Phi_l, p^\Phi_l)$ and $(N^\Phi_r, p^\Phi_r)$ by definition. Furthermore, because $N$ has a join-tree height of $h + 1$, it follows that $N_l$ and $N_r$ must have join-tree height $h_l \leq h$ respectively $h_r \leq h$.

Some initial configuration $\mathcal{C}_0$ for $(X^\Phi, p^\Phi_X)$ may be inconsistent for all of the nodes in $N^\Phi$. Because $RPS_l$ just consists of three consecutive RETE nodes and the modular extension $X^\Phi$ may not add additional dependencies between nodes in $N^\Phi$, it follows that $\mathcal{C}_{0.1} = execute(order(RPS_r), X^\Phi, H, H_p, \mathcal{C}_0)$ is consistent for all nodes in $V^{RPS_r}$.

Since, by the induction hypothesis, we know that $O_l = order((N^\Phi_l, p^\Phi_l))$ is robust under modular extension and $(X^\Phi, p^\Phi_X)$ is a modular extension of $(N^\Phi_l, p^\Phi_l)$ and since no node in $V^{RPS_r}$ depends on a node in $N^\Phi_l$, it follows that after executing $O_l$, the resulting configuration $\mathcal{C}_{0.2} = execute(O_l, X^\Phi, H, H_p, \mathcal{C}_{0.1})$ is consistent for all nodes in $V^{N^\Phi_l} \cup V^{RPS_r}$.

Since no node in $V^{N^\Phi_l} \cup V^{RPS_r}$ depends on any node in $RPS_l$, by the same logic as for $RPS_r$, we know that $\mathcal{C}_{0.3} = execute(order(RPS_l), X^\Phi, H, H_p, \mathcal{C}_{0.2})$ must be consistent for all nodes in $V^{N^\Phi_l} \cup V^{RPS_l} \cup V^{RPS_r}$.

Due to the robustness under modular extension of $O_r = order((N^\Phi_r, p^\Phi_r))$ and because among nodes in $V^{N^\Phi_l} \cup V^{RPS_l} \cup V^{RPS_r}$, only the marking filter node $[\phi > h + 1]_l^\Phi$ in $V^{RPS_r}$ depends on a node in $V^{N^\Phi_r}$, $\mathcal{C}_{0.4} = execute(O_r, X^\Phi, H, H_p, \mathcal{C}_{0.3})$ then must be consistent for all nodes in $V^{N^\Phi} \setminus (\{[\phi > h + 1]_l^\Phi\} \cup V^{N^\Phi_{\bowtie}})$.

By similar logic, $\mathcal{C}_{0.5} = execute(order(RPS_r), X^\Phi, H, H_p, \mathcal{C}_{0.4})$ is then consistent for all nodes in $V^{N^\Phi} \setminus (\{[\cup]^\Phi_v\} \cup V^{N^\Phi_{\bowtie}})$, where $[\cup]^\Phi_v$ is the union node in $N^\Phi_l$ that depends on the marking assignment node $[\phi := h + 1]^\Phi$ in $RPS_r$.

By the semantics of the marking assignment node $[\phi := h + 1]$ in $RPS_r$ and no node in $V^{N^\Phi_l}$ depending on any other node in $order(RPS_l) \circ order(N^\Phi_r) \circ order(RPS_r)$, we know that $\forall n_X \in V^{X^\Phi}: (\exists e \in E^{X^\Phi} : t^{X^\Phi} = n_X \wedge s^{X^\Phi} \in V^{N^\Phi_l}) \Rightarrow \forall (m, \phi) \in (\mathcal{C}_{0.5}(n_X) \setminus \mathcal{C}_{0.4}(n_X)) \cup (\mathcal{C}_{0.4}(n_X) \setminus \mathcal{C}_{0.5}(n_X)) : \phi \leq h + 1$. Furthermore, $\mathcal{C}_{0.4}$ is consistent for all nodes in $V^{N^\Phi_l}$ and it holds that $\forall n \in V^{N^\Phi_l}: \mathcal{C}_{0.5}(n) = \mathcal{C}_{0.4}(n)$. By the induction hypothesis, we know that the execution order $order(N^\Phi_{h-1})$ is robust under modular extension for any localized RETE net $(N^\Phi_{h-1}, p^\Phi_{h-1}) = localize((N_{h-1}, p_{h-1}))$, if $(N_{h-1}, p_{h-1})$ is a well-formed RETE net with join tree height $h' < h$. Finally, $N_l$ must have a join tree height of $h' < h + 1$.

Thus, by Lemma \ref{lem:localized_rete_max_marking}, it follows for $\mathcal{C}_{0.6} = execute(O_l, X^\Phi, H, H_p, \mathcal{C}_{0.5})$ that $\forall (m, \phi) \in (\mathcal{C}_{0.6}(p^\Phi_l) \setminus \mathcal{C}_{0.5}(p^\Phi_l)) \cup (\mathcal{C}_{0.5}(p^\Phi_l) \setminus \mathcal{C}_{0.6}(p^\Phi_l)) : \phi \leq h + 1$. Because no node in $V^{N^\Phi} \setminus V^{N^\Phi_l}$ other than the marking filter node $[\phi > h + 1]_r^\Phi$ can depend on any node in $V^{N^\Phi_l}$, because $O_l$ is robust under modular extension, and by the semantics of $[\phi > h + 1]_r^\Phi$, it then follows that $\mathcal{C}_{0.6}$ is consistent for all nodes in $V^{N^\Phi} \setminus V^{N^\Phi_{\bowtie}}$.

Since $N^\Phi_{\bowtie}$ consists only of a marking-sensitive join node which no other node in $V^{N^\Phi}$ depends on, it follows that $\mathcal{C}_1 = execute(order(N^\Phi_{\bowtie}, X^\Phi, H, H_p, \mathcal{C}_{0.6}) = execute(order((N^\Phi, p^\Phi)), H, \mathcal{C}_0)$ must be consistent for all nodes in $V^{N^\Phi}$. It thus follows that $order((N^\Phi, p^\Phi))$ must be robust under modular extension for any well-formed RETE net $(N, p)$ with join-tree height $h+1$.

From the correctness of the base case and the induction step then follows the correctness of the lemma.
\end{proof}


\begin{lemma} \label{lem:execution_time_forward_navigation}
Let $H$ be a graph, $H_p \subseteq H$, $[v \rightarrow_n w]^\Phi \in V^{N^\Phi}$ a forward navigation node with dependency $n^\Phi_v$, and $\mathcal{C}^\Phi_0$ and $\mathcal{C}^\Phi_1$ configurations such that
$\mathcal{C}^\Phi_0$ is consistent for $[v \rightarrow_n w]^\Phi$,
$\forall (m, \phi_0) \in \mathcal{C}^\Phi_0(n^\Phi_v) : \exists (m, \phi_1) \in \mathcal{C}^\Phi_1(n^\Phi_v) : \phi_0 \leq \phi_1$, and
$\mathcal{C}^\Phi_0([v \rightarrow_n w]^\Phi) = \mathcal{C}^\Phi_1([v \rightarrow_n w]^\Phi)$.
Executing $[v \rightarrow_n w]^\Phi$ via $\mathcal{C}^\Phi_2 = execute([v \rightarrow_n w]^\Phi, N^\Phi, H, H_p, \mathcal{C}^\Phi_1)$ then takes $O(S^+_D + S^+_R + N^\uparrow_D + N^\uparrow_R)$ steps, where
$S^+_D = \sum_{m \in M^+_D} |m|$ with $M^+_D = \{m | \exists (m, \phi_1) \in \mathcal{C}^\Phi_1(n^\Phi_v) \wedge \nexists (m, \phi_0) \in \mathcal{C}^\Phi_0(n^\Phi_v) \}$,
$S^+_R = \sum_{m \in M^+_R} |m|$ with $M^+_R = \{m | \exists (m, \phi_2) \in \mathcal{C}^\Phi_2([v \rightarrow_n w]^\Phi) \wedge \nexists (m, \phi_0) \in \mathcal{C}^\Phi_0([v \rightarrow_n w]^\Phi) \}$,
$N^\uparrow_D = |\mathcal{C}^\Phi_0(n^\Phi_v) \setminus \mathcal{C}^\Phi_1(n^\Phi_v)|$, and
$N^\uparrow_R = |\mathcal{C}^\Phi_0([v \rightarrow_n w]^\Phi) \setminus \mathcal{C}^\Phi_2([v \rightarrow_n w]^\Phi)|$.
\end{lemma}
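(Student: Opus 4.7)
The plan is to split the analysis by the kinds of dependency changes between $\mathcal{C}^\Phi_0$ and $\mathcal{C}^\Phi_1$ that the execution at $[v \rightarrow_n w]^\Phi$ must reconcile. Because $\mathcal{C}^\Phi_0([v \rightarrow_n w]^\Phi) = \mathcal{C}^\Phi_1([v \rightarrow_n w]^\Phi)$ and $\mathcal{C}^\Phi_0$ is consistent for $[v \rightarrow_n w]^\Phi$, every discrepancy between $\mathcal{C}^\Phi_1([v \rightarrow_n w]^\Phi)$ and its target result set traces back to $\mathcal{C}^\Phi_1(n^\Phi_v)$. By hypothesis no tuples are removed from $\mathcal{C}^\Phi_0(n^\Phi_v)$ and no markings decrease, so the dependency changes decompose into (a) newly added matches, i.e.\ $M^+_D$, and (b) marking increases on pre-existing matches, counted by $N^\uparrow_D$.

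For (a), I would assume standard indexing that enumerates outgoing edges at a host graph vertex in time linear in their number, together with a cached map from each dependency match $m_v$ to the set of result matches it has spawned. For each newly added $(m_v, \phi)$, the execution iterates the outgoing edges of the appropriate type at $m_v(v)$, constructs one result tuple per edge with marking $\phi$, inserts it into $\mathcal{C}^\Phi_2([v \rightarrow_n w]^\Phi)$, and records the association. Each produced result tuple has size bounded by $|Q|$ and belongs to $M^+_R$, so the cost for $(m_v, \phi)$ is $O(|m_v| + s_{m_v})$ with $s_{m_v}$ the combined size of result matches spawned by $m_v$. Summing over $M^+_D$ yields $O(S^+_D + S^+_R)$.

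For (b), each marking increase is charged $O(1)$ baseline work, after which the cached map is used to walk the associated result tuples and lift their marking to the new value. The main subtlety I expect is that a single dependency marking increase may touch many result tuples, so the per-increase work is not constant a priori. The observation that resolves this is that the semantics of $[v \rightarrow_n w]^\Phi$ ties the marking of each result tuple $(m, \phi)$ to the marking of the unique dependency tuple whose match equals $m|_{Q_v}$; hence, whenever a dependency marking strictly increases, every associated result tuple strictly shifts in marking and thereby contributes a unique element to $\mathcal{C}^\Phi_0([v \rightarrow_n w]^\Phi) \setminus \mathcal{C}^\Phi_2([v \rightarrow_n w]^\Phi)$. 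The combined cost for (b) is thus $O(N^\uparrow_D + N^\uparrow_R)$, and adding the bound for (a) gives the claimed $O(S^+_D + S^+_R + N^\uparrow_D + N^\uparrow_R)$.
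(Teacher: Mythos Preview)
Your proposal is correct and follows essentially the same two-phase approach as the paper: handle newly added dependency matches via edge enumeration and result insertion in $O(S^+_D + S^+_R)$, then handle marking increases via cached references between dependency and result tuples in $O(N^\uparrow_D + N^\uparrow_R)$. You are in fact slightly more explicit than the paper about why the fan-out in phase (b) is absorbed by $N^\uparrow_R$, namely that every result tuple whose marking shifts contributes a distinct element to $\mathcal{C}^\Phi_0([v \rightarrow_n w]^\Phi) \setminus \mathcal{C}^\Phi_2([v \rightarrow_n w]^\Phi)$.
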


\begin{proof}
Assuming changes to the result set of $n^\Phi_v$ are cached and $\mathcal{C}^\Phi_1$ is available for modification into $\mathcal{C}^\Phi_2$, $[v \rightarrow_n w]^\Phi$ can be executed in two phases.

First execution iterates over the intermediate results in $\mathcal{C}^\Phi_1(n^\Phi_v)$ that represent newly added matches compared to $\mathcal{C}^\Phi_0$, constructs the corresponding intermediate results for $[v \rightarrow_n w]^\Phi$, and adds them to $\mathcal{C}^\Phi_1([v \rightarrow_n w]^\Phi)$. Given a suitable representation of $H$ that allows efficient retrieval of edges adjacent to a node and assuming indexing structures that allow insertion times into result sets linear in match size, this takes $O(S^+_D + S^+_R)$ steps.

Second, execution iterates over the intermediate results in $\mathcal{C}^\Phi_1(n^\Phi_v)$ that represent matches whose marking has been increased compared to $\mathcal{C}^\Phi_0$ and adjusts the marking of the corresponding intermediate results in $\mathcal{C}^\Phi_1([v \rightarrow_n w]^\Phi)$. By modifying intermediate results for $n^\Phi_v$ on a change of marking instead of creating new tuples and by maintaining appropriate references between the intermediate results for $n^\Phi_v$ and for $[v \rightarrow_n w]^\Phi$, this can be achieved in $O(N^\uparrow_D + N^\uparrow_R)$ steps.

This yields the stated complexity.
\end{proof}

\begin{lemma} \label{lem:execution_time_backward_navigation}
Let $H$ be a graph, $H_p \subseteq H$, $[w \leftarrow_n v]^\Phi \in V^{N^\Phi}$ a backward navigation node with dependency $n^\Phi_w$, and $\mathcal{C}^\Phi_0$ and $\mathcal{C}^\Phi_1$ configurations such that
$\mathcal{C}^\Phi_0$ is consistent for $[w \leftarrow_n v]^\Phi$,
$\forall (m, \phi_0) \in \mathcal{C}^\Phi_0(n^\Phi_w) : \exists (m, \phi_1) \in \mathcal{C}^\Phi_1(n^\Phi_w) : \phi_0 \leq \phi_1$, and
$\mathcal{C}^\Phi_0([w \leftarrow_n v]^\Phi) = \mathcal{C}^\Phi_1([w \leftarrow_n v]^\Phi)$.
Executing $[w \leftarrow_n v]^\Phi$ via $\mathcal{C}^\Phi_2 = execute([w \leftarrow_n v]^\Phi, N^\Phi, H, H_p, \mathcal{C}^\Phi_1)$ then takes $O(S^+_D + S^+_R + N^\uparrow_D + N^\uparrow_R)$ steps, where
$S^+_D = \sum_{m \in M^+_D} |m|$ with $M^+_D = \{m | \exists (m, \phi_1) \in \mathcal{C}^\Phi_1(n^\Phi_w) \wedge \nexists (m, \phi_0) \in \mathcal{C}^\Phi_0(n^\Phi_w) \}$,
$S^+_R = \sum_{m \in M^+_R} |m|$ with $M^+_R = \{m | \exists (m, \phi_2) \in \mathcal{C}^\Phi_2([w \leftarrow_n v]^\Phi) \wedge \nexists (m, \phi_0) \in \mathcal{C}^\Phi_0([w \leftarrow_n v]^\Phi) \}$,
$N^\uparrow_D = |\mathcal{C}^\Phi_0(n^\Phi_w) \setminus \mathcal{C}^\Phi_1(n^\Phi_w)|$, and
$N^\uparrow_R = |\mathcal{C}^\Phi_0([w \leftarrow_n v]^\Phi) \setminus \mathcal{C}^\Phi_2([w \leftarrow_n v]^\Phi)|$.
\end{lemma}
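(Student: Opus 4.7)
The plan is to mirror the proof of Lemma \ref{lem:execution_time_forward_navigation} almost verbatim, since the backward navigation node is structurally symmetric to the forward navigation node: both depend on a single marking-sensitive vertex-level input node and produce edge-level matches, only differing in whether the dependency vertex is the source or target of the extracted edge. Under the paper's assumption of reverse edge navigability of $H$, the same algorithmic skeleton applies.

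First I would assume, as in the forward case, that changes between $\mathcal{C}^\Phi_0(n^\Phi_w)$ and $\mathcal{C}^\Phi_1(n^\Phi_w)$ are cached (so that newly added dependency matches and marking-increase updates can be enumerated directly without scanning the whole result set), and that $\mathcal{C}^\Phi_1$ is modified in place into $\mathcal{C}^\Phi_2$. The execution is then carried out in two phases.

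In the first phase, iterate over the cached set of newly added dependency matches in $M^+_D$. For each such $(m_w, \phi_1)$, enumerate all edges $e \in E^H$ with $t^H(e) = m_w(w)$ via reverse navigation, construct the corresponding match $m$ for the associated query subgraph $Q = (\{v, w\}, \{e\}, \{(e, v)\}, \{(e, w)\})$ paired with $\phi_1$, and insert it into $\mathcal{C}^\Phi_2([w \leftarrow_n v]^\Phi)$. With an $H$ representation supporting reverse navigation in time linear in the number of incoming edges and indexing structures with insertion time linear in match size, this phase costs $O(S^+_D + S^+_R)$. In the second phase, iterate over the cached marking-increase entries: for each dependency match whose marking has been increased, locate the corresponding result tuples (via back-references maintained between intermediate results of $n^\Phi_w$ and those of $[w \leftarrow_n v]^\Phi$, with markings updated in place rather than by tuple replacement) and adjust their marking. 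This phase costs $O(N^\uparrow_D + N^\uparrow_R)$. Summing both phases yields the stated bound.

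The main obstacle here is not algorithmic, since the argument is a direct mirror of the forward case; rather, it is the justification that reverse navigation from a target vertex can be performed in time linear in the number of incoming edges. This is however guaranteed by the paper's standing assumption that the host graph representation supports reverse edge navigability, so the argument goes through with no substantive new work.
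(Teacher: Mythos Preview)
Your proposal is correct and follows essentially the same two-phase argument as the paper's proof: cache dependency changes, process newly added matches via (reverse) edge navigation in $O(S^+_D + S^+_R)$, then process marking increases via maintained back-references in $O(N^\uparrow_D + N^\uparrow_R)$. The paper's proof is slightly terser but structurally identical, and your explicit invocation of the reverse-navigability assumption is appropriate.
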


\begin{proof}
Assuming changes to the result set of $n^\Phi_w$ are cached and $\mathcal{C}^\Phi_1$ is available for modification into $\mathcal{C}^\Phi_2$, $[w \leftarrow_n v]^\Phi$ can be executed in two phases.

First execution iterates over the intermediate results in $\mathcal{C}^\Phi_1(n^\Phi_w)$ that represent newly added matches compared to $\mathcal{C}^\Phi_0$, constructs the corresponding intermediate results for $[w \leftarrow_n v]^\Phi$, and adds them to $\mathcal{C}^\Phi_1([w \leftarrow_n v]^\Phi)$. Given a suitable representation of $H$ that allows efficient retrieval of edges adjacent to a node and assuming indexing structures that allow insertion times into result sets linear in match size, this takes $O(S^+_D + S^+_R)$ steps.

Second, execution iterates over the intermediate results in $\mathcal{C}^\Phi_1(n^\Phi_w)$ that represent matches whose marking has been increased compared to $\mathcal{C}^\Phi_0$ and adjusts the marking of the corresponding intermediate results in $\mathcal{C}^\Phi_1([w \leftarrow_n v]^\Phi)$. By modifying intermediate results for $n^\Phi_w$ on a change of marking instead of creating new tuples and by maintaining appropriate references between the intermediate results for $n^\Phi_w$ and for $[w \leftarrow_n v]^\Phi$, this can be achieved in $O(N^\uparrow_D + N^\uparrow_R)$ steps.

This yields the stated complexity.
\end{proof}

\begin{lemma} \label{lem:execution_time_projection}
Let $H$ be a graph, $H_p \subseteq H$, $[\pi_Q]^\Phi \in V^{N^\Phi}$ a marking-sensitive projection node with dependency $n^\Phi_\alpha$, and $\mathcal{C}^\Phi_0$ and $\mathcal{C}^\Phi_1$ configurations such that
$\mathcal{C}^\Phi_0$ is consistent for $[\pi_Q]^\Phi$,
$\forall (m, \phi_0) \in \mathcal{C}^\Phi_0(n^\Phi_\alpha) : \exists (m, \phi_1) \in \mathcal{C}^\Phi_1(n^\Phi_\alpha) : \phi_0 \leq \phi_1$, and
$\mathcal{C}^\Phi_0([\pi_Q]^\Phi) = \mathcal{C}^\Phi_1([\pi_Q]^\Phi)$.
Executing the marking-sensitive projection node $[\pi_Q]^\Phi$ via $\mathcal{C}^\Phi_2 = execute([\pi_Q]^\Phi, N^\Phi, H, H_p, \mathcal{C}^\Phi_1)$ then takes $O(S^+_D + S^+_R + N^\uparrow_D + N^\uparrow_R)$ steps, where
$S^+_D = \sum_{m \in M^+_D} |m|$ with $M^+_D = \{m | \exists (m, \phi_1) \in \mathcal{C}^\Phi_1(n^\Phi_\alpha) \wedge \nexists (m, \phi_0) \in \mathcal{C}^\Phi_0(n^\Phi_\alpha) \}$,
$S^+_R = \sum_{m \in M^+_R} |m|$ with $M^+_R = \{m | \exists (m, \phi_2) \in \mathcal{C}^\Phi_2([\pi_Q]^\Phi) \wedge \nexists (m, \phi_0) \in \mathcal{C}^\Phi_0([\pi_Q]^\Phi) \}$,
$N^\uparrow_D = |\mathcal{C}^\Phi_0(n^\Phi_\alpha) \setminus \mathcal{C}^\Phi_1(n^\Phi_\alpha)|$, and
$N^\uparrow_R = |\mathcal{C}^\Phi_0([\pi_Q]^\Phi) \setminus \mathcal{C}^\Phi_2([\pi_Q]^\Phi)|$.
\end{lemma}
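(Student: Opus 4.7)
(Plan)
The plan is to mirror the argument used for Lemmas \ref{lem:execution_time_forward_navigation} and \ref{lem:execution_time_backward_navigation}, adapting it to the many-to-one nature of projection: multiple tuples $(m, \phi) \in \mathcal{C}^\Phi(n^\Phi_\alpha)$ whose match agrees on $Q$ collapse into a single result tuple whose marking is the maximum of the contributing $\phi$ values. I would assume the usual auxiliary structures: a cache of the changes between $\mathcal{C}^\Phi_0(n^\Phi_\alpha)$ and $\mathcal{C}^\Phi_1(n^\Phi_\alpha)$ that allows the newly added matches and the marking-increased matches to be enumerated without scanning the entire dependency set, a hash-based index on $\mathcal{C}^\Phi_1([\pi_Q]^\Phi)$ keyed by the projected match, and an explicit back-reference from each projected result tuple to the dependency tuples contributing to it.

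First I would handle the newly added dependency matches $M^+_D$. For each $m \in M^+_D$ I compute the projection $m|_Q$ in $O(|m|)$ time; if $m|_Q$ is not yet present in $\mathcal{C}^\Phi_1([\pi_Q]^\Phi)$, I insert a fresh tuple $(m|_Q, \phi)$ at cost $O(|m|_Q|)$, otherwise I update the stored marking in place to the maximum of the current value and $\phi$ at cost $O(1)$. The back-reference is extended accordingly. Summed over $M^+_D$, the projection and lookup work is $O(S^+_D)$, while the one-time insertion cost for each newly materialised projected tuple contributes $O(S^+_R)$, since each $m \in M^+_R$ is inserted exactly once with effort $O(|m|)$.

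Next I would handle the marking increases. I iterate over the $N^\uparrow_D$ dependency tuples whose marking strictly grew between $\mathcal{C}^\Phi_0(n^\Phi_\alpha)$ and $\mathcal{C}^\Phi_1(n^\Phi_\alpha)$, follow the back-reference to the associated projected tuple, compare the new marking against the stored maximum, and, if strictly greater, overwrite the stored marking in place. Each such step is $O(1)$; the total number of result-side updates is exactly $N^\uparrow_R$. Combined with phase one, this yields the claimed bound $O(S^+_D + S^+_R + N^\uparrow_D + N^\uparrow_R)$.

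The main obstacle is ensuring correctness of the max-aggregation without ever scanning the full set of dependency matches contributing to a given projected result; a naive implementation that recomputes the maximum on every dependency change would break the bound. The key observation making the incremental update sound is the monotonicity hypothesis $\forall (m, \phi_0) \in \mathcal{C}^\Phi_0(n^\Phi_\alpha) : \exists (m, \phi_1) \in \mathcal{C}^\Phi_1(n^\Phi_\alpha) : \phi_0 \leq \phi_1$, together with $\mathcal{C}^\Phi_0$ being consistent for $[\pi_Q]^\Phi$: since no contributing marking ever decreases and no contributing dependency match is removed, the running maximum can be maintained by a single comparison per changed marking, exactly as required for the stated complexity.
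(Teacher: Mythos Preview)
Your proposal is correct and follows essentially the same two-phase approach as the paper: first handle newly added dependency matches by projecting, looking up, and inserting or max-updating (yielding $O(S^+_D + S^+_R)$), then handle marking increases via back-references in $O(N^\uparrow_D + N^\uparrow_R)$. Your explicit discussion of why the monotonicity hypothesis makes the running-maximum maintenance sound is more detailed than the paper's version but captures exactly the intended reasoning.
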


\begin{proof}
Assuming changes to the result set of $n^\Phi_\alpha$ are cached and $\mathcal{C}^\Phi_1$ is available for modification into $\mathcal{C}^\Phi_2$, $[\pi_Q]^\Phi$ can be executed in two phases.

First execution iterates over the intermediate results in $\mathcal{C}^\Phi_1(n^\Phi_\alpha)$ that represent newly added matches compared to $\mathcal{C}^\Phi_0$, constructs the corresponding intermediate results for $[\pi_Q]^\Phi$, and adds them to $\mathcal{C}^\Phi_1([\pi_Q]^\Phi)$ if they are not yet present. Assuming indexing structures that allow presence check and insertion times linear in match size, this takes $O(S^+_D + S^+_R)$ steps.

Second, execution iterates over the intermediate results in $\mathcal{C}^\Phi_1(n^\Phi_\alpha)$ that represent matches whose marking has been increased compared to $\mathcal{C}^\Phi_0$ and adjusts the marking of the corresponding intermediate results in $\mathcal{C}^\Phi_1([\pi_Q]^\Phi)$, if its current marking is lower. By modifying intermediate results for $n^\Phi_\alpha$ on a change of marking instead of creating new tuples and by maintaining appropriate references between the intermediate results for $n^\Phi_\alpha$ and for $[\pi_Q]^\Phi$, this can be achieved in $O(N^\uparrow_D + N^\uparrow_R)$ steps.

This yields the stated complexity.
\end{proof}

\begin{lemma} \label{lem:execution_time_union}
Let $H$ be a graph, $H_p \subseteq H$, $[\cup]^\Phi \in V^{N^\Phi}$ a marking-sensitive union node with dependencies $N^\Phi_\alpha$, and $\mathcal{C}^\Phi_0$ and $\mathcal{C}^\Phi_1$ configurations such that
$\mathcal{C}^\Phi_0$ is consistent for $[\cup]^\Phi$,
$\forall n^\Phi_\alpha \in N^\Phi_\alpha : \forall (m, \phi_0) \in \mathcal{C}^\Phi_0(n^\Phi_\alpha) : \exists (m, \phi_1) \in \mathcal{C}^\Phi_1(n^\Phi_\alpha) : \phi_0 \leq \phi_1$, and
$\mathcal{C}^\Phi_0([\cup]^\Phi) = \mathcal{C}^\Phi_1([\cup]^\Phi)$.
Executing $[\cup]^\Phi$ via $\mathcal{C}^\Phi_2 = execute([\cup]^\Phi, N^\Phi, H, H_p, \mathcal{C}^\Phi_1)$ then takes $O(S^+_D + S^+_R + N^\uparrow_D + N^\uparrow_R)$ steps, where
$S^+_D = \sum_{n^\Phi_\alpha \in N^\Phi_\alpha}\sum_{m \in M^+_{n^\Phi_\alpha}} |m|$ with $M^+_{n^\Phi_\alpha} = \{m | \exists (m, \phi_1) \in \mathcal{C}^\Phi_1(n^\Phi_\alpha) \wedge \nexists (m, \phi_0) \in \mathcal{C}^\Phi_0(n^\Phi_\alpha) \}$,
$S^+_R = \sum_{m \in M^+_R} |m|$ with $M^+_R = \{m | \exists (m, \phi_2) \in \mathcal{C}^\Phi_2([\cup]^\Phi) \wedge \nexists (m, \phi_0) \in \mathcal{C}^\Phi_0([\cup]^\Phi) \}$,
$N^\uparrow_D = \sum_{n^\Phi_\alpha \in N^\Phi_\alpha}|\mathcal{C}^\Phi_0(n^\Phi_\alpha) \setminus \mathcal{C}^\Phi_1(n^\Phi_\alpha)|$, and
$N^\uparrow_R = |\mathcal{C}^\Phi_0([\cup]^\Phi) \setminus \mathcal{C}^\Phi_2([\cup]^\Phi)|$.
\end{lemma}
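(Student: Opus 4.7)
(Plan)
The plan is to mirror the two-phase execution argument used for the projection node in Lemma \ref{lem:execution_time_projection}, adapting it to the fact that a union node may receive the same match from several dependencies and must track the maximum marking over all of them. I assume that changes to the result sets of the dependencies in $N^\Phi_\alpha$ are cached (so we can enumerate exactly the newly added tuples and the tuples whose marking increased in time linear in their number) and that $\mathcal{C}^\Phi_1$ is available for in-place modification into $\mathcal{C}^\Phi_2$. I further assume indexing structures on $\mathcal{C}^\Phi_1([\cup]^\Phi)$ that support lookup, insertion, and marking update in time linear in the match size, and that each tuple $(m, \phi)$ stored at $[\cup]^\Phi$ maintains a reference to its supporting tuples in the dependencies (and vice versa), so that the contribution of a single dependency to the maximum marking of $m$ at $[\cup]^\Phi$ can be found in $O(1)$.

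First I would handle newly added dependency matches. Execution iterates over $\bigcup_{n^\Phi_\alpha \in N^\Phi_\alpha} M^+_{n^\Phi_\alpha}$ via the cached changes. For each new tuple $(m, \phi)$ in some $\mathcal{C}^\Phi_1(n^\Phi_\alpha)$, we look up $m$ in $\mathcal{C}^\Phi_1([\cup]^\Phi)$: if absent, we insert $(m, \phi)$, contributing to $S^+_R$; if present with marking $\phi'$, we update the marking to $\max(\phi, \phi')$. In either case the step cost is $O(|m|)$ for the lookup/insertion/update, and in addition $O(|m|)$ for installing the cross-reference between the dependency tuple and the union-node tuple. Summed across all new dependency tuples this gives $O(S^+_D)$, and the only insertions into $\mathcal{C}^\Phi_1([\cup]^\Phi)$ caused here are accounted for by $S^+_R$ since by assumption $\mathcal{C}^\Phi_0$ was consistent, so any match newly present in $\mathcal{C}^\Phi_2([\cup]^\Phi)$ and absent from $\mathcal{C}^\Phi_0([\cup]^\Phi)$ corresponds to at least one new dependency tuple.

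Next I would process the marking increases. Execution iterates over $\bigcup_{n^\Phi_\alpha \in N^\Phi_\alpha} (\mathcal{C}^\Phi_0(n^\Phi_\alpha) \setminus \mathcal{C}^\Phi_1(n^\Phi_\alpha))$, again via the cached changes; each such tuple $(m, \phi_0)$ has been replaced by some $(m, \phi_1)$ with $\phi_1 > \phi_0$ in the new configuration. Using the maintained cross-reference we locate the corresponding tuple $(m, \phi_\cup)$ in $\mathcal{C}^\Phi_1([\cup]^\Phi)$ in $O(1)$ and check whether $\phi_1 > \phi_\cup$. If so, we update the stored marking of $m$ at $[\cup]^\Phi$ to $\phi_1$; otherwise nothing changes at the union node. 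Each such step incurs $O(1)$ work per visited dependency tuple, plus $O(1)$ per resulting marking update at $[\cup]^\Phi$, summing to $O(N^\uparrow_D + N^\uparrow_R)$.

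The main obstacle I expect is arguing that no additional bookkeeping cost is hidden in the maximum aggregation: when the marking of a single dependency tuple decreases from the union's perspective (by being overtaken elsewhere), or when several dependencies simultaneously contribute to the same $m$, we must not iterate over all dependency supports of $m$ to recompute the max. This is why it is essential to store, alongside each tuple $(m, \phi_\cup)$ in $\mathcal{C}^\Phi_1([\cup]^\Phi)$, only the current maximum marking and a reference to one witnessing dependency tuple, and to exploit the monotonicity assumption $\phi_0 \le \phi_1$ on dependencies: since markings in dependencies never decrease between $\mathcal{C}^\Phi_0$ and $\mathcal{C}^\Phi_1$, the maximum at $[\cup]^\Phi$ can never strictly decrease either, so we never need to scan the remaining dependencies to find a replacement witness. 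With this invariant, the two phases together yield the claimed $O(S^+_D + S^+_R + N^\uparrow_D + N^\uparrow_R)$ bound.
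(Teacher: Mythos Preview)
Your proposal is correct and follows essentially the same two-phase strategy as the paper: first process newly added dependency matches via lookup-then-insert-or-update in the union's indexed result set (accounting for $S^+_D + S^+_R$), then process marking increases via maintained cross-references between dependency tuples and union tuples (accounting for $N^\uparrow_D + N^\uparrow_R$). Your explicit discussion of why the monotonicity hypothesis guarantees that the maximum at $[\cup]^\Phi$ never decreases, and hence no rescanning of dependency supports is needed, is a helpful elaboration that the paper leaves implicit, but it does not change the underlying argument.
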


\begin{proof}
Assuming changes to the result sets of nodes in $N^\Phi_\alpha$ are cached and $\mathcal{C}^\Phi_1$ is available for modification into $\mathcal{C}^\Phi_2$, $[\cup]^\Phi$ can be executed in two phases.

First, for each $n^\Phi_\alpha \in N^\Phi_\alpha$, execution iterates over the intermediate results in $\mathcal{C}^\Phi_1(n^\Phi_\alpha)$ that represent newly added matches compared to $\mathcal{C}^\Phi_0$, and retrieves the corresponding intermediate result from $\mathcal{C}^\Phi_1([\cup]^\Phi)$ if there is any. If there is one, execution then checks whether the newly added dependency match has a higher marking and potentially modifies the marking in $\mathcal{C}^\Phi_1([\cup]^\Phi)$. Otherwise, the intermediate result for $[\cup]^\Phi$ corresponding to the newly added match is constructed and added to $\mathcal{C}^\Phi_1([\cup]^\Phi)$. Assuming indexing structures that allow retrieval and insertion times linear in match size, this takes $O(S^+_D + S^+_R)$ steps for all $n^\Phi_\alpha$.

Second, for each $n^\Phi_\alpha \in N^\Phi_\alpha$, execution iterates over the intermediate results in $\mathcal{C}^\Phi_1(n^\Phi_\alpha)$ that represent matches whose marking has been increased compared to $\mathcal{C}^\Phi_0$ and adjusts the marking of the corresponding intermediate results in $\mathcal{C}^\Phi_1([\cup]^\Phi)$, if its current marking is lower. By modifying intermediate results for $n^\Phi_\alpha$ on a change of marking instead of creating new tuples and by maintaining appropriate references between the intermediate results for $n^\Phi_\alpha$ and for $[\cup]^\Phi$, this can be achieved in $O(N^\uparrow_D + N^\uparrow_R)$ steps for all $n^\Phi_\alpha$.

This yields the stated complexity.
\end{proof}

\begin{lemma} \label{lem:execution_time_marking_filter}
Let $H$ be a graph, $H_p \subseteq H$, $[\phi > x]^\Phi \in V^{N^\Phi}$ a marking filter node with dependency $n^\Phi_\alpha$, and $\mathcal{C}^\Phi_0$ and $\mathcal{C}^\Phi_1$ configurations such that
$\mathcal{C}^\Phi_0$ is consistent for $[\phi > x]^\Phi$,
$\forall (m, \phi_0) \in \mathcal{C}^\Phi_0(n^\Phi_\alpha) : \exists (m, \phi_1) \in \mathcal{C}^\Phi_1(n^\Phi_\alpha) : \phi_0 \leq \phi_1$, and
$\mathcal{C}^\Phi_0([\phi > x]^\Phi) = \mathcal{C}^\Phi_1([\phi > x]^\Phi)$.
Executing $[\phi > x]^\Phi$ via $\mathcal{C}^\Phi_2 = execute([\phi > x]^\Phi, N^\Phi, H, H_p, \mathcal{C}^\Phi_1)$ then takes $O(S^+_D + S^+_R + N^\uparrow_D + N^\uparrow_R)$ steps, where
$S^+_D = \sum_{m \in M^+_D} |m|$ with $M^+_D = \{m | \exists (m, \phi_1) \in \mathcal{C}^\Phi_1(n^\Phi_\alpha) \wedge \nexists (m, \phi_0) \in \mathcal{C}^\Phi_0(n^\Phi_\alpha) \}$,
$S^+_R = \sum_{m \in M^+_R} |m|$ with $M^+_R = \{m | \exists (m, \phi_2) \in \mathcal{C}^\Phi_2([\phi > x]^\Phi) \wedge \nexists (m, \phi_0) \in \mathcal{C}^\Phi_0([\phi > x]^\Phi) \}$,
$N^\uparrow_D = |\mathcal{C}^\Phi_0(n^\Phi_\alpha) \setminus \mathcal{C}^\Phi_1(n^\Phi_\alpha)|$, and
$N^\uparrow_R = |\mathcal{C}^\Phi_0([\phi > x]^\Phi) \setminus \mathcal{C}^\Phi_2([\phi > x]^\Phi)|$.
\end{lemma}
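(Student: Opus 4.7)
The plan is to follow the same two-phase execution template used in Lemmas \ref{lem:execution_time_forward_navigation}--\ref{lem:execution_time_union}, adapted to the filtering semantics of $[\phi > x]^\Phi$. I would assume, as in the previous proofs, that changes to $\mathcal{C}^\Phi_1(n^\Phi_\alpha)$ relative to $\mathcal{C}^\Phi_0(n^\Phi_\alpha)$ are cached and partitioned into (i)~newly added dependency tuples (those whose match does not occur in $\mathcal{C}^\Phi_0(n^\Phi_\alpha)$, contributing to $S^+_D$) and (ii)~tuples whose marking has strictly increased (contributing to $N^\uparrow_D$), and that $\mathcal{C}^\Phi_1$ is available for in-place modification into $\mathcal{C}^\Phi_2$. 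Furthermore, I would maintain cross-references between each dependency tuple in $\mathcal{C}^\Phi_1(n^\Phi_\alpha)$ and the (at most one) tuple in $\mathcal{C}^\Phi_1([\phi > x]^\Phi)$ that carries the same match, so that lookups and marking updates for a tuple take $O(1)$ plus the cost of one match-size insertion if a tuple must actually be created.

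In the first phase, I would iterate over the cached newly added dependency tuples $(m, \phi_1)$, check the filter condition $\phi_1 > x$ in constant time, and, if it holds, insert $(m, \phi_1)$ into $\mathcal{C}^\Phi_1([\phi > x]^\Phi)$ using indexing structures that allow insertion in time linear in $|m|$. Each such inserted tuple contributes a new match to $M^+_R$, and each newly added dependency tuple is touched once. The total cost of this phase is therefore in $O(S^+_D + S^+_R)$, since per-tuple work is $O(|m|)$ and every new result tuple produced in this phase is a tuple not present in $\mathcal{C}^\Phi_0([\phi > x]^\Phi)$ with the same match as its originating dependency tuple.

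In the second phase, I would iterate over the cached dependency tuples whose marking has been increased. For each such tuple $(m, \phi_1)$, using the maintained cross-reference, I would either update the marking of the already-present tuple in $\mathcal{C}^\Phi_1([\phi > x]^\Phi)$ in place, or, if no such tuple exists (because the match previously failed the filter but $\phi_1 > x$ now holds), insert a new tuple. The per-tuple work is $O(1)$ for the in-place updates, while any insertions performed here are again bounded by the size of the affected matches and are already accounted for in $S^+_R$, since such newly inserted tuples carry matches absent from $\mathcal{C}^\Phi_0([\phi > x]^\Phi)$. The remaining bookkeeping work in this phase is in $O(N^\uparrow_D + N^\uparrow_R)$, because $N^\uparrow_R$ counts exactly those result tuples whose marking is strictly greater in $\mathcal{C}^\Phi_2$ than in $\mathcal{C}^\Phi_0$, each of which is produced by exactly one increased-marking dependency tuple via the cross-reference.

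The main subtlety, and thus the main obstacle, will be showing that no work is double-counted across the two phases, in particular that result-set insertions triggered by increased markings are correctly subsumed into $S^+_R$ rather than $N^\uparrow_R$, and that the consistency precondition on $\mathcal{C}^\Phi_0$ and the monotonicity precondition on $\mathcal{C}^\Phi_1$ together guarantee that every tuple in $\mathcal{C}^\Phi_0([\phi > x]^\Phi)$ corresponds to a still-present dependency tuple so that the algorithm does not need to scan the full old result set. Combining both phases then yields the claimed bound $O(S^+_D + S^+_R + N^\uparrow_D + N^\uparrow_R)$.
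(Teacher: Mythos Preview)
Your proposal is correct and follows essentially the same two-phase template as the paper's proof. The only notable difference is that the paper, in phase~1, additionally places newly added tuples that \emph{fail} the filter into a separate storage with pre-located insertion positions, so that a later threshold-crossing insertion in phase~2 costs $O(1)$; you instead charge such insertions directly to $S^+_R$, which is equally valid and arguably cleaner accounting, yielding the same overall bound.
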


\begin{proof}
Assuming changes to the result set of $n^\Phi_\alpha$ are cached and $\mathcal{C}^\Phi_1$ is available for modification into $\mathcal{C}^\Phi_2$, $[\phi > x]^\Phi$ can be executed in two phases.

First execution iterates over the intermediate results in $\mathcal{C}^\Phi_1(n^\Phi_\alpha)$ that represent newly added matches compared to $\mathcal{C}^\Phi_0$ and constructs the corresponding intermediate results for $[\phi > x]^\Phi$. Each such constructed result with a marking higher than $x$ is then added to $\mathcal{C}^\Phi_1([\phi > x]^\Phi)$. Constructed tuples with lower marking are stored separatly to potentially be added later, already locating their insertion position into $\mathcal{C}^\Phi_1([\phi > x]^\Phi)$. Assuming indexing structures that allow insertion times linear in match size, this takes $O(S^+_D + S^+_R)$ steps.

Second, execution iterates over the intermediate results in $\mathcal{C}^\Phi_1(n^\Phi_\alpha)$ that represent matches whose marking has been increased compared to $\mathcal{C}^\Phi_0$ and adjusts the marking of the corresponding intermediate results in $\mathcal{C}^\Phi_1([\phi > x]^\Phi)$ or in the separate storage. If an intermediate result in the separate storage now has a marking higher than $x$, it is added into its previously located position in $\mathcal{C}^\Phi_1([\phi > x]^\Phi)$. By modifying intermediate results for $n^\Phi_\alpha$ on a change of marking instead of creating new tuples and by maintaining appropriate references between the intermediate results for $n^\Phi_\alpha$ and for $[\phi > x]^\Phi$, this can be achieved in $O(N^\uparrow_D + N^\uparrow_R)$ steps.

This yields the stated complexity.
\end{proof}

\begin{lemma} \label{lem:execution_time_marking_assignment}
Let $H$ be a graph, $H_p \subseteq H$, $[\phi := x]^\Phi \in V^{N^\Phi}$ a marking filter node with dependency $n^\Phi_\alpha$, and $\mathcal{C}^\Phi_0$ and $\mathcal{C}^\Phi_1$ configurations such that
$\mathcal{C}^\Phi_0$ is consistent for $[\phi := x]^\Phi$,
$\forall (m, \phi_0) \in \mathcal{C}^\Phi_0(n^\Phi_\alpha) : \exists (m, \phi_1) \in \mathcal{C}^\Phi_1(n^\Phi_\alpha) : \phi_0 \leq \phi_1$, and
$\mathcal{C}^\Phi_0([\phi := x]^\Phi) = \mathcal{C}^\Phi_1([\phi := x]^\Phi)$.
Executing $[\phi := x]^\Phi$ via $\mathcal{C}^\Phi_2 = execute([\phi := x]^\Phi, N^\Phi, H, H_p, \mathcal{C}^\Phi_1)$ then takes $O(S^+_D + S^+_R)$ steps, where
$S^+_D = \sum_{m \in M^+_D} |m|$ with $M^+_D = \{m | \exists (m, \phi_1) \in \mathcal{C}^\Phi_1(n^\Phi_\alpha) \wedge \nexists (m, \phi_0) \in \mathcal{C}^\Phi_0(n^\Phi_\alpha) \}$ and
$S^+_R = \sum_{m \in M^+_R} |m|$ with $M^+_R = \{m | \exists (m, \phi_2) \in \mathcal{C}^\Phi_2([\phi > x]^\Phi) \wedge \nexists (m, \phi_0) \in \mathcal{C}^\Phi_0([\phi > x]^\Phi) \}$.
\end{lemma}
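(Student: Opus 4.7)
My plan is to follow the template established by the preceding execution-time lemmata (Lemmata \ref{lem:execution_time_forward_navigation} through \ref{lem:execution_time_marking_filter}) but to exploit the fact that the marking assignment node has a particularly trivial dependence on its input markings. The central observation is that, by the definition of $\resultslocalnop$ for $[\phi := x]^\Phi$, the output tuple for any dependency match $m$ is always $(m, x)$, independently of the marking attached to $m$ in $\mathcal{C}^\Phi(n^\Phi_\alpha)$. Consequently, an increase in the marking of an already-present tuple in $\mathcal{C}^\Phi_1(n^\Phi_\alpha)$ cannot cause any change in the result set, which is precisely why the claimed bound omits the $N^\uparrow_D$ and $N^\uparrow_R$ terms that appeared in the earlier lemmata.

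Under the standing assumption that changes to $\mathcal{C}^\Phi_1(n^\Phi_\alpha)$ relative to $\mathcal{C}^\Phi_0(n^\Phi_\alpha)$ are cached and that $\mathcal{C}^\Phi_1$ is available for in-place modification into $\mathcal{C}^\Phi_2$, execution proceeds in a single phase. I would iterate over the cached newly added dependency matches, i.e.\ those in $M^+_D$; for each such $m$ I would construct the tuple $(m, x)$ and insert it into $\mathcal{C}^\Phi_1([\phi := x]^\Phi)$ if it is not already present, maintaining the usual reference between the dependency tuple and its image to keep later deletions and re-markings cheap. Matches already present in $\mathcal{C}^\Phi_0(n^\Phi_\alpha)$ whose marking has merely increased are skipped entirely. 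Since $\mathcal{C}^\Phi_0$ is consistent for $[\phi := x]^\Phi$ and the target result set is fully determined by the match components of $\mathcal{C}^\Phi_1(n^\Phi_\alpha)$, every insertion performed corresponds to exactly one match in $M^+_R$, and every match in $M^+_R$ arises from exactly one such insertion.

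Assuming the same indexing structures as in the earlier lemmata, so that both membership tests and insertions into result sets take time linear in match size, processing a single newly added dependency match $m$ costs $O(|m|)$ for the lookup plus $O(|m|)$ for the possible insertion. Summing over $m \in M^+_D$ and over the resulting insertions in $M^+_R$ gives total effort in $O(S^+_D + S^+_R)$, which matches the claim.

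I do not anticipate any real obstacle here; the only point that deserves explicit mention is the justification for having a single execution phase rather than two, which follows directly from the constant output marking. This makes the lemma the simplest member of the execution-time family and fits cleanly as the final supplementary lemma.
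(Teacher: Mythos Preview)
Your proposal is correct and follows essentially the same approach as the paper: a single pass over the newly added dependency matches, inserting $(m,x)$ for each, together with the observation that marking-only changes in the dependency leave the target result set unchanged and can therefore be skipped. The paper's proof is slightly terser (it omits the membership test and the reference-maintenance remark), but the argument and the justification for dropping the $N^\uparrow$ terms are identical.
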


\begin{proof}
Assuming changes to the result set of $n^\Phi_\alpha$ are cached and $\mathcal{C}^\Phi_1$ is available for modification into $\mathcal{C}^\Phi_2$, $[\phi := x]^\Phi$ can be executed as follows:

Execution simply iterates over the intermediate results in $\mathcal{C}^\Phi_1(n^\Phi_\alpha)$ that represent newly added matches compared to $\mathcal{C}^\Phi_0$ and constructs the corresponding intermediate results for $[\phi := x]^\Phi$. Each such constructed result is then added to $\mathcal{C}^\Phi_1([\phi := x]^\Phi)$. Assuming indexing structures that allow insertion times linear in match size, this takes $O(S^+_D + S^+_R)$ steps.

Modifications to $\mathcal{C}^\Phi_1(n^\Phi_\alpha)$ that add no new matches but only change the marking of already present matches do not affect the target result set of $[\phi := x]^\Phi$ and can hence be ignored.

This yields the stated complexity.
\end{proof}

\begin{lemma} \label{lem:execution_time_vertex_input}
Let $H$ be a graph, $H_p \subseteq H$, $[v]^\Phi \in V^{N^\Phi}$ a marking-sensitive vertex input node, and $\mathcal{C}^\Phi_0$ a configuration such that
$\mathcal{C}^\Phi_0([v]^\Phi) = \emptyset$.
Executing $[v]^\Phi$ via $\mathcal{C}^\Phi_1 = execute([v]^\Phi, N^\Phi, H, H_p, \mathcal{C}^\Phi_0)$ then takes $O(S^+_R)$ steps, where
$S^+_R = \sum_{m \in M^+_R} |m|$ with $M^+_R = \{m | \exists (m, \phi_1) \in \mathcal{C}^\Phi_1([v]^\Phi)\}$.
\end{lemma}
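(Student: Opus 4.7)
The plan is to mirror the proof strategy used for the edge input node in Theorem \ref{the:complexity_time_original} and for the other marking-sensitive node lemmata (e.g., Lemma \ref{lem:execution_time_marking_assignment}), since a marking-sensitive vertex input node is even simpler: it has no RETE dependencies and its target result set is defined purely in terms of $H_p$. Because $\mathcal{C}^\Phi_0([v]^\Phi) = \emptyset$, the execution must produce the entire target result set from scratch; no incremental marking bookkeeping is required, which is why the bound involves only $S^+_R$ and not any marking-change term.

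The key steps I would carry out are: First, invoke the semantics of the marking-sensitive vertex input node to observe that $\resultslocal{[v]^\Phi}{N^\Phi}{H}{H_p}{\mathcal{C}^\Phi_0} = \{(m, \infty) \mid m \in \allmatches{Q_v}{H_p}\}$, where $Q_v = (\{v\}, \emptyset, \emptyset, \emptyset)$. Second, assume a representation of $H_p$ that permits enumeration of the vertices in $V^{H_p}$ whose type matches that of $v$ in time linear in the number of enumerated vertices (analogous to the assumption for edge enumeration made in the proof of Theorem \ref{the:complexity_time_original}). Third, describe execution as a single pass that enumerates each such vertex $u$, constructs the trivial match $m : Q_v \to H_p$ with $m^V(v) = u$ and $m^E = \emptyset$, and inserts the tuple $(m, \infty)$ into $\mathcal{C}^\Phi_0([v]^\Phi)$. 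Assuming indexing structures that allow insertion in time linear in the size of the inserted match, each per-vertex step costs $O(|m|)$.

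Finally, I would sum the per-iteration cost over $M^+_R$. Since for each produced match $m$ we have $|m| = |V^{Q_v}| + |E^{Q_v}| = 1$, the total work is $O\bigl(\sum_{m \in M^+_R} |m|\bigr) = O(S^+_R)$, as claimed. I would also note explicitly that storing the marking $\infty$ is $O(1)$ by the standing assumption (also used in the corollary following Theorem \ref{the:upper_bound_configuration_size_appendix}), so it does not affect the asymptotic bound.

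The proof is essentially routine; the only potential obstacle is making the computational model precise enough that the enumeration-and-insertion cost really is linear in the output size. I would handle this by explicitly stating the same indexing assumptions already used in Theorem \ref{the:complexity_time_original} and Lemmata \ref{lem:execution_time_forward_navigation}--\ref{lem:execution_time_marking_assignment}, keeping the argument consistent with the rest of the appendix rather than introducing new machinery.
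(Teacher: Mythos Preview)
Your proposal is correct and follows essentially the same approach as the paper's proof: enumerate the matching vertices from $H_p$, construct the trivial matches, and insert them under the standard assumptions on efficient vertex retrieval and linear-time insertion. Your version is slightly more detailed (noting $|m|=1$ and the $O(1)$ cost of storing the marking), but the argument is the same.
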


\begin{proof}
Assuming $\mathcal{C}^\Phi_0$ is available for modification into $\mathcal{C}^\Phi_1$, $[v]^\Phi$ can be executed by simply retrieving the matching vertices from $H_p$, constructing the trivial matches, and adding them to $\mathcal{C}^\Phi_0([v]^\Phi)$. Assuming a represnetation of $H_p$ that allows efficient retrieval of the matching vertices and indexing structures that allow insertion times linear in match size, this takes $O(S^+_R)$ steps.
\end{proof}

\begin{lemma} \label{lem:execution_time_vertex_input_consistent}
Let $H$ be a graph, $H_p \subseteq H$, $[v]^\Phi \in V^{N^\Phi}$ a marking-sensitive vertex input node, and $\mathcal{C}^\Phi_0$ a configuration such that
$\mathcal{C}^\Phi_0([v]^\Phi)$ is consistent for $[v]^\Phi$.
Executing $[v]^\Phi$ via $\mathcal{C}^\Phi_1 = execute([v]^\Phi, N^\Phi, H, H_p, \mathcal{C}^\Phi_0)$ then takes $O(1)$ steps.
\end{lemma}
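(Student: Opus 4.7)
The plan is to exploit the fact that a marking-sensitive vertex input node has no RETE dependencies: its target result set $\resultslocal{[v]^\Phi}{N^\Phi}{H}{H_p}{\mathcal{C}^\Phi_0}$ is determined solely by $H_p$. Consequently, the execution procedure never has to iterate over dependency result sets the way it does in Lemmata \ref{lem:execution_time_forward_navigation}--\ref{lem:execution_time_vertex_input}; it only has to reconcile $\mathcal{C}^\Phi_0([v]^\Phi)$ with whatever modifications to $H_p$ have accumulated since the last time $[v]^\Phi$ was executed.

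First, I would invoke the same implementation assumption already used throughout the preceding execution-time lemmata, namely that host graph (and relevant subgraph) modifications are tracked via a cached change list attached to each input node, so that an execution of $[v]^\Phi$ only processes pending creations and deletions of vertices of the appropriate type in $H_p$. Second, I would argue that consistency of $\mathcal{C}^\Phi_0$ for $[v]^\Phi$, i.e.\ $\mathcal{C}^\Phi_0([v]^\Phi) = \resultslocal{[v]^\Phi}{N^\Phi}{H}{H_p}{\mathcal{C}^\Phi_0}$, together with the semantics of the marking-sensitive vertex input node, forces this pending change list to be empty: any unprocessed insertion or deletion in $H_p$ would create a discrepancy between the cached result set and the target result set, contradicting consistency.

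Third, with an empty change list, the execute call reduces to a single constant-time check that no work needs to be done (e.g., testing whether the cached change queue is empty and returning $\mathcal{C}^\Phi_0$ unchanged), which is $O(1)$. I would then conclude $\mathcal{C}^\Phi_1 = \mathcal{C}^\Phi_0$ and the claimed bound.

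The only subtle point is justifying the ``empty change list'' step rigorously, since the formal model in the paper does not explicitly mention change caching; however, this is exactly the same implicit implementation assumption already appealed to in Lemmata \ref{lem:execution_time_forward_navigation}--\ref{lem:execution_time_vertex_input}, so stating it once and reusing it should suffice. No induction or structural analysis of $N^\Phi$ is needed, making this the easiest of the execution-time lemmata.
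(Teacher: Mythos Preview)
Your proposal is correct and reaches the same conclusion as the paper, but it is more elaborate than necessary. The paper's proof is a single sentence: since $\mathcal{C}^\Phi_0$ is already consistent for $[v]^\Phi$, the current result set already equals the target result set, so $\mathcal{C}^\Phi_0$ can simply be reused as $\mathcal{C}^\Phi_1$ with no modifications, giving $O(1)$. You route the argument through an explicit change-caching mechanism and argue the pending change list must be empty; this is not wrong, but it introduces machinery the paper does not invoke here, since the semantic definition of consistency alone already guarantees that execute has nothing to do.
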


\begin{proof}
Assuming $\mathcal{C}^\Phi_0$ can be reused as $\mathcal{C}^\Phi_1$, no changes have to be made to $\mathcal{C}^\Phi_0$, yielding an execution time in $O(1)$.
\end{proof}


\begin{lemma} \label{lem:monotonicity_forward_navigation}
Let $H$ be a graph, $H_p \subseteq H$, $[v \rightarrow_n w]^\Phi \in V^{N^\Phi}$ a forward navigation node with dependency $n^\Phi_v$, and $\mathcal{C}^\Phi_0$ and $\mathcal{C}^\Phi_1$ configurations such that
$\mathcal{C}^\Phi_0$ is consistent for $[v \rightarrow_n w]^\Phi$ and
$\forall (m, \phi_0) \in \mathcal{C}^\Phi_0(n^\Phi_v) : \exists (m, \phi_1) \in \mathcal{C}^\Phi_1(n^\Phi_v) : \phi_0 \leq \phi_1$.
It then holds for the configuration $\mathcal{C}^\Phi_2 = execute([v \rightarrow_n w]^\Phi, N^\Phi, H, H_p, \mathcal{C}^\Phi_1)$ that
$\forall (m, \phi_0) \in \mathcal{C}^\Phi_0([v \rightarrow_n w]^\Phi) : \exists (m, \phi_2) \in \mathcal{C}^\Phi_2([v \rightarrow_n w]^\Phi) : \phi_0 \leq \phi_2$.
\end{lemma}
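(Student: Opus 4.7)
The plan is to unfold the definitions directly. Given any tuple $(m, \phi_0) \in \mathcal{C}^\Phi_0([v \rightarrow_n w]^\Phi)$, I first invoke consistency of $\mathcal{C}^\Phi_0$ at $[v \rightarrow_n w]^\Phi$ to conclude that $(m, \phi_0) \in \resultslocal{[v \rightarrow_n w]^\Phi}{N^\Phi}{H}{H_p}{\mathcal{C}^\Phi_0}$. By the defining semantics of the forward navigation node, this yields a dependency tuple $(m_v, \phi_0) \in \mathcal{C}^\Phi_0(n^\Phi_v)$ with $m_v = m|_{Q_v}$ such that $\phi_0$ is the maximum marking of $m_v$ in $\mathcal{C}^\Phi_0(n^\Phi_v)$.

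Next I apply the monotonicity hypothesis on the dependency to obtain a tuple $(m_v, \phi_1) \in \mathcal{C}^\Phi_1(n^\Phi_v)$ with $\phi_0 \leq \phi_1$. Consequently, the maximum marking associated with $m_v$ in $\mathcal{C}^\Phi_1(n^\Phi_v)$, call it $\phi_{max}$, must satisfy $\phi_{max} \geq \phi_1 \geq \phi_0$. Applying the forward navigation semantics once more under $\mathcal{C}^\Phi_1$, the match $m$ is still a valid extension of $m_v$ in $\allmatches{Q}{H}$ (which is a property of $H$, independent of the configuration), so $(m, \phi_{max})$ lies in $\resultslocal{[v \rightarrow_n w]^\Phi}{N^\Phi}{H}{H_p}{\mathcal{C}^\Phi_1}$.

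Finally, since $\mathcal{C}^\Phi_2 = execute([v \rightarrow_n w]^\Phi, N^\Phi, H, H_p, \mathcal{C}^\Phi_1)$, by definition of $execute$ we have $\mathcal{C}^\Phi_2([v \rightarrow_n w]^\Phi) = \resultslocal{[v \rightarrow_n w]^\Phi}{N^\Phi}{H}{H_p}{\mathcal{C}^\Phi_1}$, so $(m, \phi_{max}) \in \mathcal{C}^\Phi_2([v \rightarrow_n w]^\Phi)$ with $\phi_{max} \geq \phi_0$, which is exactly the claim.

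There is no real obstacle here; the statement is a straightforward chasing of the forward navigation node's semantics, which simply copies dependency markings (taking the maximum among duplicate matches) onto the constructed edge matches. The key observation is just that the ``max marking of $m_v$'' operator is monotone in the dependency configuration under the pointwise order implied by the hypothesis, and that no side conditions involving $H_p$ or $\allmatches{Q}{H}$ are affected by moving from $\mathcal{C}^\Phi_0$ to $\mathcal{C}^\Phi_1$.
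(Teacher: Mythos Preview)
Your proof is correct and follows essentially the same approach as the paper: both arguments pick an arbitrary $(m,\phi_0)$ in $\mathcal{C}^\Phi_0([v \rightarrow_n w]^\Phi)$, use consistency to extract a witnessing dependency tuple $(m_v,\phi_0)$, apply the hypothesis on $n^\Phi_v$ to bump the marking, and then re-apply the forward navigation semantics under $\mathcal{C}^\Phi_1$. If anything, your version is slightly more precise than the paper's in explicitly noting that the resulting marking is the maximum $\phi_{max}\geq\phi_1$ rather than $\phi_1$ itself.
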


\begin{proof}
Let $(m, \phi_0) \in \mathcal{C}^\Phi_0([v \rightarrow_n w]^\Phi)$. Since $\mathcal{C}^\Phi_0$ is consistent for $[v \rightarrow_n w]^\Phi$, it must hold that there exists some tuple $(m_v, \phi_0) \in \mathcal{C}^\Phi_0(n^\Phi_v)$ such that $m(v) = m_v(v)$.

By the assumption regarding $\mathcal{C}^\Phi_1$, there must then be a tuple $(m_v, \phi_1)$ in $\mathcal{C}^\Phi_1(n^\Phi_v)$ such that $\phi_0 \leq \phi_1$.
By the semantics of the forward navigation node, it then follows that $(m, \phi_1) \in \mathcal{C}^\Phi_2([v \rightarrow_n w]^\Phi)$.

It thus follows that $\forall (m, \phi_0) \in \mathcal{C}^\Phi_0([v \rightarrow_n w]^\Phi) : \exists (m, \phi_2) \in \mathcal{C}^\Phi_2([v \rightarrow_n w]^\Phi) : \phi_0 \leq \phi_2$.
\end{proof}

\begin{lemma} \label{lem:monotonicity_backward_navigation}
Let $H$ be a graph, $H_p \subseteq H$, $[w \leftarrow_n v]^\Phi \in V^{N^\Phi}$ a backward navigation node with dependency $n^\Phi_w$, and $\mathcal{C}^\Phi_0$ and $\mathcal{C}^\Phi_1$ configurations such that
$\mathcal{C}^\Phi_0$ is consistent for $[w \leftarrow_n v]^\Phi$ and
$\forall (m, \phi_0) \in \mathcal{C}^\Phi_0(n^\Phi_w) : \exists (m, \phi_1) \in \mathcal{C}^\Phi_1(n^\Phi_w) : \phi_0 \leq \phi_1$.
It then holds for the configuration $\mathcal{C}^\Phi_2 = execute([w \leftarrow_n v]^\Phi, N^\Phi, H, H_p, \mathcal{C}^\Phi_1)$ that
$\forall (m, \phi_0) \in \mathcal{C}^\Phi_0([w \leftarrow_n v]^\Phi) : \exists (m, \phi_2) \in \mathcal{C}^\Phi_2([w \leftarrow_n v]^\Phi) : \phi_0 \leq \phi_2$.
\end{lemma}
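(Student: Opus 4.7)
The plan is to mirror the proof of Lemma \ref{lem:monotonicity_forward_navigation} essentially verbatim, swapping the role of the source endpoint $v$ for the target endpoint $w$ and the dependency $n^\Phi_v$ for $n^\Phi_w$. The semantics of the backward navigation node are defined symmetrically to those of the forward navigation node: both look up their sole dependency's current result set at a designated endpoint of the query edge and lift the marking unchanged onto the resulting edge match. Hence the same three-step chase argument applies.

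Concretely, I would pick an arbitrary tuple $(m, \phi_0) \in \mathcal{C}^\Phi_0([w \leftarrow_n v]^\Phi)$ and first use the assumed consistency of $\mathcal{C}^\Phi_0$ for $[w \leftarrow_n v]^\Phi$ together with the target-result-set definition of the backward navigation node to extract a witness $(m_w, \phi_0) \in \mathcal{C}^\Phi_0(n^\Phi_w)$ with $m_w = m|_{Q_w}$, where $Q_w = (\{w\}, \emptyset, \emptyset, \emptyset)$ is the query subgraph associated with $n^\Phi_w$. Note here that the marking on the dependency tuple is the maximum over all tuples in $\mathcal{C}^\Phi_0(n^\Phi_w)$ restricting to $m|_{Q_w}$, so by uniqueness of the marking per match in a consistent configuration we may indeed take this value to be $\phi_0$.

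Next I would apply the hypothesis on $\mathcal{C}^\Phi_1$ to obtain a tuple $(m_w, \phi_1) \in \mathcal{C}^\Phi_1(n^\Phi_w)$ with $\phi_0 \leq \phi_1$. Finally, I would invoke the semantics of the backward navigation node after executing it from $\mathcal{C}^\Phi_1$: since $m \in \allmatches{Q}{H}$ with $m|_{Q_w} = m_w$ and there exists at least one tuple in $\mathcal{C}^\Phi_1(n^\Phi_w)$ restricting to $m_w$ with marking at least $\phi_1$, the target result set contains some $(m, \phi_2)$ with $\phi_2 \geq \phi_1 \geq \phi_0$. Since $n^\Phi_w$ is the only dependency of $[w \leftarrow_n v]^\Phi$, executing this node produces exactly its target result set, giving the desired $(m, \phi_2) \in \mathcal{C}^\Phi_2([w \leftarrow_n v]^\Phi)$.

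There is no real obstacle here, as everything is a direct dual of the forward case; the only point that deserves a sentence of care is the identification of the marking of the dependency witness as exactly $\phi_0$ (rather than merely some $\phi \geq \phi_0$), which follows from the consistency hypothesis and the fact that current result sets of marking-sensitive nodes do not contain duplicate matches with differing markings (compare Lemma \ref{lem:match_uniqueness} for the analogous property at the production node, and the marking-maximum clause in the semantics of the backward navigation node, which collapses any ambiguity to a unique value per match).
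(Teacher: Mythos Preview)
Your proposal is correct and follows essentially the same three-step chase as the paper's own proof: extract a dependency witness $(m_w,\phi_0)$ from consistency of $\mathcal{C}^\Phi_0$, push it forward to $(m_w,\phi_1)$ in $\mathcal{C}^\Phi_1$ via the monotonicity hypothesis, and then apply the backward-navigation semantics to obtain $(m,\phi_2)$ with $\phi_2 \geq \phi_1 \geq \phi_0$. Your extra remark justifying that the dependency witness carries exactly the marking $\phi_0$ is a welcome bit of care that the paper's proof leaves implicit.
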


\begin{proof}
Let $(m, \phi_0) \in \mathcal{C}^\Phi_0([w \leftarrow_n v]^\Phi)$. Since $\mathcal{C}^\Phi_0$ is consistent for $[w \leftarrow_n v]^\Phi$, it must hold that there exists some tuple $(m_w, \phi_0) \in \mathcal{C}^\Phi_0(n^\Phi_w)$ such that $m(v) = m_w(w)$.

By the assumption regarding $\mathcal{C}^\Phi_1$, there must then be a tuple $(m_w, \phi_1)$ in $\mathcal{C}^\Phi_1(n^\Phi_w)$ such that $\phi_0 \leq \phi_1$.
By the semantics of the backward navigation node, it then follows that $(m, \phi_1) \in \mathcal{C}^\Phi_2([w \leftarrow_n v]^\Phi)$.

It thus follows that $\forall (m, \phi_0) \in \mathcal{C}^\Phi_0([w \leftarrow_n v]^\Phi) : \exists (m, \phi_2) \in \mathcal{C}^\Phi_2([w \leftarrow_n v]^\Phi) : \phi_0 \leq \phi_2$.
\end{proof}

\begin{lemma} \label{lem:monotonicity_projection}
Let $H$ be a graph, $H_p \subseteq H$, $[\pi_Q]^\Phi \in V^{N^\Phi}$ a marking-sensitive projection node with dependency $n^\Phi_\alpha$, and $\mathcal{C}^\Phi_0$ and $\mathcal{C}^\Phi_1$ configurations such that
$\mathcal{C}^\Phi_0$ is consistent for $[\pi_Q]^\Phi$ and
$\forall (m, \phi_0) \in \mathcal{C}^\Phi_0(n^\Phi_\alpha) : \exists (m, \phi_1) \in \mathcal{C}^\Phi_1(n^\Phi_\alpha) : \phi_0 \leq \phi_1$.
It then holds for the configuration $\mathcal{C}^\Phi_2 = execute([\pi_Q]^\Phi, N^\Phi, H, H_p, \mathcal{C}^\Phi_1)$ that
$\forall (m, \phi_0) \in \mathcal{C}^\Phi_0([\pi_Q]^\Phi) : \exists (m, \phi_2) \in \mathcal{C}^\Phi_2([\pi_Q]^\Phi) : \phi_0 \leq \phi_2$.
\end{lemma}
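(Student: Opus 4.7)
The plan is to mirror the structure of the monotonicity proofs for the forward and backward navigation nodes (Lemmata \ref{lem:monotonicity_forward_navigation} and \ref{lem:monotonicity_backward_navigation}), with the additional care required because a single output tuple of a marking-sensitive projection node can be justified by several distinct preimage tuples in the dependency. First I would fix an arbitrary $(m, \phi_0) \in \mathcal{C}^\Phi_0([\pi_Q]^\Phi)$ and unfold the semantics of the marking-sensitive projection node together with consistency of $\mathcal{C}^\Phi_0$ for $[\pi_Q]^\Phi$. This yields a witness tuple $(m^*, \phi_0) \in \mathcal{C}^\Phi_0(n^\Phi_\alpha)$ with $m^*|_Q = m$, where $\phi_0$ is realized as the maximum marking over all preimage tuples of $m$ in $\mathcal{C}^\Phi_0(n^\Phi_\alpha)$.

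Next I would apply the monotonicity hypothesis $\forall (m, \phi_0) \in \mathcal{C}^\Phi_0(n^\Phi_\alpha) : \exists (m, \phi_1) \in \mathcal{C}^\Phi_1(n^\Phi_\alpha) : \phi_0 \leq \phi_1$ to this witness, obtaining a tuple $(m^*, \phi_1) \in \mathcal{C}^\Phi_1(n^\Phi_\alpha)$ with $\phi_1 \geq \phi_0$. Since $m^*|_Q = m$, by the defining semantics of the marking-sensitive projection node, the target result set $\resultslocal{[\pi_Q]^\Phi}{N^\Phi}{H}{H_p}{\mathcal{C}^\Phi_1}$ must then contain a tuple $(m, \phi_2)$, where $\phi_2$ is the maximum of the markings $\phi'$ over all $(m', \phi') \in \mathcal{C}^\Phi_1(n^\Phi_\alpha)$ with $m'|_Q = m$. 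Because this maximum is taken over a set that includes $\phi_1$, we obtain $\phi_2 \geq \phi_1 \geq \phi_0$. By the definition of $execute$, $\mathcal{C}^\Phi_2([\pi_Q]^\Phi) = \resultslocal{[\pi_Q]^\Phi}{N^\Phi}{H}{H_p}{\mathcal{C}^\Phi_1}$, so $(m, \phi_2) \in \mathcal{C}^\Phi_2([\pi_Q]^\Phi)$, completing the argument.

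The only subtle point, and the one I expect to be the main obstacle to spell out cleanly, is the role of the $\max$ in the projection semantics: unlike the navigation nodes, the output marking in $\mathcal{C}^\Phi_0$ is not directly the marking of a particular preimage tuple, but the maximum over a set of them. However, this actually works in our favour, since enlarging the preimage set (by the hypothesis) or raising the markings of its elements can only raise this maximum, so monotonicity is inherited. Beyond that, all steps are direct applications of the definitions given earlier in the appendix, so no induction or auxiliary lemma is required.
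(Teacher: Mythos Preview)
Your proposal is correct and follows essentially the same approach as the paper's proof: fix an arbitrary $(m,\phi_0)$ in $\mathcal{C}^\Phi_0([\pi_Q]^\Phi)$, use consistency to extract a preimage witness in $\mathcal{C}^\Phi_0(n^\Phi_\alpha)$ realizing the marking $\phi_0$, push it through the monotonicity hypothesis, and then appeal to the projection semantics to conclude. Your explicit handling of the $\max$ in the projection semantics is a bit more careful than the paper's terse version, but the argument is the same.
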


\begin{proof}
Let $(m, \phi_0) \in \mathcal{C}^\Phi_0([\pi_Q]^\Phi)$. Since $\mathcal{C}^\Phi_0$ is consistent for $[\pi_Q]^\Phi$, it must hold that there exists some tuple $(m_\alpha, \phi_0) \in \mathcal{C}^\Phi_0(n^\Phi_\alpha)$ such that $m|_{Q} = m_\alpha|_Q$.

By the assumption regarding $\mathcal{C}^\Phi_1$, there must then be a tuple $(m_\alpha, \phi_1)$ in $\mathcal{C}^\Phi_1(n^\Phi_\alpha)$ such that $\phi_0 \leq \phi_1$.
By the semantics of the marking-sensitive projection node, it then follows that $(m, \phi_2) \in \mathcal{C}^\Phi_2([\pi_Q]^\Phi)$ for some $\phi_2 \geq \phi_1$.

It thus follows that $\forall (m, \phi_0) \in \mathcal{C}^\Phi_0([\pi_Q]^\Phi) : \exists (m, \phi_2) \in \mathcal{C}^\Phi_2([\pi_Q]^\Phi) : \phi_0 \leq \phi_2$.
\end{proof}

\begin{lemma} \label{lem:monotonicity_union}
Let $H$ be a graph, $H_p \subseteq H$, $[\cup]^\Phi \in V^{N^\Phi}$ a marking-sensitive union node with a set of dependencies $N^\Phi_\alpha$, and $\mathcal{C}^\Phi_0$ and $\mathcal{C}^\Phi_1$ configurations such that
$\mathcal{C}^\Phi_0$ is consistent for $[\cup]^\Phi$ and
$\forall n^\Phi_\alpha \in N^\Phi_\alpha : \forall (m, \phi_0) \in \mathcal{C}^\Phi_0(n^\Phi_\alpha) : \exists (m, \phi_1) \in \mathcal{C}^\Phi_1(n^\Phi_\alpha) : \phi_0 \leq \phi_1$.
It then holds for the configuration $\mathcal{C}^\Phi_2 = execute([\cup]^\Phi, N^\Phi, H, H_p, \mathcal{C}^\Phi_1)$ that
$\forall (m, \phi_0) \in \mathcal{C}^\Phi_0([\cup]^\Phi) : \exists (m, \phi_2) \in \mathcal{C}^\Phi_2([\cup]^\Phi) : \phi_0 \leq \phi_2$.
\end{lemma}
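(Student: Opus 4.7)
The plan is to follow the same template as Lemmata \ref{lem:monotonicity_forward_navigation}, \ref{lem:monotonicity_backward_navigation}, and \ref{lem:monotonicity_projection}: pick an arbitrary tuple $(m, \phi_0) \in \mathcal{C}^\Phi_0([\cup]^\Phi)$, trace it back through the dependencies of $[\cup]^\Phi$ using consistency of $\mathcal{C}^\Phi_0$, exploit the per-dependency monotonicity assumption between $\mathcal{C}^\Phi_0$ and $\mathcal{C}^\Phi_1$, and then use the union node's target result semantics to conclude the existence of the required tuple in $\mathcal{C}^\Phi_2$.

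More concretely, I would proceed as follows. Since $\mathcal{C}^\Phi_0$ is consistent for $[\cup]^\Phi$, we have $(m, \phi_0) \in \resultslocal{[\cup]^\Phi}{N^\Phi}{H}{H_p}{\mathcal{C}^\Phi_0}$. By the definition of the marking-sensitive union node's target result set, $\phi_0$ equals the maximum of all markings of $m$ across $\bigcup_{n^\Phi_\alpha \in N^\Phi_\alpha} \mathcal{C}^\Phi_0(n^\Phi_\alpha)$, so in particular there exists at least one dependency $n^\Phi_\alpha \in N^\Phi_\alpha$ that actually witnesses this maximum, i.e.\ with $(m, \phi_0) \in \mathcal{C}^\Phi_0(n^\Phi_\alpha)$. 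Applying the lemma's assumption to that specific $n^\Phi_\alpha$ yields a tuple $(m, \phi_1) \in \mathcal{C}^\Phi_1(n^\Phi_\alpha)$ with $\phi_0 \leq \phi_1$. The semantics of the marking-sensitive union node then guarantees that $\resultslocal{[\cup]^\Phi}{N^\Phi}{H}{H_p}{\mathcal{C}^\Phi_1}$ contains a tuple $(m, \phi_2)$ with $\phi_2 \geq \phi_1 \geq \phi_0$, and since $\mathcal{C}^\Phi_2([\cup]^\Phi) = \resultslocal{[\cup]^\Phi}{N^\Phi}{H}{H_p}{\mathcal{C}^\Phi_1}$ by the definition of $execute$, this is exactly the desired tuple.

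There is no real obstacle here; the only subtle point, compared with the projection case, is that the union ranges over a \emph{set} of dependencies and its target marking is a maximum, so it is important not to claim that every dependency carries $m$ with marking $\phi_0$. Instead, the argument only needs \emph{some} dependency that attains the maximum, which is guaranteed by the fact that the maximum in the union node's definition is taken over a finite, non-empty multiset and is therefore realized by at least one of its elements.
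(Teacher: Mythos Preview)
Your proposal is correct and follows essentially the same approach as the paper's own proof: pick an arbitrary $(m,\phi_0)$ in $\mathcal{C}^\Phi_0([\cup]^\Phi)$, use consistency to find a dependency $n^\Phi_\alpha$ with $(m,\phi_0)\in\mathcal{C}^\Phi_0(n^\Phi_\alpha)$, apply the monotonicity hypothesis to obtain $(m,\phi_1)\in\mathcal{C}^\Phi_1(n^\Phi_\alpha)$ with $\phi_1\geq\phi_0$, and then invoke the union semantics to conclude $(m,\phi_2)\in\mathcal{C}^\Phi_2([\cup]^\Phi)$ with $\phi_2\geq\phi_1$. Your extra remark that the maximum in the union's definition is actually attained by some dependency is a welcome clarification that the paper's proof leaves implicit.
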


\begin{proof}
Let $(m, \phi_0) \in \mathcal{C}^\Phi_0([\cup]^\Phi)$. Since $\mathcal{C}^\Phi_0$ is consistent for $[\cup]^\Phi$, it must hold that $(m, \phi_0) \in \mathcal{C}^\Phi_0(n^\Phi_\alpha)$ for some $n^\Phi_\alpha \in N^\Phi_\alpha$.

By the assumption regarding $\mathcal{C}^\Phi_1$, there must then be a tuple $(m_\alpha, \phi_1)$ in $\mathcal{C}^\Phi_1(n^\Phi_\alpha)$ such that $\phi_0 \leq \phi_1$.
By the semantics of the marking-sensitive union node, it then follows that $(m, \phi_2) \in \mathcal{C}^\Phi_2([\cup]^\Phi)$ for some $\phi_2 \geq \phi_1$.

It thus follows that $\forall (m, \phi_0) \in \mathcal{C}^\Phi_0([\cup]^\Phi) : \exists (m, \phi_2) \in \mathcal{C}^\Phi_2([\cup]^\Phi) : \phi_0 \leq \phi_2$.
\end{proof}

\begin{lemma} \label{lem:monotonicity_marking_filter}
Let $H$ be a graph, $H_p \subseteq H$, $[\phi > x]^\Phi \in V^{N^\Phi}$ a marking filter node with dependency $n^\Phi_\alpha$, and $\mathcal{C}^\Phi_0$ and $\mathcal{C}^\Phi_1$ configurations such that
$\mathcal{C}^\Phi_0$ is consistent for $[\phi > x]^\Phi$ and
$\forall (m, \phi_0) \in \mathcal{C}^\Phi_0(n^\Phi_\alpha) : \exists (m, \phi_1) \in \mathcal{C}^\Phi_1(n^\Phi_\alpha) : \phi_0 \leq \phi_1$.
It then holds for the configuration $\mathcal{C}^\Phi_2 = execute([\phi > x]^\Phi, N^\Phi, H, H_p, \mathcal{C}^\Phi_1)$ that
$\forall (m, \phi_0) \in \mathcal{C}^\Phi_0([\phi > x]^\Phi) : \exists (m, \phi_2) \in \mathcal{C}^\Phi_2([\phi > x]^\Phi) : \phi_0 \leq \phi_2$.
\end{lemma}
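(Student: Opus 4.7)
The plan is to mimic the template used for Lemmata \ref{lem:monotonicity_forward_navigation}--\ref{lem:monotonicity_union}: pull an arbitrary tuple in the old consistent result set back through the dependency, apply the hypothesis to lift it to a tuple in the new dependency configuration, then push it forward through $execute$.

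Concretely, I would start by fixing an arbitrary tuple $(m, \phi_0) \in \mathcal{C}^\Phi_0([\phi > x]^\Phi)$. Since $\mathcal{C}^\Phi_0$ is consistent for $[\phi > x]^\Phi$, by the semantics of the marking filter node, this tuple must come from the dependency, i.e.\ $(m, \phi_0) \in \mathcal{C}^\Phi_0(n^\Phi_\alpha)$, and moreover the filter is satisfied, giving the key inequality $\phi_0 > x$.

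Next, I would invoke the assumption on $\mathcal{C}^\Phi_1$ to obtain some tuple $(m, \phi_1) \in \mathcal{C}^\Phi_1(n^\Phi_\alpha)$ with $\phi_0 \leq \phi_1$. Chaining $\phi_1 \geq \phi_0 > x$ yields $\phi_1 > x$, so $(m, \phi_1)$ still passes the filter. By the semantics of $[\phi > x]^\Phi$ it then lies in $\resultslocal{[\phi > x]^\Phi}{N^\Phi}{H}{H_p}{\mathcal{C}^\Phi_1}$, and by the definition of $execute$, $(m, \phi_1) \in \mathcal{C}^\Phi_2([\phi > x]^\Phi)$. Taking $\phi_2 = \phi_1$ gives the required witness with $\phi_0 \leq \phi_2$.

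I do not expect any genuine obstacle: the only substantive observation is that the filter inequality $\phi > x$ is preserved under increasing the marking, which is immediate from transitivity of $\leq$. The argument is almost a verbatim copy of the proof for Lemma \ref{lem:monotonicity_projection}, with the projection's combination of $\max$ replaced by the trivial witness $\phi_2 = \phi_1$ and the subgraph restriction $m|_Q = m_\alpha|_Q$ replaced by plain equality of matches.
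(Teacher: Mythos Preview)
Your proposal is correct and follows essentially the same approach as the paper's proof: pull back to the dependency via consistency, lift via the monotonicity hypothesis, and push forward through the filter semantics. You are in fact slightly more explicit than the paper, which omits the intermediate observation $\phi_1 \geq \phi_0 > x$ and jumps directly to $(m,\phi_1) \in \mathcal{C}^\Phi_2([\phi > x]^\Phi)$.
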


\begin{proof}
Let $(m, \phi_0) \in \mathcal{C}^\Phi_0([\phi > x]^\Phi)$. Since $\mathcal{C}^\Phi_0$ is consistent for $[\phi > x]^\Phi$, it must hold that $(m, \phi_0) \in \mathcal{C}^\Phi_0(n^\Phi_\alpha)$.

By the assumption regarding $\mathcal{C}^\Phi_1$, there must then be a tuple $(m_\alpha, \phi_1)$ in $\mathcal{C}^\Phi_1(n^\Phi_\alpha)$ such that $\phi_0 \leq \phi_1$.
By the semantics of the marking filter node, it then follows that $(m, \phi_1) \in \mathcal{C}^\Phi_2([\phi > x]^\Phi)$.

It thus follows that $\forall (m, \phi_0) \in \mathcal{C}^\Phi_0([\phi > x]^\Phi) : \exists (m, \phi_2) \in \mathcal{C}^\Phi_2([\phi > x]^\Phi) : \phi_0 \leq \phi_2$.
\end{proof}

\begin{lemma} \label{lem:monotonicity_marking_assignment}
Let $H$ be a graph, $H_p \subseteq H$, $[\phi := x]^\Phi \in V^{N^\Phi}$ a marking assignment node with dependency $n^\Phi_\alpha$, and $\mathcal{C}^\Phi_0$ and $\mathcal{C}^\Phi_1$ configurations such that
$\mathcal{C}^\Phi_0$ is consistent for $[\phi := x]^\Phi$ and
$\forall (m, \phi_0) \in \mathcal{C}^\Phi_0(n^\Phi_\alpha) : \exists (m, \phi_1) \in \mathcal{C}^\Phi_1(n^\Phi_\alpha) : \phi_0 \leq \phi_1$.
It then holds for the configuration $\mathcal{C}^\Phi_2 = execute([\phi := x]^\Phi, N^\Phi, H, H_p, \mathcal{C}^\Phi_1)$ that
$\forall (m, \phi_0) \in \mathcal{C}^\Phi_0([\phi := x]^\Phi) : \exists (m, \phi_2) \in \mathcal{C}^\Phi_2([\phi := x]^\Phi) : \phi_0 \leq \phi_2$.
\end{lemma}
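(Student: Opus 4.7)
The plan is to mirror the structure of the preceding monotonicity lemmas (Lemmata \ref{lem:monotonicity_forward_navigation}--\ref{lem:monotonicity_marking_filter}), only specialized to the semantics of the marking assignment node, which is particularly simple because it overwrites every match's marking with the fixed value $x$.

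Concretely, I would start by fixing an arbitrary tuple $(m, \phi_0) \in \mathcal{C}^\Phi_0([\phi := x]^\Phi)$. Since $\mathcal{C}^\Phi_0$ is assumed consistent for $[\phi := x]^\Phi$, the definition of the marking assignment node's target result set forces $\phi_0 = x$ and guarantees the existence of some tuple $(m, \phi') \in \mathcal{C}^\Phi_0(n^\Phi_\alpha)$. Next, I would invoke the hypothesis on $\mathcal{C}^\Phi_1$ and $\mathcal{C}^\Phi_0$ with respect to the dependency $n^\Phi_\alpha$ to obtain a tuple $(m, \phi'') \in \mathcal{C}^\Phi_1(n^\Phi_\alpha)$ with $\phi'' \geq \phi'$. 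Finally, applying the definition of the marking assignment node once more yields $(m, x) \in \resultslocalnop([\phi := x]^\Phi, N^\Phi, H, H_p, \mathcal{C}^\Phi_1)$ and hence $(m, x) \in \mathcal{C}^\Phi_2([\phi := x]^\Phi)$, so we can pick $\phi_2 = x$ and conclude $\phi_0 = x \leq x = \phi_2$.

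There is essentially no hard part: unlike the marking-sensitive union or projection nodes, whose marking is derived from a maximum over dependency markings, the marking assignment node discards the dependency's marking entirely and emits $x$, so no case analysis on whether the relevant marking has grown or shrunk is necessary. The only subtlety worth flagging is that the statement is vacuous unless matches survive the transition from $\mathcal{C}^\Phi_0$ to $\mathcal{C}^\Phi_1$ at the dependency, which is exactly what the hypothesis $\forall (m, \phi_0) \in \mathcal{C}^\Phi_0(n^\Phi_\alpha) : \exists (m, \phi_1) \in \mathcal{C}^\Phi_1(n^\Phi_\alpha) : \phi_0 \leq \phi_1$ supplies.
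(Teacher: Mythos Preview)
Your proposal is correct and essentially identical to the paper's own proof: both fix an arbitrary $(m,\phi_0)$ in $\mathcal{C}^\Phi_0([\phi := x]^\Phi)$, use consistency to conclude $\phi_0 = x$ and obtain a witness in $\mathcal{C}^\Phi_0(n^\Phi_\alpha)$, push it to $\mathcal{C}^\Phi_1(n^\Phi_\alpha)$ via the hypothesis, and then apply the marking assignment semantics to get $(m,x) \in \mathcal{C}^\Phi_2([\phi := x]^\Phi)$.
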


\begin{proof}
Let $(m, \phi_0) \in \mathcal{C}^\Phi_0([\phi := x]^\Phi)$. Since $\mathcal{C}^\Phi_0$ is consistent for $[\phi := x]^\Phi$, it must hold that $\phi_0 = x$ and there exists some tuple $(m, \psi_0) \in \mathcal{C}^\Phi_0(n^\Phi_\alpha)$.

By the assumption regarding $\mathcal{C}^\Phi_1$, there must then be a tuple $(m_\alpha, \psi_1)$ in $\mathcal{C}^\Phi_1(n^\Phi_\alpha)$.
By the semantics of the marking assignment node, it then follows that $(m, x) \in \mathcal{C}^\Phi_2([\phi := x]^\Phi)$.

It thus follows that $\forall (m, \phi_0) \in \mathcal{C}^\Phi_0([\phi := x]^\Phi) : \exists (m, \phi_2) \in \mathcal{C}^\Phi_2([\phi := x]^\Phi) : \phi_0 \leq \phi_2$.
\end{proof}

\begin{lemma} \label{lem:monotonicity_vertex_input}
Let $H$ be a graph, $H_p \subseteq H$, $[v]^\Phi \in V^{N^\Phi}$ a marking-sensitive vertex input node, and $\mathcal{C}^\Phi_0$ and $\mathcal{C}^\Phi_1$ configurations such that
$\mathcal{C}^\Phi_0$ is consistent for $[v]^\Phi$.
It then holds for the configuration $\mathcal{C}^\Phi_2 = execute([v]^\Phi, N^\Phi, H, H_p, \mathcal{C}^\Phi_1)$ that
$\forall (m, \phi_0) \in \mathcal{C}^\Phi_0([v]^\Phi) : \exists (m, \phi_2) \in \mathcal{C}^\Phi_2([v]^\Phi) : \phi_0 \leq \phi_2$.
\end{lemma}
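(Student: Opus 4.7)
The plan is to observe that a marking-sensitive vertex input node has no RETE dependencies: its target result set is determined entirely by the relevant subgraph $H_p$ via the definition $\resultslocal{[v]^\Phi}{N^\Phi}{H}{H_p}{\mathcal{C}^\Phi} = \{(m, \infty) \mid m \in \allmatches{Q}{H_p}\}$, where $Q = (\{v\}, \emptyset, \emptyset, \emptyset)$. In particular, this set does not depend on the configuration argument at all.

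First I would unfold consistency of $\mathcal{C}^\Phi_0$ for $[v]^\Phi$ to conclude $\mathcal{C}^\Phi_0([v]^\Phi) = \{(m, \infty) \mid m \in \allmatches{Q}{H_p}\}$. Then I would apply the definition of $execute$ together with the above target result set formula to obtain $\mathcal{C}^\Phi_2([v]^\Phi) = \resultslocal{[v]^\Phi}{N^\Phi}{H}{H_p}{\mathcal{C}^\Phi_1} = \{(m, \infty) \mid m \in \allmatches{Q}{H_p}\}$, which equals $\mathcal{C}^\Phi_0([v]^\Phi)$.

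Finally, for any $(m, \phi_0) \in \mathcal{C}^\Phi_0([v]^\Phi)$ we have $\phi_0 = \infty$ and $(m, \infty) \in \mathcal{C}^\Phi_2([v]^\Phi)$, so choosing $\phi_2 = \infty$ witnesses the required inequality $\phi_0 \leq \phi_2$. There is no real obstacle here; the lemma is essentially immediate from the configuration-independence of the marking-sensitive vertex input node's target result set, and it is stated only for uniformity with the corresponding monotonicity lemmata for the other node types used in Theorem \ref{the:complexity_time_localized_appendix}.
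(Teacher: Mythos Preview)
Your proposal is correct and follows essentially the same approach as the paper's proof: both exploit that the target result set of a marking-sensitive vertex input node is configuration-independent, so consistency of $\mathcal{C}^\Phi_0$ and the definition of $execute$ immediately give $\mathcal{C}^\Phi_0([v]^\Phi) = \mathcal{C}^\Phi_2([v]^\Phi)$, from which the monotonicity claim is trivial. Your version simply spells out the intermediate equalities more explicitly than the paper does.
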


\begin{proof}
By the semantics of the marking-sensitive vertex input node and $\mathcal{C}^\Phi_0$ being consistent for $[v]^\Phi$ and $H$, it follows that $\mathcal{C}^\Phi_0([v]^\Phi) = \mathcal{C}^\Phi_1([v]^\Phi)$ and thus
$\forall (m, \phi_0) \in \mathcal{C}^\Phi_0([v]^\Phi) : \exists (m, \phi_2) \in \mathcal{C}^\Phi_2([v]^\Phi) : \phi_0 \leq \phi_2$.
\end{proof}

\begin{lemma} \label{lem:monotonicity_join}
Let $H$ be a graph, $H_p \subseteq H$, $[\bowtie]^\Phi \in V^{N^\Phi}$ a marking-sensitive join node with dependencies $n^\Phi_l$ and $n^\Phi_r$, and $\mathcal{C}^\Phi_0$ and $\mathcal{C}^\Phi_1$ configurations such that
$\mathcal{C}^\Phi_0$ is consistent for $[\bowtie]^\Phi$,
$\forall (m, \phi_0) \in \mathcal{C}^\Phi_0(n^\Phi_l) : \exists (m, \phi_1) \in \mathcal{C}^\Phi_1(n^\Phi_l) : \phi_0 \leq \phi_1$, and
$\forall (m, \phi_0) \in \mathcal{C}^\Phi_0(n^\Phi_r) : \exists (m, \phi_1) \in \mathcal{C}^\Phi_1(n^\Phi_r) : \phi_0 \leq \phi_1$.
It then holds for the configuration $\mathcal{C}^\Phi_2 = execute([\bowtie]^\Phi, N^\Phi, H, H_p, \mathcal{C}^\Phi_1)$ that
$\forall (m, \phi_0) \in \mathcal{C}^\Phi_0([\bowtie]^\Phi) : \exists (m, \phi_2) \in \mathcal{C}^\Phi_2([\bowtie]^\Phi) : \phi_0 \leq \phi_2$.
\end{lemma}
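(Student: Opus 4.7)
The plan is to follow exactly the pattern established in Lemmata \ref{lem:monotonicity_forward_navigation}--\ref{lem:monotonicity_vertex_input}: pick an arbitrary tuple in the old result set of $[\bowtie]^\Phi$, decompose it via consistency, lift the decomposition through the dependency monotonicity hypothesis, and then recombine using the semantics of the marking-sensitive join together with monotonicity of $\max$.

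Concretely, I would take an arbitrary $(m, \phi_0) \in \mathcal{C}^\Phi_0([\bowtie]^\Phi)$. By the assumed consistency of $\mathcal{C}^\Phi_0$ for $[\bowtie]^\Phi$ and the definition of $\resultslocalnop$ for a marking-sensitive join with overlap subgraph $Q_\cap$, I obtain tuples $(m_l, \phi_l) \in \mathcal{C}^\Phi_0(n^\Phi_l)$ and $(m_r, \phi_r) \in \mathcal{C}^\Phi_0(n^\Phi_r)$ with $m_l|_{Q_\cap} = m_r|_{Q_\cap}$, $m = m_l \cup m_r$, and $\phi_0 = \max(\phi_l, \phi_r)$. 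Applying the two monotonicity hypotheses on the dependencies yields tuples $(m_l, \phi_l') \in \mathcal{C}^\Phi_1(n^\Phi_l)$ and $(m_r, \phi_r') \in \mathcal{C}^\Phi_1(n^\Phi_r)$ with $\phi_l \leq \phi_l'$ and $\phi_r \leq \phi_r'$. Since the join condition $m_l|_{Q_\cap} = m_r|_{Q_\cap}$ is preserved (the underlying matches $m_l, m_r$ are unchanged), the semantics of the marking-sensitive join produces $(m, \max(\phi_l', \phi_r')) \in \resultslocal{[\bowtie]^\Phi}{N^\Phi}{H}{H_p}{\mathcal{C}^\Phi_1}$, which by the definition of $execute$ is exactly $\mathcal{C}^\Phi_2([\bowtie]^\Phi)$. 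Monotonicity of $\max$ gives $\phi_0 = \max(\phi_l, \phi_r) \leq \max(\phi_l', \phi_r') =: \phi_2$, so the required tuple $(m, \phi_2)$ witnesses the conclusion.

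There is essentially no obstacle here: the argument is structurally identical to Lemma \ref{lem:monotonicity_union}, with the only minor subtlety being that one must lift \emph{two} matched dependency tuples simultaneously and then invoke monotonicity of $\max$ rather than a single dependency lookup. No induction, no appeal to the global structure of the localized net, and no use of Lemmata \ref{lem:join_complementarity} or \ref{lem:match_uniqueness} is needed, since we are not asserting the existence of any new join tuples beyond the re-lift of an already existing one.
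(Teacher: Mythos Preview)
Your proposal is correct and follows essentially the same approach as the paper's own proof: decompose an arbitrary tuple via consistency into dependency tuples, lift each through the monotonicity hypotheses, and recombine via the join semantics together with monotonicity of $\max$. Your write-up is in fact slightly more explicit than the paper's, which omits the overlap condition $m_l|_{Q_\cap} = m_r|_{Q_\cap}$ and contains a minor slip (it says ``marking assignment node'' where ``marking-sensitive join node'' is meant).
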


\begin{proof}
Let $(m, \phi_0) \in \mathcal{C}^\Phi_0([\bowtie]^\Phi)$. Since $\mathcal{C}^\Phi_0$ is consistent for $[\bowtie]^\Phi$, there must exist some tuples $(m_l, \psi_0^l) \in \mathcal{C}^\Phi_0(n^\Phi_l)$ and $(m_r, \psi_0^r) \in \mathcal{C}^\Phi_0(n^\Phi_r)$
such that $m_l \cup m_r = m$ and $\phi_0 = max(\psi_0^l, \psi_0^r)$.

By the assumption regarding $\mathcal{C}^\Phi_1$, there must then be tuples $(m_l, \psi_1^l) \in \mathcal{C}^\Phi_1(n^\Phi_l)$ and $(m_r, \psi_1^r) \in \mathcal{C}^\Phi_1(n^\Phi_r)$ such that $\psi_0^l \leq \psi_1^l$ and $\psi_0^r \leq \psi_1^r$.
By the semantics of the marking assignment node, it then follows that $(m, max(\psi_1^l, \psi_1^r)) \in \mathcal{C}^\Phi_2([\bowtie]^\Phi)$, with $max(\psi_1^l \psi_1^r) \geq max(\psi_0^l \psi_0^r)$.

It thus follows that $\forall (m, \phi_0) \in \mathcal{C}^\Phi_0([\bowtie]^\Phi) : \exists (m, \phi_2) \in \mathcal{C}^\Phi_2([\bowtie]^\Phi) : \phi_0 \leq \phi_2$.
\end{proof}

\end{document}